\documentclass[acmsmall,screen]{acmart}
\settopmatter{printacmref=false}
\renewcommand\footnotetextcopyrightpermission[1]{}
\AtBeginDocument{%
  }


\setcopyright{rightsretained}
\acmDOI{10.1145/3632914}
\acmYear{2024}
\copyrightyear{2024}
\acmSubmissionID{popl24main-p380-p}
\acmJournal{PACMPL}
\acmVolume{8}
\acmNumber{POPL}
\acmArticle{72}
\acmMonth{1}
\received{2023-07-11}
\received[accepted]{2023-11-07}





\bibliographystyle{ACM-Reference-Format}
\citestyle{acmauthoryear}

\usepackage{simconv}
\usepackage{lstcoq}


\clubpenalty = 10000
\widowpenalty = 10000
\displaywidowpenalty = 10000

\begin{document}

\title{Fully Composable and Adequate Verified Compilation with Direct Refinements between Open Modules}

\author{Ling Zhang}
\orcid{0000-0001-7190-6983}             
\affiliation{
  \department{John Hopcroft Center for Computer Science, School of Electronic Information and Electrical Engineering}              
  \institution{Shanghai Jiao Tong University}            
  \country{China}                    
}
\email{ling.zhang@sjtu.edu.cn}          

\author{Yuting Wang}
\authornote{Corresponding author}          
\orcid{0000-0003-3990-2418}             
\affiliation{
  \department{John Hopcroft Center for Computer Science, School of Electronic Information and Electrical Engineering}              
  \institution{Shanghai Jiao Tong University}            
  \country{China}                    
}
\email{yuting.wang@sjtu.edu.cn}          

\author{Jinhua Wu}
\orcid{0000-0001-5812-053X}             
\affiliation{
  \department{John Hopcroft Center for Computer Science, School of Electronic Information and Electrical Engineering}              
  \institution{Shanghai Jiao Tong University}            
  \country{China}                    
}
\email{jinhua.wu@sjtu.edu.cn}          

\author{J{\'e}r{\'e}mie Koenig}
\orcid{0000-0002-3168-5925}             
\affiliation{
  \institution{Yale University}            
  \country{USA}                    
}
\email{jeremie.koenig@yale.edu}          

\author{Zhong Shao}
\orcid{0000-0001-8184-7649}
\affiliation{
  \institution{Yale University}            
  \country{USA}                    
}
\email{zhong.shao@yale.edu}          


\begin{abstract}
  Verified compilation of open modules (i.e., modules whose
  functionality depends on other modules) provides a foundation for
  end-to-end verification of modular programs ubiquitous in contemporary
  software.
  However, despite intensive investigation in this topic for decades,
  the proposed approaches are still difficult to use in practice as
  they rely on assumptions about the internal working of compilers
  which make it difficult for external users to apply the
  verification results.
  We propose an approach to verified compositional
  compilation without such assumptions in the setting of
  verifying compilation of heterogeneous modules written in
  first-order languages supporting global memory and pointers.
  Our approach is based on the memory model of CompCert and a new
  discovery that a Kripke relation with a notion of memory protection
  can serve as a uniform and composable semantic interface for the
  compiler passes.
  By absorbing the rely-guarantee conditions on memory evolution for
  all compiler passes into this Kripke Memory Relation and by
  piggybacking requirements on compiler optimizations onto it, 
  we get compositional correctness theorems for realistic optimizing compilers
  as refinements that directly relate native semantics of open
  modules and that are ignorant of intermediate compilation processes.
  Such direct refinements support all the compositionality and
  adequacy properties essential for verified compilation of open
  modules.
  We have applied this approach to the full compilation chain of CompCert 
  with its Clight source language and
  demonstrated that our compiler correctness theorem is open to
  composition and intuitive to use with reduced verification
  complexity through end-to-end verification of non-trivial
  heterogeneous modules that may freely invoke each other (e.g.,
  mutually recursively).

\end{abstract}


\begin{CCSXML}
<ccs2012>
   <concept>
       <concept_id>10011007.10011074.10011099.10011692</concept_id>
       <concept_desc>Software and its engineering~Formal software verification</concept_desc>
       <concept_significance>500</concept_significance>
       </concept>
   <concept>
       <concept_id>10011007.10011006.10011041</concept_id>
       <concept_desc>Software and its engineering~Compilers</concept_desc>
       <concept_significance>500</concept_significance>
       </concept>
   <concept>
       <concept_id>10003752.10010124.10010138.10010142</concept_id>
       <concept_desc>Theory of computation~Program verification</concept_desc>
       <concept_significance>500</concept_significance>
       </concept>
 </ccs2012>
\end{CCSXML}

\ccsdesc[500]{Software and its engineering~Formal software verification}
\ccsdesc[500]{Software and its engineering~Compilers}
\ccsdesc[500]{Theory of computation~Program verification}

\keywords{Verified Compositional Compilation, Direct Refinements, Kripke Relations}

\maketitle

\section{Introduction}
\label{sec:intro}

Verified compilation ensures that behaviors of source programs are
faithfully transported to target code, a property
desirable for end-to-end verification of software whose development involves compilation. As
software is usually composed of modules
independently developed and compiled, researchers have
developed a wide range of techniques for \emph{verified compositional
  compilation} or
VCC
that support modules invoking each other (i.e., open), being
written in different languages (i.e., heterogeneous) and transformed
by different compilers~\cite{patterson-icfp-2019}.

We are concerned with VCC for first-order languages
with global memory states and support of pointers (e.g.,
see~\citet{stewart15,compcertm,compcerto,cascompcert,wang2019,dscal15}).
As it stands now, the proposed approaches are inherently
limited at supporting open modules (e.g. libraries) as they either
deviate from the native semantics of modules or expose 
the semantics of intermediate representations for compilation, resulting in correctness theorems 
that are difficult to work with for external users.
In this paper, we investigate an approach that eliminates these limitations
while retaining the full benefits of VCC, i.e., obtaining
correctness of compiling open modules that is \emph{fully composable},
\emph{adequate}, and \emph{extensional}.
%

\subsection{Full Compositionality and Adequacy in Verified Compilation}
\label{ssec:vcc-props}
Correctness of compiling open modules is usually described as
refinement between semantics of source and target modules.
We shall write $L$ (possibly with subscripts) to denote semantics of
open modules and write $\refined{L_1}{L_2}$ to denote that $L_1$ is
refined by $L_2$. Therefore, the compilation of any module $M_2$ into
$M_1$ is correct iff $\refined{\sem{M_1}}{\sem{M_2}}$ where $\sem{M_i}$
denotes the semantics of $M_i$.

To support the most general form of VCC, it is critical that the
established refinements are \emph{fully composable}, i.e., 
both \emph{horizontally and vertically composable}, and \emph{adequate
  for native semantics}:
\begin{align*}
  \mbox{\bf Vertical Compositionality:}
  & \quad 
  \refined{L_1}{L_2} \imply 
  \refined{L_2}{L_3} \imply 
  \refined{L_1}{L_3}\\
  \mbox{\bf Horizontal Compositionality:} 
  & \quad 
  \refined{L_1}{L_1'} \imply 
  \refined{L_2}{L_2'} \imply 
  \refined{L_1 \semlink L_2}{L_1' \semlink L_2'}\\
  \mbox{\bf Adequacy for Native Semantics:}
  & \quad 
  \refined{\sem{M_1 \synlink M_2}}{\sem{M_1} \semlink \sem{M_2}}
\end{align*}
\begin{wrapfigure}{r}{.25\textwidth}
  \centering
  \begin{tikzpicture}
    \node [] (as) at (-5,0) {$\sem{\code{a.c}}$};
    \node [below = 0.3 of as] (aone) {$\sem{\code{a.i}_1}$};
    \node [below = 0.2 of aone] (atwo) {$\sem{\code{a.i}_2}$};
    \node [below = 0.2 of atwo] (at) {$\sem{\code{a.s}}$};

    \path (as) -- node[sloped,rotate=180] {$\refinesymb$} (aone);
    \path (aone) -- node[sloped,rotate=180] {$\refinesymb$} (atwo);
    \path (atwo) -- node[sloped,rotate=180] {$\refinesymb$} (at);

    \node [right = 1.0 cm of as] (bspec) {$L_b$};
    \path let \p1 = (bspec) in let \p2 = (at) in
        node (bt) at (\x1, \y2) {$\sem{\code{b.s}}$};

    \path (bspec) -- node[sloped, rotate=180] {$\refinesymb$} (bt);

    \path (as) -- node[sloped] {$\semlink$} (bspec);
    \path (at) -- node[sloped] {$\semlink$} (bt);

    \node [draw, rectangle, dashed, fit = (as) (bspec), inner sep = -0.0pt] (srcsem) {};
    \node [fit = (at) (bt), inner sep = -0.0pt] (tgtsem) {};
    \node [draw, rectangle, dashed, below = 0.25 cm of tgtsem] (tgtsyn) 
          {$\sem{\code{a.s} + \code{b.s}}$};

    \path (tgtsem) -- node[sloped,rotate=180] {$\refinesymb$} (tgtsyn);

  \end{tikzpicture}
  \caption{Motivating Example}
  \label{fig:mtv-exm}
\end{wrapfigure}
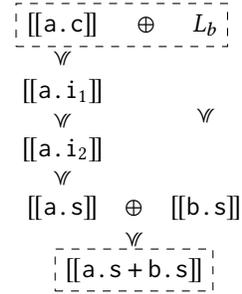
The first property states that refinements are transitive. It is
essential for composing proofs for multi-pass compilers.
The second property guarantees that refinements are preserved by
semantic linking (denoted by $\semlink$). It is essential for
composing correctness of compiling open modules (possibly through
different compilers).
The last one ensures that, given any modules, their semantic linking
coincides with their syntactic linking (denoted by $+$). It ensures
that linked semantics do not deviate from native semantics and is
essential to propagate verified properties to final target programs.

We use the example in~\figref{fig:mtv-exm} to illustrate the
importance of the above properties in VCC where heterogeneous modules
are compiled through different compilation chains and linked into a
final target module.
In this example, a source C module \code{a.c} is compiled into an assembly module
\code{a.s} through a multi-pass optimizing compiler like CompCert: it
is first compiled to $\code{a.i}_1$ in an intermediate representation
(IR) for optimization (e.g., the RTL language of CompCert) and then to
$\code{a.i}_2$ in another IR for code generation (e.g., the Mach
language of CompCert). Finally, it is linked with a library module
\code{b.s} which is not compiled at all (an extreme case where the
compilation chain is empty). The goal is to prove that the semantics
of linked target assembly $\code{a.s} + \code{b.s}$ refines the
combined source semantics $\sem{\code{a.c}} \semlink L_b$ where $L_b$
is the semantic specification of $\code{b.s}$, i.e.,
$\refined{\sem{\code{a.s} + \code{b.s}}}{\sem{\code{a.c}} \semlink L_b}$.
The proof proceeds as follows:
\begin{enumerate}
\item Prove every pass respects refinement, from which 
  $\refined{\sem{\code{a.i}_1}}{\sem{\code{a.c}}}$,
  $\refined{\sem{\code{a.i}_2}}{\sem{\code{a.i}_1}}$ and 
  $\refined{\sem{\code{a.s}}}{\sem{\code{a.i}_2}}$.
  Moreover, show $\code{b.s}$ meets its specification, i.e., 
  $\refined{\sem{\code{b.s}}}{L_b}$;

\item By vertically composing the refinement relations
  for compiling \code{a.c}, we get
  $\refined{\sem{\code{a.s}}}{\sem{\code{a.c}}}$;

\item By further horizontally composing with $\refined{\sem{\code{b.s}}}{L_b}$,
  we get
  $\refined{\sem{\code{a.s}} \semlink \sem{\code{b.s}}}
           {\sem{\code{a.c}} \semlink L_b}$;

\item By adequacy for assembly and vertical
  composition, conclude $\refined{\sem{\code{a.s} + \code{b.s}}}{\sem{\code{a.c}} \semlink L_b}$.
\end{enumerate}
%

\subsection{Problems with the Existing Approaches to Refinements}
\label{ssec:problems}
Despite the simplicity of VCC at an intuitive level, full
compositionality and adequacy are surprisingly difficult to prove for
any non-trivial multi-pass compiler.
First and foremost, the formal definitions must take into account the
facts that each intermediate representation has different semantics
and each pass may imply a different refinement relation.
To facilitate the discussion below, we classify different open
semantics by \emph{language interfaces} (or simply interfaces) which
formalize their interaction with environments. We write $L:\langi{I}$
to denote that $L$ has a language interface $\langi{I}$. For instance,
$\sem{\code{a.c}}:\langi{C}$ denotes that the semantics of \code{a.c} has the
interface $\langi{C}$ which only allows for interaction with environments through
function calls and returns in C. Similarly,
$\sem{\code{a.s}}:\langi{A}$ denotes the semantics of $\code{a.s}$ where
$\langi{A}$ only allows for interaction at the assembly level.
Note that the interface for a module may not match its native
semantics.  For example, $\sem{\code{a.s}}:\langi{C}$ asserts that
$\sem{\code{a.s}}$ actually converts assembly level calls/returns to C
function calls/returns for interacting with C environments (e.g.,
extracting arguments from registers and memory to form an argument
list for C function calls). In this case, $\sem{\code{a.s}}$ \emph{deviates}
from the native semantics of \code{a.s}.
When the interface of $\sem{M}$ is not explicitly given, it is
implicitly the native interface of $M$.
We write $\refinesymb: \sctype{\langi{I}_1}{\langi{I}_2}$ to denote a
refinement between two semantics with interfaces ${\langi{I}_1}$ and
${\langi{I}_2}$.
For instance, given $\refinesymb_{\texttt{ac}}:
\sctype{\langi{A}}{\langi{C}}$ that relates open semantics at the C
and assembly levels, $\refinedi{\sem{\code{b.s}}}{L_b}{{\texttt{ac}}}$
asserts that ${\sem{\code{b.s}}}$ is the native semantics of
\code{b.s} and is refined by the C level specification $L_b$.


For VCC, it is essential that variance of open semantics and
refinements does not impede compositionality and adequacy. The
existing approaches achieve this by imposing \emph{algebraic structures} on
refinements. We categorize them by their algebraic
structures below, and explain the problems facing them via three
well-known extensions of CompCert~\cite{compcert} (the
state-of-the-art verified C compiler) to support VCC, i.e.,
Compositional CompCert (CompComp)~\cite{stewart15},
CompCertM~\cite{compcertm} and CompCertO~\cite{compcerto}.

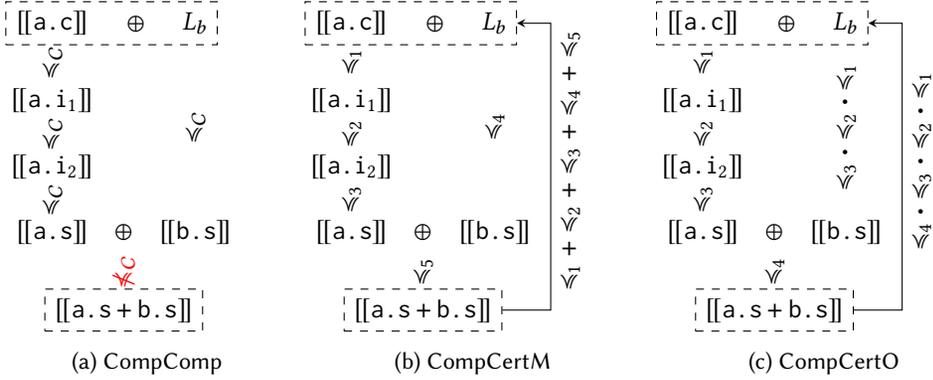
\begin{figure}
  \begin{subfigure}[b]{0.28\textwidth}
  \begin{tikzpicture}
    \node [] (as) at (-5,0) {$\sem{\code{a.c}}$};
    \node [below = 0.45 of as] (aone) {$\sem{\code{a.i}_1}$};
    \node [below = 0.3 of aone] (atwo) {$\sem{\code{a.i}_2}$};
    \node [below = 0.3 of atwo] (at) {$\sem{\code{a.s}}$};

    \path (as) -- node[sloped,rotate=180] {$\refinesymb_{\langi{C}}$} (aone);
    \path (aone) -- node[sloped,rotate=180] {$\refinesymb_{\langi{C}}$} (atwo);
    \path (atwo) -- node[sloped,rotate=180] {$\refinesymb_{\langi{C}}$} (at);

    \node [right = 1.0 cm of as] (bspec) {$L_b$};
    \path let \p1 = (bspec) in let \p2 = (at) in
        node (bt) at (\x1, \y2) {$\sem{\code{b.s}}$};

    \path (bspec) -- node[sloped, rotate=180] {$\refinesymb_{\langi{C}}$} (bt);

    \path (as) -- node[sloped] {$\semlink$} (bspec);
    \path (at) -- node[sloped] {$\semlink$} (bt);

    \node [draw, rectangle, dashed, fit = (as) (bspec), inner sep = -0.0pt] (srcsem) {};
    \node [fit = (at) (bt), inner sep = -0.0pt] (tgtsem) {};
    \node [draw, rectangle, dashed, below = 0.45 cm of tgtsem] (tgtsyn) 
          {$\sem{\code{a.s} + \code{b.s}}$};

    \path (tgtsem) -- node[sloped,rotate=180] 
          {\color{red}$\not\refinesymb_{\langi{C}}$} (tgtsyn);

  \end{tikzpicture}
  \caption{CompComp}
  \label{fig:const-refinement}
  \end{subfigure}
  \begin{subfigure}[b]{0.33\textwidth}
  \begin{tikzpicture}
    \node [] (as) at (-5,0) {$\sem{\code{a.c}}$};
    \node [below = 0.45 of as] (aone) {$\sem{\code{a.i}_1}$};
    \node [below = 0.3 of aone] (atwo) {$\sem{\code{a.i}_2}$};
    \node [below = 0.3 of atwo] (at) {$\sem{\code{a.s}}$};

    \path (as) -- node[sloped,rotate=180] {$\refinesymb_1$} (aone);
    \path (aone) -- node[sloped,rotate=180] {$\refinesymb_2$} (atwo);
    \path (atwo) -- node[sloped,rotate=180] {$\refinesymb_3$} (at);

    \node [right = 1.0 cm of as] (bspec) {$L_b$};
    \path let \p1 = (bspec) in let \p2 = (at) in
        node (bt) at (\x1, \y2) {$\sem{\code{b.s}}$};

    \path (bspec) -- node[sloped, rotate=180] {$\refinesymb_4$} (bt);

    \path (as) -- node[sloped] {$\semlink$} (bspec);
    \path (at) -- node[sloped] {$\semlink$} (bt);

    \node [draw, rectangle, dashed, fit = (as) (bspec), inner sep = -0.0pt] (srcsem) {};
    \node [fit = (at) (bt), inner sep = -0.0pt] (tgtsem) {};
    \node [draw, rectangle, dashed, below = 0.45 cm of tgtsem] (tgtsyn) 
          {$\sem{\code{a.s} + \code{b.s}}$};

    \path (tgtsem) -- node[sloped,rotate=180] 
          {$\refinesymb_5$} (tgtsyn);

    \draw [-stealth] (tgtsyn) -- +(1.7,0) |- 
          node[pos = .26, sloped, below] 
            {$\refinesymb_1 + \refinesymb_2 + \refinesymb_3 + \refinesymb_4 + \refinesymb_5$} (srcsem);

  \end{tikzpicture}
  \caption{CompCertM}
  \label{fig:sum-refinement}
  \end{subfigure}
  \begin{subfigure}[b]{0.33\textwidth}
  \begin{tikzpicture}
    \node [] (as) at (-5,0) {$\sem{\code{a.c}}$};
    \node [below = 0.45 of as] (aone) {$\sem{\code{a.i}_1}$};
    \node [below = 0.3 of aone] (atwo) {$\sem{\code{a.i}_2}$};
    \node [below = 0.3 of atwo] (at) {$\sem{\code{a.s}}$};

    \path (as) -- node[sloped,rotate=180] {$\refinesymb_1$} (aone);
    \path (aone) -- node[sloped,rotate=180] {$\refinesymb_2$} (atwo);
    \path (atwo) -- node[sloped,rotate=180] {$\refinesymb_3$} (at);

    \node [right = 1.0 cm of as] (bspec) {$L_b$};
    \path let \p1 = (bspec) in let \p2 = (at) in
        node (bt) at (\x1, \y2) {$\sem{\code{b.s}}$};

    \path (bspec) -- node[sloped, rotate=180] 
          {$\refinesymb_3 \compsymb \refinesymb_2 \compsymb \refinesymb_1$} (bt);

    \path (as) -- node[sloped] {$\semlink$} (bspec);
    \path (at) -- node[sloped] {$\semlink$} (bt);

    \node [draw, rectangle, dashed, fit = (as) (bspec), inner sep = -0.0pt] (srcsem) {};
    \node [fit = (at) (bt), inner sep = -0.0pt] (tgtsem) {};
    \node [draw, rectangle, dashed, below = 0.45 cm of tgtsem] (tgtsyn) 
          {$\sem{\code{a.s} + \code{b.s}}$};

    \path (tgtsem) -- node[sloped,rotate=180] 
          {$\refinesymb_4$} (tgtsyn);

    \draw [-stealth] (tgtsyn) -- +(1.7,0) |- 
          node[pos = .26, sloped, below] 
            {$\refinesymb_4 \compsymb \refinesymb_3 \compsymb \refinesymb_2 \compsymb \refinesymb_1$} (srcsem);

  \end{tikzpicture}
  \caption{CompCertO}
  \label{fig:prod-refinement}
  \end{subfigure}
  \caption{Refinements in the Existing Approaches to VCC}
  \label{fig:refinements}
\end{figure}
\paragraph{Constant Refinement}
An obvious way to account for different semantics in VCC is to force
every semantics to use the same language interface $\langi{I}$ and 
a constant refinement $\refinesymb_{\langi{I}} :
\sctype{\langi{I}}{\langi{I}}$. CompComp adopts this
``one-type-fits-all'' approach by having every language of CompCert to
use C function calls/returns for module-level interactions and using a
uniform refinement relation $\refinesymb_{\langi{C}}:
\sctype{\langi{C}}{\langi{C}}$ known as \emph{structured
simulation}~\cite{stewart15}. In this case, vertical and horizontal
compositionality is established by proving transitivity of
$\refinesymb_{\langi{C}}$ and symmetry of \emph{rely-guarantee conditions} of
$\refinesymb_{\langi{C}}$. However, because the C interface is adopted
for assembly semantics, adequacy at the target level is lost,
making end-to-end compiler correctness not provable as shown
in~\figref{fig:const-refinement}.
%

\paragraph{Sum of Refinements}
A more relaxed approach allows users to choose language interfaces for
different IRs from a finite collection $\{\langi{I}_1, \ldots,
\langi{I}_m\}$ and refinements for different
passes from a finite set $\{\refinesymb_1, \ldots, \refinesymb_n\}$
relating these interfaces, i.e., $\refinesymb_i :
\sctype{\langi{I}_1 + \ldots + \langi{I}_m}{\langi{I}_1 + \ldots +
  \langi{I}_m}$.
In essence, a constant refinement is split into a sum of refinements
s.t. ${L}\;{\refinesymb_1 + \ldots + \refinesymb_n}\;{L'}$ holds if
$\refinedi{L}{L'}{_i}$ for some $1 \leq i \leq n$. Then, every compiler
pass can use ${\refinesymb_1 + \ldots + \refinesymb_n}$ as the
uniform refinement relation, which is proven both composable and
adequate under certain well-formedness constraints.
%
%
\figref{fig:sum-refinement} depicts such an example where semantics
have both C and assembly interfaces (e.g., $\sem{\code{a.s}} :
\langi{A} + \langi{C}$) and the refinement relations $\refinesymb_i :
\sctype{\langi{A} + \langi{C}}{\langi{A} + \langi{C}} (1 \leq i \leq
5)$ are tailored for each pass. 
This is the approach adopted by CompCertM~\cite{compcertm}. 
However, the top-level refinement ${\refinesymb_1 + \ldots +
  \refinesymb_n}$ is difficult to use by a third party  
without introducing complicated dependency on intermediate
results of compilation.
For example, horizontal composition with ${\refinesymb_1 + \ldots +
  \refinesymb_n}$ only works for modules \emph{self-related} by all
the refinements $\refinesymb_i (1 \leq i \leq n)$. Since
${\refinesymb_i}$s are tailored for individual 
passes, they inevitably depend on the intermediate semantics used in
compilation. Such dependency is only exacerbated as new languages,
compilers and optimizations are introduced.

%
%

\paragraph{Product of Refinements}
The previous approach effectively ``flattens'' the refinements for
individual compiler passes into an end-to-end refinement. A different
approach adopted by CompCertO~\cite{compcerto} is to ``concatenate''
the refinements for individual passes into a chain of refinements by a
product operation $(\_ \compsymb \_)$ such that ${L}\;
{\comp{\refinesymb_1}{\refinesymb_2}}\; {L''}$ if
$\refinedi{L}{L'}{1}$ and $\refinedi{L'}{L''}{2}$ for some
$L'$. \figref{fig:prod-refinement} illustrates how it works. Vertical
composition is simply the concatenation of refinements. For example,
composing refinements for compiling \code{a.c} results
in ${\sem{\code{a.s}}}\; {\refinesymb_3 \compsymb
  \refinesymb_2 \compsymb \refinesymb_1}\; {\sem{\code{a.c}}}$. 
Adequacy is trivially guaranteed with native interfaces.
However, horizontal composition still depends on the intermediate
semantics of compilation because of the concatenation.
For example, in~\figref{fig:prod-refinement}, to horizontally compose
with ${\sem{\code{a.s}}}\; {\refinesymb_3 \compsymb \refinesymb_2
  \compsymb \refinesymb_1}\; {\sem{\code{a.c}}}$, it is necessary to
show $L_b$ refines $\sem{\code{b.s}}$ via the same product, i.e., to
construct intermediate semantics bridging $\refinesymb_1$,
$\refinesymb_2$ and $\refinesymb_3$.

\paragraph{Summary} The existing approaches for VCC either lack adequacy
because they force non-native language interfaces on semantics for
open modules (e.g., CompComp) or lack compositionality that is truly
extensional because they depend on intermediate semantics used
in compilation (e.g., CompCertM and CompCertO). Such dependency makes
their correctness theorems for compiling open modules (e.g.,
libraries) difficult to further compose with and incurs a
high cost in verification.


\subsection{Challenges for Direct Refinement of Open Modules}
The ideal approach to VCC should produce refinements that directly
relate the native semantics of source and target open modules without
mentioning any intermediate semantics and support both vertical and
horizontal composition. We shall call them \emph{direct refinements of
open modules}.
For example, a direct refinement between $\code{a.c}$ and $\code{a.s}$
could be $\refinesymb_{\texttt{ac}} : \sctype{\langi{A}}{\langi{C}}$
s.t. $\refinedi{\sem{\code{a.s}}}{\sem{\code{a.c}}}{{\texttt{ac}}}$. It
relates assembly and C without mentioning intermediate
semantics, and could be further horizontally composed with
$\refinedi{\sem{\code{b.s}}}{L_b}{{\texttt{ac}}}$ and vertically
composed by adequacy to get $\refinedi{\sem{\code{a.s} +
    \code{b.s}}}{\sem{\code{a.c}} \semlink L_b}{\texttt{ac}}$. Note
that even the top-level refinement is still open to horizontal
and vertical composition, making direct refinements effective for
supporting VCC for open modules.

The main challenge in getting direct refinements is tied to their
``real'' vertical composition, i.e., given any direct refinements
${\refinesymb_1}$ and ${\refinesymb_2}$, how to show
$\comp{\refinesymb_1}{\refinesymb_2}$ is equivalent to a direct
refinement $\refinesymb_3$. This is considered very technical and
involved
(see~\citet{compcertm,neis15icfp,patterson-icfp-2019,hur2012tr})
because of the difficulty in constructing \emph{interpolating} program
states for transitively relating evolving source and target states
across \emph{external calls} of open modules. This problem also
manifests in proving transitivity for \emph{logical relations} where
construction of interpolating terms of higher-order types is not in
general possible~\cite{Ahmed06esop}. In the setting of compiling
first-order languages with global memory, all previous work avoids
proving real vertical composition of direct refinements. Some produce
refinement without adequacy by introducing intrusive changes to
semantics to make construction of interpolating states
possible. For example, CompComp instruments the semantics of languages
with \emph{effect annotations} to expose internal effects for this
purpose. Some essentially restrict vertical composition to
\emph{closed} programs (e.g., CompCertM). Some leave the
top-level refinement a combination of refinements that
still exposes the intermediate steps of compilation (e.g., CompCertO).
Finally, even if the problem of vertical composition was solved,
it is not clear if the solution can support realistic optimizing
compilers.


\subsection{Our Contributions}
\label{ssec:contrib}
In this paper, we propose an approach to direct refinements
for VCC of imperative programs that addresses all of the above challenges.
Our approach is based on the memory model of CompCert which
supports first-order states and pointers. 
%
We show that in this memory model
interpolating states for proving vertical compositionality of
refinements can be constructed by exploiting the properties on memory
invariants known as \emph{memory injections}. The solution is based on a new discovery
that a \emph{Kripke relation with memory protection} can
serve as a uniform and composable relation for characterizing the
evolution of memory states across external calls. With this
relation we successfully
combined the correctness theorems of CompCert's passes
into a direct refinement between C
and assembly modules.
We summarize our technical contributions below:
\begin{itemize}
\item We prove that \kinjp---a Kripke Memory Relation with a notion
  of memory protection---is
  both uniform (i.e., memory transformation in every compiler pass
  respects this relation) and composable (i.e., transitive modulo an
  equivalence relation). The critical observation making this proof
  possible is that interpolating memory states can be constructed by
  exploiting memory protection \emph{inherent} to memory injections
  and the \emph{functional} nature of injections. 

\item Based on the above observation, we show that a direct refinement
  from C to assembly can be derived by composing open refinements for
  all of CompCert's passes starting from Clight. In particular, we
  show that compiler passes can use different Kripke relations
  sufficient for their proofs (which may be weaker than \kinjp) and
  these relations will later be absorbed into \kinjp via refinements
  of open semantics. Furthermore, we show that assumptions for
  compiler optimizations can be formalized as \emph{semantic invariants} and, when
  piggybacked onto \kinjp, can be transitively composed. Based on these
  techniques, we upgrade the proofs in CompCertO to get a direct
  refinement from C to assembly for the full CompCert, including all
  of its optimization passes. These experiments show that direct
  refinements can be obtained without fundamental changes to the
  verification framework of CompCert.

\item We demonstrate the simplicity and usefulness of direct
  refinements by applying it to end-to-end verification of several
  non-trivial examples with heterogeneous modules that \emph{mutually}
  invoke each other. In particular, we observe that C level
  refinements can be absorbed into the direct refinement of CompCert
  by transitivity of \kinjp. Combining direct refinements with full
  compositionality and adequacy, we derive end-to-end refinements from
  high-level source specifications to syntactically linked assembly
  modules in a straightforward manner.

\end{itemize}

The above developments are fully formalized in Coq based on the latest
CompCertO which is in turn built on top of CompCert v3.10 (see the
data-availability statement at the end of the paper for more details).
While the formalisation of our approach is tied to CompCert’s
block-based memory model~\cite{compcert-mem-v2}, and applied to its
particular chain of compilation, we present evidence
in~\secref{sec:generality} that variants of \kinjp could be adapted
for alternate memory models for first-order languages, and that it may
be extended to support new optimizations. Therefore, this work
provides a promising direction for further evolving the techniques for
VCC.


\subsection{Structure of the Paper}
Below we first introduce the key ideas supporting
this work in~\secref{sec:ideas}. We then introduce necessary
background and discuss the technical challenges for building
direct refinements in~\secref{sec:background-challenges}. We
present our technical contributions
in~\secref{sec:injp},~\secref{sec:refinement}
and~\secref{sec:application}. We 
discuss the generality and limitations of our approach 
in~\secref{sec:generality}. We discuss evaluation and related
work in~\secref{sec:related} and finally conclude
in~\secref{sec:conc}.

\section{Key Ideas}
\label{sec:ideas}


\begin{figure}
\begin{subfigure}[b]{0.3\textwidth}
\begin{lstlisting}[language = C]
/* client.c */
int result;

void encrypt(int i,
     void(*p)(int*));

void process(int *r) 
{
  result = *r;
}

int request(int i) 
{
  encrypt(i,process);
  return i;
}
\end{lstlisting}
\caption{Client in C}
\label{fig:client}
\end{subfigure}
%
\begin{subfigure}[b]{0.33\textwidth}
\begin{lstlisting}[language = C]
/* server.s */  
key:
  .long 42
encrypt:
  // allocate frame
  Pallocframe 24 16 0
  // RSP[8] = i XOR key
  Pmov key RAX 
  Pxor RAX RDI
  Pmov RDI 8(RSP)
  // call p(RSP + 8)
  Plea 8(RSP) RDI
  Pcall RSI   
  // free frame 
  Pfreeframe 24 16 0 
  Pret
\end{lstlisting}  
\caption{Server in Asm}
\label{fig:server}
\end{subfigure}
\begin{subfigure}[b]{0.33\textwidth}
  \begin{lstlisting}[language = C]
/* server_opt.s 
 * key is an constant
 * and inlined in code */
encrypt:
  // allocate frame
  Pallocframe 24 16 0 
  // RSP[8] = i XOR 42
  Pxori 42 RDI

  Pmov RDI 8(RSP)
  // call p(RSP + 8)
  Plea 8(RSP) RDI
  Pcall RSI       
  // free frame 
  Pfreeframe 24 16 0 
  Pret
\end{lstlisting}
\caption{Optimized Server}
\label{fig:server_opt}
\end{subfigure}

\caption{An Example of Encryption Client and Server}
\label{fig:clientserver}
\end{figure}

We introduce a running example with heterogeneous modules and callback
functions to illustrate the key ideas of our work.
This example is representative of mutual dependency between modules
that often appears in practice and it shows how free-form invocation
between modules can be supported by our approach. As we shall see
in~\secref{sec:application}, our approach also handles more
complicated programs with mutually \emph{recursive} heterogeneity
without any problem.

The example is given in~\figref{fig:clientserver}. It consists of a
client written in C (\figref{fig:client}) and an encryption server
hand-written in x86 assembly by using CompCert's assembly syntax where
instruction names begin with \code{P} (\figref{fig:server}). For now,
let us ignore \figref{fig:server_opt} which illustrates how
optimizations work in direct refinements. Users invoke \code{request}
to initialize an encryption request. It is relayed to the function
\code{encrypt} in the server with the prototype
\code{void encrypt(int i, void (*p)(int*))}
which respects a calling convention placing the first and second
arguments in registers \code{RDI} and \code{RSI}, respectively. The main job of
the server is to encrypt \code{i} (\code{RDI}) by XORing it with an
encryption key (stored in the global variable \code{key}) and invoke
the callback function \code{p} (\code{RSI}). Finally, the client takes
over and stores the encrypted value in the global variable
\code{result}.
The pseudo instruction \code{Pallocframe m n o} allocates a stack
frame of \code{m} bytes and stores its address in register \code{RSP}. In
this frame, a pointer to the caller's stack frame is stored at the
\code{o}-th byte and the return address is stored at the \code{n}-th
byte. Note that \code{Pallocframe 24 16 0} in \code{encrypt} reserves
$8$ bytes on the stack from \code{RSP + 8} to \code{RSP + 16} for
storing the encrypted value whose address is passed to
the callback function \code{p}.
\code{Pfreeframe m n o} frees the frame and
restores \code{RSP} and the return address \code{RA}.

\begin{wrapfigure}{R}{.4\textwidth}
  \centering
  \begin{tikzpicture}
    \node [] (as) at (-5,0) {$\sem{\code{client.c}}$};
    \node [below = 1.2 of as] (at) {$\sem{\code{client.s}}$};

    \draw [-stealth, dashed, line width=1] (as) -- 
      node [below, sloped] {\footnotesize CompCert} 
      node [below, sloped, rotate=180] {$\refinesymb_{\texttt{ac}}$}
      (at);
    
    \node [right = 1.0 cm of as] (bspec) {$L_{\texttt{S}}$};
    \path let \p1 = (bspec) in let \p2 = (at) in
        node (bt) at (\x1, \y2) {$\sem{\code{server.s}}$};

    \path (bspec) -- node[sloped, rotate=180] 
          {$\refinesymb_{\texttt{ac}}$} (bt);

    \path (as) -- node[pos=0.2, sloped] {$\semlink$} (bspec);
    \path (at) -- node[sloped] {$\semlink$} (bt);

    \node [fit = (as) (bspec), inner sep = -0.0pt] (srcsem) {};
    \node [fit = (at) (bt), inner sep = -0.0pt] (tgtsem) {};
    \node [draw, rectangle, dashed, below = 0.55 cm of tgtsem] (tgtsyn) 
          {$\sem{\code{client.s} + \code{server.s}}$};

    \path (tgtsem) -- node[sloped,rotate=180] 
          {$\refinesymb_{\texttt{id}}$} (tgtsyn);

    \node [draw, rectangle, dashed, above = 3.3cm of tgtsyn] (topspec)
          {$L_{\texttt{CS}}$};

    \path (tgtsyn) -- node[pos=0.9, sloped] 
          {$\refinesymb_{\texttt{c}}$} (topspec);

    \draw [-stealth] (tgtsyn) -- +(2.2,0) |- 
          node[pos = .26, sloped, below] 
            {$\refinesymb_{\texttt{ac}}$}
            (topspec);
  \end{tikzpicture}
  \caption{Verifying the Running Example}
  \label{fig:running-exm-refinement}
\end{wrapfigure}
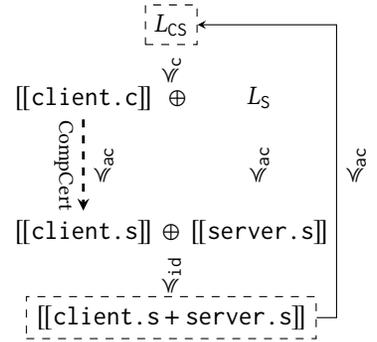

With the running example, our goal is to verify its end-to-end
correctness by exploiting the direct refinement
$\refinesymb_{\texttt{ac}}: \sctype{\langi{A}}{\langi{C}}$ 
derived from CompCert's compilation chain
as shown in~\figref{fig:running-exm-refinement}. The verification
proceeds as follows. First, we establish
$\refinedi{\sem{\code{client.s}}}{\sem{\code{client.c}}}{\texttt{ac}}$ by the
correctness of compilation. Then, we prove
$\refinedi{\sem{\code{server.s}}}{L_{\texttt{S}}}{\texttt{ac}}$ manually by
providing a specification ${L_{\texttt{S}}}$ for the server that
respects the direct refinement. At the source level, the combined
semantics is further refined to a single top-level
specification $L_{\texttt{CS}}$. Finally, the source and
target level refinements are absorbed into the direct refinement by
vertical composition and adequacy, resulting in a \emph{single direct refinement} between the top-level specification and the
target program:
\[
\refinedi{\sem{\code{client.s} +
    \code{server.s}}}{L_{\texttt{CS}}}{\texttt{ac}}
\]
%


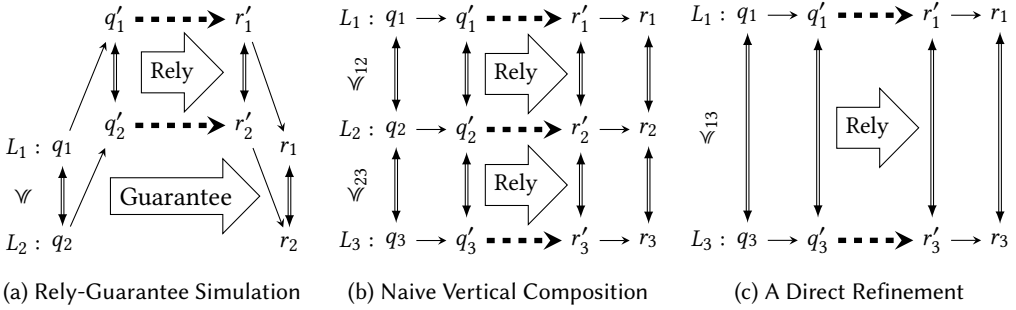
\begin{figure}






  \begin{subfigure}[b]{0.32\textwidth}
  \centering
  \begin{tikzpicture}
    \node (q1) {\small $q_1$};
    \node[left= -0.1cm of q1] (l1) {\small $L_1:$};
    \node[right = 2.5cm of q1] (r1) {\small $r_1$};
    \node[below = 0.8cm of q1] (q2) {\small {$q_2$}};
    \path let \p1 = (r1) in let \p2 = (q2) in
        node (r2) at (\x1, \y2) {\small $r_2$};

    \path let \p1 = (q1) in let \p2 = (r2) in
        node at ({(\x1+\x2)*0.5}, {(\y1+\y2)*0.5}) [single arrow,draw]{Guarantee};

    \draw[double, latex-latex] (q1) -- (q2);
    \draw[double, latex-latex] (r1) -- (r2);

    \node (q2p) at ($(q1)+(0.7cm,0.3cm)$) {\small $q_2'$};
    \node[left= -0.1cm of q2] (l2) {\small $L_2:$};
    \node[right = 1.2cm of q2p] (r2p) {\small $r_2'$};
    \node[above = 0.8cm of q2p] (q1p) {\small $q_1'$};
    \path let \p1 = (r2p) in let \p2 = (q1p) in
        node (r1p) at (\x1, \y2) {\small $r_1'$};

    \draw[double, latex-latex] (q1p) -- (q2p);
    \draw[double, latex-latex] (r1p) -- (r2p);

    \draw[-stealth] (q1) -- (q1p);
    \draw[-stealth] (q2) -- (q2p);
    \draw[-stealth, line width = 2, dashed] (q1p) -- (r1p);
    \draw[-stealth, line width = 2, dashed] (q2p) -- (r2p);
    \draw[-stealth] (r1p) -- (r1);
    \draw[-stealth] (r2p) -- (r2);

    \path let \p1 = (q1p) in let \p2 = (r2p) in
        node at ({(\x1+\x2)*0.5-0.1cm}, {(\y1+\y2)*0.5}) [single arrow,draw]{\small Rely};

    \path (l2) -- node[sloped] {$\refinesymb$} (l1);


  \end{tikzpicture}
  \caption{Rely-Guarantee Simulation}
  \label{sfig:base-sim}
  \end{subfigure}
  \begin{subfigure}[b]{0.33\textwidth}
  \centering
  \begin{tikzpicture}
    \node (q1) {\small $q_1$};
    \node[left= -0.1cm of q1] (l1) {\small {$L_1:$}};
    \node[right = 0.4cm of q1] (q1p) {\small $q_1'$};
    \node[right = 1cm of q1p] (r1p) {\small $r_1'$};
    \node[right = 0.4cm of r1p] (r1) {\small $r_1$};

    \node[below = 1cm of q1] (q2) {\small $q_2$};
    \node[left= -0.1cm of q2] (l2) {\small $L_2:$};
    \path let \p1 = (q2) in let \p2 = (q1p) in
        node (q2p) at (\x2, \y1) {\small $q_2'$};
    \path let \p1 = (q2) in let \p2 = (r1p) in
        node (r2p) at (\x2, \y1) {\small $r_2'$};
    \path let \p1 = (q2) in let \p2 = (r1) in
        node (r2) at (\x2, \y1) {\small $r_2$};

    \node[below = 1cm of q2] (q3) {\small {$q_3$}};
    \node[left= -0.1cm of q3] (l3) {\small $L_3:$};
    \path let \p1 = (q3) in let \p2 = (q1p) in
        node (q3p) at (\x2, \y1) {\small $q_3'$};
    \path let \p1 = (q3) in let \p2 = (r1p) in
        node (r3p) at (\x2, \y1) {\small $r_3'$};
    \path let \p1 = (q3) in let \p2 = (r1) in
        node (r3) at (\x2, \y1) {\small $r_3$};

    \path (l2) -- node[sloped] {$\refinesymb_{12}$} (l1);
    \path (l3) -- node[sloped] {$\refinesymb_{23}$} (l2);

    \draw[double, latex-latex] (q1) -- (q2);
    \draw[double, latex-latex] (q1p) -- (q2p);
    \draw[double, latex-latex] (r1p) -- (r2p);
    \draw[double, latex-latex] (r1) -- (r2);

    \draw[double, latex-latex] (q2) -- (q3);
    \draw[double, latex-latex] (q2p) -- (q3p);
    \draw[double, latex-latex] (r2p) -- (r3p);
    \draw[double, latex-latex] (r2) -- (r3);

    \draw[-stealth] (q1) -- (q1p);
    \draw[-stealth] (q2) -- (q2p);
    \draw[-stealth] (q3) -- (q3p);
    \draw[-stealth, line width = 2, dashed] (q1p) -- (r1p);
    \draw[-stealth, line width = 2, dashed] (q2p) -- (r2p);
    \draw[-stealth, line width = 2, dashed] (q3p) -- (r3p);
    \draw[-stealth] (r1p) -- (r1);
    \draw[-stealth] (r2p) -- (r2);
    \draw[-stealth] (r3p) -- (r3);

    \path let \p1 = (q1p) in let \p2 = (r2p) in
        node at ({(\x1+\x2)*0.5-0.1cm}, {(\y1+\y2)*0.5}) [single arrow,draw]{\small Rely};

    \path let \p1 = (q2p) in let \p2 = (r3p) in
        node at ({(\x1+\x2)*0.5-0.1cm}, {(\y1+\y2)*0.5}) [single arrow,draw]{\small Rely};
  \end{tikzpicture}
  \caption{Naive Vertical Composition}
  \label{sfig:naive-vcomp}
  \end{subfigure}
  \begin{subfigure}[b]{0.33\textwidth}
  \centering
  \begin{tikzpicture}
    \node (q1) {\small $q_1$};
    \node[left= -0.1cm of q1] (l1) {\small $L_1:$};

    \node[right = 0.4cm of q1] (q1p) {\small $q_1'$};
    \node[right = 1cm of q1p] (r1p) {\small $r_1'$};
    \node[right = 0.4cm of r1p] (r1) {\small $r_1$};

    \node[below = 2.5cm of q1] (q3) {\small $q_3$};
    \node[left= -0.1cm of q3] (l3) {\small $L_3:$};
    \path let \p1 = (q3) in let \p2 = (q1p) in
        node (q3p) at (\x2, \y1) {\small $q_3'$};
    \path let \p1 = (q3) in let \p2 = (r1p) in
        node (r3p) at (\x2, \y1) {\small $r_3'$};
    \path let \p1 = (q3) in let \p2 = (r1) in
        node (r3) at (\x2, \y1) {\small $r_3$};

    \path (l3) -- node[sloped] {\small $\refinesymb_{13}$} (l1);

    \draw[double, latex-latex] (q1) -- (q3);
    \draw[double, latex-latex] (q1p) -- (q3p);
    \draw[double, latex-latex] (r1p) -- (r3p);
    \draw[double, latex-latex] (r1) -- (r3);

    \draw[-stealth] (q1) -- (q1p);
    \draw[-stealth] (q3) -- (q3p);
    \draw[-stealth, line width = 2, dashed] (q1p) -- (r1p);
    \draw[-stealth, line width = 2, dashed] (q3p) -- (r3p);
    \draw[-stealth] (r1p) -- (r1);
    \draw[-stealth] (r3p) -- (r3);

    \path let \p1 = (q1p) in let \p2 = (r3p) in
        node at ({(\x1+\x2)*0.5 - 0.1cm}, {(\y1+\y2)*0.5}) [single arrow,draw]{\small Rely};

  \end{tikzpicture}
  \caption{A Direct Refinement}
  \label{sfig:real-vcomp}
  \end{subfigure}

  \caption{Basic Concepts of Open Simulations}
  \label{fig:open-sim-basics}
\end{figure}

The refinements of open modules discussed in our paper are based on
forward simulations between small-step operational semantics (often in
the form of \emph{labeled transition systems} or LTS) which have been
witnessed in a wide range of verification
projects~\cite{stewart15,compcertm,compcerto,cascompcert,wang2019,dscal15}.
\figref{sfig:base-sim} depicts a refinement $\refined{L_2}{L_1}$
between two open semantics (LTS) $L_1$ and $L_2$. The source (target)
semantics $L_1$ ($L_2$) is initialized with a query (i.e., function call)
$q_1$ ($q_2$) and may invoke an external call $q_1'$ ($q_2'$) as the
execution goes. The execution continues when $q_1'$ ($q_2'$) returns
with a reply $r_1'$ ($r_2'$) and finishes with a reply $r_1$ ($r_2$).
For the refinement to hold, an invariant between the source and target
program states must hold throughout the execution which is denoted by
the vertical double arrows in~\figref{sfig:base-sim}.
Furthermore, this refinement relies on external calls satisfying
certain well-behavedness conditions (known as \emph{rely-conditions};
e.g., external calls do not modify the private memory of callers). In
turn, it guarantees the entire source and target execution satisfy
some well-behavedness conditions (known as
\emph{guarantee-conditions}, e.g., they do not modify the private
memory of their calling environments).
The rely-guarantee conditions are essential for horizontal
composition: two refinements $\refined{L_1}{L_2}$ and
$\refined{L_1'}{L_2'}$ with complementary rely-guarantee conditions
can be composed into a single refinement $\refined{L_1 \semlink
  L_2}{L_1' \semlink L_2'}$.
However, vertical composition of such refinements is difficult. A
naive vertical composition of two refinements (one between $L_1$ and
$L_2$ and another between $L_2$ and $L_3$) simply concatenates them
together like~\figref{sfig:naive-vcomp}, instead of generating a
single refinement between $L_1$ and $L_3$
like~\figref{sfig:real-vcomp}. \footnote{To simplify the presentation, we often
  elide the guarantee conditions in figures for simulation.}
This exposes the intermediate semantics
(i.e., $L_2$) and imposes serious limitations on VCC as discussed
in~\secref{sec:intro}. Therefore,
to the best of our knowledge, none of the existing approaches fully
support the verification outlined
in~\figref{fig:running-exm-refinement}.



To address the above problem, we develop direct refinements with the
following distinguishing features: \emph{1)} they always relate
the semantics of modules at their native interfaces, thereby supporting
adequacy; \emph{2)} they do not mention the intermediate process of
compilation, thereby supporting heterogeneous modules and compilers;
\emph{3)} they provide direct memory protection for source and target
semantics via a Kripke relation, thereby enabling horizontal
composition of refinements for heterogeneous modules; \emph{4)} most
importantly, they are vertically composable.
The first three features are manifested in the very definition of
direct refinements, which we shall discuss
in~\secref{ssec:key-idea-direct-refinement} below. We then discuss the
vertical composition of direct refinements
in~\secref{ssec:key-idea-vcomp}, which relies on the discovery of the
uniformity and transitivity of a Kripke relation for memory protection.

\subsection{Refinement Supporting Adequacy, Heterogeneity and Horizontal Composition}
\label{ssec:key-idea-direct-refinement}

%
To illustrate the key ideas, we use the top-level direct refinement
$\refinesymb_{\code{ac}}$ in~\figref{fig:running-exm-refinement} as an
example. In the remaining discussions we adopt the block-based memory
model of CompCert~\cite{compcert-mem-v2} where a memory state consists
of a disjoint set of \emph{memory blocks}.
$\refinesymb_{\code{ac}}$ is a forward simulation that directly
relates C and assembly modules with their native language interfaces.
%
%
By the definition of these interfaces (See~\secref{ssec:background}),
a C query $q_\cli = \cquery{v_f}{\sig}{\vec{v}}{m}$ is a function call
to $v_f$ with signature $\sig$, a list of arguments $\vec{v}$
and a memory state $m$; a C reply $r_\cli = \creply{v'}{m'}$ carries a
return value $v'$ and an updated memory state $m'$. An assembly query
$q_\asmli = \asmquery{\regset}{m}$ invokes a function with the current
register set $\regset$ and memory state $m$. An assembly reply
$r_\asmli = \asmreply{\regset'}{m'}$ returns from a function with the
updated registers $\regset'$ and memory $m'$. By definition, $L_2
\refinesymb_{\code{ac}} L_1$ means that $L_1$ and $L_2$ \emph{behave}
like C and assembly programs at the boundary of modules,
respectively. However, there is no restriction on how $L_1$ and $L_2$
are actually \emph{implemented} internally, which enables
specifications like $L_{\code{S}}$ in~\figref{fig:running-exm-refinement}.

The rely and guarantee conditions imposed by $\refinesymb_{\code{ac}}$
are symmetric and bundled with the simulation invariants at the
boundary of modules. They make assumptions about how C and assembly
queries should be related at the call sites and provide conclusions
about how the replies should be related after the calls return.
Given any matching source and target queries $q_\cli =
\cquery{v_f}{\sig}{\vec{v}}{m_1}$ and $q_\asmli =
\asmquery{\regset}{m_2}$, it is assumed that
\begin{enumerate}
\item The memory states are related by an invariant $j$ known
  as a \emph{memory injection function}~\cite{compcert-mem-v2}, i.e.,
  memory blocks in $m_1$ are projected by $j$ into those in $m_2$;
\item The function pointer $v_f$ is related to the program counter register in $\regset$;
\item The source arguments $\vec{v}$ are projected either to
  registers in $\regset$ or to outgoing argument slots in the stack frame
  $\spreg$ in $m_2$ according to the C calling convention;
\item The outgoing arguments on the target stack frame are \emph{freeable}
  and not in the image of $j$.
\end{enumerate}
The first three requirements ensure that C arguments and memory are
related to assembly registers and memory according to CompCert's C calling
convention. The last one ensures outgoing arguments are
protected, thereby preserving the invariant of open simulation across
external calls.

After the function calls return, the source and target queries $r_\cli
= \creply{\res}{m_1'}$ and $r_\asmli = \asmreply{\regset'}{m_2'}$
must satisfy the following requirements:
\begin{enumerate}
\item The updated memory states $m_1'$ and $m_2'$ are related by an updated memory injection $j'$;
\item The C-level return value $\res$ is related to the value stored
  in the register for return value;
\item For any callee-saved register $\mathit{r}$, $\regset'(\mathit{r})
  = \regset(\mathit{r})$;
\item The stack pointer register and program counter are restored.
\item The access to memory during the function call is protected by a
  \emph{Kripke Memory Relation} $\kinjp$ such that the private stack
  data for other function calls are not modified.
\end{enumerate}
The first two requirements ensure that return values and memories are
related according to the calling convention. The following two ensure
that registers are correctly restored before returning. The last
requirement plays a critical role in rely-guarantee reasoning and
enables horizontal composition of direct refinements as we shall see soon.
%

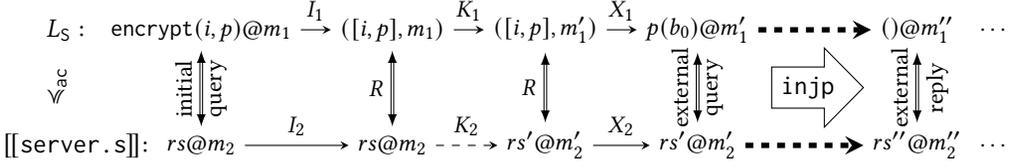
\begin{figure}
\begin{center}
\begin{tikzpicture}[
    world/.style={draw,minimum height = 0.5cm, minimum width = 0.7cm}
  ]
  
  \node (q1) {\small $\scquery{\texttt{encrypt}}{i,p}{m_1}$};
  \node[right = 0.4 of q1] (s1) {\small $([i,p],m_1)$};
  \node[right = 0.4 of s1] (s1p) {\small $([i,p],m_1')$};
  \node[right = 0.4 of s1p] (q1p) {\small $\scquery{p}{b_0}{m_1'}$};
  \node[right = 1.5 of q1p] (r1p) {\small $\creply{()}{m_1''}$};
  \node[right = 0.1 of r1p] (s1pp) {\small $\ldots$};

  \node[below = 1 of q1] (q2) {\small $\asmquery{rs}{m_2}$};

  \path let \p1 = (s1) in let \p2 = (q2) in
     node (s2) at (\x1,\y2) {\small $\asmquery{rs}{m_2}$};
  \path let \p1 = (s1p) in let \p2 = (q2) in
     node (s2p) at (\x1,\y2) {\small $\asmquery{rs'}{m_2'}$};
  \path let \p1 = (q1p) in let \p2 = (q2) in
     node (q2p) at (\x1,\y2) {\small $\asmquery{rs'}{m_2'}$};
  \path let \p1 = (r1p) in let \p2 = (q2) in
     node (r2p) at (\x1,\y2) {\small $\asmreply{rs''}{m_2''}$};
  \path let \p1 = (s1pp) in let \p2 = (q2) in
     node (s2pp) at (\x1,\y2) {\small $\ldots$};

  \draw[-stealth] (q1) -- node[sloped,above] {\small $I_1$} (s1);
  \draw[-stealth] (s1) -- node[sloped,above] {\small $K_1$}(s1p);
  \draw[-stealth] (s1p) -- node[sloped,above] {\small $X_1$} (q1p);
  \draw[-stealth, line width = 2, dashed] (q1p) -- (r1p);

  \draw[-stealth] (q2) -- node[sloped,above] {\small $I_2$} (s2);
  \draw[-stealth, dashed] (s2) -- node[sloped,above] {\small $K_2$} (s2p);
  \draw[-stealth] (s2p) -- node[sloped,above] {\small $X_2$} (q2p);
  \draw[-stealth, line width = 2, dashed] (q2p) -- (r2p);

  \draw [double, latex-latex] (q2) -- 
      node[sloped,above] {\small initial} node[sloped,below] {\small query} (q1);
  \draw [double, latex-latex] (s2) -- 
      node[left] {\small $R$} (s1);
  \draw [double, latex-latex] (s2p) -- 
      node[left] {\small $R$} (s1p);
  \draw [double, latex-latex] (q2p) -- 
      node[sloped,above] (q12p) {\small external} 
      node[sloped,below] (q12p) {\small query} (q1p);
  \draw [double, latex-latex] (r2p) -- 
      node[sloped,above] (r12p) {\small external} 
      node[sloped,below] (q12p) {\small reply} (r1p);

  \path let \p1 = (q1p) in let \p2 = (r2p) in
      node[single arrow,draw] (acca) at ({(\x1+\x2)*.5},{(\y1+\y2)*.5}) {\kinjp};

  \node[left = 0.1 of q1] (lb) {$L_{\texttt{S}}:$};
  \node[left = 0.0 of q2] {$\sem{\code{server.s}}$:};
  \node[below left = 0.1 and 0 of lb, rotate = 90] {$\refinesymb_{\code{ac}}$};

\end{tikzpicture}  
\end{center}
\caption{Direct Refinement of the Hand-written Server}
\label{fig:server-refinement}
\end{figure}

\subsubsection{Adequacy and Heterogeneity via Direct Refinement}
By definition, $\refinesymb_{\code{ac}}$ is basically a formalized C
calling convention for CompCert with direct relations between C and
assembly operational semantics and with invariants for protecting register
values and memory states. Adequacy is
automatically guaranteed as syntactic linking coincides with semantics
linking at the assembly level. That is, given any assembly modules
\code{a.s} and \code{b.s}, 
${\sem{\code{a.s} \synlink \code{b.s}}} \refinesymb_{\code{id}} {\sem{\code{a.s}} \semlink
  \sem{\code{b.s}}}$.

Moreover, $\refinesymb_{\code{ac}}$ does not
mention anything about compilation. It works for any heterogeneous
module and compilation chain that meet its requirements, 
even for hand-written assembly. Take the
refinement of ${\sem{\code{server.s}}} \refinesymb_{\code{ac}} {L_{\texttt{S}}}$
in~\figref{fig:running-exm-refinement} as an example. 
The first few steps of the simulation are depicted
in~\figref{fig:server-refinement}, where $L_{\texttt{S}}$ is an LTS
hand-written by us and ${\sem{\code{server.s}}}$ is derived from
the CompCert assembly semantics. Because 
$L_{\texttt{S}}$ is only required to respect the C interface, we
choose a form easy to comprehend where its internal executions are 
in big steps. Now, suppose the environment calls \code{encrypt} with
source and target queries initially related by CompCert's calling
convention s.t. $rs(\code{RDI}) = i$ and $rs(\code{RSI}) = p$. After
the initialization $I_1$ and $I_2$, the execution enters internal
states related by an invariant $R$. Then, the target execution takes
internal steps $K_2$ until reaching an external call. This corresponds to
executing lines 5-13 in~\figref{fig:server}, which allocates the stack
frame \code{RSP}, performs encryption by storing $\code{i XOR key}$ at
the address \code{RSP+8}, and calls back $p$ with
\code{RSP+8}. At the source level, these steps correspond 
to one big-step execution $K_1$ which allocates a memory block $b_0$, 
stores $\code{i XOR key}$ at $b_0$, and prepares to call
 $p$ with $b_0$. Therefore, the
memory injection in $R$ maps $b_0$ to $\code{RSP+8}$.
%
%
The source and target execution continue with transitions $X_1$ and
$X_2$ to the external calls to $p$, return from $p$ and go on until
they return from \code{encrypt}.

\subsubsection{Horizontal Composition via Kripke Memory Relations}

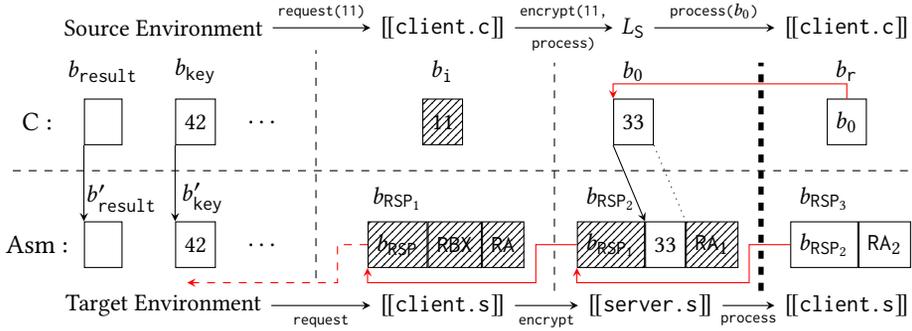
\begin{figure}
\def\bheight{0.6cm}
\begin{tikzpicture}

  \node (senv) {\small Source Environment};
  \node [right = 1.3 of senv] (sclient) {\small $\sem{\code{client.c}}$};
  \node [right = 1.3 of sclient] (ls) {\small $L_{\texttt{S}}$};
  \node [right = 1.6 of ls] (sclient1) {\small $\sem{\code{client.c}}$};
  \draw [-stealth] (senv) -- node[above] {\tiny \texttt{request(11)}} (sclient);
  \draw [-stealth] (sclient) -- 
        node[above] {\tiny \texttt{encrypt(11,}}
        node[below] {\tiny \texttt{process)}} (ls);
  \draw [-stealth] (ls) -- node[above] {\tiny \texttt{process($b_0$)}} (sclient1);


  \hblock{\bheight}{draw, minimum width = 0.5cm,
                    below left = 0.7 and -0.9 of senv} (bres) {};
  \node[above = 0.1 of bres] (brestxt) {\small $b_{\texttt{result}}$};

  \node[left = 0.3 of bres] (ctxt) {C :};

  \hblock{\bheight}{draw, right = 0.7 of bres} (bkey) {\small $42$};
  \path let \p1 = (bkey) in let \p2 = (brestxt) in
      node (bkeytxt) at (\x1, \y2) {\small $b_{\texttt{key}}$};
  
  \node[right = 0.3 of bkey] (sdots) {$\ldots$};

  \path let \p1 = (sclient) in let \p2 = (bres) in
      \hblocki{\bheight}{draw, pattern = north east lines} (bi) 
          at (\x1, \y2) {\small $11$};
  \path let \p1 = (bi) in let \p2 = (brestxt) in
      node (bitxt) at (\x1, \y2) {\small $b_{\texttt{i}}$};

  \path let \p1 = (ls) in let \p2 = (bi) in
      \hblocki{\bheight}{draw} (b0) 
          at (\x1, \y2) {\small $33$};
  \node[above = 0.1 of b0] (b0txt) {\small $b_0$};

  \path let \p1 = (sclient1) in let \p2 = (bi) in
      \hblocki{\bheight}{draw} (br) 
          at (\x1, \y2) {\small $b_0$};
  \path let \p1 = (br) in let \p2 = (brestxt) in
      node (brtxt) at (\x1, \y2) {\small $b_{\texttt{r}}$};
  

  \hblock{\bheight}{draw, minimum width = 0.5cm,
                    below = 1 of bres} (bresp) {};
  \node[above right = 0 and -0.6 of bresp] (bresptxt) 
       {\small $b_{\texttt{result}}'$};

  \path let \p1 = (ctxt) in let \p2 = (bresp) in
      node at (\x1, \y2) {Asm :};

  \path let \p1 = (bkey) in let \p2 = (bresp) in
     \hblocki{\bheight}{draw} (bkeyp) at (\x1, \y2)
         {\small $42$};
  \path let \p1 = (bkeyp) in let \p2 = (bresptxt) in 
      node (bkeyptxt) at ($(\x1, \y2)+(0.1,0)$) {\small $b_{\texttt{key}}'$};

  \path let \p1 = (sdots) in let \p2 = (bresp) in
      node (tdots) at (\x1, \y2) {$\ldots$};
  
  \path let \p1 = (bi) in let \p2 = (bresp) in
      \hblocki{\bheight}{draw, pattern = north east lines} (rsp) 
              at ($(\x1, \y2)+(-0.6,0)$) {\small $b_{\texttt{RSP}}$};
  \hblock{\bheight}{draw, right = 0 of rsp, pattern = north east lines} (rbx) 
        {\small $\texttt{RBX}$};
  \hblock{\bheight}{draw, right = 0 of rbx, pattern = north east lines} (ra) 
        {\small $\texttt{RA}$};
  \path let \p1 = (rsp) in let \p2 = (bresptxt) in 
      node (rsptxt) at (\x1, \y2) {\small $b_{\texttt{RSP}_1}$};

  \path let \p1 = (b0) in let \p2 = (rsp) in
      \hblocki{\bheight}{draw, pattern = north east lines} (rsp1) 
             at ($(\x1, \y2)+(-0.3,0)$) {\small $b_{\texttt{RSP}_1}$};
  \hblock{\bheight}{draw, right = 0 of rsp1} (output) {\small $33$};
  \hblock{\bheight}{draw, right = 0 of output, pattern = north east lines} (ra1) 
        {\small $\texttt{RA}_1$};
  \path let \p1 = (rsp1) in let \p2 = (bresptxt) in 
      node (rsp2txt) at (\x1, \y2) {\small $b_{\texttt{RSP}_2}$};

  \path let \p1 = (br) in let \p2 = (rsp) in
      \hblocki{\bheight}{draw} (rsp2) 
             at ($(\x1, \y2)+(-0.3,0)$) {\small $b_{\texttt{RSP}_2}$};
  \hblock{\bheight}{draw, right = 0 of rsp2} (ra2) 
        {\small $\texttt{RA}_2$};
  \path let \p1 = (rsp2) in let \p2 = (bresptxt) in 
      node (rsp3txt) at (\x1, \y2) {\small $b_{\texttt{RSP}_3}$};
  

  \draw[-stealth] (bres.south west) -- (bresp.north west);

  \draw[-stealth] (bkey.south west) -- (bkeyp.north west);


  \draw[-stealth] (b0.south west) -- (output.north west);
  \draw[dotted] (b0.south east) -- (output.north east);



  \path let \p1 = (senv) in let \p2 = (bresp) in
      node (tenv) at ($(\x1,\y2) + (0, -0.8)$) {\small Target Environment};
  \path let \p1 = (sclient) in let \p2 = (tenv) in
      node (tclient) at (\x1, \y2) {\small $\sem{\code{client.s}}$};
  \path let \p1 = (output) in let \p2 = (tenv) in
      node (server) at ($(\x1, \y2)+(-0.2,0)$) {\small $\sem{\code{server.s}}$};
  \path let \p1 = (sclient1) in let \p2 = (tenv) in
      node (tclient1) at (\x1, \y2) {\small $\sem{\code{client.s}}$};

  \draw[-stealth] (tenv) -- 
      node[sloped,below] {\tiny \texttt{request}} (tclient);
  \draw[-stealth] (tclient) -- 
      node[sloped,below] {\tiny \texttt{encrypt}} (server);
  \draw[-stealth] (server) -- 
      node[sloped,below] {\tiny \texttt{process}} (tclient1);

  
  \draw[-,dashed] ($(tdots.south east) + (0.4, -0.3)$) -- +(0, 3);
  \draw[-,dashed] ($(ra.south east) + (0.4, -0.3)$) -- +(0, 3);
  \draw[-,dashed,line width = 2] ($(ra1.south east) + (0.3, -0.3)$) -- +(0, 3);

  \path let \p1 = ($(ctxt.south west) + (0, -0.4)$) in
        let \p2 = (ra2.east) in
        (\x1,\y1) edge[dashed] (\x2, \y1);

        
  \draw[-stealth,red] (br) -- ++(0,0.5) -| (b0.north west);
  \draw[-stealth,red,dashed] (rsp) -- ++(-0.8,0) -- ++(0, -0.5) -- ++(-2,0);
  \draw[-stealth,red] (rsp1) -- ++(-1,0) -- ++(0, -0.5) -| (rsp.south west);
  \draw[-stealth,red] (rsp2) -- ++(-1,0) -- ++(0, -0.5) -| (rsp1.south west);

\end{tikzpicture}

\caption{Snapshot of the Memory State after Call Back}
\label{fig:mem-protect-exm}
\end{figure}

The Kripke Memory Relation (KMR) $\kinjp$ provides essential
protection for private values on the stack, which ensures
that simulations between heterogeneous modules can be established and
their horizontal composition is feasible.

We illustrate these points via our running example. Assume that the environment
calls \code{request} in the client with $11$ which in turn calls
\code{encrypt} in the server to get the value $11 \;
\code{XOR}\; 42 = 33$ whose address is passed back to the client by
calling \code{process}.  \figref{fig:mem-protect-exm} depicts a
snapshot of the memory states and the injection right after
\code{process} is entered (i.e., at
line 8 in~\figref{fig:client}), where boxes denote allocated memory 
blocks, black arrows between blocks
denote injections, and red arrows denote pointers.
The source semantics allocates one block for each local
variable ($b_i$ for \code{i}, $b_0$ for the encrypted value $33$ and $b_r$
for \code{r}) while the target semantics stores their values in
registers or stacks ($11$ is stored in \code{RDI} while $33$ on the stack 
because its address is taken and may be modified by
the callee). One stack frame is allocated for each function call which
stores private data including pointers to previous frames
($b_{\texttt{RSP}}$), return addresses (\code{RA}), and callee-saved
registers (e.g., \code{RBX}).

\kinjp is essential for proving simulation for open modules as it
guarantees simulation can be re-established after external calls
return. Informally, at every external call site, $\kinjp$ marks all
memory regions outside the footprint (domain and image) of the current
injection as \emph{private} and does not allow the external call to
modify those memory regions.
%
%
From the perspective of \code{server.s}, when the snapshot
in~\figref{fig:mem-protect-exm} is taken, the execution is inside the
thick dashed line in~\figref{fig:server-refinement} and protected by
\kinjp. Therefore, all the shaded memory
in~\figref{fig:mem-protect-exm} are marked as private and protected
against the callback to \code{process}. Indeed, they correspond to
either memory values turned into temporary variables (e.g., $b_i$) or
private stack data (e.g., $b_{\code{RSP}}$, \code{RBX} and \code{RA} in block $b_{\texttt{RSP}_1}$)
that should not be touched by \code{process}.
Such protection ensures that when \code{process} returns, all the
private values are still valid,
thereby re-establishing the simulation invariant.

The role of \kinjp is reversed for the incoming calls from the
environment: it guarantees that the entire execution from the initial
query to the final reply will not touch any private memory of the
environment. Therefore, \kinjp is used to impose a reliance on 
memory protection by external calls and 
to provide a \emph{symmetric} guarantee of memory
protection for the environment callers. Any simulations with
compatible language interfaces that satisfy this rely-guarantee
condition can be horizontally composed. For example, we can
horizontally compose
$\refinedi{\sem{\code{client.s}}}{\sem{\code{client.c}}}{\texttt{ac}}$ and
$\refinedi{\sem{\code{server.s}}}{L_{\code{S}}}{\texttt{ac}}$ into
$\refinedi{\sem{\code{client.s}} \semlink \sem{\code{server.s}}}
         {\sem{\code{client.c}} \semlink L_{\code{S}}}{\texttt{ac}}$ 
in~\figref{fig:running-exm-refinement}.

%

\subsection{Uniform and Transitive KMR for Vertical Composition of Direct Refinements}
\label{ssec:key-idea-vcomp}

Direct refinements are only useful if they can be vertically composed,
which is critical for composing refinements obtained from individual
compiler passes into a single top-level refinement such as
$\refinesymb_{\code{ac}}$ and for further composition with source-level
refinements as shown in~\figref{fig:running-exm-refinement}.

We discuss our approach for addressing this problem by using CompCert
and CompCertO as the concrete platforms.
It is based on the following two observations.
First, \kinjp in fact captures the rely-guarantee conditions for
memory protection needed by every compiler pass in
CompCert. At a high-level, it means that the rely-guarantee conditions
as depicted in~\figref{fig:open-sim-basics} can all be replaced
by \kinjp (modulo the details on language interfaces).
Second, \kinjp is transitively composable, i.e., any vertical pairing
of \kinjp can be proved equivalent to a single \kinjp. It means that
given two refinements $L_2 \refinesymb_{12} L_1$ and $L_3
\refinesymb_{23} L_2$ as depicted in~\figref{sfig:naive-vcomp},
when their rely-guarantee conditions are uniformly represented by
\kinjp, they can be merged into the direct
refinement $L_3 \refinesymb_{13} L_1$ in ~\figref{sfig:real-vcomp}
with a single \kinjp as the rely-guarantee condition.
We shall present the technical challenges leading to these
observations in~\secref{sec:background-challenges} and elaborate on the observations themselves in~\secref{sec:injp}.

By the above observations, an obvious approach for applying direct
refinements to realistic optimizing compilers is to 
prove open simulation for every
compiler pass using \kinjp, and
vertically compose those simulations into a single
simulation. However, for a non-trivial compiler like CompCert, it
means we need to rewrite a significant part of its proofs. More
importantly, optimization passes in CompCert need additional rely-guarantee
conditions as they are based on value
analysis.
To address the first problem, we start from the refinement proofs with
least restrictive KMRs for individual passes in
CompCertO~\cite{compcerto}, and exploit the properties that these KMRs
can eventually be ``absorbed'' into \kinjp in vertical composition to
generate a direct refinement parameterized only by \kinjp. To
address the second problem, we propose a notion of \emph{semantic
invariant} that captures the rely-guarantee conditions for value
analysis. When piggybacked onto \kinjp, this semantic invariant can be
transitively composed along with \kinjp and eventually pushed to the C
level. It then becomes a condition for enabling optimizations at the
source level, e.g., for supporting the refinement of the optimized
server in~\figref{fig:server_opt}. We discuss those solutions
in~\secref{sec:refinement}. 

Finally, we observe that source-level refinements can also be
parameterized by \kinjp, which enables end-to-end program verification
as depicted in~\figref{fig:running-exm-refinement} as we shall
discuss in~\secref{sec:application}.




\section{Background and Challenges}
\label{sec:background-challenges}

\subsection{Background}
\label{ssec:background}

We introduce necessary background, including the memory model, the
framework for simulation-based refinement, and \kinjp which is
critical for direct refinements.



\subsubsection{Block-based Memory Model}
\label{sssec:mem-model}

By~\citet{compcert-mem-v2}, a memory state $m$ (of type \kmem)
consists of a disjoint set of \emph{memory blocks} with unique
identifiers and linear address space. A memory address or pointer $(b,
o)$ points to the $o$-th byte in the block $b$ where $b$ has type
\kblock and $o$ has type $\kz$ (integers). The value at $(b,o)$ is
denoted by $m[b,o]$. Values (of type \kval) are either undefined
(\Vundef), 32- or 64-bit integers or floats, or pointers of the form
$\vptr{b}{o}$.
For simplicity, we often write $b$ for $\vptr{b}{0}$.
The memory operations including allocation, free, read and write
 are provided and governed by permissions of cells.
The permission of a memory cell is ordered from high to
low as $\kfreeable \geqslant \kwritable \geqslant \kreadable \geqslant
\knonempty$ where \kfreeable enables all operations, \kwritable
enables all but free, \kreadable enables only read, and \knonempty
enables none. If $p_1 \geqslant p_2$ then any cell with permission
$p_1$ also implicitly has permission $p_2$. $\perm{m}{p}$ denotes the
set of memory cells with at least permission $p$. For example, $(b, o)
\in \perm{m}{\kreadable}$ iff the cell at $(b,o)$ in $m$ is
readable. An address with no permission at all is not in the footprint
of memory.

Transformations of memory states are captured via partial functions
$j: \kblock \to \some{\kblock \times \kz}$ called \emph{injection
  functions}, s.t. $j(b) = \none$ if $b$ is removed from memory
and $j(b) = \some{(b', o)}$ if $b$ is shifted (injected) to $(b',o)$
in the target memory. We define $\kmeminj = \kblock \to \some{\kblock
  \times \kz}$.
$v_1$ and $v_2$ are related under $j$ (denoted by
$\vinj{j}{v_1}{v_2}$) if either $v_1$ is \Vundef, or they are both
equal scalar values, or pointers shifted according to $j$,
i.e., $v_1 = \vptr{b}{o}$, $j(b) = \some{(b',o')}$ and $v_2 =
\vptr{b'}{o+o'}$.

Given this relation, there is a \emph{memory injection} between the source memory
state $m_1$ and the target state $m_2$ under $j$ (denoted by
$\minj{j}{m_1}{m_2}$) if the following properties are satisfied which
ensure preservation of permissions and values under injection:
\begin{tabbing}
  \quad\=\;\;\=\kill
  \> $\forall\app b_1\app b_2\app o\app o'\app p,\app
  j(b_1) = \some{(b_2,o')} \imply
  (b_1, o) \in \perm{m_1}{p} \imply
  (b_2, o+o') \in \perm{m_2}{p}.$\\
  \> $\forall\app b_1\app b_2\app o\app o',\app
  j(b_1) = \some{(b_2,o')} \imply
  (b_1,o) \in \perm{m_1}{\kreadable}
  \imply \vinj{j}{m_1[b_1,o]}{m_2[b_2,o+o']}.$
\end{tabbing}
Memory injections are \emph{transitive} and necessary for verifying compiler transformations
of memory structures (e.g., merging local variables into
stack-allocated data and generating a concrete stack frame). For the
remaining passes, a simpler relation called \emph{memory extension} is
used instead, which employs an identity injection.
Reasoning about permissions under refinements is a major source of complexity.

\subsubsection{A Framework for Open Simulations}
\label{sssec:framework}


%

In CompCertO~\cite{compcerto}, a \emph{language interface} $A = \linterface{A^q}{A^r}$ is a pair of
sets $A^q$ and $A^r$ denoting acceptable queries and replies for open
modules, respectively. Different interfaces may be used for
different languages. The relevant ones for our discussion
have been introduced in~\secref{ssec:key-idea-direct-refinement} and
listed as follows:
\begin{center}
  \begin{tabular}{c c c c}
    \hline
    \textbf{Languages} & \textbf{Interfaces} & \textbf{Queries} & \textbf{Replies}\\
    \hline
    C/Clight & $\cli = \linterface{\kval \times \ksig \times \kval^* \times \kmem}{\kval \times \kmem}$ 
       & $\cquery{v_f}{\sig}{\vec{v}}{m}$ & $\creply{v'}{m'}$\\
    Asm & $\asmli = \linterface{\kregset \times \kmem}{\kregset \times \kmem}$ & $\asmquery{\regset}{m}$ & $\asmreply{\regset'}{m'}$
  \end{tabular}
\end{center}

\emph{Open labeled transition systems} (LTS) represent semantics of
modules that may accept queries and provide replies at the
\emph{incoming side} and provide queries and accept replies at the
\emph{outgoing side} (i.e., calling external functions). An open LTS
$L : A \arrli B$ is a tuple $\olts{D}{S}{I}{\to}{F}{X}{Y}$ where $A$
($B$) is the language interface for outgoing (incoming) queries and
replies, $D \subseteq B^q$ a set of initial queries, $S$ a set of
internal states, $I \subseteq D \times S$ ($F \subseteq S \times B^r$)
transition relations for incoming queries (replies), $X \subseteq S
\times A^q$ ($Y \subseteq S \times A^r \times S$) transitions
for outgoing queries (replies), and $\to \subseteq S \times \mathcal{E}^* \times
S$ internal transitions emitting events of type $\mathcal{E}$. Note
that $(s, q^O) \in X$ iff an outgoing query $q^O$ happens at $s$; $(s,
r^O, s') \in Y$ iff after $q^O$ returns with $r^O$ the
execution continues with an updated state $s'$. 
%
%

\emph{Kripke relations} are used to describe evolution of program states in
open simulations between LTSs. A Kripke relation $R : W \to \pset{S}{S
  \subseteq A \times B}$ is a family of relations indexed by a
\emph{Kripke world} $W$; for simplicity, we define $\krtype{W}{A}{B} =
W \to \pset{S}{S \subseteq A \times B}$. A \emph{simulation
  convention} relating two language interfaces $A_1$ and $A_2$ is a
tuple $\scname{R} = \simconv{W}{\scname{R}^q :
  \krtype{W}{A^q_1}{A^q_2}}{\scname{R}^r :
  \krtype{W}{A^r_1}{A^r_2}}$ which we write as $\scname{R}:
\sctype{A_1}{A_2}$. Simulation conventions serve as interfaces of open
simulations by relating source and target language interfaces.
For example, a C-level convention $\kc: \sctype{\cli}{\cli} =
\simconv{\kmeminj}{\scname{R}_{\kwd{c}}^q}{\scname{R}_{\kwd{c}}^r}$
relates C queries and replies as follows, where the Kripke world
consists of injections and, in a given world $j$, the values and
memory in queries and replies are related by $j$.
\begin{tabbing}
  \quad\=$(\cquery{v_f}{sg}{\vec{v}}{m}, \cquery{v_f'}{sg}{\vec{v'}}{m'}) \in \scname{R}_{\kwd{c}}^q(j)$\quad\=$\iff$\quad\=\kill
  \>$(\cquery{v_f}{sg}{\vec{v}}{m}, \cquery{v_f'}{sg}{\vec{v'}}{m'}) \in \scname{R}_{\kwd{c}}^q(j)$
  \>$\iff$\>$\vinj{j}{v_f}{v_f'} \land \vinj{j}{\vec{v}}{\vec{v'}} \land \minj{j}{m}{m'}$\\
  \>$(\creply{v}{m}, \creply{v'}{m'}) \in {\scname{R}_{\kwd{c}}^r(j)}$
  \>$\iff$\>$\vinj{j}{v}{v'} \land \minj{j}{m}{m'}$
\end{tabbing}

\begin{figure}
\begin{center}
\begin{tikzpicture}[
    world/.style={draw,minimum height=0.6cm, minimum width=0.6cm}
  ]
  
  \node (q1) {\small $q_1$};
  \node[right = 0.7 of q1] (s1) {\small $s_1$};
  \node[right = 1 of s1] (s1p) {\small $s_1'$};
  \node[right = 0.7 of s1p] (q1p) {\small $q_1'$};
  \node[right = 2 of q1p] (r1p) {\small $r_1'$};
  \node[right = 1 of r1p] (s1pp) {\small $s_1''$};
  \node[right = 1 of s1pp] (s1ppp) {\small $s_1'''$};
  \node[right = 0.7 of s1ppp] (r1) {\small $r_1$};

  \node[below = 1 of q1] (q2) {\small $q_2$};

  \path let \p1 = (s1) in let \p2 = (q2) in
     node (s2) at (\x1,\y2) {\small $s_2$};
  \path let \p1 = (s1p) in let \p2 = (q2) in
     node (s2p) at (\x1,\y2) {\small $s_2'$};
  \path let \p1 = (q1p) in let \p2 = (q2) in
     node (q2p) at (\x1,\y2) {\small $q_2'$};
  \path let \p1 = (r1p) in let \p2 = (q2) in
     node (r2p) at (\x1,\y2) {\small $r_2'$};
  \path let \p1 = (s1pp) in let \p2 = (q2) in
     node (s2pp) at (\x1,\y2) {\small $s_2''$};
  \path let \p1 = (s1ppp) in let \p2 = (q2) in
     node (s2ppp) at (\x1,\y2) {\small $s_2'''$};
  \path let \p1 = (r1) in let \p2 = (q2) in
     node (r2) at (\x1,\y2) {\small $r_2$};

  \draw[-stealth] (q1) -- node[sloped,above] {\small $I_1$} (s1);
  \draw[-stealth, dashed] (s1) -- (s1p);
  \draw[-stealth] (s1p) -- node[sloped,above] {\small $X_1$} (q1p);
  \draw[-stealth, line width = 2, dashed] (q1p) -- (r1p);
  \draw[-stealth] (r1p) -- node[sloped,above] {\small $Y_1(s_1')$} (s1pp);
  \draw[-stealth, dashed] (s1pp) -- (s1ppp);
  \draw[-stealth] (s1ppp) -- node[sloped,above] {\small $F_1$} (r1);

  \draw[-stealth] (q2) -- node[sloped,above] {\small $I_2$} (s2);
  \draw[-stealth, dashed] (s2) -- (s2p);
  \draw[-stealth] (s2p) -- node[sloped,above] {\small $X_2$} (q2p);
  \draw[-stealth, line width = 2, dashed] (q2p) -- (r2p);
  \draw[-stealth] (r2p) -- node[sloped,above] {\small $Y_2(s_2')$} (s2pp);
  \draw[-stealth, dashed] (s2pp) -- (s2ppp);
  \draw[-stealth] (s2ppp) -- node[sloped,above] {\small $F_2$} (r2);

  \draw [double, latex-latex] (q2) -- 
      node[right] (q12) {\small $\scname{R}^q_B$} (q1);
  \draw [double, latex-latex] (s2) -- 
      node[left] {\small $R$} (s1);
  \draw [double, latex-latex] (s2p) -- 
      node[left] {\small $R$} (s1p);
  \draw [double, latex-latex] (q2p) -- 
      node[left] (q12p) {\small $\scname{R}^q_A$} (q1p);
  \draw [double, latex-latex] (r2p) -- 
      node[right] (r12p) {\small $\scname{R}^r_A$} (r1p);
  \draw [double, latex-latex] (s2pp) -- 
      node[left] {\small $R$} (s1pp);
  \draw [double, latex-latex] (s2ppp) -- 
      node[left] {\small $R$} (s1ppp);
  \draw [double, latex-latex] (r2) -- 
      node[left] (r12) {\small $\scname{R}^r_B$} (r1);

  \node[world, left = 0.2 of q12] (wb) {$w_B$};
  \node[world, right = 0.2 of r12] (wbp) {$w_B'$};

  \node[world, right = 0.2 of q12p] (wa) {$w_A$};
  \node[world, left = 0.2 of r12p] (wap) {$w_A'$};

  \path (wa) -- node (acca) {$\accsymb_A$} (wap);
  \draw[-stealth, dashed] (wb) -- +(0,1.3) -| (wbp);
  \node [above = 1 of acca] {$\accsymb_B$};

\end{tikzpicture}  
\end{center}
\caption{Open Simulation between LTS}
\label{fig:open-sim}
\end{figure}
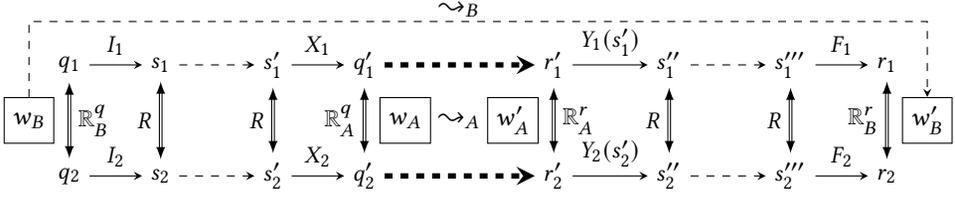

\emph{Open forward simulations} describe refinement between LTS. To
establish an open (forward) simulation between $L_1: A_1 \arrli B_1$
and $L_2: A_2 \arrli B_2$, one needs to find two simulation
conventions $\scname{R}_A : \sctype{A_1}{A_2}$ and $\scname{R}_B :
\sctype{B_1}{B_2}$ that connect queries and replies at the outgoing
and incoming sides, and show the internal execution steps and external
interactions of open modules are related by an invariant $R$.
This simulation is denoted by
$\osim{\scname{R}_A}{\scname{R}_B}{L_1}{L_2}$ and formally defined as
follows (for simplicity, we shall write $\osims{\scname{R}}{L_1}{L_2}$
to denote $\osim{\scname{R}}{\scname{R}}{L_1}{L_2}$):
\begin{definition}\label{def:open-sim}
  Given $L_1: A_1 \arrli B_1$, $L_2: A_2 \arrli B_2$, $\scname{R}_A :
  \sctype{A_1}{A_2}$ and $\scname{R}_B : \sctype{B_1}{B_2}$,
  $\osim{\scname{R}_A}{\scname{R}_B}{L_1}{L_2}$ holds if there is some
  Kripke relation $R \in \krtype{W_B}{S_1}{S_2}$ that satisfies:
  \begin{tabbing}
    \quad\=\quad(1)\=\quad\=\kill
    \>(1) $\forall\app q_1\app q_2,\app (q_1, q_2) \in \scname{R}_B^q(w_B) \imply (q_1 \in D_1 \iff q_2 \in D_2)$\\
    \>(2) $\forall\app w_B\app q_1\app q_2\app s_1,\app (q_1, q_2) \in \scname{R}_B^q(w_B) \imply (q_1, s_1) \in I_1 \imply
    \exists\app s_2, (s_1, s_2) \in R(w_B) \land (q_2, s_2) \in I_2.$\\
    \>(3) $\forall\app w_B\app s_1\app s_2\app t,\app (s_1, s_2) \in R(w_B) \imply \step{s_1}{t}{s_1'} \imply 
    \exists\app s_2', (s_1', s_2') \in R(w_B) \land s_2 \overset{t}{\rightarrow^*} s_2'.$\\
    \>(4) $\forall\app w_B\app s_1\app s_2\app q_1,\app (s_1, s_2) \in R(w_B) \imply (s_1, q_1) \in X_1 \imply$\\
    \>\>$\exists w_A\app q_2,\app (q_1, q_2) \in \scname{R}_A^q(w_A) \land (s_2, q_2) \in X_2\; \land$\\
    \>\>\>$\forall\app r_1\app r_2\app s_1', (r_1, r_2) \in \scname{R}_A^r(w_A) \imply (s_1, r_1, s_1') \in Y_1 \imply \exists\app s_2', (s_1', s_2') \in R(w_B) \land (s_2, r_2, s_2') \in Y_2.$\\
    \>(5) $\forall\app w_B\app s_1\app s_2\app r_1,\app (s_1, s_2) \in R(w_B) \imply (s_1, r_1) \in F_1 \imply 
    \exists\app r_2, (r_1, r_2) \in \scname{R}_B^r(w_B) \land (s_2, r_2) \in F_2.$
  \end{tabbing}
\end{definition}
\noindent Here, property (1) requires initial queries to match; (2) requires
initial states to hold under the invariant $R$; (3) requires internal
execution to preserve $R$; (4) requires $R$ to be preserved across
external calls, and (5) requires final replies to match.
According to these properties, a complete forward simulation looks
like~\figref{fig:open-sim}.
From the above definition, it is easy to prove the horizontal and
vertical compositionality of open simulations and adequacy for
assembly modules, i.e., 
$\forall \app L_1\app L_2\app L_1'\app L_2', \app
    \osims{\scname{R}}{L_1}{L_2} \imply
    \osims{\scname{R}}{L_1'}{L_2'} \imply
    \osims{\scname{R}}{L_1 \semlink L_1'}{L_2 \semlink
      L_2'}$ 
and
$\forall \app (M_1 \app M_2 : \kwd{Asm}), \app
     \osims{\kwd{id}}{\sem{M_1} \semlink \sem{M_2}}{\sem{M_1 + M_2}}$.

%
%

The Kripke worlds (e.g., memory injections) may evolve as the
execution goes on. \emph{Rely-guarantee} reasoning about such evolution
is essential for horizontal composition of simulations. 
For illustration, the Kripke worlds
at the boundary of modules are displayed in~\figref{fig:open-sim}.
The evolution of worlds across external calls is governed by an
\emph{accessibility relation} $w_A \accsymb_A w_A'$ for describing the
\emph{rely-condition}. By assuming $w_A \accsymb_A w_A'$, one needs to prove the
\emph{guarantee condition} $w_B \accsymb_B w_B'$, i.e., the evolution of worlds in the whole execution respects $\accsymb_B$.
Simulations with symmetric rely-guarantee conditions can
be horizontally composed, even with mutual calls between modules.

Note that the accessibility relation and evolution of worlds between
queries and replies is not encoded explicitly in the definition of
simulation conventions. Instead, they are implicit by assuming a
modality operator $\Diamond$ is always applied to $\scname{R}^r$
s.t. $r \in {\mkrel{\scname{R}^r}(w)} \iff \exists\app w', w \accsymb
w' \land r \in \scname{R}^r(w')$. For simplicity, we often ignore
accessibility and modality when talking \emph{purely} about
simulation conventions in the remaining discussion.

Accessibility relations are mainly for describing evolution of memory
states across external calls. For this, simulation conventions are
parameterized by \emph{Kripke Memory Relations} or KMR. 
\begin{definition}
  A Kripke Memory Relation is a tuple
  $\cklr{W}{f}{\accsymb}{R}$ where $W$ is a set of worlds, $f:W \to
  \kmeminj$ a function for extracting injections from worlds,
  $\leadsto \subseteq W \times W$ an accessibility relation between
  worlds and $R: \krtype{W}{\kmem}{\kmem}$ a Kripke relation over
  memory states that is compatible with the memory operations.
  We write ${w} \accsymb {w'}$ for $(w, w') \in \accsymb$.
\end{definition}
\noindent We write $\scname{R}_K$ to emphasize that a simulation convention
$\scname{R}$ is parameterized by the KMR ${K}$, meaning $\scname{R}_K$
shares the same type of worlds with $K$ and inherits its accessibility
relation.

The most interesting KMR is $\kinjp$ as it provides protection on
memory w.r.t. injections.
\begin{definition}[Kripke Relation with Memory Protection]\label{def:injp}
  $\kinjp = \cklr{W_{\kinjp}}{f_\kinjp}{\accsymb_{\kinjp}}{R_{\kinjp}}$ where
  $W_{\kinjp} = (\kmeminj \times \kmem \times \kmem)$, $f_\kinjp(j, \_,\_) = j$, 
  $(m_1, m_2) \in R_\kinjp(j, m_1, m_2) \iff \minj{j}{m_1}{m_2}$ and 
  \begin{tabbing}
    \quad\=$\injpacc{(j, m_1, m_2)}{(j', m_1', m_2')} \; \iff \;$\=\kill
    \>$\injpacc{(j, m_1, m_2)}{(j', m_1', m_2')} \; \iff \;j \subseteq j' \land \unmapped{j} \subseteq \unchangedon{m_1}{m_1'}$\\
    \>\>$\land\; \outofreach{j}{m_1} \subseteq \unchangedon{m_2}{m_2'}.$\\
    \>\>$\land\; \macc{m_1}{m_1'} \land\; \macc{m_2}{m_2'}$
  \end{tabbing}
  Here, $\macc{m}{m'}$ denotes monotonicity of memory states such as
  valid blocks can only increase and read-only data does not change in
  value.
  $\unchangedon{m}{m'}$ denotes memory cells whose permissions
    and values are not changed from $m$ to $m'$ and 
    \begin{tabbing}
      \quad\=$(b_2, o_2) \in \outofreach{j}{m_1}$\;\=\kill
      \>$(b_1, o_1) \in \unmapped{j}$\>$\iff \; j(b_1) = \none$ \\
      \>$(b_2, o_2) \in \outofreach{j}{m_1}$\>$\iff \;
         \forall\app b_1\app o_2', \app j(b_1) = \some{(b_2, o_2')} \imply
         (b_1, o_2-o_2') \not\in \perm{m_1}{\knonempty}.$
    \end{tabbing}
\end{definition}
\begin{wrapfigure}{R}{.45\textwidth}
\begin{tikzpicture}
  \snode{0.5cm}{0.3cm}{0}{draw} (n1) {};
  \snode{0.4cm}{0.3cm}{0}{draw, right = 0.3cm of n1, pattern=north east lines} (n2) {};
  \snode{0.6cm}{0.3cm}{0}{draw, right = 0.3cm of n2} (n3) {};
  \snode{0.4cm}{0.3cm}{0}{draw, right = 0.3cm of n3, pattern=north east lines} (n4) {};

  \draw[dashed] ($(n4.north east) + (0.4cm, 0.1cm)$) --++(0,-1.5cm);

  \snode{0.2cm}{0.3cm}{0}{draw, right = 0.6cm of n4} (n5) {};
  \snode{0.4cm}{0.3cm}{0}{draw, right = 0.3cm of n5} (n6) {};

  \snode{2.5cm}{0.3cm}{0}{draw, below left = 0.6cm and -2.3cm of n1} (m1) {};
  \snode{0.5cm}{0.3cm}{0}{draw, right = 0.3cm of m1, pattern=north east lines} (m2) {};
  \snode{0.8cm}{0.3cm}{0}{draw, right = 0.3cm of m2} (m3) {};

  \draw[-stealth] (n1.south west) -- ($(m1.north west)+(0.4cm,0)$);
  \draw[dotted] (n1.south east) -- ($(m1.north west)+(0.9cm,0)$);
  \draw[-stealth] (n3.south west) -- ($(m1.north west)+(1.5cm,0)$);
  \draw[dotted] (n3.south east) -- ($(m1.north west)+(2.1cm,0)$);
  \draw[-stealth] (n6.south west) -- ($(m3.north west)+(0.2cm,0)$);
  \draw[dotted] (n6.south east) -- ($(m3.north west)+(0.6cm,0)$);

  \fill[pattern=north east lines] (m1.north west) rectangle ($(m1.south west)+(0.4cm,0)$);
  \fill[pattern=north east lines] ($(m1.north west)+(0.9cm,0)$) rectangle ($(m1.south west)+(1.5cm,0)$);
  \fill[pattern=north east lines] ($(m1.north west)+(2.1cm,0)$) rectangle (m1.south east);
  \draw ($(m1.north west)+(0.4cm,0)$) -- ($(m1.south west)+(0.4cm,0)$);
  \draw ($(m1.north west)+(0.9cm,0)$) -- ($(m1.south west)+(0.9cm,0)$);
  \draw ($(m1.north west)+(1.5cm,0)$) -- ($(m1.south west)+(1.5cm,0)$);
  \draw ($(m1.north west)+(2.1cm,0)$) -- ($(m1.south west)+(2.1cm,0)$);

  \node[left = 0.4cm of n1] (txtm1) {\small $m_1$};
  \path let \p1 = (txtm1) in let \p2 = (m1) in
    node (txtm2) at (\x1, \y2) {\small $m_2$};
  \path let \p1 = (txtm1) in let \p2 = (txtm2) in
    node (txtj1) at (\x1, {(\y1+\y2)*0.5}) {\small $j$};

  \node[right = 0.2cm of n6] (txtmp1) {\small $m_1'$};
  \path let \p1 = (txtmp1) in let \p2 = (m1) in
    node (txtmp2) at (\x1, \y2) {\small $m_2'$};
  \path let \p1 = (txtmp1) in let \p2 = (txtmp2) in
    node (txtjp1) at (\x1, {(\y1+\y2)*0.5}) {\small $j'$};

  \end{tikzpicture}
\caption{Kripke Worlds Related by \kinjp}
\label{fig:injp}
\end{wrapfigure}
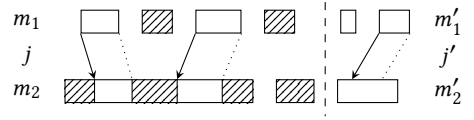
By definition, a world $(j, m_1, m_2)$ evolves to $(j', m_1',
m_2')$ under \kinjp only if $j'$ is strictly larger than $j$ and any
memory cells in $m_1$ and $m_2$ not in the domain (i.e., unmapped by
$j$), or image of $j$ (i.e., out-of-reach by $j$ from $m_1$) will be
protected, meaning their values and permissions are unchanged from
$m_1$ ($m_2$) to $m_1'$ ($m_2'$). An example is shown
in~\figref{fig:injp} where the shaded regions in $m_1$
are unmapped by $j$ and unchanged while those in $m_2$ are
out-of-reach from $j$ and unchanged. $m_1'$ and $m_2'$ may contain
newly allocated blocks which are not protected by
\kinjp. When \kinjp is used at the outgoing side, it denotes that the
simulation relies on knowing that the unmapped and out-of-reach
regions at the call side are not modified by external calls. When
\kinjp is used at the incoming side, it denotes that the simulation
guarantees such regions at initial queries
are not modified by the simulation itself.

\subsection{Challenges for Vertically Composing Open Simulations}
\label{ssec:challenges}

%
As discussed in~\secref{ssec:key-idea-vcomp}, the challenge for
constructing direct refinements for multi-pass optimizing compilers
lies in their vertical composition. 
The most basic vertical composition for open simulations is stated
below which is easily proved by pairing of individual
simulations~\cite{compcerto}.
\begin{theorem}[V. Comp]\label{thm:v-comp}
  Given $L_1 : A_1 \arrli B_1$, $L_2 : A_2 \arrli B_2$ and $L_3: A_3
  \arrli B_3$, and given $\scname{R}_{12}:\sctype{A_1}{A_2}$,
  $\scname{S}_{12}:\sctype{B_1}{B_2}$, $\scname{R}_{23}:\sctype{A_2}{A_3}$
  and $\scname{S}_{23}:\sctype{B_2}{B_3}$,
    \[
    \osim{\scname{R}_{12}}{\scname{S}_{12}}{L_1}{L_2} \imply
    \osim{\scname{R}_{23}}{\scname{S}_{23}}{L_2}{L_3} \imply
    \osim{\comp{\scname{R}_{12}}{\scname{R}_{23}}}
         {\comp{\scname{S}_{12}}{\scname{S}_{23}}}
         {L_1}{L_3}.
    \]
\end{theorem}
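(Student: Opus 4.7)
The plan is to define the composed Kripke relation pointwise as the relational composition of the two witnesses and then discharge each clause of Definition~\ref{def:open-sim} by forward chaining. Concretely, let $R_{12} \in \krtype{W^{12}_B}{S_1}{S_2}$ and $R_{23} \in \krtype{W^{23}_B}{S_2}{S_3}$ be the Kripke relations witnessing the two hypotheses. I take the composed world type to be $W^{12}_B \times W^{23}_B$, with accessibility given by the pointwise product of $\accsymb^{12}_B$ and $\accsymb^{23}_B$, and set
\[
R_{13}(w^{12}_B, w^{23}_B)(s_1, s_3) \;\iff\; \exists s_2,\; R_{12}(w^{12}_B)(s_1, s_2) \land R_{23}(w^{23}_B)(s_2, s_3).
\]
The interpolating state $s_2$ is thus literally stored in the invariant, so no interpolation problem arises at this level of abstraction.

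For clauses (1), (2) and (5), unfolding the composed query (resp.\ reply) convention $\comp{\scname{S}_{12}}{\scname{S}_{23}}$ produces an intermediate query/reply against which the corresponding clauses of the two hypotheses can be applied in sequence, yielding both the required initial state and the final reply. Clause (3) (internal steps) requires a standard star-closure of the $L_2$--$L_3$ diagram: a single $L_1$-step produces a star of $L_2$-steps by the first simulation, and each of those is simulated by a star of $L_3$-steps by the second; assembling them by induction on the length of the $L_2$ computation gives the desired $L_3$ star.

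The most delicate clause is (4). Given $R_{13}(w^{12}_B, w^{23}_B)(s_1, s_3)$ and an outgoing query $(s_1, q_1) \in X_1$, I extract the interpolating $s_2$ from $R_{13}$, apply clause (4) of the first simulation to obtain an outgoing world $w^{12}_A$, a query $q_2$ with $(q_1, q_2) \in \scname{R}^q_{12}(w^{12}_A)$ and $(s_2, q_2) \in X_2$, and then apply clause (4) of the second to obtain $w^{23}_A$, $q_3$ with $(q_2, q_3) \in \scname{R}^q_{23}(w^{23}_A)$ and $(s_3, q_3) \in X_3$. The pair $(w^{12}_A, w^{23}_A)$ then witnesses $\comp{\scname{R}^q_{12}}{\scname{R}^q_{23}}$. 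For the return branch, any composed reply $(r_1, r_3) \in \comp{\scname{R}^r_{12}}{\scname{R}^r_{23}}(w^{12}_A, w^{23}_A)$ supplies an intermediate $r_2$, through which the two return continuations can be threaded: the first produces $s_2'$ with $(s_2, r_2, s_2') \in Y_2$ and $R_{12}$ re-established on some accessible $w^{12'}_B$, and the second produces $s_3'$ with $R_{23}$ re-established on some accessible $w^{23'}_B$. Pairing the two updated worlds and the two invariants yields the required $R_{13}$ on the pair.

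The one subtlety I anticipate is avoiding infinite stuttering in clause (3): a single step on the left could be simulated by zero steps on the right, and stacking two simulations could in principle block the induction. The standard fix, already in the CompCertO framework and elided in the simplified presentation of Definition~\ref{def:open-sim}, is to carry a well-founded measure in each simulation and take their lexicographic product for the composite. Beyond this bookkeeping, no new construction is needed; this is precisely why the paper emphasizes that the real difficulty of VCC is not \emph{this} theorem but rather collapsing $\comp{\scname{R}_{12}}{\scname{R}_{23}}$ into a single direct convention that no longer mentions the intermediate interfaces $A_2$ and $B_2$.
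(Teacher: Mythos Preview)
Your proposal is correct and is precisely the ``pairing of individual simulations'' that the paper invokes (with a citation to CompCertO) in lieu of a proof; the paper does not spell out any of the clause-by-clause argument you give. Your identification of the stuttering issue and of the fact that the real work lies in collapsing $\comp{\scname{R}_{12}}{\scname{R}_{23}}$ rather than in this theorem is also exactly the point the surrounding text makes.
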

\noindent Here, $(\_ \compsymb \_)$ is a composed simulation convention s.t.
$\comp{\scname{R}}{\scname{S}} = \simconv{W_{\scname{R}} \times
  W_{\scname{S}}} {{\scname{R}^q} \compsymb {\scname{S}^q}}
{{\scname{R}^r} \compsymb {\scname{S}^r}}$ where for any $q_1$ and
$q_3$, ${ (q_1, q_3) \in {\scname{R}^q} \compsymb
  {\scname{S}^q}}(w_{\scname{R}}, w_{\scname{S}}) \iff \exists q_2,
(q_1, q_2) \in {\scname{R}^q(w_{\scname{R}})} \land (q_2, q_3) \in
{\scname{S}^q(w_{\scname{S}})}$ (similarly for ${{\scname{R}^r}
  \compsymb {\scname{S}^r}}$). 
%
Then, given any compiler with $N$ passes and their
refinement relations $\osim{\scname{R}_{12}}{\scname{S}_{12}}{L_1}{L_2},
\ldots, \osim{\scname{R}_{N,N+1}}{\scname{S}_{N,N+1}}{L_N}{L_{N+1}}$,
we get their concatenation
$\osim {\scname{R}_{12} \compsymb \ldots \compsymb \scname{R}_{N,N+1}}
{\scname{S}_{12} \compsymb \ldots \compsymb \scname{S}_{N,N+1}}
{L_1}{L_{N+1}}$, which exposes internal compilation and weakens
compositionality as we have discussed in~\secref{ssec:problems}.

The above problem may be solved if the composed simulation convention can
be \emph{refined} into a single convention directly relating source
and target queries and replies. 
Given two simulation conventions $\scname{R}, \scname{S}:
\sctype{A_1}{A_2}$, $\scname{R}$ is \emph{refined} by $\scname{S}$ if
\begin{tabbing}
  \quad\=$\forall\app w_{\scname{S}}\app q_1\app q_2,\app (q_1, q_2) \in \scname{S}^q(w_{\scname{S}}) \imply \exists\app w_{\scname{R}},\app$\=\kill
  \>$\forall\app w_{\scname{S}}\app q_1\app q_2,\app
  (q_1, q_2) \in \scname{S}^q(w_{\scname{S}}) \imply
  \exists\app w_{\scname{R}},\app
  (q_1, q_2) \in \scname{R}^q(w_{\scname{R}}) \land$\\
  \>\>$\forall\app r_1\app r_2,\app (r_1,r_2) \in \scname{R}^r(w_{\scname{R}}) \imply 
(r_1, r_2) \in \scname{S}^r(w_{\scname{S}})$
\end{tabbing}
which we write as $\screfine{\scname{R}}{\scname{S}}$. If both
$\screfine{\scname{R}}{\scname{S}}$ and
$\screfine{\scname{S}}{\scname{R}}$, then $\scname{R}$ and
$\scname{S}$ are equivalent and written as
$\scequiv{\scname{R}}{\scname{S}}$.
By definition, $\screfine{\scname{R}}{\scname{S}}$ indicates any query
for ${\scname{S}}$ can be converted into a query for ${\scname{R}}$
and any reply resulting from the converted query can be 
converted back to a reply for ${\scname{S}}$. 
%
%
By wrapping 
the incoming side of an open simulation with a more general
convention and its outgoing side with a more specialized convention,
one gets another valid open simulation~\cite{compcerto}:
\begin{theorem}\label{thm:sim-refine}
  Given $L_1 : A_1 \arrli B_1$ and $L_2 : A_2 \arrli B_2$, if\;
  $\screfine{\scname{R}'_A}{\scname{R}_A}:\sctype{A_1}{A_2}$,
  $\screfine{\scname{R}_B}{\scname{R}'_B}:\sctype{B_1}{B_2}$ and
  $\osim{\scname{R}_A}{\scname{R}_B}{L_1}{L_2}$, then
  $\osim{\scname{R}'_A}{\scname{R}'_B}{L_1}{L_2}$.
\end{theorem}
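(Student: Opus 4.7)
The plan is to construct a new simulation invariant $R'$ for $\osim{\scname{R}'_A}{\scname{R}'_B}{L_1}{L_2}$ by repackaging the original invariant $R$ for $\osim{\scname{R}_A}{\scname{R}_B}{L_1}{L_2}$, using the two refinements to translate conventions on the boundary. The refinement $\screfine{\scname{R}_B}{\scname{R}'_B}$ will be used at the incoming side: it lets us witness, for any incoming query $(q_1,q_2)\in {\scname{R}'_B}^q(w'_B)$, a ``dual'' world $w_B$ such that $(q_1,q_2)\in \scname{R}_B^q(w_B)$ and any reply matched by $\scname{R}_B^r(w_B)$ is also matched by ${\scname{R}'_B}^r(w'_B)$. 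Symmetrically, $\screfine{\scname{R}'_A}{\scname{R}_A}$ will be used at the outgoing side to translate an outgoing query produced under $\scname{R}_A$ into one under $\scname{R}'_A$, and to translate the returning reply under $\scname{R}'_A$ back into a reply under $\scname{R}_A$ so that the original simulation can continue.

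Concretely, I would define $R'\in\krtype{W'_B}{S_1}{S_2}$ by
\[
  (s_1,s_2)\in R'(w'_B) \;\iff\; \exists\,w_B.\;(s_1,s_2)\in R(w_B)\;\land\;\forall r_1\,r_2,\ (r_1,r_2)\in\scname{R}_B^r(w_B)\Rightarrow(r_1,r_2)\in {\scname{R}'_B}^r(w'_B).
\]
That is, the new invariant bundles the old invariant together with the ``reply-conversion'' continuation supplied by the refinement at the incoming side. The five clauses of Definition~\ref{def:open-sim} are then discharged in order. Clauses (1) and (2) are immediate: instantiate the incoming refinement on $(q_1,q_2)$ to obtain $w_B$, then invoke the corresponding clause of the original simulation; the resulting $(s_1,s_2)$ lives in $R(w_B)$, and together with the continuation provided by refinement yields $(s_1,s_2)\in R'(w'_B)$. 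Clause (3) is purely internal and inherits directly from the original simulation, since internal steps do not touch $w_B$ or the continuation. Clause (5) is also straightforward: use the bundled $w_B$ to match the final reply under $\scname{R}_B^r(w_B)$, then apply the continuation to upgrade it to ${\scname{R}'_B}^r(w'_B)$.

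The main work is clause (4), the external-call case, where both refinements interact. Given $(s_1,s_2)\in R'(w'_B)$ and an outgoing query $(s_1,q_1)\in X_1$, unpack the witness $w_B$, apply the original simulation's clause (4) to obtain $w_A$, $q_2$ with $(q_1,q_2)\in\scname{R}_A^q(w_A)$ and $(s_2,q_2)\in X_2$, and then apply $\screfine{\scname{R}'_A}{\scname{R}_A}$ to produce $w'_A$ with $(q_1,q_2)\in {\scname{R}'_A}^q(w'_A)$ together with a reply-conversion from ${\scname{R}'_A}^r(w'_A)$ to $\scname{R}_A^r(w_A)$. For any incoming reply $(r_1,r_2)\in {\scname{R}'_A}^r(w'_A)$, push it through the conversion to get $(r_1,r_2)\in \scname{R}_A^r(w_A)$, feed it to the original simulation's continuation to obtain $s_2'$ with $(s_1',s_2')\in R(w_B)$ and $(s_2,r_2,s_2')\in Y_2$, and observe that $(s_1',s_2')\in R'(w'_B)$ because the same $w_B$ and the same incoming-side continuation still witness membership.

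The one subtlety I expect to matter is the hidden modality $\Diamond$ that the paper attaches to $\scname{R}^r$ and the accompanying accessibility on Kripke worlds: the refinement statement $\screfine{\scname{R}}{\scname{S}}$ is really about the modally-wrapped reply relations, so the world produced by the original simulation at external calls may have evolved from $w_A$ to some $w_A''\accsymb_A w_A$, and similarly on the incoming side. This is where I expect to spend real effort, checking that the conversions composed above respect accessibility so that the modalities line up. Since the refinements are themselves formulated relative to the same accessibility structure (they are stated on simulation conventions already equipped with $\Diamond$), this should work essentially by transport, but it requires care to sequence the existentials correctly and to avoid committing to a future world before the reply actually arrives.
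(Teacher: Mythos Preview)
Your proposal is correct and is the standard argument for this kind of monotonicity result. The paper itself does not give a proof of Theorem~\ref{thm:sim-refine}; it simply cites the result from CompCertO, so there is no in-paper proof to compare against. Your packaging of $R'$ as ``old invariant at some $w_B$ plus the reply-conversion continuation supplied by $\screfine{\scname{R}_B}{\scname{R}'_B}$'' is exactly the right idea, and your case analysis over the five clauses of Definition~\ref{def:open-sim} is accurate, including the dual use of $\screfine{\scname{R}'_A}{\scname{R}_A}$ at external calls. Your remark about the $\Diamond$ modality is also apt: since the refinement of simulation conventions is stated over the modally-wrapped reply relations, the accessibility bookkeeping is absorbed into the refinement hypotheses themselves and no extra world manipulation is needed beyond what you describe.
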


\begin{figure}
  \centering
  \begin{tikzpicture}

    \node (q1i) at (0,0) {\small $q_1$};
    \node (q3i) [below = 1.5cm of q1i] {\small $q_3$};
    \path (q1i) -- node[] (q2i) {\small ${q_2}$} (q3i);

    \node (q1ix) [left = 0.5cm of q1i] {\small $q_1$};
    \node (q3ix) [left = 0.5cm of q3i] {\small $q_3$};
    \path (q1ix) -- node[] (q2ix) {} (q3ix);

    \node (l1) [left = 0.5cm of q1ix] {\small $L_1:$};
    \node (l3) [left = 0.5cm of q3ix] {\small $L_3:$};
    \path (l1) -- node[] (l2) {\small $L_2:$} (l3);
    
    \node (q1o) [right = 1.5 cm of q1i] {\small $q_1'$};
    \node (q3o) [right = 1.5 cm of q3i] {\small $q_3'$};
    \path (q1o) -- node[] (q2o) {\small $q_2'$} (q3o);

    \node (q1ox) [right = 0.5 cm of q1o] {\small $q_1'$};
    \node (q3ox) [right = 0.5 cm of q3o] {\small $q_3'$};
    \path (q1ox) -- node[] (q2ox) {} (q3ox);

    \node (r1ox) [right = 1.5 cm of q1ox] {\small $r_1'$};
    \node (r3ox) [right = 1.5 cm of q3ox] {\small $r_3'$};
    \path (r1ox) -- node[] (r2ox) {} (r3ox);

    \node (r1o) [right = 0.5 cm of r1ox] {\small $r_1'$};
    \node (r3o) [right = 0.5 cm of r3ox] {\small $r_3'$};
    \path (r1o) -- node[] (r2o) {\small ${r_2'}$} (r3o);
  

    \node (r1i) [right = 1.5 cm of r1o] {\small $r_1$};
    \node (r3i) [right = 1.5 cm of r3o] {\small $r_3$};
    \path (r1i) -- node[] (r2i) {\small $r_2$} (r3i);
    
    \node (r1ix) [right = 0.5 cm of r1i] {\small $r_1$};
    \node (r3ix) [right = 0.5 cm of r3i] {\small $r_3$};
    \path (r1ix) -- node[] (r2ix) {} (r3ix);
    
  \draw [double, latex-latex] (q1ix) -- node[left] {\small $\scr_{13}^q$} (q3ix);
  \draw [double, latex-latex] (q1i) -- node[right] {\small $\scr_{12}^q$} (q2i);
  \draw [double, latex-latex] (q2i) -- node[right] {\small $\scr_{23}^q$} (q3i);
  
  \draw [double, latex-latex] (q1o) -- node[left] {\small $\scr_{12}^q$} (q2o);
  \draw [double, latex-latex] (q2o) -- node[left] {\small $\scr_{23}^q$} (q3o);

  \draw [double, latex-latex] (q1ox) -- node[right] {\small $\scr_{13}^q$} (q3ox);

  \draw [double, latex-latex] (r1ox) -- node[left] {\small $\scr_{13}^r$} (r3ox);

  \draw [double, latex-latex] (r1o) -- node[right] {\small $\scr_{12}^r$} (r2o);
  \draw [double, latex-latex] (r2o) -- node[right] {\small $\scr_{23}^r$} (r3o);

  \draw [double, latex-latex] (r1i) -- node[left] {\small $\scr_{12}^r$} (r2i);
  \draw [double, latex-latex] (r2i) -- node[left] {\small $\scr_{23}^r$} (r3i);

  \draw [double, latex-latex] (r1ix) -- node[right] {\small $\scr_{13}^r$}(r3ix);  
  \draw [-stealth, dashed] (q1i) -- (q1o);
  \draw [-stealth, dashed] (q2i) -- (q2o);
  \draw [-stealth, dashed] (q3i) -- (q3o);
  \draw [-stealth, dashed] (r1o) -- (r1i);
  \draw [-stealth, dashed] (r2o) -- (r2i);
  \draw [-stealth, dashed] (r3o) -- (r3i);

  \path (q2ix) -- node {\small $\imply$} (q2i);
  \path (q2o) -- node {\small $\imply$} (q2ox);
  \path (r2ox) -- node {\small $\imply$} (r2o);
  \path (r2i) -- node {\small $\imply$} (r2ix);
  \path (q2ox) -- node {\small $\accsymb$} (r2ox);

  \node[draw, dashed, fit=(q1i) (q3i) (q1o) (q3o)] (box) {};
  \node[draw, dashed, fit=(r1i) (r3i) (r1o) (r3o)] (box) {};

\end{tikzpicture}
\caption{Vertical Composition of Open Simulations by Refinement of Simulation Conventions}
\label{fig:opensimcomp}
\end{figure}
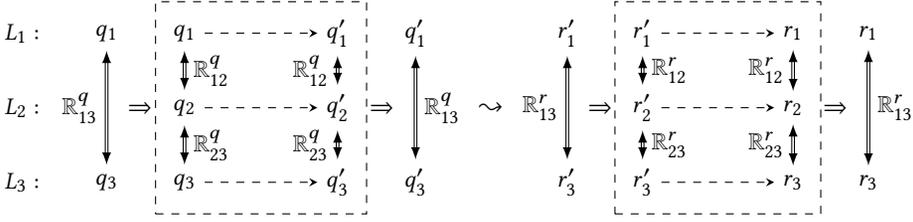
Now, we would like to prove the ``real'' vertical composition
generating direct refinements (simulations).
Given any $\osim{\scname{R}_{12}}{\scname{R}_{12}}{L_1}{L_2}$ and
$\osim{\scname{R}_{23}}{\scname{R}_{23}}{L_2}{L_3}$, if we can show
the existence of simulation conventions $\scname{R}_{13}$ directly
relating source and target semantics s.t.
$\scequiv{\scname{R}_{13}}{\comp{\scname{R}_{12}}{\scname{R}_{23}}}$,
then $\osim{\scname{R}_{13}}{\scname{R}_{13}}{L_1}{L_3}$ holds
by~\thmref{thm:v-comp} and~\thmref{thm:sim-refine}, which is the
desired direct refinement.
%
This composition is illustrated in \figref{fig:opensimcomp} where the
parts enclosed by dashed boxes represent the concatenation of
$\osim{\scname{R}_{12}}{\scname{R}_{12}}{L_1}{L_2}$ and
$\osim{\scname{R}_{23}}{\scname{R}_{23}}{L_2}{L_3}$.
%
%
The direct queries and replies are split and merged for interaction
with parallelly running simulations underlying the direct refinement.

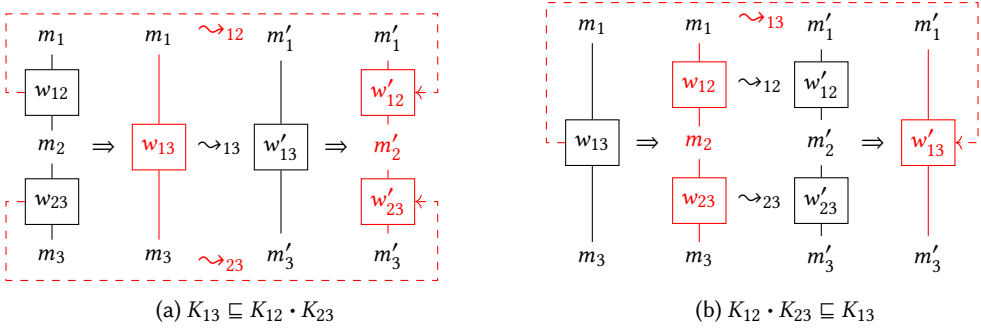
\begin{figure}
  \begin{subfigure}[b]{0.46\textwidth}
  \begin{tikzpicture}[
      world/.style = {draw,minimum height=0.6cm, minimum width=0.6cm}
    ]
    \node (m1) {\small $m_1$};
    \node [below = 1cm of m1] (m2) {\small $m_2$};
    \node [below = 1cm of m2] (m3) {\small $m_3$};
    \path (m1) -- node[world, draw] (w1) {\small $w_{12}$} (m2);
    \path (m2) -- node[world, draw] (w2) {\small $w_{23}$} (m3);
    \draw (m1) -- (w1) -- (m2);
    \draw (m2) -- (w2) -- (m3);

    \node [right = 0.8cm of m1] (m11) {\small $m_1$};
    \path let \p1 = (m11) in let \p2 = (m3) in 
      node (m33) at (\x1, \y2) {\small $m_3$};
    \path (m11) -- node[world, draw,red] (w) {\small $w_{13}$} (m33);
    \draw[red] (m11) -- (w) -- (m33);

    \node [right = 1cm of m11] (m11p) {\small $m_1'$};
    \path let \p1 = (m11p) in let \p2 = (m33) in 
      node (m33p) at (\x1, \y2) {\small $m_3'$};
    \path (m11p) -- node[world, draw] (wp) {\small $w_{13}'$} (m33p);
    \draw (m11p) -- (wp) -- (m33p);

    \node [right = 0.8cm of m11p] (m1p) {\small $m_1'$};
    \path let \p1 = (m1p) in let \p2 = (m2) in 
      node[red] (m2p) at (\x1, \y2) {\small $m_2'$};
    \path let \p1 = (m1p) in let \p2 = (m33) in 
      node (m3p) at (\x1, \y2) {\small $m_3'$};
    \path (m1p) -- node[world, draw,red] (w1p) {\small $w_{12}'$} (m2p);
    \path (m2p) -- node[world, draw,red] (w2p) {\small $w_{23}'$} (m3p);

    \draw[red] (m1p) -- (w1p) -- (m2p);
    \draw[red] (m2p) -- (w2p) -- (m3p);

    \path (m2) -- node {\small $\imply$} (w);
    \path (w) -- node {\small $\accsymb_{13}$} (wp);
    \path (wp) -- node {\small $\imply$} (m2p);


    \draw[->, dashed, red] 
          (w1) -| ($(m1.north west) + (-0.3, 0.1)$) 
               -| node[sloped, below, pos = 0.25] 
                  {$\accsymb_{12}$} ($(w1p.east) + (0.3, 0)$)
               -- (w1p);
    \draw[->, dashed, red] 
          (w2) -| ($(m3.south west) + (-0.3, -0.1)$) 
               -| node[sloped, above, pos = 0.25] 
                  {$\accsymb_{23}$} ($(w2p.east) + (0.3, 0)$)
               -- (w2p);

  \end{tikzpicture}
  \caption{$\screfine{K_{13}}{\comp{K_{12}}{K_{23}}}$}
  \label{fig:trans-comp1}
  \end{subfigure}
  \qquad
  \begin{subfigure}[b]{0.46\textwidth}
  \begin{tikzpicture}[
      world/.style = {draw,minimum height=0.6cm, minimum width=0.6cm}
    ]
    \node (m1) {\small $m_1$};
    \node [world, below = 1cm of m1,draw] (w) {\small $w_{13}$};
    \node [below = 1cm of w] (m3) {\small $m_3$};
    
    \node [right = 0.8cm of m1] (m11) {\small $m_1$};
    \path let \p1 = (m11) in let \p2 = (w) in 
      node[red] (m2) at (\x1, \y2) {\small $m_2$};
    \path let \p1 = (m11) in let \p2 = (m3) in 
      node (m33) at (\x1, \y2) {\small $m_3$};
    \path (m11) -- node[world, draw,red] (w1) {\small $w_{12}$} (m2);
    \path (m2) -- node[world, draw,red] (w2) {\small $w_{23}$} (m33);

    \draw (m1) -- (w) -- (m3);
    \draw[red] (m11) -- (w1) -- (m2);
    \draw[red] (m2) -- (w2) -- (m33);

    \node [right = 1cm of m11] (m1p) {\small $m_1'$};
    \path let \p1 = (m1p) in let \p2 = (m2) in 
      node (m2p) at (\x1, \y2) {\small $m_2'$};
    \path let \p1 = (m1p) in let \p2 = (m33) in 
      node (m3p) at (\x1, \y2) {\small $m_3'$};
    \path (m1p) -- node[world, draw] (w1p) {\small $w_{12}'$} (m2p);
    \path (m2p) -- node[world, draw] (w2p) {\small $w_{23}'$} (m3p);

    \draw (m1p) -- (w1p) -- (m2p);
    \draw (m2p) -- (w2p) -- (m3p);

    \node [right = 0.8cm of m1p] (m11p) {\small $m_1'$};
    \path let \p1 = (m11p) in let \p2 = (m3p) in 
      node (m33p) at (\x1, \y2) {\small $m_3'$};
    \path (m11p) -- node[world, draw,red] (w3p) {\small $w_{13}'$} (m33p);
    
    \draw[red] (m11p) -- (w3p) -- (m33p);

    \path (w) -- node {\small $\imply$} (m2);
    \path (w1) -- node {\small $\accsymb_{12}$} (w1p);
    \path (w2) -- node {\small $\accsymb_{23}$} (w2p);
    \path (m2p) -- node {\small $\imply$} (w3p);


    \draw[->, dashed, red] 
          (w) -| ($(m1.north west) + (-0.3, 0.1)$) 
              -| node[sloped, below, pos = 0.25] 
                  {$\accsymb_{13}$} ($(w3p.east) + (0.3, 0)$)
              -- (w3p);

  \end{tikzpicture}
  \caption{{$\screfine{\comp{K_{12}}{K_{23}}}{K_{13}}$}}
  \label{fig:trans-comp2}
\end{subfigure}
  \caption{Composition of KMRs}
  \label{fig:trans-comp}
\end{figure}

Since simulation conventions are parameterized by KMRs, a major obstacle to
the real vertical composition of open simulations is to prove KMRs for individual
simulations can be composed into a single KMR.
For this, one needs to define refinements between KMRs. Given any KMRs
$K$ and $L$, $\screfine{K}{L}$ (i.e., $K$ is refined by $L$) holds if
the following is true:
\begin{tabbing}
  \quad\=$\forall\app m_1\app m_2\app w_L,\app (m_1, m_2) \in$\=$ R_L(w_L)$\=$ \imply
  \exists\app$\kill
  \>$\forall\app w_L\app m_1\app m_2,\app (m_1, m_2) \in
  R_L(w_L) \imply \exists\app w_K,\app (m_1, m_2) \in R_K(w_K) \land
  f_L(w_L) \subseteq f_K(w_K)\; \land$\\ 
  \>\>$\forall\app w_K'\app
  m_1'\app m_2',\app \acc{K}{w_K}{w_K'} \imply (m_1', m_2') \in
  R_K(w_K') \imply$\\ 
  \>\>\>$\exists\app w_L',\app
  \acc{L}{w_L}{w_L'} \land (m_1', m_2') \in R_L(w_L') \land
  f_K(w_K') \subseteq f_L(w_L')$.
\end{tabbing}
We write $K \equiv L$ to denote that $K$ and $L$ are equivalent, i.e.,
$\screfine{K}{L}$ and $\screfine{L}{K}$.

Continue with the proof of real vertical composition, 
i.e., proving $\scequiv{\scname{R}_{13}}{\comp{\scname{R}_{12}}{\scname{R}_{23}}}$. Assume
$\scname{R}_i$ is parameterized by KMR $K_i$, showing the existence of
$\scname{R}_{13}$ s.t.
$\screfine{\scname{R}_{13}}{\comp{\scname{R}_{12}}{\scname{R}_{23}}}$
amounts to proving a parallel refinement over the parameterizing KMRs,
i.e., there exists $K_{13}$ s.t.
$\screfine{K_{13}}{\comp{K_{12}}{K_{23}}}$ where $\comp{K_{12}}{K_{23}}
= \cklr{W_{12} \times W_{23}}{f_{12} \times f_{23}}{\accsymb_{12}
  \times \accsymb_{23}}{R_{12} \times R_{23}}$. A more intuitive
interpretation is depicted in~\figref{fig:trans-comp1} where black
symbols are $\forall$-quantified (assumptions we know) and red ones
are $\exists$-quantified (conclusions we need to construct). Note that
\figref{fig:trans-comp1} exactly mirrors the refinement on the
outgoing side in~\figref{fig:opensimcomp}.
For simplicity, we use $w_i$ not only to represent worlds, but also to
denote $R_i(w_i)$ (where $R_i$ is the Kripke relation given by KMR
$K_i$) when it connects memory states through vertical lines.
A dual property we need to prove for the incoming side is shown
in~\figref{fig:trans-comp2}.

In both cases in~\figref{fig:trans-comp}, we need to construct
interpolating states for relating source and target memory (i.e.,
$m_2'$ in~\figref{fig:trans-comp1} and $m_2$
in~\figref{fig:trans-comp2}). 
The construction of $m_2'$ is especially challenging, for which we
need to decompose the evolved world $w_{13}'$ into $w_{12}'$ and
$w_{23}'$ s.t. they are accessible from the original worlds $w_{12}$
and $w_{23}$. It is not clear at all how this construction is possible
because \emph{1)} $m_2'$ may have many forms since Kripke
\emph{relations} are in general non-deterministic
and \emph{2)} KMRs (e.g., \kinjp) introduce memory protection for
external calls which may not hold after the (de-)composition.

Because of the above difficulties, existing approaches either make
substantial changes to semantics for constructing interpolating
states, thereby destroying adequacy~\cite{stewart15}, or do not even try
to merge Kripke memory relations, but instead leave them as separate
entities~\cite{compcertm, compcerto}. As a result, direct refinements
cannot be achieved.

\section{A Uniform and Transitive Kripke Memory Relation}
\label{sec:injp}


To overcome the challenge for vertically composing open simulations,
we exploit the observation that \kinjp in fact can be viewed as a most
general KMR. Then, the compositionality of KMRs discussed
in~\secref{ssec:challenges} is reduced to transitivity of \kinjp,
i.e., $\kinjp \equiv \comp{\kinjp}{\kinjp}$. 

\subsection{Uniformity of \kinjp}
\label{ssec:injp-uniform}

We show that \kinjp is both a reasonable guarantee condition and a
reasonable rely condition for all the compiler passes in CompCert.  It
is based on the observation that a notion of private and public memory
can be derived from injections and coincides with the
protection provided by \kinjp.

\subsubsection{Public and Private Memory via Memory Injections}

\begin{definition}\label{def:pub-mem}
Given $\minj{j}{m_1}{m_2}$, the public memory regions in $m_1$ and
$m_2$ are defined as follows:
\begin{tabbing}
  \quad\=$\pubtgtmem{j}{m_1}$\;\=\kill
  \>$\pubsrcmem{j}$\>$ = \pset{(b,o)}{j(b) \neq \none};$\\
  \>$\pubtgtmem{j}{m_1}$\>$ = \pset{(b,o)}{\exists
    b'\app o', j(b') = \some{(b, o')} \land (b',o-o') \in
    \perm{m_1}{\knonempty}}.$
\end{tabbing}
\end{definition}
By definition, a cell $(b, o)$ is public in the source memory if it is
in the domain of $j$, and $(b,o)$ is public in the target memory
if it is mapped by $j$ from some valid public source memory. Any memory not
public with respect to $j$ is private. We can see that private memory
corresponds exactly to unmapped and out-of-reach memory defined by
\kinjp, i.e., for any $b$ and $o$, $(b, o) \in \pubsrcmem{j} \iff (b,
o) \not\in \unmapped{j}$ and $(b, o) \in \pubtgtmem{j}{m} \iff (b, o)
\not\in \outofreach{j}{m}$.

\begin{figure}
\begin{minipage}[b]{.35\textwidth}
\begin{figure}[H]
\centering
\begin{tikzpicture}
  \path node (n1) {\small $(b_1, o_1)$} 
        --++(2,0) node (n2) {\small $(b_2, o_2)$}
        --++(1.8,0) node (n3) {$\ldots$};

  \draw[-stealth] (n1) -- node[sloped, above] {\small \kwd{read}} (n2);
  \draw[-stealth] (n2) -- node[sloped, above] {\small \kwd{read}} (n3);

  \node[below = 0.8cm of n1] (m1) {\small $(b_1', o_1')$};
  \path let \p1 = (n2) in let \p2 = (m1) in
     node (m2) at (\x1,\y2) {\small $(b_2', o_2')$};
  \path let \p1 = (n3) in let \p2 = (m1) in
     node (m3) at (\x1,\y2) {$\ldots$};

  \draw[-stealth] (m1) -- node[sloped, above] {\small \kwd{read}} (m2);
  \draw[-stealth] (m2) -- node[sloped, above] {\small \kwd{read}} (m3);

  \draw[-stealth] (n1) -- node[left] {$j$} (m1);
  \draw[-stealth] (n2) -- node[left] {$j$} (m2);

\end{tikzpicture}  
\caption{Closure of Public Memory}
\label{fig:read-inj}
\end{figure}
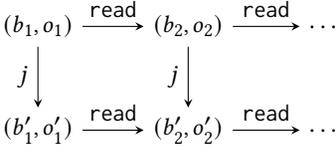
\end{minipage}
\begin{minipage}[b]{0.6\textwidth}
  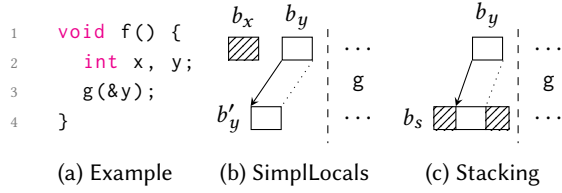
\begin{figure}[H]
  \centering
  \begin{subfigure}[b]{0.27\textwidth}
    \centering
\begin{lstlisting}[language = C]
  void f() { 
    int x, y; 
    g(&y); 
  }
\end{lstlisting}   
  \caption{Example}
  \label{fig:injp-rely-code}
  \end{subfigure}
  %
  %
  \begin{subfigure}[b]{0.28\textwidth}
    \centering
    \begin{tikzpicture}
      \snode{0.4cm}{0.3cm}{0}{draw, pattern=north east lines} (x) {};
      \node[above = 0 of x] (bx) {\small $b_x$};
      \snode{0.4cm}{0.3cm}{0}{draw, right = 0.3cm of x} (y) {};
      \path let \p1 = (bx) in let \p2 = (y) in
          node at (\x2, \y1) {\small $b_y$};
      
      \snode{0.4cm}{0.3cm}{0}{draw, below left = 0.6cm and 0.0cm of y} (yp) {};
      \node[left = 0 of yp] {\small $b_y'$};
      
      \draw[-stealth] (y.south west) -- (yp.north west);
      \draw[dotted] (y.south east) -- (yp.north east);

      \draw[dashed] ($(y.north east) + (0.2cm, 0.1cm)$) --++(0,-1.5cm);

      \snode{0.4cm}{0.3cm}{0}{right = 0.4cm of y} (b1) {$\ldots$};
      \path let \p1 = (b1) in let \p2 = (yp) in
        node (b2) at (\x1, \y2) {$\ldots$};

      \path let \p1 = (b1) in let \p2 = (b2) in
        node at (\x1, {(\y1+\y2)*0.5}) {\small \kwd{g}};
    \end{tikzpicture}
  \caption{SimplLocals}
  \label{fig:injp-rely-simpl}
  \end{subfigure}
  %
  %
  \begin{subfigure}[b]{0.3\textwidth}
    \centering
    \begin{tikzpicture}
      \snode{0.4cm}{0.3cm}{0}{draw} (y) {};
      \node[above = 0 of y] {\small $b_y$};
      
      \snode{1cm}{0.3cm}{0}{draw, below left = 0.6cm and -0.5cm of y} (sf) {};
      \node[left = 0 of sf] {\small $b_s$};
      
      \draw[-stealth] (y.south west) -- ($(sf.north west)+(0.3cm, 0)$);
      \draw[dotted] (y.south east) -- ($(sf.north west)+(0.7cm,0)$);

      \fill[pattern=north east lines] (sf.north west) rectangle ($(sf.south west)+(0.3cm,0)$);
      \fill[pattern=north east lines] ($(sf.north west)+(0.7cm,0)$) rectangle (sf.south east);
      \draw ($(sf.north west)+(0.3cm,0)$) -- ($(sf.south west)+(0.3cm,0)$);
      \draw ($(sf.north west)+(0.7cm,0)$) -- ($(sf.south west)+(0.7cm,0)$);

      \draw[dashed] ($(y.north east) + (0.2cm, 0.1cm)$) --++(0,-1.5cm);

      \snode{0.4cm}{0.3cm}{0}{right = 0.4cm of y} (b1) {$\ldots$};
      \path let \p1 = (b1) in let \p2 = (yp) in
        node (b2) at (\x1, \y2) {$\ldots$};

      \path let \p1 = (b1) in let \p2 = (b2) in
        node at (\x1, {(\y1+\y2)*0.5}) {\small \kwd{g}};
    \end{tikzpicture}
  \caption{Stacking}
  \label{fig:injp-rely-stacking}
\end{subfigure}
  \caption{Protection of Private Memory by \kinjp}
  \label{fig:injp-rely}
  \end{figure}
\end{minipage}
\end{figure}

With~\defref{def:pub-mem} and the properties of memory injection
(see~\secref{sssec:mem-model}), we can easily prove access of pointers
in a readable and public source location gets back another public
location.
\begin{lemma}\label{lem:pub-closure}
  Given $\minj{j}{m_1}{m_2}$,
  \begin{tabbing}
    \quad\=$\forall\app b_1\app o_1,\app$\=\kill
    \>$\forall\app b_1\app o_1,\app (b_1, o_1) \in \pubsrcmem{j}
                         \imply (b_1, o_1) \in \perm{m_1}{\kreadable}
                         \imply$\\
    \>\>$\mcontents{m_1}{b_1}{o_1} = \vptr{b_1'}{o_1'}
         \imply (b_1', o_1') \in \pubsrcmem{j}.$
  \end{tabbing}
\end{lemma}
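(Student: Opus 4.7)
The plan is to unpack the hypothesis $\minj{j}{m_1}{m_2}$ and derive the conclusion almost directly from the value-preservation part of memory injection. First, from $(b_1,o_1) \in \pubsrcmem{j}$ and the definition of $\pubsrcmem{j}$ in Definition~\ref{def:pub-mem}, I would obtain some $b_2, \delta$ with $j(b_1) = \some{(b_2,\delta)}$. Combined with the readability hypothesis $(b_1,o_1) \in \perm{m_1}{\kreadable}$, the second clause of the memory-injection definition (see~\secref{sssec:mem-model}) yields
\[
\vinj{j}{\mcontents{m_1}{b_1}{o_1}}{\mcontents{m_2}{b_2}{o_1+\delta}}.
\]

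Now I would do case analysis on the value injection. By the third hypothesis $\mcontents{m_1}{b_1}{o_1} = \vptr{b_1'}{o_1'}$, the source value is a pointer, so it is neither \Vundef{} nor a scalar, and the only remaining case of the $\vinj{j}{\_}{\_}$ relation forces $j(b_1') = \some{(b_2',o'')}$ for some $b_2',o''$, with the target content equal to $\vptr{b_2'}{o_1'+o''}$. Once $j(b_1')$ is known to be \some{\ldots} (i.e., not \none), the definition of $\pubsrcmem{j}$ immediately gives $(b_1',o_1') \in \pubsrcmem{j}$, as required.

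This proof is essentially a direct unfolding: each step is an application of a definition already established in~\secref{sssec:mem-model} or Definition~\ref{def:pub-mem}. I do not anticipate a real obstacle; the only minor care point is to recognize that the offset $o_1$ is unused in the conclusion (we only need $b_1'$ to lie in the domain of $j$, not the specific offset), which is why readability at $(b_1,o_1)$ together with $j(b_1)$ being defined suffices even though $(b_1', o_1')$ itself may or may not be a valid memory cell in $m_1$. In the Coq development, I would expect this lemma to be discharged by a short script that destructs $j(b_1)$, applies the injection's \kwd{mi\_memval} (or analogous) field, and inverts the resulting \kwd{Val.inject} hypothesis on the known pointer value.
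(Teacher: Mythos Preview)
Your proposal is correct and follows exactly the approach the paper indicates: the paper simply remarks that the lemma is an easy consequence of \defref{def:pub-mem} and the value-preservation clause of memory injection in~\secref{sssec:mem-model}, which is precisely the unfolding you describe. There is nothing to add.
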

\noindent It implies that readable public memory regions form a
``closure'' such that the sequences of reads are bounded inside these
regions, as shown in~\figref{fig:read-inj}.
The horizontal arrows indicates a pointer value $(b_{i+1}, o_{i+1})$
is read from $(b_i, o_i)$ with possible adjustment with pointer
arithmetic. Note that all memory cells at $(b_i, o_i)$s and $(b_i',
o_i')$s have \kreadable permission. By~\lemref{lem:pub-closure},
$(b_i, o_i)$s are all in public regions. By~\defref{def:pub-mem}, the
mirroring reads $(b_i', o_i')$s are also in public regions.

\subsubsection{\kinjp as a Uniform Rely Condition}
%
%
\kinjp is adequate for preventing external calls from interfering with
internal execution for all the compiler passes of CompCert.~\footnote{In
fact, the properties in~\defref{def:injp} are exactly from CompCert's assumptions on external
calls.} To illustrate this point, we discuss the
effect of \kinjp on two of CompCert's passes using ~\figref{fig:injp-rely-code}
as an example where \kwd{g} is an
external function.
The first pass is \kwd{SimplLocals} which converts local variables
whose memory addresses are not taken into temporary ones. As shown in
~\figref{fig:injp-rely-simpl}, $\kwd{x}$ is turned into a temporary variable
at the target level which is not visible to $\kwd{g}$.  Therefore, $\kwd{x}$ at
the source level becomes \emph{private data} as its block $b_x$ is
unmapped by $j$, thereby protected by \kinjp and cannot be modified by
$\kwd{g}$.
%
%
%
The second pass is \kwd{Stacking} which expands the stack frames with
private regions for return addresses, spilled registers, arguments,
etc. 
Continuing with our example, the only public stack data in
~\figref{fig:injp-rely-stacking} is $y$. All the private data is
out-of-reach, thereby protected by \kinjp.

\subsubsection{\kinjp as a Uniform Guarantee Condition}
%
%
%

For \kinjp to serve as a uniform guarantee condition, it suffices to
show the private memory of the environment is protected between initial
calls and final replies.
%
During an open forward simulation,
all incoming values and memories are related by some initial injection
$j$ (e.g., $\vinj{j}{\vec{v_1}}{\vec{v_2}}$ and
$\minj{j}{m_1}{m_2}$). In particular, the pointers in them are related
by $j$. Therefore, any sequence of reads starting from pointers stored
in the initial queries only inspect public memories in the source
and target, as already shown in~\figref{fig:read-inj}.
The private
(i.e., unmapped or out-of-reach) regions of the initial memories are
\emph{not modified} by internal execution.
Moreover, because injection functions only grow bigger during execution
but never change in value and the outgoing calls have \kinjp as a
rely-condition, the initially unmapped (out-of-reach) regions will
stay unmapped (out-of-reach) and be protected during external calls.
As a result, we conclude that $\kinjp$ is a reasonable
guarantee condition for any open simulation.

\subsection{Transitivity of \kinjp}
\label{ssec:injp-trans}
The goal is to show the two refinements in~\figref{fig:trans-comp}
hold when $K_{ij} = \kinjp$, i.e., $\kinjp \equiv
\comp{\kinjp}{\kinjp}$. As discussed in~\secref{ssec:challenges} the
critical step is to construct interpolating memory states that
transitively relate source and target states.
The construction is based on two observations: \emph{1)} the memory
injections deterministically decide the value and permissions of
public memory because they encode \emph{partial functional
transformations} on memory states, and \emph{2)} any memory not in the
domain or range of the partial functions is protected (private) and
unchanged throughout external calls. Although the proof is quite
involved, the result can be reused for all compiler passes thanks to
\kinjp's uniformity.
The formal proof of transitivity of \kinjp can be found in
~\apdxref{sec:injp-trans-proof}.

%
%
%

\subsubsection{$\screfine{\kinjp}{\comp{\kinjp}{\kinjp}}$}
By definition, we need to prove the following lemma:
\begin{lemma}\label{lem:injp-refine-injp-comp}
$\screfine{\kinjp}{\comp{\kinjp}{\kinjp}}$ holds. That is,
\begin{tabbing}
  \quad\=\quad\=\quad\=$\exists m_2'\app j_{12}'\app j_{23}',$\=\kill
  \>$\forall j_{12}\app j_{23}\app m_1\app m_2\app m_3,\app \minj{j_{12}}{m_1}{m_2}
  \imply \minj{j_{23}}{m_2}{m_3} \imply \exists j_{13},\app \minj{j_{13}}{m_1}{m_3} \app \land$\\
  \>\>$\forall m_1'\app m_3'\app j_{13}',\app \injpacc{(j_{13}, m_1, m_3)}{(j_{13}', m_1', m_3')} \imply \minj{j_{13}'}{m_1'}{m_3'}  \imply$\\
  \>\>\>$\exists m_2'\app j_{12}'\app j_{23}', \injpacc{(j_{12},m_1,m_2)}{(j_{12}',m_1',m_2')} \land \minj{j_{12}'}{m_1'}{m_2'}$\\
  \>\>\>\>$\land \injpacc{(j_{23},m_2,m_3)}{(j_{23}',m_2',m_3')} \land \minj{j_{23}'}{m_2'}{m_3'}.$
\end{tabbing}
\end{lemma}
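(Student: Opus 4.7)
The plan is to prove the lemma by a direct construction. First, given $j_{12}$, $j_{23}$ with $\minj{j_{12}}{m_1}{m_2}$ and $\minj{j_{23}}{m_2}{m_3}$, I take the witness $j_{13}$ to be the functional composition $j_{23}\circ j_{12}$ (summing offsets at each block), for which $\minj{j_{13}}{m_1}{m_3}$ is a standard transitivity lemma already available in CompCert's memory model. The substantive work is then in the second step: given the evolved $m_1'$, $m_3'$, $j_{13}'$ with $\injpacc{(j_{13},m_1,m_3)}{(j_{13}',m_1',m_3')}$ and $\minj{j_{13}'}{m_1'}{m_3'}$, I must produce a decomposition $m_2', j_{12}', j_{23}'$ satisfying both per-pass \kinjp accessibilities and both injection relations simultaneously.

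For the decomposition of the injection, I extend $j_{12}$ and $j_{23}$ monotonically to $j_{12}'$ and $j_{23}'$ as follows. For every source block $b_1$ with $j_{13}(b_1)=\none$ but $j_{13}'(b_1) = \some{(b_3,\delta)}$, I allocate a fresh intermediate block $b_2^{\mathrm{new}}$ (disjoint from the support of $m_2$), set $j_{12}'(b_1) = \some{(b_2^{\mathrm{new}}, 0)}$ and $j_{23}'(b_2^{\mathrm{new}}) = \some{(b_3,\delta)}$; similarly for target blocks $b_3$ newly reached in $m_3'$ that must be in the image of $j_{23}'$. By construction the functional composition $j_{23}'\circ j_{12}'$ agrees with $j_{13}'$. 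The interpolating memory $m_2'$ is then defined blockwise so that (i) on blocks $b_2$ that are both unmapped by $j_{23}$ and out-of-reach of $j_{12}$ from $m_1$, we set $m_2'[b_2] = m_2[b_2]$, forced by the two relies; (ii) on blocks that are in the image of $j_{12}'$, the values and permissions are lifted from $m_1'$ via the injection $j_{12}'$ (inheriting any newly allocated content); (iii) on the freshly allocated intermediate blocks, the content is chosen to mirror $m_1'$ on the corresponding source block, which then automatically agrees with $m_3'$ on the image side.

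The verification then splits into four obligations. The two injection statements $\minj{j_{12}'}{m_1'}{m_2'}$ and $\minj{j_{23}'}{m_2'}{m_3'}$ follow because $m_2'$ is constructed to literally reflect $m_1'$ along $j_{12}'$, and the target side of $j_{23}'$ is consistent with $m_3'$ because $j_{23}'\circ j_{12}' = j_{13}'$ and we are given $\minj{j_{13}'}{m_1'}{m_3'}$. For the two \kinjp accessibilities, monotonicity of the injections and the global $\macc{\cdot}{\cdot}$ conditions come directly from $\injpacc{}{}$ at the top level. The $\unchangedon{m_1}{m_1'}$ obligation for $\unmapped{j_{12}}$ reduces to $\unmapped{j_{13}}$ because if $j_{12}(b_1)=\none$ then $j_{13}(b_1)=\none$, and similarly $\unchangedon{m_3}{m_3'}$ for $\outofreach{j_{23}}{m_2}$ reduces to $\outofreach{j_{13}}{m_1}$ via the composition identity, using Lemma~\ref{lem:pub-closure}-style reasoning about how public regions compose functionally. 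The obligations on $m_2'$ itself ($\unchangedon{m_2}{m_2'}$ on $\outofreach{j_{12}}{m_1}$ and on $\unmapped{j_{23}}$) hold by construction clause~(i).

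The main obstacle will be establishing the coherence of $m_2'$ on the ``doubly public'' portion: blocks $b_2$ that are both in the image of $j_{12}'$ and in the domain of $j_{23}'$ have their contents constrained from two sides, and these constraints must agree. Concretely, a value read at $(b_2,o_2)$ in $m_2'$ must inject both to $m_1'$ (via $j_{12}'^{-1}$) and to $m_3'$ (via $j_{23}'$), and composing the two must coincide with the direct $j_{13}'$-relation between $m_1'$ and $m_3'$. This is ultimately enforced by the functional nature of the injections $j_{12}'$ and $j_{23}'$ — which is why we insisted on a functional decomposition of $j_{13}'$ — together with $\minj{j_{13}'}{m_1'}{m_3'}$, but discharging it requires a careful case analysis on whether each block was pre-existing, freshly allocated on the source side, or freshly allocated on the target side, handling permissions as carefully as values. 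The freshness of the chosen intermediate blocks must be maintained to avoid collisions with the protected-unchanged portion of $m_2$; and the choice of offset $0$ for newly allocated intermediate blocks is what keeps the construction tractable.
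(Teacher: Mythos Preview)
Your high-level strategy matches the paper's: take $j_{13}=j_{23}\circ j_{12}$, extend $j_{12},j_{23}$ by routing each newly mapped $b_1$ through a fresh intermediate block at offset~$0$, and assemble $m_2'$ from an unchanged private part and a part projected from $m_1'$. But your case split on $m_2'$ is stated incorrectly and the error propagates. Clause~(i) keeps a cell unchanged only when it is \emph{both} unmapped by $j_{23}$ \emph{and} out-of-reach of $j_{12}$; yet the two obligations you later claim ``hold by construction clause~(i)'' require the unchanged region to contain the \emph{union} $\outofreach{j_{12}}{m_1}\cup\unmapped{j_{23}}$. Concretely, block $b_2^1$ in Figure~\ref{fig:inj} is out-of-reach of $j_{12}$ but mapped by $j_{23}$: none of your three clauses covers it. Dually, clause~(ii) lifts from $m_1'$ on every block in the image of $j_{12}'$, including blocks like $b_2^2$ that are unmapped by $j_{23}$; there the second rely forces those cells to stay identical to $m_2$, and lifting from $m_1'$ need not reproduce $m_2$ exactly (injection allows $m_2$ to carry strictly more permission, or a defined value where the source has $\Vundef$). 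The paper's construction avoids both problems by projecting from $m_1'$ only on the doubly-public cells $\pubtgtmem{j_{12}}{m_1}\cap\pubsrcmem{j_{23}}$ and leaving every other old cell exactly as in $m_2$.

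A smaller gap: $\macc{m_2}{m_2'}$ does not come ``from the top-level accessibility''---that only yields $\macc{m_1}{m_1'}$ and $\macc{m_3}{m_3'}$---so it must be argued from your construction of $m_2'$; in particular it forces you never to overwrite read-only cells of $m_2$ (the paper copies a value in its update step only when the target cell is writable). Finally, your extra clause about target blocks $b_3$ ``newly reached in $m_3'$'' is harmless but redundant: by separation any such $b_3$ in the range of $j_{13}'\setminus j_{13}$ already has a fresh preimage $b_1$ and is handled by your first extension clause.
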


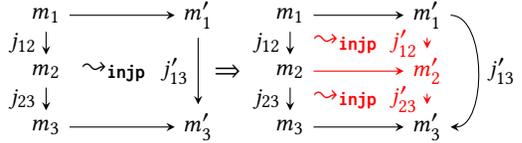
\begin{wrapfigure}{R}{.5\textwidth}
  \center
  \begin{tikzpicture}
    \node (m1) {\small $m_1$};
    \node[below = 0.3cm of m1] (m2) {\small $m_2$};
    \node[below = 0.3cm of m2] (m3) {\small $m_3$};

    \draw[-stealth] (m1) -- node[left] {\small $j_{12}$} (m2);
    \draw[-stealth] (m2) -- node[left] {\small $j_{23}$} (m3);

    \node[right = 1.4cm of m1] (m1p) {\small $m_1'$};
    \node[right = 1.4cm of m3] (m3p) {\small $m_3'$};
    \path let \p1 = (m2) in let \p2 = (m1p) in
      node at ({(\x1+\x2)*0.45}, \y1) {\bf \small $\accsymb_{\kinjp}$};

    \draw[-stealth] (m1) -- (m1p);
    \draw[-stealth] (m3) -- (m3p);

    \draw[-stealth] (m1p) -- node[left] {\small $j_{13}'$} (m3p);

    \node[right = 1.8cm of m2] {$\Rightarrow$};

    \node[right = 0.6cm of m1p] (m11) {\small $m_1$};
    \path let \p1 = (m11) in let \p2 = (m2) in
      node (m22) at (\x1, \y2) {\small $m_2$};
    \path let \p1 = (m11) in let \p2 = (m3) in
      node (m33) at (\x1, \y2) {\small $m_3$};

    \draw[-stealth] (m11) -- node[left] {\small $j_{12}$} (m22);
    \draw[-stealth] (m22) -- node[left] {\small $j_{23}$} (m33);

    \node[right = 1.2cm of m11] (m11p) {\small $m_1'$};
    \node[right = 1.2cm of m33] (m33p) {\small $m_3'$};

    \draw[-stealth] (m11) -- (m11p);
    \draw[-stealth] (m33) -- (m33p);

    \draw[-stealth] (m11p) .. controls ($(m11p)+(0.8cm,0)$) and ($(m33p)+(0.8cm,0cm)$) .. node[right] {\small $j_{13}'$} (m33p);

    \node[right = 1.2cm of m22,red] (m22p) {\small $m_2'$};
    \draw[-stealth,red] (m11p) -- node[left] {\small $j_{12}'$} (m22p);
    \draw[-stealth,red] (m22p) -- node[left] {\small $j_{23}'$} (m33p);
    \draw[-stealth,red] (m22) -- (m22p);
    \path let \p1 = (m11) in let \p2 = (m22p) in
      node[red] at ({(\x1+\x2)*0.48},{(\y1+\y2)*0.5}) {\bf \small  $\accsymb_{\kinjp}$};
    \path let \p1 = (m22) in let \p2 = (m33p) in
      node[red] at ({(\x1+\x2)*0.48},{(\y1+\y2)*0.5}) {\bf \small $\accsymb_{\kinjp}$};
    \end{tikzpicture}
  \caption{Construction of Interpolating States}
  \label{fig:injp-int-st}
\end{wrapfigure}
This lemma conforms to the graphic representation
in~\figref{fig:trans-comp1}. To prove it, an obvious choice is to pick
$j_{13} = \comp{j_{23}}{j_{12}}$. Then, we are left to prove the existence of
interpolating state $m_2'$ and the memory and accessibility relations
as shown in~\figref{fig:injp-int-st}.
By definition, $m_2'$ consists of memory blocks newly allocated with
respect to $m_2$ and blocks that already exist in $m_2$. The latter
can be further divided into public and private memory regions with
respect to injections $j_{12}$ and $j_{23}$.
Then, $m_2'$ is constructed following the ideas that \emph{1)} the
public and newly allocated memory should be projected from the updated
source memory $m_1'$ by $j_{12}'$, and \emph{2)} the private memory is
protected by \kinjp and should be copied over from $m_2$ to $m_2'$.
%



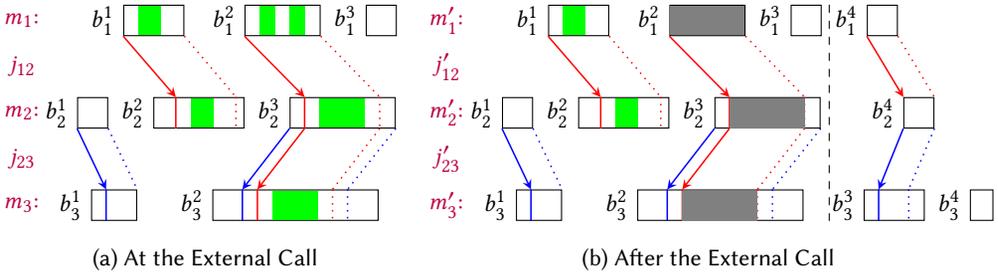
\begin{figure}[ht]
\def\sx{0}
\def\sy{0}
\def\done{0.5cm}
\def\dtwo{0.6cm}
\def\dthree{0.4cm}
\def\dfourth{0.8cm}
\def\bheight{0.4cm}
\def\fsep{0.02cm}

\begin{subfigure}[b]{0.4\textwidth}
\begin{tikzpicture}
  \snode{0.8cm}{\bheight}{0.1cm}{draw} (m1b1) {};
  \fill[green] ($(m1b1.north west)+(0.2cm,-\fsep)$) rectangle ($(m1b1.south west)+(0.5cm,\fsep)$);
  \node[left = 0 of m1b1] {\small $b_1^1$};

  \snode{1cm}{\bheight}{0.1cm}{draw, right = 0.8cm of m1b1} (m1b2) {};
  \fill[green] ($(m1b2.north west)+(0.2cm,-\fsep)$) rectangle ($(m1b2.south west)+(0.4cm,\fsep)$);
  \fill[green] ($(m1b2.north west)+(0.6cm,-\fsep)$) rectangle ($(m1b2.south west)+(0.8cm,\fsep)$);
  \node[left = 0 of m1b2] {\small $b_1^2$};

  \snode{0.4cm}{\bheight}{0.1cm}{draw, right = 0.6cm of m1b2} (m1b3) {};
  \node[left = 0 of m1b3] {\small $b_1^3$};

  \snode{0.4cm}{\bheight}{0.1cm}{draw, below left = 0.8cm and 0.2cm of m1b1} (m2b1) {};
  \node[left = 0 of m2b1] {\small $b_2^1$};

  \snode{1.2cm}{\bheight}{0.1cm}{draw, right = 0.6cm of m2b1} (m2b2) {};
  \fill[green] ($(m2b2.north west)+(0.5cm,-\fsep)$) rectangle ($(m2b2.south west)+(0.8cm,\fsep)$);
  \draw[line width=0.02cm, -stealth,red] (m1b1.south west) -- ($(m2b2.north west)+(0.3cm,0)$);
  \draw[line width=0.02cm,red] ($(m2b2.north west)+(0.3cm,0)$) -- ($(m2b2.south west)+(0.3cm,0)$);
  \draw[line width=0.02cm,dotted,red] (m1b1.south east) -- ($(m2b2.north west)+(1.1cm,0)$);
  \draw[line width=0.02cm,dotted,red] ($(m2b2.north west)+(1.1cm,0)$) -- ($(m2b2.south west)+(1.1cm,0)$);
  \node[left = 0 of m2b2] {\small $b_2^2$};

  \snode{1.4cm}{\bheight}{0.1cm}{draw, right = 0.6cm of m2b2} (m2b3) {};
  \fill[green] ($(m2b3.north west)+(0.4cm,-\fsep)$) rectangle ($(m2b3.south west)+(1cm,\fsep)$);
  \draw[line width=0.02cm, -stealth,red] (m1b2.south west) -- ($(m2b3.north west)+(0.2cm,0)$);
  \draw[line width=0.02cm,red] ($(m2b3.north west)+(0.2cm,0)$) -- ($(m2b3.south west)+(0.2cm,0)$);
  \draw[line width=0.02cm,dotted,red] (m1b2.south east) -- ($(m2b3.north west)+(1.2cm,0)$);
  \draw[line width=0.02cm,dotted,red] ($(m2b3.north west)+(1.2cm,0)$) -- ($(m2b3.south west)+(1.2cm,0)$);
  \node[left = 0 of m2b3] {\small $b_2^3$};

  \snode{0.6cm}{\bheight}{0.1cm}{draw, below left = 0.8cm and -0.8cm of m2b1} (m3b1) {};
  \draw[line width=0.02cm, -stealth,blue] (m2b1.south west) -- ($(m3b1.north west)+(0.2cm,0)$);
  \draw[line width=0.02cm,blue] ($(m3b1.north west)+(0.2cm,0)$) -- ($(m3b1.south west)+(0.2cm,0)$);
  \draw[line width=0.02cm,dotted,blue] (m2b1.south east) -- (m3b1.north east);
  \node[left = 0 of m3b1] {\small $b_3^1$};

  \snode{2.2cm}{\bheight}{0.1cm}{draw, right = 1cm of m3b1} (m3b2) {};
  \fill[green] ($(m3b2.north west)+(0.8cm,-\fsep)$) rectangle ($(m3b2.south west)+(1.4cm,\fsep)$);
  \draw[line width=0.02cm, -stealth,blue] (m2b3.south west) -- ($(m3b2.north west)+(0.4cm,0)$);
  \draw[line width=0.02cm,blue] ($(m3b2.north west)+(0.4cm,0)$) -- ($(m3b2.south west)+(0.4cm,0)$);
  \draw[line width=0.02cm,dotted,blue] (m2b3.south east) -- ($(m3b2.north west)+(1.8cm,0)$);
  \draw[line width=0.02cm,dotted,blue] ($(m3b2.north west)+(1.8cm,0)$) -- ($(m3b2.south west)+(1.8cm,0)$);
  \draw[line width=0.02cm, -stealth,red] ($(m2b3.south west)+(0.2cm,0)$) -- ($(m3b2.north west)+(0.6cm,0)$);
  \draw[line width=0.02cm,red] ($(m3b2.north west)+(0.6cm,0)$) -- ($(m3b2.south west)+(0.6cm,0)$);
  \draw[line width=0.02cm, dotted,red] ($(m2b3.south west)+(1.2cm,0)$) -- ($(m3b2.north west)+(1.6cm,0)$);
  \draw[line width=0.02cm,dotted,red] ($(m3b2.north west)+(1.6cm,0)$) -- ($(m3b2.south west)+(1.6cm,0)$);
  \node[left = 0 of m3b2] {\small $b_3^2$};

  \node[left = 1cm of m1b1] (txtm1) {\small\color{purple} $m_1$:};
  \path let \p1 = (txtm1) in let \p2 = (m2b1) in
    node (txtm2) at (\x1, \y2) {\small\color{purple} $m_2$:};
  \path let \p1 = (txtm1) in let \p2 = (m3b1) in
    node (txtm3) at (\x1, \y2) {\small\color{purple} $m_3$:};
  \path let \p1 = (txtm1) in let \p2 = (txtm2) in
    node (txtj12) at (\x1, {(\y1+\y2)*0.5}) {\small\color{purple} $j_{12}$};
  \path let \p1 = (txtm2) in let \p2 = (txtm3) in
    node (txtj23) at (\x1, {(\y1+\y2)*0.5}) {\small\color{purple} $j_{23}$};

\end{tikzpicture}
\caption{At the External Call}
\label{fig:inj-before}
\end{subfigure}
\begin{subfigure}[b]{0.55\textwidth}
\begin{tikzpicture}
  \snode{0.8cm}{\bheight}{0.1cm}{draw} (m1b1) {};
  \fill[green] ($(m1b1.north west)+(0.2cm,-\fsep)$) rectangle ($(m1b1.south west)+(0.5cm,\fsep)$);
  \node[left = 0 of m1b1] {\small $b_1^1$};

  \snode{1cm}{\bheight}{0.1cm}{draw, right = 0.8cm of m1b1} (m1b2) {};
  \fill[green] ($(m1b2.north west)+(0.2cm,-\fsep)$) rectangle ($(m1b2.south west)+(0.4cm,\fsep)$);
  \fill[green] ($(m1b2.north west)+(0.6cm,-\fsep)$) rectangle ($(m1b2.south west)+(0.8cm,\fsep)$);
  \node[left = 0 of m1b2] {\small $b_1^2$};

  \snode{0.4cm}{\bheight}{0.1cm}{draw, right = 0.6cm of m1b2} (m1b3) {};
  \node[left = 0 of m1b3] {\small $b_1^3$};

  \snode{0.4cm}{\bheight}{0.1cm}{draw, below left = 0.8cm and 0.2cm of m1b1} (m2b1) {};
  \node[left = 0 of m2b1] {\small $b_2^1$};

  \snode{1.2cm}{\bheight}{0.1cm}{draw, right = 0.6cm of m2b1} (m2b2) {};
  \fill[green] ($(m2b2.north west)+(0.5cm,-\fsep)$) rectangle ($(m2b2.south west)+(0.8cm,\fsep)$);
  \draw[line width=0.02cm, -stealth,red] (m1b1.south west) -- ($(m2b2.north west)+(0.3cm,0)$);
  \draw[line width=0.02cm,red] ($(m2b2.north west)+(0.3cm,0)$) -- ($(m2b2.south west)+(0.3cm,0)$);
  \draw[line width=0.02cm,dotted,red] (m1b1.south east) -- ($(m2b2.north west)+(1.1cm,0)$);
  \draw[line width=0.02cm,dotted,red] ($(m2b2.north west)+(1.1cm,0)$) -- ($(m2b2.south west)+(1.1cm,0)$);
  \node[left = 0 of m2b2] {\small $b_2^2$};

  \snode{1.4cm}{\bheight}{0.1cm}{draw, right = 0.6cm of m2b2} (m2b3) {};
  \fill[green] ($(m2b3.north west)+(0.4cm,-\fsep)$) rectangle ($(m2b3.south west)+(1cm,\fsep)$);
  \draw[line width=0.02cm, -stealth,red] (m1b2.south west) -- ($(m2b3.north west)+(0.2cm,0)$);
  \draw[line width=0.02cm,red] ($(m2b3.north west)+(0.2cm,0)$) -- ($(m2b3.south west)+(0.2cm,0)$);
  \draw[line width=0.02cm,dotted,red] (m1b2.south east) -- ($(m2b3.north west)+(1.2cm,0)$);
  \draw[line width=0.02cm,dotted,red] ($(m2b3.north west)+(1.2cm,0)$) -- ($(m2b3.south west)+(1.2cm,0)$);
  \node[left = 0 of m2b3] {\small $b_2^3$};

  \snode{0.6cm}{\bheight}{0.1cm}{draw, below left = 0.8cm and -0.8cm of m2b1} (m3b1) {};
  \draw[line width=0.02cm, -stealth,blue] (m2b1.south west) -- ($(m3b1.north west)+(0.2cm,0)$);
  \draw[line width=0.02cm,blue] ($(m3b1.north west)+(0.2cm,0)$) -- ($(m3b1.south west)+(0.2cm,0)$);
  \draw[line width=0.02cm,dotted,blue] (m2b1.south east) -- (m3b1.north east);
  \node[left = 0 of m3b1] {\small $b_3^1$};

  \snode{2.2cm}{\bheight}{0.1cm}{draw, right = 1cm of m3b1} (m3b2) {};
  \fill[green] ($(m3b2.north west)+(0.8cm,-\fsep)$) rectangle ($(m3b2.south west)+(1.4cm,\fsep)$);
  \draw[line width=0.02cm, -stealth,blue] (m2b3.south west) -- ($(m3b2.north west)+(0.4cm,0)$);
  \draw[line width=0.02cm,blue] ($(m3b2.north west)+(0.4cm,0)$) -- ($(m3b2.south west)+(0.4cm,0)$);
  \draw[line width=0.02cm,dotted,blue] (m2b3.south east) -- ($(m3b2.north west)+(1.8cm,0)$);
  \draw[line width=0.02cm,dotted,blue] ($(m3b2.north west)+(1.8cm,0)$) -- ($(m3b2.south west)+(1.8cm,0)$);
  \draw[line width=0.02cm, -stealth,red] ($(m2b3.south west)+(0.2cm,0)$) -- ($(m3b2.north west)+(0.6cm,0)$);
  \draw[line width=0.02cm,red] ($(m3b2.north west)+(0.6cm,0)$) -- ($(m3b2.south west)+(0.6cm,0)$);
  \draw[line width=0.02cm, dotted,red] ($(m2b3.south west)+(1.2cm,0)$) -- ($(m3b2.north west)+(1.6cm,0)$);
  \draw[line width=0.02cm,dotted,red] ($(m3b2.north west)+(1.6cm,0)$) -- ($(m3b2.south west)+(1.6cm,0)$);
  \node[left = 0 of m3b2] {\small $b_3^2$};

  \path [draw,dashed] let \p1 = (m3b1.south) in let \p2 = (m1b3.north east) in
    ({\x2 + 0.1cm}, \y2) -- ({\x2+0.1cm}, \y1);

  \snode{0.4cm}{\bheight}{0.1cm}{draw, right = 0.6cm of m1b3} (m1b4) {};
  \node[left = 0 of m1b4] {\small $b_1^4$};

  \snode{0.4cm}{\bheight}{0.1cm}{draw, right = 1.1cm of m2b3} (m2b4) {};
  \node[left = 0 of m2b4] {\small $b_2^4$};

  \snode{0.8cm}{\bheight}{0.1cm}{draw, right = 0.8cm of m3b2} (m3b3) {};
  \node[left = 0 of m3b3] {\small $b_3^3$};

  \snode{0.3cm}{\bheight}{0.1cm}{draw, right = 0.6cm of m3b3} (m3b4) {};
  \node[left = 0 of m3b4] {\small $b_3^4$};

  \draw[line width=0.02cm, -stealth,red] (m1b4.south west) -- (m2b4.north west);
  \draw[line width=0.02cm, dotted,red] (m1b4.south east) -- (m2b4.north east);
  
  \draw[line width=0.02cm,-stealth,blue] (m2b4.south west) -- ($(m3b3.north west)+(0.2cm,0)$);
  \draw[line width=0.02cm,blue] ($(m3b3.north west)+(0.2cm,0)$) -- ($(m3b3.south west)+(0.2cm,0)$);
  \draw[line width=0.02cm,dotted,blue] (m2b4.south east) -- ($(m3b3.north west)+(0.6cm,0)$);
  \draw[line width=0.02cm,dotted,blue] ($(m3b3.north west)+(0.6cm,0)$) -- ($(m3b3.south west)+(0.6cm,0)$);

  \node[left = 1cm of m1b1] (txtm1) {\small\color{purple} $m_1'$:};
  \path let \p1 = (txtm1) in let \p2 = (m2b1) in
    node (txtm2) at (\x1, \y2) {\small\color{purple} $m_2'$:};
  \path let \p1 = (txtm1) in let \p2 = (m3b1) in
    node (txtm3) at (\x1, \y2) {\small\color{purple} $m_3'$:};
  \path let \p1 = (txtm1) in let \p2 = (txtm2) in
    node (txtj12) at (\x1, {(\y1+\y2)*0.5}) {\small\color{purple} $j_{12}'$};
  \path let \p1 = (txtm2) in let \p2 = (txtm3) in
    node (txtj23) at (\x1, {(\y1+\y2)*0.5}) {\small\color{purple} $j_{23}'$};

  \fill[gray] ($(m1b2.north west)+(\fsep,-\fsep)$) rectangle ($(m1b2.south east)+(-\fsep,\fsep)$);
  \fill[gray] ($(m2b3.north west)+(0.2cm,-\fsep)$) rectangle ($(m2b3.south west)+(1.2cm,0)$);
  \fill[gray] ($(m3b2.north west)+(0.6cm,-\fsep)$) rectangle ($(m3b2.south west)+(1.6cm,0)$);
\end{tikzpicture}
\caption{After the External Call}
\label{fig:inj-after}
\end{subfigure}
\caption{Constructing of an Interpolating Memory State}
\label{fig:inj}
\end{figure}

We use the concrete example in~\figref{fig:inj} to motivate the
construction of $m_2'$. Here, the white and green
areas correspond to locations in $\perm{\_}{\knonempty}$ (with
at least some permission) and in $\perm{\_}{\kreadable}$ (with
at least readable permission), respectively. Given $\minj{j_{12}}{m_1}{m_2}$,
$\minj{j_{23}}{m_2}{m_3}$ and $\injpacc{(\comp{j_{23}}{j_{12}}, m_1,
  m_3)}{(j_{13}', m_1', m_3')}$, we need to define
$j_{12}'$ and $j_{23}'$ and build $m_2'$ satisfying
$\minj{j_{12}'}{m_1'}{m_2'}$, $\minj{j_{23}'}{m_2'}{m_3'}$,
$\injpacc{(j_{12},m_1,m_2)}{(j_{12}',m_1',m_2')}$, and
$\injpacc{(j_{23},m_2,m_3)}{(j_{23}',m_2',m_3')}$.
%
$m_1'$ and
$m_3'$ are expansions of $m_1$ and $m_3$ with new blocks and possible
modification to the public regions of $m_1$ and $m_3$. 
Here, $m_1'$ has a new block
$b_1^4$ and $m_3'$ has two new block $b_3^3$ and $b_3^4$. 

We first fix $j_{12}'$, $j_{23}'$ and the shape of blocks in
$m_2'$. We begin with $m_2$ and introduce a newly allocated block
$b_2^4$ whose shape matches $b_1^4$ in $m_1'$. Then, $j_{12}'$ is
obtained by expanding $j_{12}$ with identity mapping from $b_1^4$ to
$b_2^4$.  Furthermore, $j_{23}'$ is also expanded with a mapping from
$b_2^4$ to a block in $m_3'$; this mapping is determined by $j_{13}'$.

We then set the values and permissions for memory cells in $m_2'$ so
that it satisfies injection and the \kunchangedon properties for
readable memory regions implied by
$\injpacc{(j_{12},m_1,m_2)}{(j_{12}',m_1',m_2')}$ and
$\injpacc{(j_{23},m_2,m_3)}{(j_{23}',m_2',m_3')}$. The values and
permissions for newly allocated blocks are obviously mapped from
$m_1'$ by $j_{12}'$. Those for old blocks are fixed as follows. By
memory protection provided in $\injpacc{(\comp{j_{23}}{j_{12}}, m_1,
  m_3)}{(j_{13}', m_1', m_3')}$, the only memory cells in $m_1$ that
may have been modified in $m_1'$ are those mapped all the way to $m_3$
by $\comp{j_{23}}{j_{12}}$, while the cells in $m_3$ that may be
modified in $m_3'$ must be in the image of $\comp{j_{23}}{j_{12}}$. To
match this fact, the only old memory regions in $m_2'$ whose values
and permissions may be modified are those both in the image of
$j_{12}$ and the domain of $j_{23}$. Those are the public memory with
respect to $j_{12}$ and $j_{23}$ and displayed as the gray areas
in~\figref{fig:inj-after}.
%
%
Following idea \emph{1)} above, the values and permissions in those
regions are projected from $m_1'$ by applying the injection function
$j_{12}$. Note that there is an exception: values in read-only public regions are copied over from $m_2$.
Following idea \emph{2)}
above, the remaining old memory regions are private with respect to
$j_{12}$ and $j_{23}$ and should have the same values and permissions
as in $m_2$.


Note that the accessibility relations
$\injpacc{(j_{12},m_1,m_2)}{(j_{12}',m_1',m_2')}$ and
$\injpacc{(j_{23},m_2,m_3)}{(j_{23}',m_2',m_3')}$ can be derived from
$\injpacc{(\comp{j_{23}}{j_{12}}, m_1, m_3)}{(j_{13}', m_1', m_3')}$
because the latter enforces \emph{stronger} protection than the
former. This is due to unmapped and out-of-reach regions getting
\emph{bigger} as memory injections get composed. For example,
in~\figref{fig:inj}, $b_1^1$ is mapped by $j_{12}$ but becomes
unmapped by $\comp{j_{23}}{j_{12}}$; the image of $b_2^1$ in $b_3^1$
is in reach by $j_{23}$ but becomes out-of-reach by
$\comp{j_{23}}{j_{12}}$.

\subsubsection{$\screfine{\comp{\kinjp}{\kinjp}}{\kinjp}$}
By definition, we need to prove:
\begin{lemma}\label{lem:injp-comp-refine-injp}
$\screfine{\comp{\kinjp}{\kinjp}}{\kinjp}$ holds. That is,
\begin{tabbing}
  \quad\=\quad\=\quad\=$\exists m_2'\app j_{12}'\app j_{23}',$\=\kill
  \>$\forall j_{13}\app m_1\app m_3,\app \minj{j_{13}}{m_1}{m_3} 
     \imply \exists j_{12}\app j_{23}\app m_2,\app \minj{j_{12}}{m_1}{m_2} \land
     \minj{j_{23}}{m_2}{m_3} \app \land$\\
  \>\>$\forall m_1'\app m_2'\app m_3'\app j_{12}'\app j_{23}',\app
       \injpacc{(j_{12}, m_1, m_2)}{(j_{12}', m_1', m_2')} \imply
       \injpacc{(j_{23}, m_2, m_3)}{(j_{23}', m_2', m_3')} \imply$\\
  \>\>\>$\minj{j_{12}'}{m_1'}{m_2'} \imply 
       \minj{j_{23}'}{m_2'}{m_3'} \imply
       \exists j_{13}',\app \injpacc{(j_{13}, m_1, m_3)}{(j_{13}', m_1', m_3')}
         \land \minj{j_{13}'}{m_1'}{m_3'}.$
\end{tabbing}
\end{lemma}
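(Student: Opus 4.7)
The plan is to factor the given $\kinjp$ world $(j_{13}, m_1, m_3)$ into two worlds sharing an interpolant $m_2$, and then, after both factors evolve independently, recompose them into a single evolved $\kinjp$ world. The crux is the choice of interpolant: a natural design is to push all of the nontrivial injection work into the first factor and keep the second factor essentially an identity. Concretely, I would take $j_{12} := j_{13}$, let $j_{23}$ be the partial identity supported on exactly the blocks that lie in the image of $j_{13}$ (with offset zero), and let $m_2$ be $m_3$ with its permissions restricted to the public footprint $\pubtgtmem{j_{13}}{m_1}$. By construction $\comp{j_{23}}{j_{12}} = j_{13}$.

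The initial injection obligations are immediate with this choice. $\minj{j_{12}}{m_1}{m_2}$ holds because the image under $j_{13}$ of any permissioned source cell lies in $\pubtgtmem{j_{13}}{m_1}$ by definition, so $m_2$ keeps $m_3$'s permissions and content on exactly those cells; $\minj{j_{23}}{m_2}{m_3}$ is trivial because $j_{23}$ is an identity and $m_2$'s permissions are pointwise below $m_3$'s.

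For the evolved side, given $(j_{12}', m_1', m_2')$ and $(j_{23}', m_2', m_3')$ with the stipulated injection and $\accsymb_{\kinjp}$ properties, I would set $j_{13}' := \comp{j_{23}'}{j_{12}'}$. Then $\minj{j_{13}'}{m_1'}{m_3'}$ follows from composition of memory injections, and $j_{13} \subseteq j_{13}'$ follows from $j_{12} \subseteq j_{12}'$ and $j_{23} \subseteq j_{23}'$ together with $j_{13} = \comp{j_{23}}{j_{12}}$. The monotonicity clauses $\macc{m_1}{m_1'}$ and $\macc{m_3}{m_3'}$ transfer directly from the two given accessibilities. The unmapped-protection clause $\unmapped{j_{13}} \subseteq \unchangedon{m_1}{m_1'}$ is also easy: $j_{13}(b_1) = \none$ forces $j_{12}(b_1) = \none$, so the first accessibility finishes the job.

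The main obstacle is the out-of-reach clause $\outofreach{j_{13}}{m_1} \subseteq \unchangedon{m_3}{m_3'}$, and the restricted permissions in $m_2$ are introduced precisely to enable it. If $(b_3, o_3) \in \outofreach{j_{13}}{m_1}$, then no permissioned source cell maps to $(b_3, o_3)$ under $j_{13}$, so $(b_3, o_3) \notin \pubtgtmem{j_{13}}{m_1}$, so $(b_3, o_3)$ carries no permission in $m_2$. Since $j_{23}$ is a partial identity, either $b_3$ is absent from its domain (no block maps to $b_3$ at all) or $j_{23}(b_3) = \some{(b_3,0)}$ and the only candidate preimage $(b_3, o_3)$ is unpermissioned in $m_2$; in either case $(b_3, o_3) \in \outofreach{j_{23}}{m_2}$, and the second accessibility then delivers $\unchangedon{m_3}{m_3'}$ at that cell. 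The remaining details are bookkeeping around newly allocated blocks that enter $j_{12}'$ or $j_{23}'$ after the call, and preservation of read-only invariants, both of which follow from the corresponding conditions on the two factor accessibilities.
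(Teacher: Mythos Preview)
Your decomposition is the mirror image of the paper's: you put the real injection in the first factor ($j_{12} := j_{13}$) and a partial identity in the second, with $m_2$ built from $m_3$ by stripping permissions outside $\pubtgtmem{j_{13}}{m_1}$. The paper instead takes $m_2 := m_1$, $j_{23} := j_{13}$, and $j_{12}$ the partial identity on $\mathrm{dom}(j_{13})$. Both choices satisfy $\comp{j_{23}}{j_{12}} = j_{13}$, and your out-of-reach argument via the permission restriction is a nice idea. But your claim that $\minj{j_{23}}{m_2}{m_3}$ is ``trivial because $j_{23}$ is an identity and $m_2$'s permissions are pointwise below $m_3$'s'' glosses over the value-injection clause, and that clause fails for your construction.

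Concretely: take any cell $(b_2,o_2)$ that lies in $\pubtgtmem{j_{13}}{m_1}$, is readable in $m_3$, and whose source preimage $(b_1,o_1)$ stores $\Vundef$. Then $\vinj{j_{13}}{\Vundef}{m_3[b_2,o_2]}$ imposes no constraint on $m_3[b_2,o_2]$, so it may well be a pointer $\vptr{b'}{o'}$ with $b'$ outside the image of $j_{13}$. Since you copy $m_3$'s contents into $m_2$, you have $m_2[b_2,o_2] = \vptr{b'}{o'}$ with $j_{23}(b') = \none$, and then $\vinj{j_{23}}{m_2[b_2,o_2]}{m_3[b_2,o_2]}$ is false. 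The asymmetry is that the closure property (\lemref{lem:pub-closure}) guarantees readable mapped \emph{source} cells only point to mapped blocks, but no analogous closure holds on the \emph{target} side: $m_3$ may freely contain pointers into its out-of-reach region. The paper's choice $m_2 = m_1$ exploits exactly this source-side closure to discharge the value clause of $\minj{j_{12}}{m_1}{m_1}$, and gets $\outofreach{j_{23}}{m_2} = \outofreach{j_{13}}{m_1}$ for free without any permission surgery.

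Your approach can be repaired by populating $m_2$'s contents as projections of $m_1$'s values through $j_{12}$ (setting $\Vundef$ where the source is $\Vundef$) rather than copying from $m_3$; then the problematic case disappears. But this is real construction work on a fresh memory state, comparable to what the paper does for the harder direction (\lemref{lem:injp-refine-injp-comp}), whereas the paper's choice here needs no new memory at all.
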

This lemma conforms to~\figref{fig:trans-comp2}.
To prove it, we pick $j_{12}$ to be an partial identity injection ($j_{12} (b) = \some{b,0}$ when $j_{13}(b) \neq \none$)
, $j_{23} =j_{13}$ and $m_2 = m_1$. Then the lemma is reduced to proving the
existence of $j_{13}'$ that satisfies $\injpacc{(j_{13}, m_1,
  m_3)}{(j_{13}', m_1', m_3')}$ and $\minj{j_{13}'}{m_1'}{m_3'}$. By
picking $j_{13}' = \comp{j_{12}'}{j_{23}'}$, we can easily prove these
properties by exploiting the properties of \kinjp.

\section{Derivation of the Direct Refinement for CompCert}
\label{sec:refinement}

In this section, we discuss the proofs and composition of open
simulations for the compiler passes of CompCert into the direct
refinement $\refinesymb_{\texttt{ac}}$ following the ideas discussed
in~\secref{ssec:key-idea-vcomp}.
CompCert compiles \kwd{Clight} programs into \kwd{Asm} programs
through 19 passes~\cite{compcert}, including several optimization
passes working on the \kwd{RTL} intermediate language.
First, we prove the open simulations for all these passes with
appropriate simulation conventions. In particular, we directly reuse
the proofs of non-optimizing passes in CompCertO and update the
proofs of optimizing passes with semantic invariants.
Second, we prove a collection of properties for refining simulation
conventions in preparation for vertical composition. Those properties
enable absorption of KMRs into \kinjp and composition of semantic
invariants. They rely critically on transitivity of \kinjp.
Finally, we vertically compose the simulations and refine the
incoming and outgoing simulation conventions into a single simulation
convention $\scc$, thereby establishing $\leqslant_\scc$ as the
top-level refinement $\refinesymb_{\texttt{ac}}$.

\begin{table}
  \caption{Significant Passes of CompCert}
  \small
  \begin{tabular}{c c}
    \hline
    \textbf{Languages/Passes} & \textbf{Outgoing $\arrli$ Incoming}\\
    \hline
    \textbf{\kwd{Clight}} & $\cli \arrli \cli$ \\
    \selfsim{\kwd{Self-Sim}} & \selfsim{$\kro \compsymb \kc_{\kinjp} \arrli \kro \compsymb \kc_{\kinjp}$} \\
    \kwd{SimplLocals} & $\kc_{\kinjp} \arrli \kc_{\kinj}$ \\
    \textbf{\kwd{Csharpminor}} & $\cli \arrli \cli$ \\
    \kwd{Cminorgen}   & $\kc_{\kinjp} \arrli \kc_{\kinj}$ \\
    \textbf{\kwd{Cminor}} & $\cli \arrli \cli$ \\
    \kwd{Selection} & $\kwt \compsymb \kc_{\kext} \arrli \kwt \compsymb \kc_{\kext}$ \\
    \textbf{\kwd{CminorSel}} & $\cli \arrli \cli$ \\
    \kwd{RTLgen} & $\kc_{\kext} \arrli \kc_{\kext}$ \\
    \textbf{\kwd{RTL}} & $\cli \arrli \cli$ \\
    \selfsim{\kwd{Self-Sim}} & \selfsim{$\kc_{\kinj} \arrli \kc_{\kinj}$} \\
    \kwd{Tailcall} & $\kc_{\kext} \arrli \kc_{\kext}$ \\
    \kwd{Inlining} & $\kc_{\kinjp} \arrli \kc_{\kinj}$ \\
    \selfsim{\kwd{Self-Sim}} & \selfsim{$\kc_{\kinjp} \arrli \kc_{\kinjp}$} \\
     \end{tabular}
     \begin{tabular}{c c}
     \hline
    \textbf{Language/Pass} & \textbf{Outgoing $\arrli$ Incoming}\\
       \hline
    \textcolor{red}{\kwd{Constprop}} & 
      \textcolor{red}{$\kro \compsymb \kc_{\kinjp} \arrli \kro \compsymb \kc_{\kinjp}$} \\
    \textcolor{red}{\kwd{CSE}} & \textcolor{red}{$\kro \compsymb \kc_{\kinjp} \arrli \kro \compsymb \kc_{\kinjp}$} \\
    \textcolor{red}{\kwd{Deadcode}} & \textcolor{red}{$\kro \compsymb \kc_{\kinjp} \arrli \kro \compsymb \kc_{\kinjp}$} \\
   
    \textcolor{red}{\kwd{Unusedglob}}  & \textcolor{red}{$\kc_{\kinj} \arrli \kc_{\kinj}$} \\
    \kwd{Allocation} & $\kwt \compsymb \kc_{\kext} \compsymb \kcl \arrli \kwt \compsymb \kc_{\kext} \compsymb \kcl $ \\
    \textbf{\kwd{LTL}} & $\ltlli \arrli \ltlli$ \\
    \kwd{Tunneling} & $\kwd{ltl}_{\kext} \arrli \kwd{ltl}_{\kext}$ \\
    \textbf{\kwd{Linear}} & $\ltlli \arrli \ltlli$ \\
    \kwd{Stacking} & $\kwd{ltl}_{\kinjp} \compsymb \klm \arrli \klm \compsymb \kwd{mach}_{\kinj}$ \\
    \textbf{\kwd{Mach}} & $\machli \arrli \machli$ \\
    \kwd{Asmgen} & $\kmach_{\kext} \compsymb \kma \arrli \kmach_{\kext} \compsymb \kma$\\
    \textbf{\kwd{Asm}} & $\asmli \arrli \asmli$ \\
    \selfsim{\kwd{Self-Sim}} & \selfsim{$\kasm_{\kinj} \arrli \kasm_{\kinj}$} \\
    \selfsim{\kwd{Self-Sim}} & \selfsim{$\kasm_{\kinjp} \arrli \kasm_{\kinjp}$}
  \end{tabular}
  \label{tab:compcerto}
\end{table}

\subsection{Open Simulation of Individual Passes}
\label{ssec:single-pass}
We list the compiler passes and their simulation types
in~\tabref{tab:compcerto} (passes on the right follow the passes
on the left) together with their source and target languages and
interfaces (in bold fonts).
The passes in black are reused from CompCertO, while those in red are
reproved optimizing passes.  
The passes in blue are \emph{self-simulating} passes we inserted; they
will be used in~\secref{ssec:gen-direct-refinement} for refining composed
simulation conventions.
Note that we have omitted passes with the identity simulation
convention (i.e., simulations of the form $\osims{\kid}{L_1}{L_2}$)
in~\tabref{tab:compcerto} as they do not affect the proofs.
\footnote{The omitted passes are \kwd{Cshmgen}, \kwd{Renumber}, \kwd{Linearize},
  \kwd{CleanupLabels} and \kwd{Debugvar}.}

\subsubsection{Simulation Conventions and Semantic Invariants}
%
We first introduce relevant simulation conventions and semantic
invariants shown in~\tabref{tab:compcerto}.
The simulation conventions $\kc_K: \sctype{\cli}{\cli}$, $\kltl_K:
\sctype{\ltlli}{\ltlli}$, $\kmach_K: \sctype{\machli}{\machli}$, and
$\kasm_K: \sctype{\asmli}{\asmli}$ relate the same language interfaces
with queries and replies native to the associated intermediate
languages.
They are parameterized by a KMR $K$ to allow different compiler passes
to have different assumptions on memory evolution. Conceptually, this
parameterization is unnecessary as we can simply use \kinjp for every
pass due to its uniformity (as discussed
in~\secref{ssec:injp-uniform}). Nevertheless, it is useful because the
compiler proofs become simpler and more natural with the least 
restrictive KMRs which may be weaker than \kinjp. CompCertO defines
several KMRs weaker than \kinjp: $\kid$ is used when memory is unchanged;
$\kext$ is used when the source
and target memory share the same structure;
$\kinj$ is a simplified version of $\kinjp$ without its memory
protection.
The simulation conventions $\kcl:
\sctype{\cli}{\ltlli}$, $\klm: \sctype{\ltlli}{\machli}$ and $\kma:
\sctype{\machli}{\asmli}$ capture the calling convention of CompCert:
$\kcl$ relates C-level queries and replies to those in the \kwd{LTL}
language where the arguments are distributed to abstract stack slots;
$\klm$ further relates abstract stack slots with states on an
architecture independent machine; $\kma$ relates this state to
registers and memory in the assembly language (X86 assembly in our
case).
As discussed before, some refinements rely on invariants on the source
semantics. The semantic invariant $\kwt$ enforces that arguments and
return values of function calls respect function signatures. $\kro$ is
critical for ensuring the correctness of optimizations, which will be
discussed next.

\subsubsection{Open Simulation of Optimizations}

\begin{wrapfigure}{R}{.51\textwidth}
\begin{subfigure}{0.25\textwidth}
\centering
\begin{lstlisting}[language = C]
const int key = 42;
void foo(int*);
int double_key() {
  int a = key;
  foo(&key);
  return a + key;
}
\end{lstlisting}
\caption{Source Program}
\label{fig:va-src}
\end{subfigure}
\begin{subfigure}{0.25\textwidth}
\centering
\begin{lstlisting}[language = C]
const int key = 42;
void foo(int*);
int double_key() {
  int a = 42;
  foo(&key);
  return 84;
}
\end{lstlisting}  
\caption{Target Program}
\label{fig:va-tgt}
\end{subfigure}
\caption{An Example of Constant Propagation}
\label{fig:va-example}
\end{wrapfigure}
%
The optimizing passes \kwd{Constprop}, \kwd{CSE} and
\kwd{Deadcode} perform constant propagation, common subexpression
elimination and dead code elimination, respectively.
They make use of a static value analysis algorithm for collecting
information of variables during the execution. 
For each function, this algorithm starts with the
known initial values of read-only (constant) global variables. It
simulates the function execution to analyze the values of global or
local variables after executing each instruction.
In particular, for global \emph{constant} variables, their references at any
point should have the initial values of constants.
For local variables stored on the stack, their references may have
initial values or may not if interfered by other function calls. When
the analysis encounters a call to another function, it checks whether
the address of current stack frame is leaked to the callee directly
through arguments or indirectly through pointers in memory. If not,
then the stack frame is considered \emph{unreachable} from its
callee. Consequently, the references to local variables on unreachable stack frames
after function calls remain to be their initial values.
Based on this analysis, the three passes then identify and perform
optimizations. 

Most of the proofs of closed simulations for those passes can be
adapted to open simulation straightforwardly. The only and main
difficulty is to prove that information derived from static analysis
is consistent with the dynamic memory states in incoming queries and
after external calls return.
 %
%
We introduce the semantic invariant \kro and combine it with \kinjp to
ensure this consistency. The above optimization passes all
use $\kro \compsymb \kc_{\kinjp}$ as their simulation conventions (because
\kwd{RTL} conforms to the C interface).
The adaptation of optimization proofs for those passes is
similar. As an example, we only discuss constant propagation whose
correctness theorem is stated as follows:
\begin{lemma}
  $ \forall (M\; M':{\kwd{RTL}}), \kwd{Constprop}(M) = M' \imply    
    \osims{\kro \compsymb \kc_\kinjp}{\sem{M}}{\sem{M'}}.$
\end{lemma}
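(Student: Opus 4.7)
The plan is to adapt CompCert's existing closed-simulation proof of constant propagation to an open simulation using the convention $\kro \compsymb \kc_\kinjp$. I would begin by reusing, essentially verbatim, the static value analysis of CompCert together with its soundness theorem, which says that at every program point the analyzer's abstract state over-approximates the concrete values of registers and memory, \emph{provided} that (i) read-only globals still hold their initial (constant) values and (ii) stack frames deemed unreachable by the analysis have not been tampered with. These two conditions are exactly what $\kro$ and $\kinjp$ are designed to maintain across module boundaries, so my job is to thread them through the open-simulation diagram.

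Concretely, I would define an open-simulation invariant $R(w)$, indexed by an $\kinjp$-world $w = (j, m_1, m_2)$, that bundles together: the CompCert match-states relation between source and target \kwd{RTL} frames (including the value-analysis information attached to each frame), the memory injection $\minj{j}{m_1}{m_2}$, and the $\kro$ predicate asserting that the read-only segment of the source memory still agrees with the initial program's constant data. Then I would discharge the five clauses of Definition~\ref{def:open-sim}. Clause (1) on matching initial queries and clause (2) on setting up $R$ from an incoming $(\kro \compsymb \kc_\kinjp)$-related pair of queries are straightforward: $\kro$ gives us the read-only hypothesis, $\kc_\kinjp$ gives the injection, and the static analyzer's initial abstract state is by construction sound with respect to these. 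Clause (3) on internal steps is a direct port of CompCert's existing transfer-function preservation lemmas, since none of those steps observe the Kripke world.

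The substantive work is clauses (4) and (5). For clause (5) on final replies, I would simply read off a reply relation from the current injection; the guarantee that the $\kinjp$ world has evolved monotonically follows from the fact that internal RTL steps only allocate, free, and mutate memory in $j$-compatible ways. Clause (4) on external calls is where the $\kro \compsymb \kc_\kinjp$ shape pays off on both sides. On the outgoing side, I package the current source and target memory into an $\kinjp$ world, check that the global read-only segment still holds constants (which it does, by $R$), and emit the outgoing query under $\kro \compsymb \kc_\kinjp$. After the reply, the rely-condition of $\kro$ ensures constants are preserved, and the rely-condition of $\kinjp$ (unmapped-in-source and out-of-reach-in-target regions are \kunchangedon) ensures the current function's stack frame is preserved whenever the analysis classified it as unreachable from the callee --- because ``unreachable in the value analysis'' precisely means the frame's block was not leaked through the arguments or through a pointer in mapped memory, which is exactly the $\outofreach{j}{m_1}$ condition from Definition~\ref{def:injp}.

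The main obstacle will be this last alignment: reconciling CompCert's syntactic/semantic notion of a stack frame being unreachable from a callee with $\kinjp$'s memory-protection condition phrased in terms of $\unmapped{j}$ and $\outofreach{j}{m_1}$. I would introduce an auxiliary invariant on the callstack of the source semantics saying that every live frame's stack block is either in $\unmapped{j}$ (if it was dropped from the injection before the call, as happens when the analysis treats it as unreachable) or otherwise matches a target block via $j$, and then prove this invariant is preserved by every internal step and re-established after each external call using the $\kinjp$ rely. Once this invariant is in place, the soundness of value analysis after the external call follows from \lemref{lem:pub-closure}-style closure reasoning about readable public memory, and the match-states relation is re-established, closing clause (4). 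Modularity with CompCertO's proof of \kwd{RTL} self-simulation under $\kc_\kinjp$ means almost all memory-injection bookkeeping can be imported directly; the new content is the interaction with $\kro$ and the unreachability invariant.
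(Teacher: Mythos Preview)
Your proposal is correct and matches the paper's approach, which the paper itself only sketches via the example in \figref{fig:va-example} and \figref{fig:meminj-foo} rather than proving formally: $\kro$ supplies and maintains the read-only-globals hypothesis for the value analysis, and protection of an unreachable stack frame across an external call is achieved by \emph{shrinking} the current injection (dropping the frame's block so it falls into $\unmapped{\cdot}$ on the source side and its image into $\outofreach{\cdot}{\cdot}$ on the target side) and then adding that block back after the return. One small slip to fix: where you write that unreachability ``is exactly the $\outofreach{j}{m_1}$ condition,'' note that $\koutofreach$ ranges over \emph{target} locations---the source stack block is protected via $\unmapped{j}$, as you correctly state a paragraph later---and when you ``package the current source and target memory into an $\kinjp$ world'' for the outgoing query, make explicit that this world carries the \emph{shrunk} injection rather than the one held by the running invariant, with the removed block re-inserted after the reply.
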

%

Instead of presenting its proof, we illustrate how $\kro$ and $\kinjp$
help establish the open simulation for \kwd{Constprop} through a
concrete example as depicted in~\figref{fig:va-example}. This example
covers optimization for both global constants (e.g., \code{key}) and local
variables (e.g., \code{a}).
By static analysis of~\figref{fig:va-src}, \emph{1) }$\kwd{key}$
contains $42$ at line 4 because \kwd{key} is a constant global
variable and, \emph{2) } both $\kwd{key}$ and $\kwd{a}$ contain $42$
after the external call to \kwd{foo} returns to line 6. Here, the
analysis confirms \kwd{key} has the value $42$ because \kwd{foo} (if
well-behaved) will not modify a constant global variable. Furthermore,
\kwd{a} has the value $42$ because it resides in the stack frame of
\code{double\_key} which is unreachable from \code{foo} (in fact, \kwd{a} is the only variable in the frame). 
%
As a result, the source program is optimized into~\figref{fig:va-tgt}. 

We first show that \kro guarantees the dynamic values of global
constants are consistent with static analysis. That is, global
variables are correct in incoming memory and are protected during
external calls.
\kro is defined as follows:
\begin{definition}\label{def:ro}
  $\kro : \sctype{\cli}{\cli} = \simconv{W_{\kro}}{\scname{R}_\kro^q}{\scname{R}_\kro^r}$ where
  $W_\kro = (\kwd{symtbl} \times \kmem)$ and
  \begin{tabbing}
    \quad\quad\=\kill
    \>$\scname{R}_\kro^q (\mathit{se},m) =
    \pset{(\cquery{v_f}{\sig}{\vec{v}}{m},\cquery{v_f}{\sig}{\vec{v}}{m})}{\kwd{ro-valid}(se,m)}$ \\
    \>$\scname{R}_\kro^r (\mathit{se},m) =
    \pset{(\creply{\mathit{res}}{m'},\creply{\mathit{res}}{m'})}{\macc{m}{m'}} $
  \end{tabbing}
\end{definition}
\noindent
Note that although \kro takes the form of a simulation convention, it only
relates the same queries and replies, i.e., it only enforces
invariants on the source side.
This kind of simulation conventions are what we called \emph{semantic
invariants}. A symbol table $se$ (of type $\kwd{symtbl}$) is provided
together with memory, so that the semantics can locate memory blocks
of global definitions and find the initial values of global variables.
$\kwd{ro-valid}(se,m)$ states that the values of global constant
variables in the incoming memory $m$ are the same as their initial
values. Therefore, the optimization of $\kwd{key}$ into $42$ at line 4
of~\figref{fig:va-src} is correct.
For the external call to $\kwd{foo}$, monotonicity $\macc{m}{m'}$ ensures
that read-only values in memory are unchanged, therefore the above
property is preserved from external queries to replies (i.e.,
$\rovalid{se}{m} \imply \macc{m}{m'} \imply
\rovalid{se}{m'}$). As a result, replacing $\kwd{key}$ with $42$ at line
6 makes sense.

\begin{figure}
  \def\bheight{0.5cm}
  \begin{subfigure}[b]{0.2\textwidth}
    \centering
  \begin{tikzpicture}
      
  \hblock{\bheight}{draw, minimum width = 0.5cm} (bkey) {\small $42$};
  \node[above = 0 of bkey] (bkeytxt) {\small $b_{\texttt{key}}$};

  \hblock{\bheight}{draw, right = 0.3 of bkey} (ba) {\small $42$};
  \path let \p1 = (ba) in let \p2 = (bkeytxt) in
  node (batxt) at (\x1, \y2) {\small $b_{\texttt{a}}$};
  
  \hblock{\bheight}{draw, minimum width = 0.5cm,
                    below = 0.5 of bkey} (tbkey) {\small $42$};

  \hblock{\bheight}{draw, minimum width = 0.5cm,
                    below = 0.5 of ba} (tba) {\small $42$};
  
  \node[left = 0.1 of bkey] (m1) {\small$m_1:$};
  \node[left = 0.1 of tbkey] (m2) {\small$m_2:$};
  
  \draw[-stealth] (ba.south west) -- (tba.north west);
  \draw[-stealth] (bkey.south west) -- node[left] {$j$} (tbkey.north west);
  
\end{tikzpicture}


  \caption{Before \kwd{foo}}
  \label{fig:before-foo}
\end{subfigure}
\begin{subfigure}[b]{0.2\textwidth}
    \centering
    \begin{tikzpicture}
  \hblock{\bheight}{draw, minimum width = 0.5cm} (bkey) {\small $42$};
  \node[above = 0 of bkey] (bkeytxt) {\small $b_{\texttt{key}}$};


  \hblock{\bheight}{draw, right = 0.3 of bkey,pattern = north east lines} (ba) {\small $42$};
  \path let \p1 = (ba) in let \p2 = (bkeytxt) in
  node (batxt) at (\x1, \y2) {\small $b_{\texttt{a}}$};
  
  \hblock{\bheight}{draw, minimum width = 0.5cm,
                    below = 0.5 of bkey} (tbkey) {\small $42$};

  \hblock{\bheight}{draw, minimum width = 0.5cm,pattern = north east lines,
                    below = 0.5 of ba} (tba) {\small $42$};
  
  \draw[-stealth] (bkey.south west) -- node[left]{$j_1$} (tbkey.north west);

  \node[left = 0.1 of bkey] (m1) {\small$m_1:$};
  \node[left = 0.1 of tbkey] (m2) {\small$m_2:$};
  
\end{tikzpicture}
\caption{Start of \kwd{foo}}
  \label{fig:begin-foo}
\end{subfigure}
\begin{subfigure}[b]{0.25\textwidth}
    \centering
    \begin{tikzpicture}
  \hblock{\bheight}{draw, minimum width = 0.5cm} (bkey) {\small $42$};
  \node[above = 0 of bkey] (bkeytxt) {\small $b_{\texttt{key}}$};


  \hblock{\bheight}{draw, right = 0.3 of bkey,pattern = north east lines} (ba) {\small $42$};
  \path let \p1 = (ba) in let \p2 = (bkeytxt) in
  node (batxt) at (\x1, \y2) {\small $b_{\texttt{a}}$};
  
  \hblock{\bheight}{draw, minimum width = 0.5cm,
                    below = 0.5 of bkey} (tbkey) {\small $42$};

  \hblock{\bheight}{draw, minimum width = 0.5cm,pattern = north east lines,
                    below = 0.5 of ba} (tba) {\small $42$};
                  
  \hblock{\bheight}{draw, right = 0.3 of ba,minimum width = 0.5cm} (b) {};
  \path let \p1 = (b) in let \p2 = (batxt) in
  node (btxt) at (\x1, \y2) {\small $b$};

  \hblock{\bheight}{draw, minimum width = 0.5cm,
                    below = 0.5 of b} (tb) {};
  
  \draw[-stealth] (bkey.south west) -- node[left] {$j_2$} (tbkey.north west);
  \draw[-stealth] (b.south west) -- (tb.north west);

  \node[left = 0.1 of bkey] (m1) {\small$m_1':$};
  \node[left = 0.1 of tbkey] (m2) {\small$m_2':$};
  
\end{tikzpicture}

 \caption{End of \kwd{foo}}
    
 \label{fig:end-foo}
\end{subfigure}
\begin{subfigure}[b]{0.25\textwidth}
    \centering
    \begin{tikzpicture}
  \hblock{\bheight}{draw, minimum width = 0.5cm} (bkey) {\small $42$};
  \node[above = 0 of bkey] (bkeytxt) {\small $b_{\texttt{key}}$};


  \hblock{\bheight}{draw, right = 0.3 of bkey} (ba) {\small $42$};
  \path let \p1 = (ba) in let \p2 = (bkeytxt) in
  node (batxt) at (\x1, \y2) {\small $b_{\texttt{a}}$};
  
  \hblock{\bheight}{draw, minimum width = 0.5cm,
                    below = 0.5 of bkey} (tbkey) {\small $42$};

  \hblock{\bheight}{draw, minimum width = 0.5cm,
                    below = 0.5 of ba} (tba) {\small $42$};

  \hblock{\bheight}{draw, right = 0.3 of ba,minimum width = 0.5cm} (b) {};
  \path let \p1 = (b) in let \p2 = (batxt) in
  node (btxt) at (\x1, \y2) {\small $b$};

  \hblock{\bheight}{draw, minimum width = 0.5cm,
                    below = 0.5 of b} (tb) {};

  \draw[-stealth] (ba.south west) -- (tba.north west);
  \draw[-stealth] (bkey.south west) -- node[left] {$j_3$} (tbkey.north west);
  \draw[-stealth] (b.south west) -- (tb.north west);

  \node[left = 0.1 of bkey] (m1) {\small$m_1':$};
  \node[left = 0.1 of tbkey] (m2) {\small$m_2':$};
\end{tikzpicture}


  \caption{After \kwd{foo}}
  \label{fig:after-foo}
\end{subfigure}
  \caption{Memory Injections from Call to Return of \kwd{foo}}
  \label{fig:meminj-foo}
\end{figure}
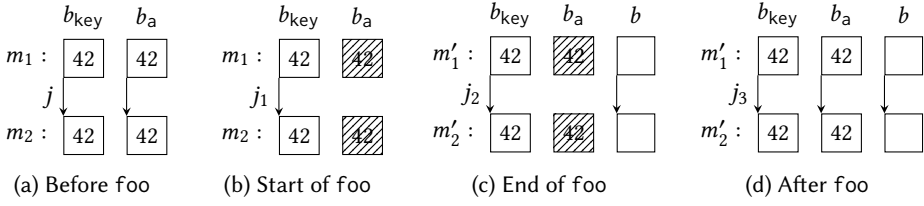

We then show that \kinjp guarantees the dynamic values of unreachable
local variables are consistent with static analysis. That is,
unreachable stack values are unchanged by external calls.
This protection is realized by \kinjp with \emph{shrinking} memory
injections.
\figref{fig:meminj-foo} shows the protection of
\code{a} when calling \code{foo}.
Before the external call to $\kwd{foo}$, the source blocks $b_\kwd{a}$
and $b_\kwd{key}$ are \emph{mapped} to target blocks by the current
injection $j$.
The analysis determines that the argument and 
memory passed to $\kwd{foo}$ do not contain any pointer to $b_{\kwd{a}}$. Therefore, we can simply remove
$b_{\kwd{a}}$ from $j$ to get a shrunk yet valid memory
injection $j_1$.
Then, $b_\kwd{a}$ is protected during the call to \code{foo}.
%
$b_{\kwd{a}}$ is added back to the injection after \code{foo} returns 
and the simulation continues.
%
%

Finally, \kwd{Unusedglob} which removes unused static global variables
is verified by assuming that global symbols remain the same throughout
the compilation and with a weaker KMR \kinj.

\subsection{Properties for Refining Simulation Conventions}
\label{ssec:refine-properties}
We present properties necessary for refining the composed simulation conventions
in~\tabref{tab:compcerto}.



\subsubsection{Commutativity of KMRs and Structural Conventions.}

\begin{lemma}\label{lem:ca-com}
  For $\kwd{Z} \in \{\kcl ,\klm ,\kma \}$ and $K \in \{\kext ,\kinj ,\kinjp \}$
  we have $\screfine{\kwd{X}_{K} \compsymb \kwd{Z}} {\kwd{Z} \compsymb \kwd{Y}_K }$.
\end{lemma}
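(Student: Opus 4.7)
The plan is to prove each of the nine instances in a uniform way by unfolding the definitions of $\screfine$ and the composition operator $\compsymb$, and then constructing an interpolating query/reply by ``pushing'' the memory transformation of $K$ through the purely structural reshaping performed by $\kwd{Z}$. Concretely, for a given $\kwd{Z}$ (say $\kcl$), $\kwd{X}_K$ is the KMR convention on the source interface of $\kwd{Z}$ (here $\kc_K$) and $\kwd{Y}_K$ is the one on its target interface (here $\kltl_K$); the key observation is that $\kwd{Z}$ is ``memory-neutral'': it only redistributes values among registers, abstract slots, or stack words and relates the memory components essentially by equality (modulo a fixed allocation of an outgoing-argument block in the $\kma$ case). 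Thus the order in which one applies the KMR $K$ and the structural repackaging $\kwd{Z}$ does not matter.

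First, I would set up the query direction. Suppose $(q_1,q_3)\in(\kwd{Z}^q\compsymb\kwd{Y}_K^q)(w_Z,w_K)$, which by definition gives an intermediate $q_2$ at the lower interface with $(q_1,q_2)\in\kwd{Z}^q(w_Z)$ and $(q_2,q_3)\in\kwd{Y}_K^q(w_K)$. I would construct $q_2'$ at the higher interface by keeping the shape/signature of $q_1$ but substituting in the transformed values and memory dictated by $w_K$. Then $(q_1,q_2')\in\kwd{X}_K^q(w_K)$ holds directly, because the $K$-relation on the higher-interface values/memory is the same data as what already holds between $q_2$ and $q_3$ (the structural part $\kwd{Z}$ did not touch the memory or permute the value list in a way that affects $K$). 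The obligation $(q_2',q_3)\in\kwd{Z}^q(w_Z')$ follows from $(q_1,q_2)\in\kwd{Z}^q(w_Z)$ by the same structural lemma, picking $w_Z'$ to be (the memory-updated copy of) $w_Z$.

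Second, for the reply direction, given an evolved world $w_K\accsymb w_K'$ and a reply pair $(r_1,r_2')\in\kwd{X}_K^r(w_K')$, I need to exhibit $r_2$ with $(r_1,r_2)\in\kwd{Z}^r$ and $(r_2,r_3)\in\kwd{Y}_K^r(w_K')$ for some appropriate $w_Z'\accsymb w_Z$. Again I build $r_2$ by re-encoding $r_2'$ through the reply-side of the structural convention (e.g.\ reading the return value back from the designated slot/register). The $\kwd{Z}^r$ obligation is the ``reply half'' of the same commutation used above; the $\kwd{Y}_K^r$ obligation follows because the underlying memory relation chosen for $r_2'$ is exactly what $\kwd{Y}_K^r$ asks for at $w_K'$. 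The accessibility witness $w_Z\accsymb w_Z'$ is trivial for $\kcl$ and $\klm$, and for $\kma$ it amounts to tracking the outgoing-argument block that $\kma$ prescribes on the target side.

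The main obstacle is the $\kma$ case together with $K=\kinjp$: here $\kwd{Z}$ actually carves a concrete outgoing-argument region out of the target stack, and this region must line up with the \emph{unmapped}/\emph{out-of-reach} sets demanded by $\kinjp$ when we move the KMR from the Mach side to the Asm side. Getting this right requires verifying that the argument block $\kma$ introduces is freeable, disjoint from the image of the injection, and behaves consistently under the $\accsymb_{\kinjp}$ accessibility (so that the guarantees about unchanged private memory still hold after the reshaping). Once this bookkeeping is established for $\kma$, the $\kcl$ and $\klm$ cases reduce to essentially the same argument without the concrete-stack complications, and the $K=\kext,\kinj$ instances specialize trivially.
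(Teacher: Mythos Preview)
The paper does not prove this lemma at all: it simply states ``This lemma is provided by CompCertO'' and cites \cite{compcerto}. So there is no paper-side proof to compare against; the result is imported wholesale from prior work (where it lives as a collection of Coq lemmas, one per $(\kwd{Z},K)$ pair).

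Your sketch is a reasonable outline of how such a proof goes and matches the spirit of what the CompCertO development does: unfold the refinement obligation, exhibit the intermediate query at the other interface by re-encoding values through the structural convention while keeping the same KMR world, and then discharge the reply direction symmetrically. Your identification of the $\kma$/$\kinjp$ case as the delicate one is also accurate, since $\kma$ is the only structural convention that genuinely interacts with memory (the outgoing-argument region on the target stack must be shown out-of-reach and freeable, and this has to survive the $\accsymb_{\kinjp}$ evolution). A couple of your bookkeeping details are slightly off---for instance, in the reply direction you should be landing in $\kwd{Y}_K^r$ at (a world accessible from) the \emph{original} $w_K$, not $w_K'$, and the worlds carried by the structural conventions $\kcl,\klm,\kma$ are signatures and register sets rather than anything with a nontrivial accessibility---but these are cosmetic and do not affect the soundness of the plan.
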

This lemma is provided by CompCertO~\cite{compcerto}. \kwd{X} and
\kwd{Y} denote the simulation conventions for the source and target
languages of \kwd{Z}, respectively (e.g., $\kwd{X} = \kc$ and
$\kwd{Y} = \kltl$ when $\kwd{Z} = \kcl$). If $K = \kinjp$ we get
$\screfine{\kc_\kinjp \compsymb \kcl}{\kcl \compsymb \kltl_\kinjp}$.
This lemma indicates at the
outgoing (incoming) side a convention lower (higher) than $\kcl$,
$\klm$ and $\kma$ may be lifted over them to a higher position (pushed
down to a lower position).
%


\subsubsection{Absorption of KMRs into \kinjp} 
%


The lemma below is needed for absorbing KMRs into \kinjp:

\begin{lemma}\label{lem:sim-refine}
  For any $\scname{R}$,
  $(1) \scname{R}_\kinjp \compsymb \scname{R}_\kinjp \equiv \scname{R}_\kinjp$
  $\; (2)\screfine{\scname{R}_\kinjp}{\scname{R}_\kinj}$
  $\; (3)\screfine{\scname{R}_\kinjp \compsymb \scname{R}_\kinj \compsymb \scname{R}_\kinjp}{\scname{R}_\kinjp}$
  \begin{tabbing}
    \quad\=\kill
    \>$(4) \screfine{\scname{R}_\kinj \compsymb \scname{R}_\kinj}{\scname{R}_\kinj}$
      \;$(5) \scname{R}_\kext \compsymb \scname{R}_\kinj \equiv \scname{R}_\kinj$
      \;$(6) \scname{R}_\kinj \compsymb \scname{R}_\kext \equiv \scname{R}_\kinj$
      \;$(7) \scname{R}_\kext \compsymb \scname{R}_\kext \equiv \scname{R}_\kext$.
  \end{tabbing}
\end{lemma}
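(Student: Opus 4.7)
The plan is to observe that each $\scname{R}_K$ couples its KMR $K$ with value/register relations under the same injection, so that refinement between two such conventions factors through refinement of the underlying KMRs plus the (straightforward) transitivity of value injection, $\vinj{\comp{j_{23}}{j_{12}}}{v_1}{v_3} \iff \exists v_2,\ \vinj{j_{12}}{v_1}{v_2} \land \vinj{j_{23}}{v_2}{v_3}$. Hence each of the seven clauses reduces to a statement about $\kinjp$, $\kinj$, and $\kext$ at the memory level.

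For part (1), the KMR-level equivalence $\kinjp \compsymb \kinjp \equiv \kinjp$ is precisely the content of Lemmas 4.2 and 4.3; I would lift it by showing that the interpolating memory $m_2'$ constructed there supports the value and pointer relations witnessed on the source and target sides, which is immediate from the composition of injections. For part (2), $\kinjp$ strengthens $\kinj$ by adding memory protection to accessibility while sharing the same query relation: given a $\kinj$-query under $j$, package it into the $\kinjp$-world $(j, m_1, m_2)$, and any $\kinjp$-reply accessible from this world yields a $\kinj$-reply by forgetting the protection. Parts (5)--(7) are essentially corollaries of the fact that $\kext$ is $\kinj$ restricted to the identity injection: composing an identity injection with another injection (on either side) changes nothing, so the resulting KMR is $\kinj$ (or $\kext$ in case (7)). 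Part (4) reuses the construction of (1) in a much simpler setting: given $m_1 \kinj m_3$ under $j_{13}$, pick $m_2 = m_1$, let $j_{12}$ be the partial identity on the domain of $j_{13}$, and take $j_{23} = j_{13}$; no protection needs to be tracked, so the reply side composes without further construction.

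The genuinely delicate clause is (3), $\screfine{\scname{R}_\kinjp \compsymb \scname{R}_\kinj \compsymb \scname{R}_\kinjp}{\scname{R}_\kinjp}$, because the middle $\kinj$ provides no protection to bridge the outer $\kinjp$ guarantees. My plan is to take $m_2 = m_1$, $m_3 = m_4$, let $j_{12}$ and $j_{34}$ be the partial identities over the domain and range of $j_{14}$, respectively, and set $j_{23} = j_{14}$; this gives a valid composed query. For the reply side, set $j_{14}' = \comp{j_{34}'}{\comp{j_{23}'}{j_{12}'}}$. The crucial check is that the unmapped region of $j_{14}$ in $m_1$ coincides with the unmapped region of $j_{12}$ (by the partial-identity choice), so the protection of $m_1$'s unmapped cells, provided by the outer $\kinjp$-accessibility on the source side, transfers to $\kinjp$-accessibility for the composed world. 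The dual argument works for the out-of-reach region of $j_{14}$ in $m_4$ using the outer $\kinjp$ on the target side. The middle $\kinj$ is safe to ignore, as $m_2'$ and $m_3'$ appear only as witnesses and impose no constraint back on $m_1'$ or $m_4'$.

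The main obstacle is essentially already behind us: the nontrivial construction of interpolating memory lives entirely in Lemmas 4.2 and 4.3. Given those, clauses (2), (5)--(7) are set-theoretic, clause (1) is a direct lift, clause (4) is a simplified repetition of (1), and clause (3) only requires checking that the partial-identity choice aligns the unmapped and out-of-reach regions with those of $j_{14}$. The one point I would be careful to verify in the Coq development is that this alignment is exact rather than merely inclusive, so that the recovered $\kinjp$-accessibility is no weaker than the one demanded by the goal.
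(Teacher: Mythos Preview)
Your reduction of each clause to the KMR level and your handling of (1), (2), and (4)--(7) are correct and match the paper's brief remarks: (1) lifts the transitivity of $\kinjp$ established in Lemmas~\ref{lem:injp-refine-injp-comp} and~\ref{lem:injp-comp-refine-injp}, and (4)--(7) are the CompCertO facts the paper defers to.

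There is, however, a genuine gap in your construction for (3). With $m_3 = m_4$ and $j_{34}$ the partial identity on the range of $j_{14}$, two things fail. First, $\minj{j_{34}}{m_4}{m_4}$ need not hold: a readable target cell lying in a block $b \in \mathrm{range}(j_{14})$ but in the out-of-reach region of $j_{14}$ is unconstrained by $\minj{j_{14}}{m_1}{m_4}$ and may store a pointer into a block outside $\mathrm{range}(j_{14})$, violating property~(2) of Definition~\ref{def:meminj}. Second, and more fundamentally, the ``dual argument'' you invoke does not go through. The third $\kinjp$ accessibility only yields $\outofreach{j_{34}}{m_4} \subseteq \unchangedon{m_4}{m_4'}$, so you would need $\outofreach{j_{14}}{m_1} \subseteq \outofreach{j_{34}}{m_4}$. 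But for $b_4 \in \mathrm{range}(j_{14})$, membership in $\outofreach{j_{34}}{m_4}$ reduces to $(b_4,o_4) \notin \perm{m_4}{\knonempty}$, whereas $(b_4,o_4) \in \outofreach{j_{14}}{m_1}$ merely says every source preimage lacks permission in $m_1$; the target cell may still carry permission. Thus only the reverse inclusion holds, which is useless here. The asymmetry you flagged as a point to ``be careful'' about is in fact fatal: $\kunmapped$ depends only on the injection, but $\koutofreach$ depends on source permissions, so there is no dual of your $j_{12}$ trick on the target side.

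The fix is to put the real injection in the \emph{last} factor rather than the middle: take $m_2 = m_3 = m_1$, let both $j_{12}$ and $j_{23}$ be the partial identity on $\mathrm{dom}(j_{14})$ (so the middle $\kinj$ is essentially trivial), and set $j_{34} = j_{14}$. Then the third $\kinjp$ world is literally $(j_{14}, m_1, m_4)$, its accessibility directly delivers $\outofreach{j_{14}}{m_1} \subseteq \unchangedon{m_4}{m_4'}$ and $\macc{m_4}{m_4'}$, the first $\kinjp$ still handles the unmapped side and $\macc{m_1}{m_1'}$ as you argued, and $j_{14}' = j_{34}' \cdot j_{23}' \cdot j_{12}'$ with $\kinjsep$ following by chaining the three separation properties exactly as in the proof of Lemma~\ref{lem:injp-comp-refine-injp}.
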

The simulation convention $\scname{R}$ is parameterized over a KMR.
Property (1) is a direct consequence of $\kinjp \compsymb \kinjp
\equiv \kinjp$, which is critical for merging simulations using
\kinjp.
The remaining ones either depend on transitivity of \kinjp, or
trivially hold as shown by~\citet{compcerto}.

\subsubsection{Composition of Semantic Invariants} 
\label{sssec:sinv-comp}
Lastly, we also need to handle the
two semantic invariants \kro and \kwt. They cannot be absorbed into
\kinjp because their assumptions are fundamentally
different. Therefore, our goal is to permute them to the top-level and
merge any duplicated copies. 
The following lemmas enable elimination and permutation of \kwt:
\begin{lemma}\label{lem:wt}
  For any\; $\scname{R}_K:\sctype{\cli}{\cli}$, we have
  $(1) \scequiv{\scname{R}_K \compsymb \kwt}{\kwt \compsymb \scname{R}_K \compsymb \kwt}$ and\;
  $(2) \scequiv{\scname{R}_K \compsymb \kwt}{\kwt \compsymb \scname{R}_K}$.
\end{lemma}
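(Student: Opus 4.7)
The plan is to exploit that $\kwt$ is a \emph{semantic invariant} on the C interface in the sense introduced for $\kro$: its Kripke world carries a function signature $\sigma$, it relates $(q,q)$ precisely when the arguments of $q$ are well-typed with respect to $\sigma$, and it relates $(r,r)$ precisely when the return value of $r$ is well-typed with respect to $\sigma$'s return type. Under this reading, $\scname{R}_K \compsymb \kwt$ characterizes pairs $(q_1,q_2)$ with $(q_1,q_2) \in \scname{R}_K^q$ and $q_2$ well-typed, while $\kwt \compsymb \scname{R}_K$ characterizes pairs with $q_1$ well-typed and $(q_1,q_2) \in \scname{R}_K^q$. The accessibility on $\kwt$ is trivial (the signature is constant across a call), so no further Kripke bookkeeping is needed beyond what $\scname{R}_K$ already provides. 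Both (1) and (2) therefore reduce to an intrinsic property of $\scname{R}_K$: well-typedness transports across it in both directions.

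The core step is a preservation lemma that I would establish first: if $(q_1,q_2) \in \scname{R}_K^q(w)$ for $\scname{R}_K : \sctype{\cli}{\cli}$, then $q_1$ and $q_2$ share the same signature (by the definition of C-level conventions) and $q_1$ is well-typed iff $q_2$ is. This holds because the value relations induced by $\kid$, $\kext$, $\kinj$, and $\kinjp$ are shape-preserving on non-$\Vundef$ values (scalars relate to equal scalars, pointers to pointers), and $\Vundef$ inhabits every primitive type, so a $\Vundef$ argument is vacuously well-typed on its side. The symmetric property holds on replies. Given this lemma, part (2) is proved by unfolding $\compsymb$: in each direction, the preservation lemma supplies the missing side-condition ($q_1$ well-typed on the left or $q_2$ well-typed on the right), the middle query is copied over, and the reply case is handled symmetrically. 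Part (1) then follows by combining (2) with the trivial idempotence $\scequiv{\kwt \compsymb \kwt}{\kwt}$; alternatively, its forward direction adds a leftmost $\kwt$ via preservation of $q_2$'s well-typedness, and its backward direction simply drops the redundant outer $\kwt$.

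The main obstacle is the backward direction of the preservation lemma, i.e., deducing well-typedness of $q_2$ from that of $q_1$. If some source argument is $\Vundef$, it is trivially well-typed regardless of the target value it is related to, so in principle the corresponding target argument could be ill-typed. Resolving this requires noticing that at the C interface the target argument is obtained from a concrete target execution and must itself have a well-defined primitive type matching $\sigma$, and that the conventions $\scname{R}_K$ in play do not introduce ill-typed values on the target side. In the mechanization this is likely packaged as a small side-condition on $\scname{R}_K$ satisfied by each KMR we use, and once in hand the remainder of the proof is a routine algebraic manipulation on composed simulation conventions.
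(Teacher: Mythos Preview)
The paper does not give a proof of this lemma; it is stated and then used in the refinement calculations of \secref{ssec:gen-direct-refinement} without further argument. Your structural decomposition---treating $\kwt$ as an identity-like semantic invariant and reducing both parts to a type-preservation property across $\scname{R}_K$---is the natural approach and matches how such commutation facts are handled in the CompCertO framework.

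Two refinements to your analysis. First, your ``alternatively'' route for (1) should be primary rather than secondary: unfolding (1) directly, both refinement directions need only the \emph{easy} preservation direction (if $v_2$ has type $\tau$ and $\vinj{j}{v_1}{v_2}$, then $v_1$ is $\Vundef$ or of the same shape, hence also of type $\tau$). Part (2), by contrast, genuinely requires the \emph{hard} direction once in each refinement, so deriving (1) from (2) via idempotence of $\kwt$ is doing strictly more work than necessary.

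Second, your diagnosis of the $\Vundef$ obstacle for (2) is correct, but your proposed resolution is not sound at this level of abstraction. The predicate $\screfine{\scname{R}}{\scname{S}}$ quantifies over \emph{all} pairs in $\scname{S}^q$, independent of any execution producing them, so ``the target value comes from a concrete run and must be well-shaped'' is not an admissible step here. Taken literally, the simplified definitions in the paper would let a source $\Vundef$ be related to a target float at a declared integer type, falsifying the hard direction. The actual fix lives in the elided details of $\kc_K$ or $\kwt$---the paper explicitly notes near \defref{def:cainjp} that constraints such as non-$\Vundef$ function values are omitted for simplicity---and it is those side-conditions built into the conventions, not any appeal to execution, that make (2) go through in the Coq development.
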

%

%
%

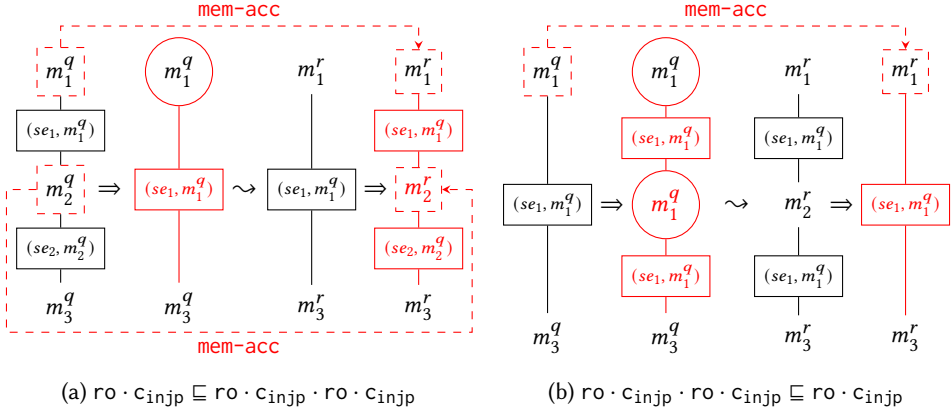
\begin{figure}
  \begin{subfigure}[b]{0.46\textwidth}
  \centering
  \begin{tikzpicture}
    \node [draw, dashed, red](m1) {\small\color{black} $m_1^q$};
    \node [below = 0.9cm of m1, draw, dashed, red] (m2) 
          {\small\color{black} $m_2^q$};
    \node [below = 0.9cm of m2] (m3) {\small $m_3^q$};
    \path (m1) -- node[draw] (w1) {\tiny $(se_1,m_1^q)$} (m2);
    \path (m2) -- node[draw] (w2) {\tiny $(se_2,m_2^q)$} (m3);
    \draw (m1) -- (w1) -- (m2);
    \draw (m2) -- (w2) -- (m3);

    \node [right = 0.8cm of m1, circle, draw, red] (m11) 
          {\small\color{black} $m_1^q$};
    \path let \p1 = (m11) in let \p2 = (m3) in 
      node (m33) at (\x1, \y2) {\small $m_3^q$};
    \path let \p1 = (m11) in let \p2 = (m2) in
      node[draw,red] (w) at (\x1,\y2) {\tiny $(se_1,m_1^q)$};
    \draw[red] (m11) -- (w) -- (m33);

    \node [right = 1cm of m11] (m11p) {\small $m_1^r$};
    \path let \p1 = (m11p) in let \p2 = (m33) in 
      node (m33p) at (\x1, \y2) {\small $m_3^r$};
    \path let \p1 = (m11p) in let \p2 = (m2) in
      node[draw] (wp) at (\x1, \y2) {\tiny $(se_1,m_1^q)$};
    \draw (m11p) -- (wp) -- (m33p);

    \node [right = 0.8cm of m11p, draw, dashed, red] (m1p) 
          {\small\color{black} $m_1^r$};
    \path let \p1 = (m1p) in let \p2 = (m2) in 
      node[red, draw, dashed] (m2p) at (\x1, \y2) {\small $m_2^r$};
    \path let \p1 = (m1p) in let \p2 = (m33) in 
      node (m3p) at (\x1, \y2) {\small $m_3^r$};
    \path (m1p) -- node[draw,red] (w1p) {\tiny $(se_1,m_1^q)$} (m2p);
    \path (m2p) -- node[draw,red] (w2p) {\tiny $(se_2,m_2^q)$} (m3p);

    \draw[red] (m1p) -- (w1p) -- (m2p);
    \draw[red] (m2p) -- (w2p) -- (m3p);

     \path (m2) -- node {\small $\imply$} (w);
     \path (w) -- node {\small $\accsymb$} (wp);
     \path (wp) -- node {\small $\imply$} (m2p);

    \draw[-stealth, red,dashed] (m1) -- ++(0, 0.6cm) -| 
        node[above, pos=0.25] {\small\code{mem-acc}} (m1p);
    \path[draw, -stealth, red, dashed] 
    let \p1 = (m2.west) in let \p2 = (m3.south west) in let \p3 = (m2p.east) in
        (m2) -- ({\x1-0.4cm}, \y1) -- ({\x1-0.4cm},\y2) 
        -- node[below, pos=0.5] {\small\code{mem-acc}} ({\x3+0.4cm}, \y2) 
        |- (m2p);



  \end{tikzpicture}
  \caption{$\screfine{\kro \cdot \kc_\kinjp}{{\kro \cdot \kc_\kinjp}\cdot{\kro \cdot\kc_\kinjp}}$}
  \label{fig:trans-ro1}
  \end{subfigure}
  \begin{subfigure}[b]{0.46\textwidth}
  \centering
  \begin{tikzpicture}
    \node [draw, dashed, red](m1) {\small\color{black} $m_1^q$};
    \node [draw,below = 1.15cm of m1] (w) {\tiny $(se_1,m_1^q)$};
    \node [below = 1.15cm of w] (m3) {\small $m_3^q$};
    
    \node [right = 0.8cm of m1, circle, red, draw] (m11) 
          {\small\color{black} $m_1^q$};
    \path let \p1 = (m11) in let \p2 = (w) in 
      node[draw, circle, red] (m2) at (\x1, \y2) {\small $m_1^q$};
    \path let \p1 = (m11) in let \p2 = (m3) in 
      node (m33) at (\x1, \y2) {\small $m_3^q$};
    \path (m11) -- node[draw,red] (w1) {\tiny $(se_1,m_1^q)$} (m2);
    \path (m2) -- node[draw,red] (w2) {\tiny $(se_1,m_1^q)$} (m33);

    \draw (m1) -- (w) -- (m3);
    \draw[red] (m11)  -- (w1) -- (m2);
    \draw[red] (m2)  -- (w2) -- (m33);

    \node [right = 1cm of m11] (m1p) {\small $m_1^r$};
    \path let \p1 = (m1p) in let \p2 = (m2) in 
      node (m2p) at (\x1, \y2) {\small $m_2^r$};
    \path let \p1 = (m1p) in let \p2 = (m33) in 
      node (m3p) at (\x1, \y2) {\small $m_3^r$};
    \path (m1p) -- node[draw] (w1p) {\tiny $(se_1,m_1^q)$} (m2p);
    \path let \p1 = (m1p) in let \p2 = (w2) in
      node[draw] (w2p) at (\x1,\y2) {\tiny $(se_1,m_1^q)$};

    \draw (m1p) -- (w1p) -- (m2p);
    \draw (m2p) -- (w2p) -- (m3p);

    \node [right = 0.8cm of m1p, draw, red, dashed] (m11p) 
          {\small\color{black} $m_1^r$};
    \path let \p1 = (m11p) in let \p2 = (m3p) in 
      node (m33p) at (\x1, \y2) {\small $m_3^r$};
    \path (m11p) -- node[draw,red] (w3p) {\tiny $(se_1,m_1^q)$} (m33p);
    
    \draw[red] (m11p) -- (w3p) -- (m33p);

    \path (w) -- node {\small $\imply$} (m2);
    \path (m2) -- node {\small $\accsymb$} (m2p);
    \path (m2p) -- node {\small $\imply$} (w3p);


    \draw[-stealth, red,dashed] (m1) -- ++(0, 0.6cm) -| 
        node[above, pos=0.25] {\small\code{mem-acc}} (m11p);

  \end{tikzpicture}
  \caption{$\screfine{{\kro \cdot \kc_\kinjp} \cdot {\kro \cdot \kc_\kinjp}}{\kro \cdot \kc_\kinjp}$}
  \label{fig:trans-ro2}
  \end{subfigure}
  \caption{Transitivity of $\kro \cdot \kc_\kinjp$}
  \label{fig:trans-ro}
\end{figure}

\kro is more difficult to handle as it does not commute with arbitrary
simulation conventions. To eliminate redundant \kro, we piggyback \kro
onto \kinjp and prove the following transitivity property:
\begin{lemma}\label{lem:roinjp-trans}
  $\scequiv{\kro \compsymb \kc_\kinjp}{ \kro \compsymb \kc_\kinjp \compsymb \kro \compsymb \kc_\kinjp}$
\end{lemma}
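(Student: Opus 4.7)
The plan is to prove the two refinement directions independently, reducing each to the corresponding direction of transitivity of $\kinjp$ (\lemref{lem:injp-refine-injp-comp} and \lemref{lem:injp-comp-refine-injp}), and then handle the extra bookkeeping induced by $\kro$ essentially for free.

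For $\screfine{\kro \compsymb \kc_\kinjp}{(\kro \compsymb \kc_\kinjp) \compsymb (\kro \compsymb \kc_\kinjp)}$: on the query side, suppose $(q_1,q_3)$ is witnessed by a chain of worlds $(se_1,m_1^q)$, $(j_{12},m_1^q,m_2^q)$, $(se_2,m_2^q)$, $(j_{23},m_2^q,m_3^q)$ through an intermediate query $q_2$. I would take the single LHS world to be $(se_1,m_1^q)$ paired with $(j_{13}, m_1^q, m_3^q)$ where $j_{13} = \comp{j_{23}}{j_{12}}$. The $\kro$ validity $\rovalid{se_1}{m_1^q}$ is inherited from the first component of the RHS, and $\minj{j_{13}}{m_1^q}{m_3^q}$ follows by standard composition of memory injections. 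On the reply side, assume $\macc{m_1^q}{m_1^r}$ together with $\injpacc{(j_{13},m_1^q,m_3^q)}{(j_{13}',m_1^r,m_3^r)}$ and $\minj{j_{13}'}{m_1^r}{m_3^r}$; apply \lemref{lem:injp-refine-injp-comp} to obtain the interpolating $m_2^r$ and decomposed $j_{12}', j_{23}'$ with the four required $\kinjp$ conditions. The monotonicity obligations for the two $\kro$ copies of the RHS reply, namely $\macc{m_1^q}{m_1^r}$ and $\macc{m_2^q}{m_2^r}$, come directly from the $\macc$ clauses already bundled inside the two constructed $\injpacc$ witnesses (cf.\ \defref{def:injp}).

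For $\screfine{(\kro \compsymb \kc_\kinjp) \compsymb (\kro \compsymb \kc_\kinjp)}{\kro \compsymb \kc_\kinjp}$: on the query side, given $(q_1,q_3)$ related by a single $\kro \compsymb \kc_\kinjp$ world $(se,m_1^q)$ and $(j_{13},m_1^q,m_3^q)$, set $q_2 = q_1$, $m_2^q = m_1^q$, $se_2 = se_1 = se$, and use the construction of \lemref{lem:injp-comp-refine-injp} that picks $j_{12}$ as the partial identity on the domain of $j_{13}$ and $j_{23} = j_{13}$. All four query-side obligations (two $\rovalid$'s that collapse to the same fact, and two $\minj$'s from the lemma) then hold immediately. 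On the reply side, I would set $j_{13}' = \comp{j_{23}'}{j_{12}'}$ and invoke the reply part of \lemref{lem:injp-comp-refine-injp} to obtain $\injpacc{(j_{13},m_1^q,m_3^q)}{(j_{13}',m_1^r,m_3^r)}$ and $\minj{j_{13}'}{m_1^r}{m_3^r}$. The single remaining $\kro$ obligation $\macc{m_1^q}{m_1^r}$ is exactly the $\kro$ accessibility supplied by the first RHS copy.

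The real work is concentrated entirely in \lemref{lem:injp-refine-injp-comp} and \lemref{lem:injp-comp-refine-injp}; once those are in hand, the only subtle point in the present lemma is to check that composing/decomposing \kinjp does not break any $\kro$ invariant. This reduces to two observations: first, $\rovalid{se}{m}$ only depends on read-only global data, which is untouched by any of the constructions used in the transitivity proofs (newly allocated blocks are fresh, and public regions of old blocks are re-projected from the source memory whose read-only parts are preserved by $\macc$); second, the $\macc$ side condition required by $\kro$ on the intermediate memory is subsumed by the monotonicity clauses already internal to $\injpacc$. The main obstacle is therefore not technical but conceptual: ensuring the symbol tables and intermediate memories are chosen so that each $\rovalid$ instance is discharged by a $\rovalid$ instance actually available in the hypotheses --- which, as shown above, is achieved by reusing $se_1$ and $m_1^q$ for the collapsed intermediate world.
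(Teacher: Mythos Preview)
Your proposal is correct and follows essentially the same approach as the paper: both directions are reduced to the corresponding directions of $\kinjp$ transitivity (\lemref{lem:injp-refine-injp-comp} and \lemref{lem:injp-comp-refine-injp}), with the $\kro$ obligations discharged by exactly the two observations you name---$\krovalid$ is propagated by copying the source memory $m_1^q$ as the intermediate state, and the $\kmacc$ clauses needed for $\kro$ replies are already contained inside the constructed $\accsymb_{\kinjp}$ witnesses. Your choice of worlds (reusing $se_1$ and $m_1^q$ for the collapsed/duplicated intermediate) coincides with the paper's, as does the use of the partial-identity injection for the second direction.
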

Its proof follows the same steps for proving
$\scequiv{\kc_\kinjp}{\kc_\kinjp \compsymb \kc_\kinjp}$ with
additional reasoning for establishing properties of \kro. A
graphic presentation of the proof is given
in~\figref{fig:trans-ro} which mirrors~\figref{fig:trans-comp}. We
focus on explaining the additional reasoning and have omitted the
$\accsymb_{\kinjp}$ relations and the worlds for \kinjp
in~\figref{fig:trans-ro}. Note that by definition the worlds $(se, m)$
for \kro do not evolve like those for \kinjp.
A red circle around a memory state $m$ indicates it is required to
prove $\kwd{ro-valid}$ in $\scname{R}_\kro^q$ holds for $m$.  The
\kwd{mem-acc} relations over dashed arrows are the properties over
replies in $\scname{R}_\kro^r$ and must also be verified.

The above additional properties are proved based on two observations.
First, the properties for queries (i.e., $\kwd{ro-valid}$) are propagated
in refinement along with copying of memory states.
For example, to prove the refinement in~\figref{fig:trans-ro1}, we are given
$\rovalid{se_1}{m_1^q}$ and $\rovalid{se_2}{m_2^q}$ according to
the initial $\scname{R}_\kro^q$ relations. By choosing $(se_1, m_1^q)$ to
be the world for the composed $\scname{R}_\kro^q$, $\rovalid{se_1}{m_1^q}$
holds trivially for $m_1^q$ in the circle. To prove the refinement
in~\figref{fig:trans-ro2}, we need to prove that the interpolating memory state
after the initial decomposition satisfies $\scname{R}_\kro^q$. By
choosing $m_1^q$ 
to be this state (in the middle circle in~\figref{fig:trans-ro2} and
 according to the proof of~\lemref{lem:injp-comp-refine-injp}), $\rovalid{se_1}{m_1^q}$ 
follows directly from the initial assumption. Second, the properties for replies
(i.e., \kwd{mem-acc}) have already been encoded into $\accsymb_\kinjp$
by~\defref{def:injp}. For example, $m_2^r$
in~\figref{fig:trans-ro1} is constructed by following exactly~\lemref{lem:injp-refine-injp-comp}. Therefore,
$\macc{m_2^q}{m_2^r}$ trivially holds.

Finally, at the top level, we need \kro and \kwt to commute
which is straightforward to prove:

\begin{lemma}\label{lem:inv-comm}
  $\scequiv {\kro \compsymb \kwt} {\kwt \compsymb \kro}$
\end{lemma}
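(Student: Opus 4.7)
The plan is to establish both refinement directions separately, exploiting the fact that $\kro$ and $\kwt$ are both \emph{semantic invariants}: each relates only identical queries and replies while imposing an orthogonal side-condition (on read-only memory in the case of $\kro$, on signature-conformance of arguments/results in the case of $\kwt$). Consequently, for any queries $q_1$ and $q_2$ related by either $\kro \compsymb \kwt$ or $\kwt \compsymb \kro$, the interpolating query forces $q_1 = q_2$, and the two composite relations characterize exactly the same queries: those that satisfy $\kwd{ro-valid}(se,m)$ for some $se$ and conform to the given signature.

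For $\screfine{\kwt \compsymb \kro}{\kro \compsymb \kwt}$, I would start from $(q,q) \in (\kwt \compsymb \kro)^q(w_\kwt, w_\kro)$ with interpolating query equal to $q$ itself, and choose the witness world for the right-hand side to be $(w_\kro, w_\kwt)$ with interpolating query again $q$. The query obligations for $\kro^q$ and $\kwt^q$ under the reordered world are then exactly the same obligations we assumed under the original. For the reply side, any $(r,r)$ satisfying the composite $\kro^r \compsymb \kwt^r$ decomposes, by the identity shape of both relations, into the independent conjuncts $\macc{m}{m'}$ (from $\kro$) and signature-conformance of the return value (from $\kwt$), which can be repackaged in the opposite order to yield $(r,r) \in \kwt^r \compsymb \kro^r$ under $(w_\kwt, w_\kro)$. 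The reverse direction $\screfine{\kro \compsymb \kwt}{\kwt \compsymb \kro}$ is symmetric in every step.

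The only technicality is to verify that the accessibility relations of $\kro$ and $\kwt$ can be factored through the reordered composition. Since $\accsymb_\kro$ only records $\macc{m}{m'}$ (and leaves the symbol table fixed) while $\kwt$'s accessibility governs only signature-level information, the two accessibilities act on disjoint components of the product world and commute trivially. I do not anticipate any real obstacle; the proof is essentially a bookkeeping argument on worlds, made possible precisely because $\kro$ and $\kwt$ are identity-on-queries/replies invariants whose side-conditions are logically independent. Unlike \lemref{lem:roinjp-trans}, there is no need to construct an interpolating memory state or to invoke transitivity of $\kinjp$, since no non-trivial composition with a transforming KMR is involved.
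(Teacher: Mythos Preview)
Your proposal is correct and matches the paper's treatment: the paper simply states that this commutation is ``straightforward to prove'' without giving further detail, and your argument---exploiting that both $\kro$ and $\kwt$ are identity-on-queries/replies semantic invariants with logically independent side-conditions, so that the interpolating query is forced to be $q$ itself and the worlds can be swapped componentwise---is exactly the intended unfolding of that remark.
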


\subsection{Proving the Direct Open Simulation for CompCert}
\label{ssec:gen-direct-refinement}

We first insert self-simulations into the compiler passes, as shown
in~\tabref{tab:compcerto}. This is to supply extra $\scname{R}_\kinj$,
$\scname{R}_\kinjp$, and \kro for absorbing $\scname{R}_\kext$
($\scname{R}_\kinj$) into $\scname{R}_\kinj$ ($\scname{R}_\kinjp$) by
properties in~\lemref{lem:sim-refine} and for transitive composition of
\kro. Self-simulations are obtained by the following lemma:
\begin{theorem}\label{thm:self-sim}
  If $p$ is a program written in $\kwd{Clight}$ or $\kwd{RTL}$ and
  $\scname{R} \in \{\kro, \kc_{\kext}, \kc_{\kinj}, \kc_{\kinjp}\}$, or $p$
  is written in $\kwd{Asm}$ and $\scname{R} \in \{\kasm_{\kext},
  \kasm_{\kinj}, \kasm_{\kinjp}\}$, then
  $\osim{\scname{R}}{\scname{R}}{\sem{p}}{\sem{p}}$ holds.
\end{theorem}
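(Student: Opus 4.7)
My plan is to handle each simulation convention case separately, but following a uniform template: for each $\scname{R}$, I exhibit a Kripke relation $R$ over program states that carries the corresponding memory/value/register invariant through execution, and then discharge the five clauses of Definition~\ref{def:open-sim}. Since both sides of the simulation run the \emph{same} program $p$, source and target transitions will be in lock-step, and the bulk of the work is ensuring that the invariant parameterising the world can be updated appropriately at allocations, writes, and external calls.

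For the semantic invariant case $\scname{R} = \kro$, $R(se, m)$ simply asserts syntactic equality of source and target states and that $\rovalid{se}{m_0}$ held for some earlier memory $m_0$ monotonically below the current one. Since the two executions coincide, clauses (1)--(3) and (5) are immediate. For external calls (4), we observe that an external call from a well-behaved program only performs operations consistent with $\macc{\_}{\_}$ on read-only global data; CompCert's semantics already guarantees $\macc{m}{m'}$ at the outgoing side, so the reply satisfies $\scname{R}_\kro^r(se, m_0)$ and the invariant is restored.

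For the memory-relational cases $\scname{R} \in \{\kc_\kext, \kc_\kinj, \kc_\kinjp\}$ (and analogously for $\kasm_\kext, \kasm_\kinj, \kasm_\kinjp$), I define $R(w)$ to require that corresponding source and target program states have (a) equal control components (current function, continuation, environments up to block renaming by $w$), (b) values related by $w$'s injection/extension, (c) memories related by $K$. The initial world is built from the identity/trivial injection on the incoming memory, discharging clauses (1) and (2). Clause (3) is proved by induction on the small-step relation: each rule either leaves memory untouched (trivial), performs a load or store at related addresses (use the compatibility of $K$'s Kripke memory relation with memory operations, updating the world by extending the injection with identity mappings for newly allocated blocks), or changes the control state identically on both sides. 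For clause (5), the final reply uses the current world, which witnesses the required $\scname{R}^r$ relation by construction.

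The genuine obstacle sits at clause (4) in the $\kinjp$ case: when $p$ issues an external query, I must (i) pick the outgoing world $w_A$ so that $\scname{R}_\kinjp^q(w_A)$ holds on the paired queries, and (ii) after the environment returns a reply related under $\accsymb_\kinjp$, reconstruct an internal world $w_B'$ for the continuation. Step (i) uses the \emph{current} world directly. Step (ii) is the delicate part: the reply world $w_A'$ extends $j$ with new blocks and modifies only the public region, but on the incoming side the simulation is parameterised by the \emph{initial} world $w_B$, and I must show $w_B \accsymb_\kinjp w_B'$ for the new world recording the post-return memories. This follows from transitivity of $\accsymb_\kinjp$ combined with the observation that unmapped/out-of-reach regions at $w_B$ are a superset of those at $w_A$ (since the injection only grew between the start of $p$ and the external call), so memory protection from the environment automatically lifts to protection for the initial caller. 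A symmetric argument handles the $\kasm_\kinjp$ case; for $\kext$ and $\kinj$ this step collapses because there is no protection obligation to re-establish.
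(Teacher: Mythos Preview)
The paper states Theorem~\ref{thm:self-sim} without a written proof (it is deferred to the Coq artifact), so there is no in-text argument to compare against. Your template---carry the KMR through a Kripke invariant and discharge the external-call clause via transitivity of accessibility---is the standard route and matches what one would expect the formalization to do.

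One phrase is off. You write that ``the initial world is built from the identity/trivial injection on the incoming memory.'' For $\kc_\kinj$, $\kc_\kinjp$, and their $\kasm$ analogues this is not correct: in Definition~\ref{def:open-sim} the incoming world $w_B = (j, m_1, m_2)$ is universally quantified, so you do not get to choose $j$. Self-simulation under these conventions is not ``run $p$ twice on the same state'' but ``run $p$ on two memories related by an \emph{arbitrary} injection and show the executions stay $j$-related.'' Your invariant $R(w_B)$ should existentially carry a current world $w$ with $w_B \accsymb w$ and relate control, values, and memories at $w$; at initialization the witness is $w_B$ itself by reflexivity of $\accsymb$, not a freshly built identity injection. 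Since your clause~(4) discussion already distinguishes $w_B$ from the outgoing $w_A$ and reasons about a growing injection, this is likely a slip in wording rather than a conceptual gap---but as written the sentence would make clauses~(1) and~(2) fail.

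A further remark: you flag clause~(4) under $\kinjp$ as the delicate step, but in practice the bulk of the labour sits in clause~(3)---checking, rule by rule for each of Clight, RTL, and Asm, that every small-step transition commutes with an arbitrary injection (loads, stores, allocations, pointer comparisons, and so on). This per-language parametricity argument is routine but voluminous and is where most of the formal effort goes.
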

We unify the conventions at the incoming and
outgoing sides.
We start with the simulation
$\osim {\scname{R}}{\scname{S}}{L_1}{L_2}$
which is the transitive composition of compiler passes
in~\tabref{tab:compcerto} where
\begin{tabbing}
  \quad\=$\scname{R}$ \== \=\kill
  \>$\scname{R}$\>=\>$\kro \compsymb  \kc_{\kinjp} \compsymb \kc_{\kinjp} \compsymb \kc_{\kinjp} \compsymb \kwt \compsymb \kc_{\kext} \compsymb \kc_{\kext} \compsymb \kc_{\kinj} \compsymb \kc_\kext \compsymb \kc_{\kinjp} \compsymb \kc_{\kinjp} \compsymb \kro \compsymb \kc_\kinjp \compsymb \kro \compsymb \kc_\kinjp$ \\

  \>\>\>$\compsymb \kro \compsymb \kc_\kinjp \compsymb
  \kc_{\kinj} \compsymb \kwt \compsymb \kc_{\kext} \compsymb \kcl
  \compsymb \kltl_{\kext} \compsymb \kltl_{\kinjp} \compsymb \klm \compsymb \kmach_{\kext} \compsymb \kma \compsymb \kasm_{\kinj} \compsymb \kasm_{\kinjp}$\\
  \>$\scname{S}$\>=\>$\kro \compsymb \kc_{\kinjp} \compsymb \kc_{\kinj} \compsymb \kc_{\kinj} \compsymb \kwt \compsymb \kc_{\kext} \compsymb \kc_{\kext} \compsymb \kc_{\kinj} \compsymb \kc_\kext \compsymb \kc_{\kinj} \compsymb \kc_{\kinjp} \compsymb \kro \compsymb \kc_\kinjp \compsymb \kro \compsymb \kc_\kinjp$ \\
  \>\>\>$\compsymb \kro \compsymb \kc_\kinjp \compsymb \kc_{\kinj} \compsymb \kwt \compsymb \kc_{\kext} \compsymb \kcl
  \compsymb \kltl_{\kext} \compsymb \klm \compsymb \kmach_{\kinj} \compsymb \kmach_{\kext} \compsymb \kma \compsymb \kasm_{\kinj} \compsymb \kasm_{\kinjp}$.
\end{tabbing}
We then find two sequences of refinements
$\sccompcerto \sqsubseteq \scname{R}_n \sqsubseteq \ldots \sqsubseteq \scname{R}_1 \sqsubseteq \scname{R}$
and $\scname{S} \sqsubseteq \scname{S}_1 \sqsubseteq \ldots
\sqsubseteq \scname{S}_m \sqsubseteq \sccompcerto$, by which and
\thmref{thm:sim-refine} we get the simulation $\osim
{\sccompcerto} {\sccompcerto} {L_1}{L_2}$. 
The direct simulation convention is $\sccompcerto = \kro \compsymb \kwt \compsymb
\kcainjp \compsymb \kasm_{\kinjp}$.  $\kro$ enables optimizations at C
level while $\kwt$ ensures well-typedness. The definition of \kcainjp has already been discussed informally
in~\secref{ssec:key-idea-direct-refinement}; its formal definition is given as follows.
Note that, to simplify the presentation, we have omitted
minor constraints such as function values should not be undefined,
stack pointers must have a pointer type, etc. Interested readers
should consult the our artifact for a complete definition.

\begin{definition}\label{def:cainjp}
  $\kcainjp : \sctype{\cli}{\asmli}=
  \simconv{W_\kcainjp}
          {\scname{R}_\kcainjp^q}
          {\scname{R}_\kcainjp^r}$
  where $W_\kcainjp = (W_{\kinjp} \times \ksig \times \kregset)$ and
  ${\scname{R}_\kcainjp^q:\krtype{W_\kcainjp}{\cli^q}{\asmli^q}}$ and 
  ${\scname{R}_\kcainjp^r:\krtype{W_\kcainjp}{\cli^r}{\asmli^r}}$ are defined as:

  \begin{itemize}
  \item $(\cquery{v_f}{\sig}{\vec{v}}{m_1}, \asmquery{\regset}{m_2})
    \in \scname{R}_\kcainjp^q((j,m_1,m_2),\sig,\regset)$ if 

    \begin{tabbing}
      \quad\=\kill
      \> (1)  $\minj{j}{m_1}{m_2},\quad  \vinj{j}{v_f}{\regset(\pcreg)} \quad \vinj{j}{\vec{v}}{\kwd{get-args}(\sig, \regset(\spreg), \regset, m_2)}$\\
      \> (2)  $\kwd{outgoing-arguments}(\sig,\regset(\spreg)) \subseteq \outofreach{j}{m_1}$\\
      \> (3)  $\kwd{outgoing-arguments}(\sig,\regset(\spreg)) \subseteq \perm{m_2}{\kfreeable}$
    \end{tabbing}

    ${\kwd{get-args}(\sig, \regset(\spreg), \regset, m_2)}$ is a
    list of values for arguments at the assembly level obtained by
    inspecting locations for arguments in $\regset$ and $m_2$
    corresponding to the signature $\sig$ which are determined by
    CompCert's calling convention.
    $\kwd{outgoing-arguments}\allowbreak (\sig,\regset(\spreg))$ is a set of
    addresses on the stack frame for outgoing function arguments
    computed from the given signature $\sig$ and the value of stack
    pointer.

  \item 
    $(\creply{r}{m_1'}, \asmreply{\regset'}{m_2'}) \in \scname{R}_\kcainjp^r((j,m_1,m_2),\sig,\regset)$ if there is a $j'$ s.t.
    \begin{tabbing}
      \quad\=\kill
      \> (1)  $\injpacc{(j,m_1,m_2)}{(j',m_1',m_2')}$\\
      \> (2)  $\minj{j'}{m_1'}{m_2'},\quad  \vinj{j'}{r}{\kwd{get-result}(\sig,\regset')}$\\
      \> (3)  $\kwd{outgoing-arguments}(\sig,\regset(\spreg)) \subseteq \outofreach{j}{m_1}$\\
      \> (4)  $\regset'(\spreg) = \regset(\spreg),\quad \regset'(\pcreg) = \regset(\rareg), \quad \forall r \in \kwd{callee-save-regs}, \regset'(r) = \regset(r)$
    \end{tabbing}
    ${\kwd{get-result}(\sig,\regset')}$ is the return value
    stored in a register designated by CompCert's calling convention for the
    given signature $\sig$. \kwd{callee-save-regs} is the set of
    callee-save registers.
  \end{itemize}

\end{definition}

The last $\kasm_{\kinjp}$ is irrelevant as
assembly code is self-simulating by~\thmref{thm:self-sim}.
%
%
The final correctness theorem is shown below:
\begin{theorem}\label{thm:compcerto-correct}
  Compilation in CompCert is correct in terms of open simulations,
  \[
  \forall\app (M:\kwd{Clight})\app (M':\kwd{Asm}),\app
  \kwd{CompCert}(M) = M' \imply
  \osims{\sccompcerto}{\sem{M}}{\sem{M'}}.
  \]
\end{theorem}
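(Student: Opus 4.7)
The plan is to prove this theorem by composing the open simulations of CompCert's individual passes (as listed in Table~\ref{tab:compcerto}, including the inserted self-simulations) into one large forward simulation, and then collapsing the resulting composite simulation conventions on both the outgoing and incoming sides down to the single convention $\sccompcerto$ using the refinement properties collected in Section~\ref{ssec:refine-properties}.

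First, I would discharge the simulation obligation for each individual pass: directly reusing the CompCertO proofs for the passes shown in black, adapting the proofs for the optimizing passes (shown in red) with the $\kro$ semantic invariant following the strategy sketched in Section~\ref{ssec:single-pass}, and invoking Theorem~\ref{thm:self-sim} for the inserted self-simulations (shown in blue). Iterated application of Theorem~\ref{thm:v-comp} then yields a single composite simulation $\osim{\scname{R}}{\scname{S}}{\sem{M}}{\sem{M'}}$ where $\scname{R}$ and $\scname{S}$ are exactly the long chains of conventions displayed in Section~\ref{ssec:gen-direct-refinement}.

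Second, I would construct two refinement chains $\sccompcerto \sqsubseteq \scname{R}_n \sqsubseteq \ldots \sqsubseteq \scname{R}_1 \sqsubseteq \scname{R}$ and $\scname{S} \sqsubseteq \scname{S}_1 \sqsubseteq \ldots \sqsubseteq \scname{S}_m \sqsubseteq \sccompcerto$, and then conclude by Theorem~\ref{thm:sim-refine}. Each step of the chains is a local rewrite obtained from one of the lemmas of Section~\ref{ssec:refine-properties}: Lemma~\ref{lem:ca-com} permutes a KMR in $\{\kext,\kinj,\kinjp\}$ past the structural calling conventions $\kcl$, $\klm$, $\kma$ so that same-flavored KMRs can be gathered adjacent to one another; parts (2)--(7) of Lemma~\ref{lem:sim-refine} weaken $\kinjp$ to $\kinj$ and $\kinj$ to $\kext$ where necessary, while part (1) fuses adjacent $\kinjp$ occurrences; Lemma~\ref{lem:wt} eliminates duplicated $\kwt$ instances and bubbles a single copy outward; Lemma~\ref{lem:roinjp-trans} fuses adjacent occurrences of $\kro \compsymb \kc_\kinjp$; and Lemma~\ref{lem:inv-comm} finally swaps $\kro$ and $\kwt$ into the canonical outermost order of $\sccompcerto$. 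The trailing $\kasm_{\kinjp}$ in $\sccompcerto$ is justified by the self-simulation of assembly given by Theorem~\ref{thm:self-sim}.

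The main obstacle will be the bookkeeping around the two semantic invariants $\kro$ and $\kwt$. Unlike KMRs, they cannot be absorbed into $\kinjp$; each must be bubbled to the outermost position through a carefully ordered sequence of commutations, and the strategic placement of the self-simulation passes is what makes the surrounding conventions strong enough for Lemmas~\ref{lem:sim-refine}(3), \ref{lem:wt}, and~\ref{lem:roinjp-trans} to apply. In particular, the inserted $\kro$-self-simulation surrounding the $\kro \compsymb \kc_\kinjp$ blocks of Constprop, CSE, and Deadcode is precisely what justifies the repeated use of Lemma~\ref{lem:roinjp-trans} to collapse all $\kro$ instances into a single one, while the $\kinjp$-self-simulations inserted before and after the $\kinj$-using passes (Inlining, Unusedglob) supply the extra $\kinjp$ needed to absorb the weaker KMRs via Lemma~\ref{lem:sim-refine}(3). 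Once the two chains line up, Theorem~\ref{thm:sim-refine} delivers the desired direct simulation $\osims{\sccompcerto}{\sem{M}}{\sem{M'}}$.
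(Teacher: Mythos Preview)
Your proposal is correct and matches the paper's own proof essentially step for step: compose the per-pass simulations (including the inserted self-simulations) via Theorem~\ref{thm:v-comp}, then refine the resulting outgoing and incoming convention chains down to $\sccompcerto$ using Lemmas~\ref{lem:ca-com}--\ref{lem:inv-comm} and conclude by Theorem~\ref{thm:sim-refine}. The only minor imprecision is in your account of where the self-simulations sit---the $\kro\compsymb\kc_{\kinjp}$ self-simulation is inserted once at the \kwd{Clight} level (not ``surrounding'' the RTL optimizers), and the extra $\kinj$/$\kinjp$ self-simulations are placed at specific points (before \kwd{Tailcall}, after \kwd{Inlining}, and twice at \kwd{Asm}) rather than uniformly bracketing the $\kinj$ passes---but this does not affect the soundness of your argument.
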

We explain how the refinements are carried out at both sides.
%
%
The following is the sequence of refined simulation conventions
$\sccompcerto \sqsubseteq \scname{R}_n \sqsubseteq \ldots \sqsubseteq
\scname{R}_1 \sqsubseteq \scname{R}$ at the outgoing side. It begins with $\scname{R}$ and
ends with $\sccompcerto$.

{\small
\begin{tabbing}
  \quad\=$(11)$ \=\kill
  \>$(1)$\>$\kro \compsymb \textcolor{red}{\kc_{\kinjp} \compsymb \kc_{\kinjp} \compsymb \kc_{\kinjp}} \compsymb \kwt \compsymb \textcolor{red}{\kc_{\kext} \compsymb \kc_{\kext} \compsymb \kc_\kinj \compsymb \kc_{\kext}} \compsymb \kc_{\kinjp} \compsymb \textcolor{red}{\kro \compsymb \kc_\kinjp \compsymb \kro \compsymb \kc_\kinjp \compsymb \kro \compsymb \kc_\kinjp} $\\
  \>\>$ \compsymb \kc_{\kinj} \compsymb \kwt \compsymb \kc_{\kext} \compsymb \kcl
  \compsymb \kltl_{\kext} \compsymb \kltl_{\kinjp} \compsymb \klm \compsymb \kmach_{\kext} \compsymb \kma \compsymb \kasm_{\kinj} \compsymb \kasm_{\kinjp}$\\
  \>$(2)$\>$\kro \compsymb \kc_{\kinjp} \compsymb \kwt \compsymb \kc_\kinj \compsymb \kc_{\kinjp} \compsymb \kro \compsymb \kc_{\kinjp}$\\
  \>\>$\compsymb \kc_{\kinj} \compsymb \textcolor{red}{\kwt} \compsymb \kc_{\kext} \compsymb \kcl
  \compsymb \kltl_{\kext} \compsymb \kltl_{\kinjp} \compsymb \klm \compsymb \kmach_{\kext} \compsymb \kma \compsymb \kasm_{\kinj} \compsymb \kasm_{\kinjp}$ \\
  \>$(3)$\>$\kro \compsymb \kc_{\kinjp} \compsymb \textcolor{red}{\kwt \compsymb \kc_\kinj \compsymb \kwt} \compsymb \kc_{\kinjp} \compsymb \kro \compsymb \kc_{\kinjp}$\\
  \>\>$\compsymb \kc_{\kinj} \compsymb \kc_{\kext} \compsymb \kcl
  \compsymb \kltl_{\kext} \compsymb \kltl_{\kinjp} \compsymb \klm \compsymb \kmach_{\kext} \compsymb \kma \compsymb \kasm_{\kinj} \compsymb \kasm_{\kinjp}$ \\
  \>$(4)$\>$\kwt \compsymb \kro \compsymb \kc_{\kinjp} \compsymb  \kc_\kinj \compsymb \kc_{\kinjp} \compsymb \kro \compsymb \kc_{\kinjp}$\\
  \>\>$\compsymb \kc_{\kinj} \compsymb \kc_{\kext} \compsymb \kcl
  \compsymb \textcolor{red}{\kltl_{\kext} \compsymb \kltl_{\kinjp}} \compsymb \klm \compsymb \textcolor{red}{\kmach_{\kext}} \compsymb \kma \compsymb \textcolor{red}{\kasm_{\kinj}} \compsymb \kasm_{\kinjp}$ \\
  \>$(5)$\>$\kwt \compsymb \kro \compsymb \kc_{\kinjp} \compsymb \textcolor{red}{ \kc_\kinj} \compsymb \kc_{\kinjp} \compsymb \kro \compsymb \kc_{\kinjp}  \compsymb \textcolor{red}{ \kc_{\kinj} \compsymb \kc_{\kext} \compsymb \kc_{\kext}} \compsymb \kc_{\kinjp} \compsymb \textcolor{red}{\kc_{\kext} \compsymb \kc_{\kinj}} \compsymb \kcl \compsymb \klm \compsymb \kma \compsymb \kasm_{\kinjp}$\\
  \>$(6)$\>$\kwt \compsymb \kro \compsymb \textcolor{red}{\kc_{\kinjp} \compsymb \kc_\kinjp \compsymb \kc_{\kinjp}} \compsymb \kro \compsymb \textcolor{red}{\kc_{\kinjp} \compsymb \kc_\kinjp \compsymb \kc_\kinjp \compsymb \kc_\kinjp} \compsymb \kcl \compsymb \klm \compsymb \kma \compsymb \kasm_{\kinjp}$\\
  \>$(7)$\>$\kwt \compsymb \textcolor{red}{\kro \compsymb \kc_{\kinjp} \compsymb \kro \compsymb \kc_{\kinjp}}  \compsymb \kcl \compsymb \klm \compsymb \kma \compsymb \kasm_{\kinjp}$\\
  \>$(8)$\>$\textcolor{red}{\kwt \compsymb \kro} \compsymb \kc_{\kinjp} \compsymb \kcl \compsymb \klm \compsymb \kma \compsymb \kasm_{\kinjp}$\\
  \>$(9)$\>$\kro \compsymb \kwt \compsymb \textcolor{red}{\kc_{\kinjp} \compsymb \kcl \compsymb \klm \compsymb \kma} \compsymb \kasm_{\kinjp}$\\
  \>$(10)$\>$\kro \compsymb \kwt \compsymb \kcainjp \compsymb \kasm_{\kinjp}$
\end{tabbing}}
In each line, the letters in red are simulation conventions
transformed by the refinement operation at that step. 
In step (1), we merge consecutive simulation conventions by applying
property (1) in ~\lemref{lem:sim-refine} for $\kc_\kinjp$ and properties
(5-7) to compose $\kc_\kext$ and absorb it into $\kc_\kinj$. We also apply
~\lemref{lem:roinjp-trans} to merge consecutive $\kro \compsymb \kc_\kinjp$.
%
In steps (2-3), we move and eliminate $\kwt$ by~Lemmas~\ref{lem:wt}
and~\ref{lem:inv-comm}.
%
In step (4), we lift conventions over $\kcl$, $\klm$ and $\kma$ to
higher positions by~\lemref{lem:ca-com}.
In step (5), we absorb $\kc_\kext$ into $\kc_\kinj$ again and further
turns $\kc_\kinj$ into $\kc_\kinjp$ by applying
$\screfine{\kc_\kinjp}{\kc_\kinj}$ (property (2) in~\lemref{lem:sim-refine}).
In step (6), we compose $\kc_\kinjp$ by applying
$\kc_{\kinjp} \equiv {\kc_{\kinjp}} \compsymb {\kc_{\kinjp}}$.
In step (7), we apply ~\lemref{lem:roinjp-trans} again to eliminate the second $\kro \compsymb \kc_\kinjp$.
In step (8), we commute the two semantic invariants of the source semantics by
~\lemref{lem:inv-comm}.
Finally, we merge $\kc_\kinjp$ with ${\kcl \compsymb \klm \compsymb \kma}$ into $\kcainjp$
in step (9).

The original simulation conventions at the incoming side are
parameterized by $\kinj$ which does not have memory protection as in
$\kinjp$. One can modify the proofs of CompCert to make
$\kinjp$ an incoming convention. However, we show that this is
unnecessary: with the inserted self-simulations over \kinjp,
conventions over \kinj may be absorbed into them. The following is the
refinement sequence $\scname{S} \sqsubseteq \scname{S}_1 \sqsubseteq
\ldots \sqsubseteq \scname{S}_m \sqsubseteq \sccompcerto$ that
realizes this idea.
{\small
\begin{tabbing}
  \quad\=$(11)$ \=\kill
  \>$(1)$\>$\kro \compsymb \kc_{\kinjp} \compsymb \textcolor{red}{\kc_{\kinj} \compsymb \kc_{\kinj}} \compsymb \kwt \compsymb \textcolor{red}{\kc_{\kext} \compsymb \kc_{\kext} \compsymb \kc_\kinj \compsymb \kc_{\kext}} \compsymb \kc_{\kinjp} \compsymb \textcolor{red}{\kro \compsymb \kc_\kinjp \compsymb \kro \compsymb \kc_\kinjp \compsymb \kro \compsymb \kc_\kinjp} $\\
  \>\>$ \compsymb \kc_{\kinj} \compsymb \kwt \compsymb \kc_{\kext} \compsymb \kcl
  \compsymb \kltl_{\kext} \compsymb \klm \compsymb \kmach_{\kinj} \compsymb \kmach_{\kext} \compsymb \kma \compsymb \kasm_{\kinj} \compsymb \kasm_{\kinjp}$\\
  \>$(2)$\>$\kro \compsymb \kc_{\kinjp} \compsymb \kc_{\kinj} \compsymb \kwt \compsymb \kc_\kinj \compsymb \kc_{\kinjp} \compsymb \kro \compsymb \kc_\kinjp $\\
  
  \>\>$ \compsymb \kc_{\kinj} \compsymb \textcolor{red}{\kwt} \compsymb \kc_{\kext} \compsymb \kcl
  \compsymb \kltl_{\kext} \compsymb \klm \compsymb \kmach_{\kinj} \compsymb \kmach_{\kext} \compsymb \kma \compsymb \kasm_{\kinj} \compsymb \kasm_{\kinjp}$\\
  \>$(3)$\>$\kro \compsymb \kc_{\kinjp} \compsymb \kc_{\kinj} \compsymb \textcolor{red}{\kwt \compsymb \kc_\kinj \compsymb \kwt} \compsymb \kc_{\kinjp} \compsymb \kro \compsymb \kc_\kinjp $\\
  \>\>$ \compsymb \kc_{\kinj} \compsymb \kc_{\kext} \compsymb \kcl
  \compsymb \kltl_{\kext} \compsymb \klm \compsymb \kmach_{\kinj} \compsymb \kmach_{\kext} \compsymb \kma \compsymb \kasm_{\kinj} \compsymb \kasm_{\kinjp}$\\
  \>$(4)$\>$ \kwt \compsymb \kro \compsymb \kc_{\kinjp} \compsymb \kc_{\kinj} \compsymb \kc_\kinj \compsymb \kc_{\kinjp} \compsymb \kro \compsymb \textcolor{red}{\kc_\kinjp} $\\
  \>\>$ \compsymb \kc_{\kinj} \compsymb \kc_{\kext} \compsymb \kcl
  \compsymb \kltl_{\kext} \compsymb \klm \compsymb \kmach_{\kinj} \compsymb \kmach_{\kext} \compsymb \kma \compsymb \kasm_{\kinj} \compsymb \kasm_{\kinjp}$\\
  \>$(5)$\>$\kwt \compsymb \kro \compsymb \kc_{\kinjp} \compsymb \kc_{\kinj} \compsymb \kc_\kinj \compsymb \kc_{\kinjp} \compsymb \kro \compsymb \kc_\kinjp \compsymb \textcolor{red}{\kc_\kinjp} $\\
  \>\>$ \textcolor{red}{\compsymb \kc_{\kinj} \compsymb \kc_{\kext}} \compsymb \kcl \compsymb
  \textcolor{red}{ \kltl_{\kext}} \compsymb \klm \compsymb \textcolor{red}{\kmach_{\kinj} \compsymb \kmach_{\kext}} \compsymb \kma \compsymb \kasm_{\kinj} \compsymb \kasm_{\kinjp}$\\
 \>$(6)$\>$\kwt \compsymb \kro \compsymb \kc_{\kinjp} \compsymb \kc_{\kinj} \compsymb \kc_\kinj \compsymb \kc_{\kinjp} \compsymb \kro \compsymb \kc_\kinjp $\\
 \>\>$ \compsymb \kcl \compsymb \klm \compsymb \kma \compsymb \kasm_\kinjp \compsymb  \kasm_\kinj \compsymb \textcolor{red}{\kasm_\kext \compsymb \kasm_\kext \compsymb \kasm_\kinj \compsymb \kasm_\kext}  \compsymb \kasm_{\kinj} \compsymb \kasm_{\kinjp}$\\
 \>$(7)$\>$\kwt \compsymb \kro \compsymb \kc_{\kinjp} \compsymb \textcolor{red}{\kc_{\kinj} \compsymb \kc_\kinj} \compsymb \kc_{\kinjp} \compsymb \kro \compsymb \kc_\kinjp $ \\
 \>\> $ \compsymb \kcl \compsymb \klm \compsymb \kma \compsymb \kasm_\kinjp \compsymb  \textcolor{red}{\kasm_\kinj \compsymb \kasm_\kinj  \compsymb \kasm_{\kinj}} \compsymb \kasm_{\kinjp}$\\
 \>$(8)$\>$\kwt \compsymb \kro \compsymb \textcolor{red}{\kc_{\kinjp} \compsymb \kc_\kinj \compsymb \kc_{\kinjp}} \compsymb \kro \compsymb \kc_\kinjp
 \compsymb \kcl \compsymb \klm \compsymb \kma \compsymb \textcolor{red}{\kasm_\kinjp \compsymb \kasm_\kinj \compsymb \kasm_{\kinjp}}$\\
 \>$(9)$\>$\kwt \compsymb \textcolor{red}{\kro \compsymb \kc_{\kinjp} \compsymb \kro \compsymb \kc_\kinjp} \compsymb \kcl \compsymb \klm \compsymb \kma \compsymb \kasm_\kinjp$\\
 \>$(10)$\>$\textcolor{red}{\kwt \compsymb \kro} \compsymb \kc_{\kinjp} \compsymb \kcl \compsymb \klm \compsymb \kma \compsymb \kasm_\kinjp$\\
 \>$(11)$\>$\kro \compsymb \kwt \compsymb \textcolor{red}{\kc_{\kinjp} \compsymb \kcl \compsymb \klm \compsymb \kma} \compsymb \kasm_\kinjp$\\
 \>$(12)$\>$\kro \compsymb \kwt \compsymb \kcainjp \compsymb \kasm_\kinjp$\\

\end{tabbing}}

Steps (1-3) are the same as for the outgoing side except for using
property (4) in ~\lemref{lem:sim-refine}.
In step (4), we split $\kc_\kinjp$ into two, one will be used to absorb the $\kasm_\kinj$
at the target level. 
%
In step (5), we push all simulation conventions parameterized over KMRs starting
with the second split $\kc_\kinjp$ to target level by ~\lemref{lem:ca-com}.
In step (6), we absorb $\kasm_\kext$ into $\kasm_\kinj$ by properties (5-7)
in  ~\lemref{lem:sim-refine}.
In step (7), we compose the consecutive $\kc_\kinj$ and $\kasm_\kinj$ by
$\scname{R}_\kinj \compsymb \scname{R}_\kinj \sqsubseteq \scname{R}_\kinj$ (
property (4) in ~\lemref{lem:sim-refine}).
In step (8), we absorb $\kinj$ into $\kinjp$ at both levels by property (3)
in ~\lemref{lem:sim-refine}.
In step (9), we eliminate a redundant $\kro \compsymb \kc_\kinjp$ by ~\lemref{lem:roinjp-trans}.
%
The last two steps are the same as above.

\section{End-to-End Verification of Heterogeneous Modules}
\label{sec:application}

In this section, we give a formal account of end-to-end verification
of heterogeneous modules based on direct refinements. The discussion
focuses on the running example in~\figref{fig:running-exm-refinement}
and its variants. More detailed development of those examples can be
found in~\apdxref{sec:server-sim}. We also develop an additional
example adapted from CompCertM in~\apdxref{sec:mut-sum}.

\subsection{Refinement for the Hand-written Server}
\begin{figure}
  \begin{subfigure}[b]{0.55\textwidth}
    \centering
  \begin{tikzpicture}
    \node (qi) at (0,0){$q^I$};
    \node (calle) [draw, ellipse,right = 0.5 of qi,
    minimum width=1.4cm, minimum height=0.7cm] {\small \kwd{Calle}};
    \node (callp) [draw, ellipse,right = 1.5 of calle,
    minimum width=1.4cm, minimum height=0.7cm] {\small \kwd{Callp}};
    \node (retp) [draw, ellipse, below = 0.5 of callp,
    minimum width=1.4cm, minimum height=0.7cm] {\small\kwd{Retp}};
    \node (rete) [draw, ellipse, below = 0.5 of calle,
    minimum width=1.4cm, minimum height=0.7cm] {\small\kwd{Rete}};
    \node (qo) [right = 0.5 of callp] {$q^O$};
    \node (ro) [below = 0.63 of qo] {$r^O$};
    \node (ri) [left = 0.5 of rete] {$r^I$};

    \draw [-stealth](qi) -- node[above]{$I_S$} (calle);
    \draw [-stealth](callp) -- node[above]{$X_S$} (qo);
    \draw [-stealth](ro) -- node[below]{$Y_S$} (retp);
    \draw [-stealth](rete) -- node[below]{$F_S$} (ri);

    \draw [-stealth](calle) -- node[above]{\small \kwd{alloc}} node[below]{\small $\kwd{encrypt}$} (callp);
    \draw [-stealth](retp) -- node[above]{\small \kwd{free}} (rete);

    \draw [-stealth](qo) -- node[above, sloped]{\tiny \kwd{external}} (ro);
  \end{tikzpicture}
  \caption{$L_{\texttt{S}}$}
  \label{fig:server-spec}
\end{subfigure}
\begin{subfigure}[b]{0.35\textwidth}
  \centering
  \begin{tikzpicture}
    \node (qi) at (0,0){$q^I$};
    \node (aa) [right = 1.5 of qi] {};
    \node (st) [draw, ellipse,below = 0.4 of aa] {\small$\regset@m$};
    
    \node (qo) [right = 1.5 of aa] {$q^O$};
    \node (ro) [below  = 1 of qo] {$r^O$};
    \node (ri) [below  = 1 of qi] {$r^I$};

    \draw [-stealth](qi) -- node[above]{$I_\asmli$} (st);
    \draw [-stealth](st) -- node[above]{$X_\asmli$} (qo);
    \draw [-stealth](ro) -- node[below]{$Y_\asmli$} (st);
    \draw [-stealth](st) -- node[below]{$F_\asmli$} (ri);

     \path[-stealth, every loop/.style={looseness=8, out=120, in=60, distance=10mm}] 
    (st) edge [loop] ();

    \draw [-stealth](qo) -- node[above, sloped]{\tiny \kwd{external}} (ro);
  \end{tikzpicture}
  \caption{$\sem{\code{server\_opt.s}}$}
  \label{fig:server-sem}
\end{subfigure}
\caption{Specification and Open Semantics of $\code{server\_opt.s}$}
\end{figure}
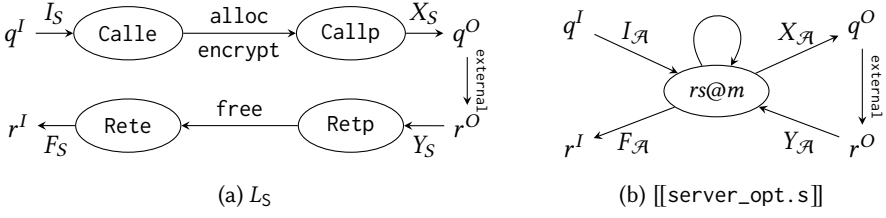



We use \kwd{server\_opt.s} instead of \kwd{server.s} to illustrate how
optimizations are enabled by \kro. The proof for the unoptimized
server is similar with only minor adjustments.
A formal definition of LTS for $L_{\texttt{S}}$ is given below and its transition diagram is given in~\figref{fig:server-spec}.
\begin{definition}
LTS of $L_{\texttt{S}}$:
\begin{tabbing}
  \quad\=$\to_A$\;\=$:=$ \=\kill
  \>$S_S$ \>$:=$ \>$\rawset{\kwd{Calle}\app i\app v_f \app m,  
          \kwd{Callp}\app \rb \app v_f\app m, 
          \kwd{Retp}\app \rb \app m, \kwd{Rete} \app m}$;\\
  \>$I_S$ \>$:=$ \>$\rawset{(\cquery{\vptr{b_e}{0}}{\intptrvoidsig}{[\Vint{i},v_f]}{m},\kwd{Calle}\app i\app v_f \app m)}$;\\  
  \>$\to_S$ \>$:=$
   \>$\{(\kwd{Calle}\app i\app v_f \app m, \kwd{Callp}\app \rb\app v_f \app m'') \;|\; (m',\rb) = \kwd{alloc} \app m \app 0 \app 8 \app \land $\\
  \>\>\> $m'' = m'[\rb \leftarrow (i \app \kwd{XOR} \app m[b_k])]\} \cup
  \pset{(\kwd{Retp}\app \rb \app m, \kwd{Rete} \app m')}{m' = \kwd{free}\app m\app \rb}$;\\    
  \>$X_S$ \>$:=$ \>$\rawset{(\kwd{Callp}\app \rb \app \vptr{b_p}{0}\app m,\cquery{\vptr{b_p}{0}}{\ptrvoidsig}{[\vptr{\rb}{0}]}{m})}$;\\
  \>$Y_S$ \>$:=$ \>$\rawset{(\kwd{Callp}\app \rb\app v_f\app m, res@m',\kwd{Retp}\app \rb \app m')}$;\\
  \>$F_S$ \>$:=$ \>$\rawset{(\kwd{Rete} \app m,\Vundef@m)}$.
  \end{tabbing}

\end{definition}
\noindent The LTS has four internal states as depicted in~\figref{fig:server-spec}.
%
Initialization is encoded in $I_S$. If the incoming query $q^I$ contains
a function pointer $\vptr{b_e}{0}$ which points to $\kencrypt$,
$L_{\texttt{S}}$ enters $\kwd{Calle} \app i \app v_f \app m$ where $i$
and $v_f$ are its arguments.
The first internal transition allocates the stack frame $\rb$ and stores
the result of encryption $i \app \kwd{XOR} \app m[b_k]$ in $\rb$ where
$b_k$ contains $\kwd{key}$. Then, it enters $\kwd{Callp}$
which is the state before calling $\kwd{process}$.
%
%
If the pointer $v_f = \vptr{b_p}{0}$ of the current
state points to an external function, $L_{\texttt{S}}$ 
issues an outgoing C query $q^O$ with a pointer
to its stack frame as its argument.
After the external call, $Y_S$ updates the memory with the reply
and enters $\kwd{Retp}$.
The second internal transition frees $sp$ and enters $\kwd{Rete}$ and
finally returns.
Note that complete semantics of $L_{\texttt{S}}$ is accompanied by a
local symbol table which determines the initial value of global
variables ($\kwd{key}$) and asserts that it is a constant (read-only).
The only difference between the specifications for \kwd{server\_opt.s} and
\kwd{server.s} is whether $\kwd{key}$ is a constant in the symbol
table.
The semantics of assembly module $\sem{\code{server\_opt.s}}$ is given
by CompCertO whose transition diagram is shown
in~\figref{fig:server-sem}. All the states, including
queries and replies, are composed of register sets and memories. 
%
%
%

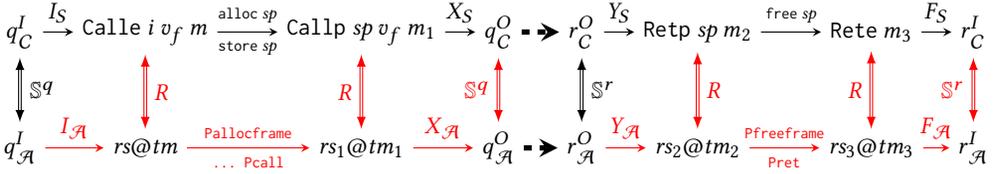
\begin{figure}
  \centering
  \begin{tikzpicture}

  \node (qci) at (0,0) {\small{$q_\cli^I$}};
  \node (qai) [below = 0.85cm of qci] {\small{$q_\asmli^I$}};
  \node (sc) [right = 0.4 cm of qci] {\small{\kwd{Calle} $i$ $v_f$ $m$}};
  \path let \p1 = (qai) in let \p2 = (sc) in
        node (sa) at (\x2, \y1) {\small{$\regset@\tm$}};
  \node (sc1) [right = 0.8 cm of sc] {\small{\kwd{Callp} $\rb$ $v_f$ $m_1$}};
  \path let \p1 = (qai) in let \p2 = (sc1) in
        node (sa1) at (\x2, \y1) {\small{$\regset_1@\tm_1$}};
  \node (qco) [right = 0.4 cm of sc1] {\small{$q_\cli^O$}};
  \path let \p1 = (qai) in let \p2 = (qco) in
        node (qao) at (\x2, \y1) {\small{$q_\asmli^O$}};      
  \node (rco) [right = 0.5 cm of qco] {\small{$r_\cli^O$}};
  \path let \p1 = (qai) in let \p2 = (rco) in
        node (rao) at (\x2, \y1) {\small{$r_\asmli^O$}};
  \node (sc2) [right = 0.4 cm of rco] {\small{\kwd{Retp} $\rb$ $m_2$}};
  \path let \p1 = (qai) in let \p2 = (sc2) in
        node (sa2) at (\x2, \y1) {\small{$\regset_2@\tm_2$}};
  \node (sc3) [right = 0.8 cm of sc2] {\small{\kwd{Rete} $m_3$}};
  \path let \p1 = (qai) in let \p2 = (sc3) in
        node (sa3) at (\x2, \y1) {\small{$\regset_3@\tm_3$}};
  \node (rci) [right = 0.4 cm of sc3] {\small{$r_\cli^I$}};
  \path let \p1 = (qai) in let \p2 = (rci) in
        node (rai) at (\x2, \y1) {\small{$r_\asmli^I$}};
    
  \draw [double, latex-latex] (qci) -- node[right] {\small{$\mathbb{S}^q$}} (qai);
  \draw [double, latex-latex, red] (qco) -- node[left] {\small{$\mathbb{S}^q$}}(qao);
  \draw [double, latex-latex, red] (rci) --node[left] {\small{$\mathbb{S}^r$}} (rai);
  \draw [double, latex-latex] (rco) --node[right] {\small{$\mathbb{S}^r$}} (rao);
  \draw [double, latex-latex, red] (sc) --node[right] {\small{$R$}} (sa);
  \draw [double, latex-latex, red] (sc1) --node[left] {\small{$R$}} (sa1);
  \draw [double, latex-latex, red] (sc2) --node[right] {\small{$R$}} (sa2);
  \draw [double, latex-latex, red] (sc3) --node[left] {\small{$R$}} (sa3);
  
  \draw [-stealth] (qci) -- node[above] {\small{$I_S$}} (sc);
  \draw [-stealth] (sc)  -- node[above] {\tiny{\kwd{alloc} $\rb$}}
                      node[below] {\tiny{\kwd{store} $\rb$}}(sc1);
  \draw [-stealth] (sc1) -- node[above] {\small{$X_S$}} (qco);
  \draw [-stealth] (rco) -- node[above] {\small{$Y_S$}} (sc2);
  \draw [-stealth] (sc2) -- node[above] {\tiny{\kwd{free} $\rb$}} (sc3);
  \draw [-stealth] (sc3) -- node[above] {\small{$F_S$}} (rci);
  \draw [-stealth, line width = 2, dashed] (qco) -- (rco);

  \draw [-stealth, red] (qai) -- node[above] {\small{$I_\asmli$}} (sa);
  \draw [-stealth, red] (sa)  -- node[above] {\tiny{\kwd{Pallocframe}}}
                      node[below] {\tiny{\kwd{... Pcall}}}(sa1);
  \draw [-stealth, red] (sa1) -- node[above] {\small{$X_\asmli$}} (qao);
  \draw [-stealth, red] (rao) -- node[above] {\small{$Y_\asmli$}} (sa2);
  \draw [-stealth, red] (sa2) -- node[above] {\tiny{\kwd{Pfreeframe}}}
                      node[below] {\tiny{\kwd{Pret}}} (sa3);
  \draw [-stealth, red] (sa3) -- node[above] {\small{$F_\asmli$}} (rai);
  \draw [-stealth, line width = 2, dashed] (qao) -- (rao);

\end{tikzpicture}
\caption{Open Simulation between the Optimized Server and its Specification}
\label{fig:server-sim}
\end{figure}

Now, we need to prove the following forward simulation. The most
important points of the proof are how \kro enables optimizations and
how \kinjp preserves memory across external calls.

%
\begin{theorem}\label{the:l2sim}
  $\osims{\scc}{\Sspec}{\sem{\code{server\_opt.s}}}$.
\end{theorem}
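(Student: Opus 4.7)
The plan is to construct the open forward simulation by instantiating Definition~\ref{def:open-sim} with a Kripke relation $R$ whose world is the $\scc$-world (in particular, a $\kinjp$ world $(j, m, \tm)$ together with the signature and incoming register set needed by $\kcainjp$). The invariant $R$ will pair each of the four specification states $\kwd{Calle}$, $\kwd{Callp}$, $\kwd{Retp}$, $\kwd{Rete}$ with the corresponding assembly configuration $\regset@\tm$ as depicted in~\figref{fig:server-sim}, requiring that \emph{(i)} the source memory injects into the target memory under the current $j$, \emph{(ii)} the stack pointer $\regset(\spreg)$, saved return address, callee-save registers, and return PC are recorded so that they can be restored on exit, and \emph{(iii)} in the $\kwd{Callp}$ and $\kwd{Retp}$ states, the freshly allocated source block $\rb$ is mapped by $j$ to the slot $\spreg + 8$ of the current target frame, while the remaining slots of that frame (return address, spilled \code{RBX}, link to caller frame) lie outside the image of $j$.

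For the incoming side, the unpacking of $\scc$ yields for free: (a) $\kwd{ro-valid}(se, m)$ on the source memory from $\kro$, (b) a signature-typing hypothesis from $\kwt$, and (c) the $\kcainjp$ obligations linking $\cquery{\vptr{b_e}{0}}{\intptrvoidsig}{[\Vint i, v_f]}{m}$ to $\asmquery{\regset}{\tm}$ via some injection $j_0$, ensuring in particular that $\regset(\pcreg)$ points to $\kwd{encrypt}$ and $\regset(\krdireg) = \Vint i$, $\regset(\krsireg) = v_f$ up to injection. From this $I_S$ and $I_\asmli$ immediately produce matching internal states satisfying $R$. The final $\kasm_\kinjp$ layer is dispatched essentially by reflexivity since it is only composed on the target interface.

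The first big-step transition (alloc+XOR+store) is simulated by the sequence $\kwd{Pallocframe}\,\texttt{24}\,\texttt{16}\,\texttt{0}; \ldots; \kwd{Pmov}\,\krdireg\,\texttt{8(RSP)}$. Here we extend the injection $j_0$ to $j_1$ that maps the new source block $\rb$ to offset $8$ of the new target frame block; the freshness side-conditions come from the fact that both allocations produce blocks not previously present. The critical optimization-specific step is justifying that the source value $i \,\kwd{XOR}\, m[b_k]$ and the target value $i \,\kwd{XOR}\, 42$ are equal: this is exactly where $\kwd{ro-valid}(se, m)$ from $\kro$ is consumed, since the symbol table marks $\code{key}$ as a read-only global with initial value $42$, forcing $m[b_k] = 42$. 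The free/return transitions and $F_S$/$F_\asmli$ are symmetric: deallocating $\rb$ in the source corresponds to $\kwd{Pfreeframe}$ in the target, and the final register restoration clauses of $\scname{R}_\kcainjp^r$ (stack pointer, PC from \kwd{RA}, callee-saved registers) line up with the values we recorded in $R$ at initialization.

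The hardest step, as expected, is re-establishing $R$ across the external callback $\kwd{Callp} \to \kwd{Retp}$. At the call site we must exhibit an outgoing $\scc$-query: the argument passed to \code{process} is $\vptr{\rb}{0}$ on the source and $\spreg + 8$ on the target, which agree under $j_1$ by construction; the remaining $\scname{R}_\kcainjp^q$ obligations on outgoing argument slots being freeable and out-of-reach hold because those slots do belong to the current frame and are unmapped. The rely condition is precisely the $\kinjp$-accessibility $\injpacc{(j_1, m_1, \tm_1)}{(j_2, m_2, \tm_2)}$ furnished by the environment, from which we need: that the returned $(j_2, m_2, \tm_2)$ still contains a valid image of $\rb$ so we can continue, and that all the private slots of the current target frame (return address, spilled \code{RBX}, caller-frame pointer, and the key global $b_k$) are untouched, so that the subsequent $\kwd{Pfreeframe}$ and $\kwd{Pret}$ succeed and land on the original $\kwd{RA}$. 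Both properties follow from the $\kwd{unmapped}/\kwd{out-of-reach}$ clauses of~\defref{def:injp}, together with the fact that our chosen $R$ recorded those slots as out-of-reach in $j_1$. Finally, the $\rovalid{se}{\cdot}$ property propagates through the call via the $\macc{\cdot}{\cdot}$ component of $\accsymb_\kinjp$, so no further work is needed for $\kro$ on the return. This completes the five clauses of~\defref{def:open-sim}, yielding $\osims{\scc}{\Sspec}{\sem{\code{server\_opt.s}}}$.
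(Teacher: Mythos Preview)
Your proposal captures the essential content of the paper's argument---the four-state invariant, the use of $\kro$ to pin $m[b_k]=42$, the injection extension mapping $\rb$ to offset $8$ of the fresh target frame, and the $\kinjp$ out-of-reach protection of the private frame slots across the callback---but it differs from the paper in its \emph{decomposition strategy}, and your handling of the outer layers of $\scc$ is too casual.

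The paper does not build the invariant directly over the full $\scc$-world. Instead it first expands $\scc = \kro \compsymb \kwt \compsymb \kcainjp \compsymb \kasm_\kinjp$, commutes $\kro$ and $\kwt$ (Lemma~\ref{lem:inv-comm}), and then uses vertical compositionality (Theorem~\ref{thm:v-comp}) to peel off $\kwt$ and $\kasm_\kinjp$ via the self-simulations $\osims{\kwt}{L_{\texttt{S}}}{L_{\texttt{S}}}$ and $\osims{\kasm_\kinjp}{\sem{\code{server\_opt.s}}}{\sem{\code{server\_opt.s}}}$ (the latter by Theorem~\ref{thm:self-sim}). The core obligation is then $\osims{\kro\compsymb\kcainjp}{L_{\texttt{S}}}{\sem{\code{server\_opt.s}}}$, and the invariant $R$ lives only in $W_\kro \times W_\kcainjp$. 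Your statement that ``the final $\kasm_\kinjp$ layer is dispatched essentially by reflexivity'' glosses over a real issue: at the incoming side you are handed an \emph{arbitrary} $\kasm_\kinjp$ world, so the actual target state $\regset@\tm$ is related to the one you describe only through a further injection that your invariant would have to thread through every step and through the external call. The paper's decomposition buys exactly the avoidance of this bookkeeping; your direct approach is not wrong in principle but would require a strictly more complex invariant than the one you sketch.

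Two minor inaccuracies in your invariant description: the server's frame does not spill \code{RBX} (the private out-of-reach slots are only offsets $0\!\le\!o\!<\!8$ and $16\!\le\!o\!<\!24$, holding the saved $\spreg$ and $\rareg$; offset $8$ is the public image of $\rb$), and $b_k$ is not a frame slot nor protected by out-of-reach---it is read only in the first internal transition and plays no role in the invariant after $\kwd{Callp}$.
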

At the top level, we expand $\scc$ to $\kro \compsymb \kwt \compsymb \kcainjp
\compsymb \kasm_\kinjp$ and switch the order of $\kro$ and $\kwt$
by~\lemref{lem:inv-comm}.
By the vertical compositionality (\thmref{thm:v-comp}), we first establish
$\osims{\kwt}{L_{\texttt{S}}}{L_{\texttt{S}}}$ with the well-typed outgoing
arguments and return value.
$\osims{\kasm_\kinjp}{\sem{\code{server\_opt.s}}}{\sem{\code{server\_opt.s}}}$
is proved by~\thmref{thm:self-sim}.

We are left with proving
$\osims{\kro\compsymb\kcainjp}{\Sspec}{\sem{\code{server\_opt.s}}}$.
%
%
That is, we need to show the simulation diagram in~\figref{fig:server-sim} holds 
where $\mathbb{S} = \kro \compsymb \kcainjp$ (which mirrors~\figref{fig:server-refinement}).
Here, the given assumptions and the conclusions to be
proved are represented as black and red arrows, respectively.
For the proof, we need an invariant $R \in \krtype{W_{\kro \compsymb
    \kcainjp}}{S_S}{\kregset \times \kmem}$.
The most important point is that \kwd{ro} and \kinjp play essential
roles in establishing the invariant.
First, $\krovalid$ is propagated from the initial source query
$q_\cli^I$ to internal program states.
This guarantees that the value of \kwd{key} read from the source
memory states is always $42$, hence matching the constant in \code{Pxori 42
  RDI} in $\code{server\_opt.s}$.
Second, \kinjp is essential for deriving that memory locations in the target stack frame
with offset $o$ ($o < 8$ or $16 \leq o$) are unchanged since
they are designated out-of-reach by $R$. Therefore, the private stack
values of the server are protected. 
For the unoptimized server, the only difference is that we decompose
$\osims{\kro\compsymb\kcainjp}{\Sspec}{\sem{\code{server.s}}}$
into $\osims{\kro}{\Sspec}{\Sspec}$ which trivially
holds and $\osims{\kcainjp}{\Sspec}{\sem{\code{server.s}}}$ which can
be proved without the help of \kro.

\subsection{End-to-end Correctness Theorem}

We first prove the following source-level refinement where $\Topspec$ is the
top-level specification.
%
Its proof follows the same pattern as~\thmref{the:l2sim} but is
considerably simpler because the source and target
semantics share the same $\cli$ interface. 

\begin{lemma}\label{lem:source-veri}
  $\osims{\kro \compsymb \kwt \compsymb \kc_\kinjp}{\Topspec}{\sem{\code{client.c}} \semlink \Sspec}$.
\end{lemma}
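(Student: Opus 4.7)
The plan is to mirror the proof of~\thmref{the:l2sim} while exploiting the simplification that the source and target both live at the $\cli$ interface, so no calling-convention translation is needed and the composed KMR $\kinjp$ is the only nontrivial relation to track. First I would unfold the convention as $\kro \compsymb \kwt \compsymb \kc_\kinjp$ and apply the vertical composition theorem (\thmref{thm:v-comp}) to reduce the goal to the three pieces $\osims{\kwt}{\Topspec}{\Topspec}$, $\osims{\kro}{\Topspec}{\Topspec}$, and $\osims{\kc_\kinjp}{\Topspec}{\sem{\code{client.c}} \semlink \Sspec}$. The $\kwt$ piece is a self-simulation of $\Topspec$ that follows immediately by inspecting the signature of \code{request}. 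For $\kro$, the assertion that \code{key} holds its initial value is preserved trivially by $\Topspec$ since $\Topspec$ itself does not touch the global constant \code{key}, and the modality over replies is guaranteed by $\macc{m}{m'}$ being reflexive on the steps of $\Topspec$.

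The main work lies in the third subgoal. I would construct a Kripke invariant $R \in \krtype{W_\kinjp}{S_{\Topspec}}{S_{\sem{\code{client.c}} \semlink \Sspec}}$ by case-splitting on the phase of the linked system: (1) just received \code{request}$(i)$ and is inside \code{client.c} about to call \code{encrypt}; (2) inside $\Sspec$ after allocating $\rb$ and storing $i \;\kwd{XOR}\; 42$, about to issue the callback to \code{process}; (3) back inside \code{client.c} executing \code{process}, updating \code{result}; (4) returned from the callback, inside $\Sspec$ about to \code{free} $\rb$ and return; (5) back inside \code{client.c} preparing to return $i$. In each phase $R$ tracks the currently live injection, the allocated block $\rb$, the global blocks for \code{result} and \code{key}, and the relation between the abstract $\Topspec$ state (essentially the original memory $m$ with \code{result} set to $i \;\kwd{XOR}\; 42$ at the end) and the concrete linked memory. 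Since the callback and its return are entirely internal to the semantic link, from the perspective of this open simulation there are no external calls to discharge---clause (4) of \defref{def:open-sim} is vacuous except at the top-level incoming/outgoing queries.

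With $R$ in hand, verifying clauses (1)--(5) of \defref{def:open-sim} reduces to following the internal transitions of the linked system. The only step demanding attention is when $\Sspec$ performs the callback to \code{process}: at that moment the linked semantics routes the query internally through $\semlink$, and we must argue that the injection can be extended to map the newly allocated $\rb$ to itself (identity on fresh blocks) so that \code{process}'s read from \code{*r} recovers the value $i \;\kwd{XOR}\; 42$ that $\Topspec$ would assign to \code{result}. At the big-step level, $\Topspec$ matches this interleaving with a single internal transition, so stuttering on the specification side is used throughout until the final return.

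The hard part will be getting the invariant $R$ right across the callback boundary and justifying the injection extension for $\rb$: we need the private stack block allocated by $\Sspec$ to stay unmapped until \code{process} is entered, then mapped identically so that \code{process}'s dereference succeeds, and finally collapsed again when $\Sspec$ frees it---all while ensuring $\accsymb_\kinjp$ holds between the incoming and outgoing worlds at the module boundary. Because the callback is internal to the link, the protection enforced by $\accsymb_\kinjp$ only needs to be checked at the outermost call/return, where \code{result} lies in the public (mapped) region and everything else either lies out-of-reach or is newly allocated; this is exactly the setting $\kc_\kinjp$ was designed to handle, so once $R$ is defined, the arithmetic on unmapped/out-of-reach regions is routine.
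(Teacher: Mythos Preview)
Your high-level strategy matches the paper's: discharge $\kwt$ (and in your version also $\kro$) by self-simulation of $\Topspec$, then establish the main simulation by exhibiting an invariant $R$ that tracks the phases of the linked execution. The paper gives no detailed argument here, only remarking that the proof follows the same pattern as \thmref{the:l2sim} and is simpler because both sides share the $\cli$ interface.

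One technical point in your plan is wrong and would derail the formalization. You write that the block allocated by $\Sspec$ should ``stay unmapped until \code{process} is entered, then mapped identically \ldots and finally collapsed again when $\Sspec$ frees it.'' This misreads $\accsymb_\kinjp$: injections grow monotonically ($j \subseteq j'$ is a clause of \defref{def:injp}) and are never retracted. What actually happens is that $\Topspec$ and the target-side $\Sspec$ each allocate a fresh block at the corresponding $\kwd{Calle}\to\kwd{Callp}$ step; the invariant extends $j$ \emph{once} at that moment, mapping the source block to the target block, and this mapping persists through the callback and even past the $\kwd{free}$---the freed cells lose permission while the injection entry remains. Consequently the ``hard part'' you anticipate largely evaporates: one extension of the injection, no unmapping/remapping. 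And since $\Topspec$ has no $X$/$Y$ transitions at all, clause~(4) of \defref{def:open-sim} is entirely vacuous, not merely ``except at the top-level queries.''

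A minor deviation from the paper's pattern: in \thmref{the:l2sim} (for the optimized server) $\kro$ is \emph{not} peeled off but kept together with the main convention, because $\krovalid$ is needed inside the invariant to argue that the dynamic read of \code{key} yields~$42$. In \lemref{lem:source-veri} both $\Topspec$ and $\Sspec$ read $m[b_k]$ symbolically and the values agree via the injection, so your decision to factor $\kro$ out as a separate self-simulation is also sound; either decomposition works here.
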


We then prove the forward simulation between the top-level specification
and the linked assembly as depicted
in~\figref{fig:running-exm-refinement}, which is immediate from the horizontal
compositionality and adequacy for assembly
described in~\secref{sssec:framework},~\thmref{thm:compcerto-correct} and~\thmref{the:l2sim}:
%
%
%
\begin{lemma}\label{lem:veri-link}
  $\osims{\scc}{\sem{\code{client.c}} \semlink L_\texttt{S}}{\sem{\code{client.s} + \code{server\_opt.s}}}$.
\end{lemma}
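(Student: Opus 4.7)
The plan is to combine the two refinements already established---Theorem~\ref{thm:compcerto-correct} applied to \code{client.c} and Theorem~\ref{the:l2sim} applied to \code{server\_opt.s}---by horizontal composition of open simulations, and then to adjoin adequacy for assembly by vertical composition. Crucially, both refinements already use the same top-level simulation convention $\scc$ at both incoming and outgoing sides, so no further refinement of simulation conventions is needed and the rely-guarantee conditions on external calls match automatically. This is precisely the scenario that motivated the design of a direct refinement in~\secref{ssec:key-idea-direct-refinement}, and the lemma is essentially the payoff of that design.

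Concretely, I would proceed in four steps. First, instantiate Theorem~\ref{thm:compcerto-correct} with the compilation of \code{client.c} to \code{client.s} to obtain $\osims{\scc}{\sem{\code{client.c}}}{\sem{\code{client.s}}}$. Second, restate Theorem~\ref{the:l2sim} as $\osims{\scc}{L_\texttt{S}}{\sem{\code{server\_opt.s}}}$. Third, since both simulations share $\scc$ on both sides, applying the horizontal compositionality rule recalled in~\secref{sssec:framework} yields $\osims{\scc}{\sem{\code{client.c}} \semlink L_\texttt{S}}{\sem{\code{client.s}} \semlink \sem{\code{server\_opt.s}}}$. Fourth, adequacy for assembly (also recalled in~\secref{sssec:framework}) gives $\osims{\kid}{\sem{\code{client.s}} \semlink \sem{\code{server\_opt.s}}}{\sem{\code{client.s} + \code{server\_opt.s}}}$. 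Vertically composing the last two by Theorem~\ref{thm:v-comp} produces a simulation with convention $\scc \compsymb \kid$ on each side; Theorem~\ref{thm:sim-refine} together with the trivial equivalence $\scequiv{\scc \compsymb \kid}{\scc}$ then collapses this to the desired $\osims{\scc}{\sem{\code{client.c}} \semlink L_\texttt{S}}{\sem{\code{client.s} + \code{server\_opt.s}}}$.

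The hard part is essentially nonexistent at this stage: all the genuinely technical work---proving uniformity and transitivity of \kinjp in~\secref{sec:injp}, deriving $\scc$ as a direct refinement for the full CompCert in~\secref{sec:refinement}, and verifying the optimized server against $L_\texttt{S}$ in Theorem~\ref{the:l2sim}---has already been discharged. The only residual obstacle is the bookkeeping verification that $\scc \compsymb \kid$ and $\scc$ are equivalent simulation conventions, which is immediate from the definition of $\compsymb$ since pairing with the identity world leaves both the query/reply relations and the accessibility relation unchanged. This lemma thus serves to illustrate, rather than to prove anything substantially new, that direct refinements compose end-to-end through a short and uniform chain of horizontal and vertical composition steps.
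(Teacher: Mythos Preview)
Your proposal is correct and follows exactly the same approach as the paper: horizontally compose the compiler correctness for \code{client.c} (Theorem~\ref{thm:compcerto-correct}) with the server refinement (Theorem~\ref{the:l2sim}), then vertically compose with adequacy for assembly. The paper treats this lemma as ``immediate'' from those ingredients and does not even spell out the $\scc \compsymb \kid \equiv \scc$ cleanup step you mention, which is indeed trivial.
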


For end-to-end direct refinement, we need to absorb
\lemref{lem:source-veri} into ~\lemref{lem:veri-link}. 
The following theorem is easily
derived by applying Lemmas~\ref{lem:wt},
\ref{lem:roinjp-trans} and \ref{lem:inv-comm}.
\begin{lemma}\label{lem:toprefine}
  $\scequiv{\scc}{\kro \compsymb \kwt \compsymb \kc_\kinjp \compsymb \scc}$.
\end{lemma}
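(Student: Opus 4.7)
The plan is to reduce the right-hand side $\kro \compsymb \kwt \compsymb \kc_\kinjp \compsymb \scc$ to $\scc$ by bringing the two copies of $\kro \compsymb \kc_\kinjp$ next to each other and collapsing them via \lemref{lem:roinjp-trans}, while manoeuvring the $\kwt$s out of the way using \lemref{lem:wt} and \lemref{lem:inv-comm}. Since every rule invoked is an equivalence (not just a refinement), the same chain of rewrites establishes both $\screfine{\scc}{\kro \compsymb \kwt \compsymb \kc_\kinjp \compsymb \scc}$ and the reverse, so it suffices to exhibit the computation in one direction.

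After unfolding $\scc$, the right-hand side becomes
\[
\kro \compsymb \kwt \compsymb \kc_\kinjp \compsymb \kro \compsymb \kwt \compsymb \kcainjp \compsymb \kasm_\kinjp .
\]
First, I would apply \lemref{lem:inv-comm} to swap the middle $\kro \compsymb \kwt$ into $\kwt \compsymb \kro$, and then use \lemref{lem:wt}(2) (read right-to-left) to push the newly exposed $\kwt$ leftwards past $\kc_\kinjp$. This yields $\kro \compsymb \kwt \compsymb \kwt \compsymb \kc_\kinjp \compsymb \kro \compsymb \kcainjp \compsymb \kasm_\kinjp$, and the two adjacent copies of $\kwt$ can then be collapsed into a single one using \lemref{lem:wt}(1) (idempotence of $\kwt$ is a direct instance of the equivalence $\scname{R}_K \compsymb \kwt \equiv \kwt \compsymb \scname{R}_K \compsymb \kwt$ by choosing $\scname{R}_K$ so that $\scname{R}_K \compsymb \kwt$ already is $\kwt$).

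At that point the expression reads $\kro \compsymb \kwt \compsymb \kc_\kinjp \compsymb \kro \compsymb \kcainjp \compsymb \kasm_\kinjp$. The key step is now to expose an inner $\kro \compsymb \kc_\kinjp \compsymb \kro \compsymb \kc_\kinjp$ pattern. This is done by expanding $\kcainjp$ into its leading factor $\kc_\kinjp$ (as in step~(9)--(10) of the outgoing-side derivation of~\secref{ssec:gen-direct-refinement}, which gave $\kro \compsymb \kwt \compsymb \kc_\kinjp \compsymb \kcl \compsymb \klm \compsymb \kma \compsymb \kasm_\kinjp \equiv \scc$); shuffling the intermediate $\kwt$ past $\kc_\kinjp$ with \lemref{lem:wt}(2); and then firing \lemref{lem:roinjp-trans} to rewrite $\kro \compsymb \kc_\kinjp \compsymb \kro \compsymb \kc_\kinjp$ as $\kro \compsymb \kc_\kinjp$. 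A final round of \lemref{lem:wt}(2) and \lemref{lem:inv-comm} restores the $\kwt$ to its leftmost position, and the remaining fragment $\kro \compsymb \kwt \compsymb \kc_\kinjp \compsymb \kcl \compsymb \klm \compsymb \kma \compsymb \kasm_\kinjp$ folds back to $\scc$.

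The proof is essentially symbolic bookkeeping; the main obstacle I anticipate is simply keeping track of which direction of each equivalence is being used at each step and confirming the idempotence of $\kwt$ can be extracted cleanly from \lemref{lem:wt} (if not, one extra trivial instance of \lemref{lem:wt}(1) with the identity convention fills the gap). No new semantic reasoning about \kinjp or about queries/replies is required, because all of the heavy lifting was already done in the transitivity proof of $\kro \compsymb \kc_\kinjp$ (\lemref{lem:roinjp-trans}).
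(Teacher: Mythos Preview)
Your approach is correct and is essentially the paper's, which simply cites Lemmas~\ref{lem:wt}, \ref{lem:roinjp-trans}, and \ref{lem:inv-comm}; you have correctly filled in the bookkeeping, including the (implicit in the paper) unfolding of $\kcainjp$ needed to expose a second $\kc_\kinjp$ for \lemref{lem:roinjp-trans}. Your worry about $\kwt$-idempotence is unnecessary, since combining both parts of \lemref{lem:wt} already gives $\kwt \compsymb \kwt \compsymb \kc_\kinjp \equiv \kwt \compsymb \kc_\kinjp \compsymb \kwt \equiv \kc_\kinjp \compsymb \kwt \equiv \kwt \compsymb \kc_\kinjp$ without any additional lemma.
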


The final end-to-end simulation is immediate by vertically
composing~\lemref{lem:source-veri}, ~\lemref{lem:veri-link} and
refining the simulation convention using ~\lemref{lem:toprefine}.

\begin{theorem}\label{thm:cs-final}
  $\osims{\scc}{\Topspec}{\sem{\code{client.s} + \code{server\_opt.s}}}$.
\end{theorem}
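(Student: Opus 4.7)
The goal is to combine the source-level refinement (\lemref{lem:source-veri}) with the linked target refinement (\lemref{lem:veri-link}) into a single direct simulation whose convention is exactly $\scc$ on both the incoming and outgoing sides. Since both self-simulation lemmas are already stated in the form $\osims{-}{-}{-}$ (same convention at both boundaries), the basic machinery we need is vertical composition followed by a refinement of the composed convention.

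The plan is to proceed in three short steps. First, apply \thmref{thm:v-comp} to \lemref{lem:source-veri} and \lemref{lem:veri-link}: instantiating $\scname{R}_{12} = \scname{S}_{12} = \kro \compsymb \kwt \compsymb \kc_\kinjp$ and $\scname{R}_{23} = \scname{S}_{23} = \scc$, with the intermediate LTS being $\sem{\code{client.c}} \semlink L_{\texttt{S}}$, yields
\[
  \osim{(\kro \compsymb \kwt \compsymb \kc_\kinjp) \compsymb \scc}{(\kro \compsymb \kwt \compsymb \kc_\kinjp) \compsymb \scc}{\Topspec}{\sem{\code{client.s} + \code{server\_opt.s}}}.
\]
Second, invoke \lemref{lem:toprefine}, which gives $\scequiv{\scc}{\kro \compsymb \kwt \compsymb \kc_\kinjp \compsymb \scc}$, and hence $\screfine{\scc}{(\kro \compsymb \kwt \compsymb \kc_\kinjp) \compsymb \scc}$ as well as the reverse refinement. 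Third, apply \thmref{thm:sim-refine} twice (once at each boundary) to rewrap the simulation: on the outgoing side we weaken the more specific convention $(\kro \compsymb \kwt \compsymb \kc_\kinjp) \compsymb \scc$ to $\scc$, and on the incoming side we strengthen it symmetrically, producing $\osims{\scc}{\Topspec}{\sem{\code{client.s} + \code{server\_opt.s}}}$ as required.

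There is no genuine obstacle here; all the real work has been done by the preceding machinery. The only point of care is to get the directions of the convention refinement right when applying \thmref{thm:sim-refine} (the outgoing side wants $\screfine{\scname{R}'_A}{\scname{R}_A}$, the incoming side the opposite), but this is handled precisely because \lemref{lem:toprefine} provides an \emph{equivalence} rather than a one-sided refinement. The role of \lemref{lem:toprefine} itself is the conceptually interesting ingredient: it relies on transitivity of \kinjp to absorb the leading $\kro \compsymb \kwt \compsymb \kc_\kinjp$ into $\scc$, which is exactly why a direct refinement (as opposed to a concatenation exposing the C-level intermediate semantics) is usable here.
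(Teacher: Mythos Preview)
Your proposal is correct and matches the paper's own proof exactly: the paper states that the theorem ``is immediate by vertically composing~\lemref{lem:source-veri}, \lemref{lem:veri-link} and refining the simulation convention using~\lemref{lem:toprefine},'' which is precisely your three-step argument (via \thmref{thm:v-comp} and \thmref{thm:sim-refine}) spelled out in more detail. Your observation that the equivalence in \lemref{lem:toprefine} supplies both refinement directions needed by \thmref{thm:sim-refine} is also on point.
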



\subsection{Verification of the Mutually Recursive Client and Server}
\label{ssec:app-mutual-client-server}

\begin{figure}
  \begin{subfigure}{0.37\textwidth}
\centering   
\begin{lstlisting}[language = C]
  /* client.c */
  #define N 10
  int input[N] = {...};
  int result[N];
  int i;
  void encrypt(int i, 
         void(*p)(int*));
\end{lstlisting}
\end{subfigure}
\begin{subfigure}{.6\textwidth}
\centering
\begin{lstlisting}[language = C]
  void request(int *r) {
    if (i == 0) encrypt(input[i++], request);
    else if (0 < i && i < N) {
      result[i-1] = *r;
      encrypt(input[i++], request); 
    } else result[i-1] = *r;
  }
\end{lstlisting}
\end{subfigure}
\caption{Client with Multiple Encryption Requests}
\label{fig:clientMR}
\end{figure}

We introduce a variant of the running example with mutual recursion
in~\figref{fig:clientMR}. The server remains the same while the client
is changed. \code{request} itself is passed as a callback function to
\code{encrypt}, resulting in recursive calls to \code{encrypt} for
encrypting and storing an array of values.
To perform the same end-to-end verification for this example, we only
need to define a new top-level specification $\Topspec'$ and prove
$\osims{\kro \compsymb \kwt \compsymb
  \kc_\kinjp}{\Topspec'}{\sem{\code{client.c}} \semlink \Sspec}$. Other
proofs are either unchanged (e.g., the refinement of the server) or can be
derived from~\thmref{thm:compcerto-correct}, 
full compositionality and adequacy. More detailed proofs can be found 
in~\apdxref{sec:server-sim}.

\section{Generality and Limitations of Our Approach}
\label{sec:generality}


We explain how our approach may be generalized
to support other memory models, compilers and optimizations for first-order
languages. We also
discuss the limitations of our approach.


\subsection{Supporting Different Memory Models and Compilers}

At a high level, \kinjp is simply a general and transitive relation on
evolving \emph{functional memory invariants} (represented as
injections) enhanced with \emph{memory protection} to guard against
modification to private memory by external calls. Many other
first-order memory models can be viewed as employing either a richer
or a simplified version of injection as memory invariants and equipped
with a similar notion of memory protection. For example, the memory
model of CompCertS~\cite{besson2015concrete} extends injections
to map \emph{symbolic values}. The \emph{memory refinements} in the
$CH_2O$ memory model~\cite{krebbers2016formal} function like injections except
that pointer offsets are represented as abstract
\emph{paths} pointing into aggregated data structures. The memory model
defined by~\citet{kang2015formal} explicitly divides a memory state
into public and private memory. Its memory invariant is an equivalence
relation between public source and target memory which is essentially
an identity injection.
Therefore, uniform KMRs may be defined for those memory models as
variants of \kinjp.

To prove the transitivity of these KMRs, the key is the construction
of interpolating memory states after external calls as described
in~\secref{ssec:injp-trans}. This construction is based on the
following general ideas: \emph{1)} as KMRs are transitively composed,
more memory gets protected, \emph{2)} the private memory should be
identical to the initial memory, and \emph{3)} the public memory
should be projected from the updated source memory via memory
invariants. As we can see, these ideas are applicable to any memory
model with functional memory invariants and a notion of private
memory. Therefore, our approach should work for the 
aforementioned memory models and compilers based on them.



\subsection{Supporting Additional Optimization Passes}
Given any new optimization pass whose additional rely-guarantee
condition can be represented as a semantics invariant $I$, we may
piggyback $I$ onto an enriched \kinjp to achieve direct refinement.
To see that, note that any $I$ consists of two parts: a condition for
initial queries (e.g., \kwd{ro-valid} in \kro) and a condition for
replies (e.g., \kwd{mem-acc} in \kro).
By extending \kinjp to include the latter (just like that
\defref{def:injp} includes \kwd{mem-acc}), if the enriched \kinjp is
still transitive, then we can easily prove the following proposition
which is generalized from~\lemref{lem:roinjp-trans}.
\begin{proposition}\label{thm:trans-I}
  For any $I : \sctype{\cli}{\cli}$ and enriched
  $\kinjp$, $\scequiv{I \compsymb \kc_\kinjp}{I \compsymb \kc_\kinjp
    \compsymb I \compsymb \kc_\kinjp}$.
\end{proposition}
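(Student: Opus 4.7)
The plan is to mirror the proof of Lemma~\ref{lem:roinjp-trans} but abstracted over an arbitrary semantic invariant $I = \simconv{W_I}{I^q}{I^r}$ and using the enriched $\kinjp$ (whose transitivity we may assume by hypothesis) in place of the original one. Since $I^r$ has been absorbed into the accessibility relation of the enriched $\kinjp$ (in the same way $\kwd{mem-acc}$ is part of $\accsymb_\kinjp$ in Definition~\ref{def:injp}), the only genuinely new reasoning concerns $I^q$, and this mirrors the two red-annotated proof obligations in Figure~\ref{fig:trans-ro}.

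For the direction $\screfine{I \compsymb \kc_\kinjp}{(I \compsymb \kc_\kinjp) \compsymb (I \compsymb \kc_\kinjp)}$, I would follow Figure~\ref{fig:trans-ro2}. Given worlds $(w_I, w_\kinjp)$ witnessing the LHS on queries $(q_1, q_3)$, I take $q_1$ itself as the interpolating query and reuse $w_I$ as the world for both $I$ components on the RHS; the query condition $I^q(w_I)(q_1,q_1)$ holds trivially because $q_1$ is the source query of the given LHS relation and $I$ relates identical queries. The $\kinjp$ world is decomposed as in Lemma~\ref{lem:injp-comp-refine-injp} via a partial identity and $w_\kinjp$ itself. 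For the reply direction, when the RHS yields evolved worlds with matching replies, transitivity of the enriched $\kinjp$ composes them into an evolved $w_\kinjp'$, and the absorbed reply condition $I^r$ falls out directly from the accessibility $\accsymb_\kinjp$, re-establishing the LHS reply. For the converse direction, I would follow Figure~\ref{fig:trans-ro1}: from the two RHS hypotheses $(q_1,q_2)$ and $(q_2,q_3)$ with worlds $(w_I^1, w_\kinjp^{12})$ and $(w_I^2, w_\kinjp^{23})$, I choose the LHS world to be $(w_I^1, \comp{w_\kinjp^{12}}{w_\kinjp^{23}})$, which validates the LHS query relation since $I^q(w_I^1)$ holds by assumption; on replies, Lemma~\ref{lem:injp-refine-injp-comp} (adapted to the enriched $\kinjp$) supplies an interpolating memory together with a decomposition of the evolved world, and the $I^r$ conditions for both $I$ copies are recovered from the absorbed reply constraint inside $\accsymb_\kinjp$ on each side.

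The main obstacle is verifying that Lemma~\ref{lem:injp-refine-injp-comp}, whose construction of the interpolating memory $m_2'$ projects values via $j_{12}$ and keeps private regions fixed, still goes through once $\accsymb_\kinjp$ has been strengthened with $I^r$-style reply conditions; concretely one must check that the constructed $m_2'$ satisfies the stronger accessibility both with respect to $m_2$ and on both sides of the decomposition. This is however exactly the content of the transitivity hypothesis the proposition places on the enriched $\kinjp$, so modulo that assumption the argument proceeds mechanically and the treatment of $I^q$ (being a pure invariant on identical source memories) never introduces new difficulties beyond those already handled in Lemma~\ref{lem:roinjp-trans}.
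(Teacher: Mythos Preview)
Your proposal is correct in substance and follows the paper's approach: you invoke the same two observations (the query condition of $I$ propagates when the interpolating memory is chosen to be a copy of $m_1^q$, and the reply condition of $I$ is subsumed by the enriched $\accsymb_\kinjp$), and you correctly defer the memory-interpolation work to the transitivity hypothesis on the enriched $\kinjp$.

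One bookkeeping slip: you have the two refinement directions swapped. By the paper's definition, $\screfine{\scname{R}}{\scname{S}}$ requires you to \emph{start} from a world for $\scname{S}$ and \emph{construct} one for $\scname{R}$. Hence the argument you place under the heading $\screfine{I \compsymb \kc_\kinjp}{(I \compsymb \kc_\kinjp)^2}$---starting from a single world, producing an interpolating state via Lemma~\ref{lem:injp-comp-refine-injp}, and then composing replies---is actually the proof of $\screfine{(I \compsymb \kc_\kinjp)^2}{I \compsymb \kc_\kinjp}$, which is indeed Figure~\ref{fig:trans-ro2}. Symmetrically, your ``converse'' paragraph (given two worlds, compose to one, then on replies decompose via Lemma~\ref{lem:injp-refine-injp-comp}) proves $\screfine{I \compsymb \kc_\kinjp}{(I \compsymb \kc_\kinjp)^2}$, matching Figure~\ref{fig:trans-ro1}. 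Once these labels are exchanged, your argument coincides with the paper's.
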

\noindent The proof follows exactly the steps for
proving~\lemref{lem:roinjp-trans}. It is based on the two observations
we made near the end of~\secref{sssec:sinv-comp}, i.e., \emph{1)} the
properties for initial queries of $I$ hold along with copying of
memory states, and \emph{2)} the properties for replies trivially
hold as they are part of the enriched \kinjp.

\subsection{Limitations}
\label{ssec:limit}

We discuss the limitations of our approach and possible
solutions. First, it does not yet support behavior refinements for
whole programs in CompCert~\cite{compcert}. This is a technical limitation and can be
solved by reducing open simulations into
closed simulations in CompCert. Second, the proof for \kwd{Unusedglob}
assumes that global symbols for removed definitions are preserved as
CompCertO's simulation framework requires the same set of global
symbols throughout compilation. We need to weaken this requirement to
enable removal of global symbols by compilation.
Third, open simulations assume given
\emph{any} input injection $j$, the execution outputs \emph{some}
injection $j'$ related to $j$ by \kinjp. This may not work for memory
models with fixed injection functions~\cite{wang-popl-2022}. 
A possible solution is to enrich \kinjp to
account for this fixed definition. Finally, given a new optimization, if
its rely-guarantee condition cannot be described as a semantic
invariant $I$ or if \kinjp enriched with $I$
becomes intransitive, then direct refinements may not be
derivable. In this case, we may need stronger restrictions on
this optimization for our approach to work.

\section{Evaluation and Related Work}
\label{sec:related}

Our Coq development took about 7 person-months and 18.3k lines of code
(LOC) on top of CompCertO. 
We added 3.7k LOC to prove the transitivity of \kinjp, 3k
LOC to verify the compiler passes 
as discussed in~\secref{ssec:single-pass}, 1.2k LOC for composing
simulation conventions as described in the rest
of~\secref{sec:refinement} and 7.3k LOC for the Client-Server
examples. We also ported CompCertM's example on mutually recursive
summation~\cite{compcertm}, which adds 3.1k LOC~\cite{direct-refinement-tr}.
For now, the cost of examples is relatively high. However, we observe
that a lot of low-level proofs such as pointer arithmetic can be
automated by proof scripts, many proofs with predictable patterns can be
directly derived from the program structures, and a lot of
duplicated lemmas in the examples can be eliminated.
We will carry out those exercises in the future which should simplify the
proofs significantly.
Below we compare our work with other frameworks for
VCC and program verification.
 
\subsection{Verified Compositional Compilation for First-Order Languages}

\def\YES{\textcolor{codegreen}{Yes}\xspace}
\def\NO{\textcolor{red}{No}\xspace}
\def\BYES{\textcolor{codegreen}{\bf Yes}\xspace}
\def\BNO{\textcolor{red}{\bf No}\xspace}
\def\RUSC{\textcolor{brown}{RUSC}\xspace}
\def\CAL{\textcolor{brown}{CAL}\xspace}

\begin{table}
  \caption{Comparison between Work on VCC Based on CompCert}
\small
\begin{tabular}{c c c c c c c}
  & CompComp & CompCertM & CompCertO & CompCertX & {\bf This Work}\\
  \hline
  \hline
  Direct Refinement & \NO & \NO & \NO & \NO & \BYES\\
  Vertical Composition & \YES & \RUSC & \textcolor{brown}{Trivial} & \CAL & \BYES\\
  Horizontal Composition & \YES & \RUSC & \YES & \CAL & \BYES \\
  Adequacy & \NO & \YES & \YES & \YES & \BYES \\
  End-to-end Verification & \NO & \YES & \textcolor{brown}{Unknown} & \CAL & \BYES \\
  Free-form Heterogeneity & \YES & \YES & \YES & \NO & \BYES\\
  Behavior Refinement & \NO & \YES & \NO & \YES & \BNO
\end{tabular}
\label{tab:vcc-compare}
\end{table}


In this work, we are concerned with VCC of first-order imperative
programs with global memory states and support of pointers. A majority
of the work in this setting is based on CompCert. We compare them from the
perspectives listed in the first column of~\tabref{tab:vcc-compare}.
An answer that is not a simple ``Yes'' or ``No'' denotes that special
constraints are enforced to support the given feature. 

\paragraph{Compositional CompCert} 
CompComp supports VCC based on \emph{interaction semantics} which
is a specialized version of open semantics with C
interfaces~\cite{stewart15}. 
%
We have already talked about its merits and limitations in~\secref{ssec:problems}.
It is interesting to note that CompComp can also be obtained based on
our approach by adopting $\kc_\kinjp$ for every compiler pass and
exploiting the transitivity of $\kc_\kinjp$, which does not require
the instrumentation of semantics in CompComp.

\paragraph{CompCertM}
CompCertM supports adequacy and end-to-end verification of mixed C and
assembly programs. A distinguishing feature of CompCertM is Refinement
Under Self-related Contexts or RUSC~\cite{compcertm}. A RUSC relation
is a \emph{fixed} collection of simulation relations. By exploiting
contexts that are self-relating under all of these
simulation relations, horizontal and vertical compositionality are
achieved.
However, refinements based on RUSC relations can be difficult to use
as they are not extensional.
For example, the complete open refinement relation $\leqslant_{R_1 +
  \ldots + R_9}$ in CompCertM carries 9 RUSC relations $R_1, \ldots,
R_9$ (6 for compiler passes and 3 for source-level
verification). To establish the refinement between \kwd{a.s} and its
specification $L_S$, one needs to prove $L_S$ are self-simulating over
\emph{all} 9 simulation relations. This can quickly get out of hand as
more modules and more compiler passes are introduced.
By contrast, we only need to prove direct
refinement \emph{for once} and the refinement is open to
further horizontal or vertical composition.
%
%
%
On the other hand, CompCertM supports behavior refinement of
closed programs which we do not yet (See~\secref{ssec:limit}).

\paragraph{CompCertO}

Vertical composition is a trivial pairing of simulations in CompCertO,
which exposes internal compilation steps. CompCertO tries to alleviate
this problem via ad-hoc refinement of simulation conventions. The
resulting top-level convention is $\scc_{\kwd{CCO}} = \mathcal{R}^*
\cdot \kwt \cdot \kcl \cdot \klm \cdot \kma \cdot \kasm_{\kwd{vainj}}$
where $\mathcal{R} = \kc_\kinjp + \kc_\kinj + \kc_\kext +
\kc_\kwd{vainj} + \kc_\kwd{vaext}$ is a sum of conventions
parameterized over KMRs. In particular, $\kc_\kwd{vaext}$ is an ad-hoc
combination of KMR and internal invariants for
optimizations. $\mathcal{R}^*$ means that $\mathcal{R}$ may be repeated
for an arbitrary number of times. Since the top-level summation of
KMRs is similar to that in CompCertM, we need to go through a
reasoning process similar to CompCertM, only more complicated because
of the need to reason about internal invariants of optimizations
in $\kc_\kwd{vaext}$ and indefinitely repeated combination of all the
KMRs by $\mathcal{R}^*$. Therefore, it is unknown if the correctness
theorem of CompCertO suffices for end-to-end program verification. 

\paragraph{CompCertX}

CompCertX~\cite{dscal15,wang2019} realizes a weaker form of VCC that
only allows assembly contexts to invoke C programs, but not the other
way around. Therefore, it does not support horizontal composition of
modules with mutual recursions. The compositionality and program
verification are delegated to Certified Abstraction Layers
(CAL)~\cite{dscal15, ccal18}. Furthermore, CompCertX does not support
stack-allocated data (e.g., our server example). However, its
top-level semantic interface is similar to our interface, albeit not
carrying a symmetric rely-guarantee condition. This indicates that our
work is a natural evolution of CompCertX.

\paragraph{VCC for Concurrent Programs}

VCC for concurrent programs needs to deal with multiple threads and
their linking.
CASCompCert is an extension of CompComp that supports compositional
compilation of concurrency with no (or benign) data
races~\cite{cascompcert}.
To make CompComp's approach to VCC work in a concurrent setting,
CASCompCert imposes some restrictions including not supporting
stack-allocated data and allowing only nondeterminism in scheduling
threads.
A recent advancement based on CASCompCert is about verifying
concurrent programs~\cite{zha2022} running on weak memory models using
the promising semantics~\cite{promising1,promising2}.
We believe the ideas in CASCompCert are complementary to this work and
can be combined with our approach to achieve VCC for concurrency with
cleaner interface and less restrictions.

\subsection{Verified Compositional Compilation for Higher-Order Languages}

Another class of work on VCC focuses on compilation of higher-order
languages. In this setting, the main difficulty comes from complex
language features together with higher-order states. 
A prominent example is the Pilsner compiler~\cite{neis15icfp} that
compiles a higher-order language into some form of assembly programs.
The technique Pilsner adopts is called \emph{parametric simulations}
that evolves from earlier work on reasoning about program equivalence
via bisimulation~\cite{hur12}. 
%
%
Another line of work is multi-language
semantics~\cite{patterson-icfp-2019,funtal,perconti14,scherer18} where
a language combining all source, intermediate and target languages is
used to formalize semantics. Compiler correctness is stated
as contextual equivalence or logical relations.
It seems that our techniques are not directly applicable to those work
because relations on higher-order states cannot
deterministically fix the interpolating states. A possible solution is
to divide the higher-order memory into a first-order and a
higher-order part such that the former does not contain pointers to
the latter (forming a closure). By encapsulating higher-order
programs inside first-order states, we may be able to
apply our approach.


The high-level ideas for constructing interpolating states for
proving transitivity of \kinjp can also be found in some of the work
on program equivalence~\cite{hur2012tr,Ahmed06esop}.
To the best of our knowledge, our approach is the first concrete
implementation of these ideas that works for a realistic optimizing
compiler for imperative languages with non-trivial memory
models.


\subsection{Frameworks for Compositional Program Verification}
Researchers have proposed frameworks for compositional program
verification based on novel semantics, refinements and separation
logics~\cite{dscal15,ccal18,xia2019interaction,paul2021,choicetrees,sammler2023dimsum,song2023conditional}. These
frameworks aim at broader program verification and may be combined
with our approach to generate more flexible end-to-end verification
techniques. For example, to support more flexible certified
abstraction layers, we may combine our approach with data
abstraction in CAL and extend horizontal linking to work with
abstraction layers.
%
It is not entirely clear whether their solutions can be successfully
applied to or combined with VCC of realistic optimizing compilers like
CompCert. However, comparing with these frameworks is still meaningful
as it provides different perspectives and potential directions for
improving our work. We discuss representative frameworks in these
categories below.

\paragraph{DimSum}
DimSum~\cite{sammler2023dimsum} is a framework for multi-language
program verification. Program semantics are defined as LTSs which emit
\emph{events} to communicate with the environment. The concept of
events in DimSum is similar to the language interfaces in CompCertO
and in our work. For verification of heterogeneous programs, it uses
\emph{wrappers} to relate the events between two languages (a C-like
language called \kwd{Rec} and assembly in its paper), which is similar to
simulation conventions. The rely-guarantee protocol is expressed in
the wrappers by angelic non-determinism and memory protection is
expressed in separation logic.
%
On one hand, it is unclear if their
framework can scale to realistic languages or compilers like CompCert.
For example, it is interesting to investigate if their wrappers can support more
complicated languages and compiler optimizations which can be handled
by our framework and refinement relations.
%
On the other hand, DimSim allows assembly modules that exploit a flat memory model.
Therefore, their framework supports refinements between semantics using different
memory models, which we do not support yet.

\paragraph{Conditional Contextual Refinement}
Conditional Contextual Refinement (CCR) is a framework which combines
contextual refinement and separation logics to achieve both conditional
and composable verification of program semantics~\cite{song2023conditional}.
CCR employs separation logics to constrain the behavior of open
modules to achieve horizontal and vertical composition of refinements,
such \emph{separation logic wrapper} plays a similar role as
simulation conventions in this paper.
On one hand, separation logics provide more fine-grained control of
shared resources. On the other hand, they have specific requirements
of contexts unlike the open protocols encoded in our direct
refinements. The horizontal composition of two refinements
requires specific knowledge of specifications of each other to control
interaction.
It is interesting to investigate if the program specific
conditions imposed by CCR can be handled or piggybacked upon our
framework.

\section{Conclusion and Future Work}
\label{sec:conc}

We have proposed an approach to compositional compiler correctness for
first-order languages via direct refinements between source and target
semantics at their native interfaces, which overcomes the limitations
of the existing approaches on compositionality, adequacy and other
important criteria for VCC. In the future, we plan to support behavior
(trace) refinement for closed programs by reducing our open simulation
into the whole-program correctness theorem for the original
CompCert. We also plan to combine our work with refinement-based
program verification like certified abstraction layers to support more
substantial applications. Another research direction is to
apply our approach to different memory models and compilers for
first-order and higher-order languages, which will better test the
limit of our approach and the usefulness of our discoveries.


\section*{Data-Availability Statement}

The Coq artifact containing the formal developments described
in this paper is available on
Zenodo~\cite{direct-refinement-artifact}.

\begin{acks}                            
We would like to thank our shepherd Yannick Zakowski and the anonymous
referees for their helpful feedback which improved this paper significantly.
This
work is supported in part by the National Natural Science Foundation
of China (NSFC) under Grant No. 62002217 and 62372290, 
and by the Natural Science Foundation of the United States (NSF) under Grant No. 1763399, 2019285, and
2313433. Any opinions, findings, and conclusions or
recommendations expressed in this material are those of the authors
and do not necessarily reflect the views of the funding agencies. 
\end{acks}

\bibliography{refs}


\begin{thebibliography}{31}


\ifx \showCODEN    \undefined \def \showCODEN     #1{\unskip}     \fi
\ifx \showDOI      \undefined \def \showDOI       #1{#1}\fi
\ifx \showISBNx    \undefined \def \showISBNx     #1{\unskip}     \fi
\ifx \showISBNxiii \undefined \def \showISBNxiii  #1{\unskip}     \fi
\ifx \showISSN     \undefined \def \showISSN      #1{\unskip}     \fi
\ifx \showLCCN     \undefined \def \showLCCN      #1{\unskip}     \fi
\ifx \shownote     \undefined \def \shownote      #1{#1}          \fi
\ifx \showarticletitle \undefined \def \showarticletitle #1{#1}   \fi
\ifx \showURL      \undefined \def \showURL       {\relax}        \fi
\providecommand\bibfield[2]{#2}
\providecommand\bibinfo[2]{#2}
\providecommand\natexlab[1]{#1}
\providecommand\showeprint[2][]{arXiv:#2}

\bibitem[Ahmed(2006)]%
        {Ahmed06esop}
\bibfield{author}{\bibinfo{person}{Amal~J. Ahmed}.}
  \bibinfo{year}{2006}\natexlab{}.
\newblock \showarticletitle{Step-Indexed Syntactic Logical Relations for
  Recursive and Quantified Types}. In \bibinfo{booktitle}{\emph{Proc. 15th
  European Symposium on Programming (ESOP'06)}} \emph{(\bibinfo{series}{LNCS},
  Vol.~\bibinfo{volume}{3924})}, \bibfield{editor}{\bibinfo{person}{Peter
  Sestoft}} (Ed.). \bibinfo{publisher}{Springer}, \bibinfo{address}{Cham},
  \bibinfo{pages}{69--83}.
\newblock
\urldef\tempurl%
\url{https://doi.org/10.1007/11693024\_6}
\showDOI{\tempurl}


\bibitem[Besson et~al\mbox{.}(2015)]%
        {besson2015concrete}
\bibfield{author}{\bibinfo{person}{Fr{\'{e}}d{\'{e}}ric Besson},
  \bibinfo{person}{Sandrine Blazy}, {and} \bibinfo{person}{Pierre Wilke}.}
  \bibinfo{year}{2015}\natexlab{}.
\newblock \showarticletitle{A Concrete Memory Model for CompCert}. In
  \bibinfo{booktitle}{\emph{Proc. 6th Interactive Theorem Proving (ITP'15)}}
  \emph{(\bibinfo{series}{LNCS}, Vol.~\bibinfo{volume}{9236})},
  \bibfield{editor}{\bibinfo{person}{Christian Urban} {and}
  \bibinfo{person}{Xingyuan Zhang}} (Eds.). \bibinfo{publisher}{Springer},
  \bibinfo{address}{Cham}, \bibinfo{pages}{67--83}.
\newblock
\urldef\tempurl%
\url{https://doi.org/10.1007/978-3-319-22102-1\_5}
\showDOI{\tempurl}


\bibitem[Chappe et~al\mbox{.}(2023)]%
        {choicetrees}
\bibfield{author}{\bibinfo{person}{Nicolas Chappe}, \bibinfo{person}{Paul He},
  \bibinfo{person}{Ludovic Henrio}, \bibinfo{person}{Yannick Zakowski}, {and}
  \bibinfo{person}{Steve Zdancewic}.} \bibinfo{year}{2023}\natexlab{}.
\newblock \showarticletitle{Choice Trees: Representing Nondeterministic,
  Recursive, and Impure Programs in Coq}.
\newblock \bibinfo{journal}{\emph{Proc. ACM Program. Lang.}}
  \bibinfo{volume}{7}, \bibinfo{number}{POPL}, Article \bibinfo{articleno}{61}
  (\bibinfo{date}{January} \bibinfo{year}{2023}), \bibinfo{numpages}{31}~pages.
\newblock
\urldef\tempurl%
\url{https://doi.org/10.1145/3571254}
\showDOI{\tempurl}


\bibitem[Gu et~al\mbox{.}(2015)]%
        {dscal15}
\bibfield{author}{\bibinfo{person}{Ronghui Gu},
  \bibinfo{person}{J{\'{e}}r{\'{e}}mie Koenig}, \bibinfo{person}{Tahina
  Ramananandro}, \bibinfo{person}{Zhong Shao},
  \bibinfo{person}{Xiongnan(Newman) Wu}, \bibinfo{person}{Shu-Chun Weng},
  \bibinfo{person}{Haozhong Zhang}, {and} \bibinfo{person}{Yu Guo}.}
  \bibinfo{year}{2015}\natexlab{}.
\newblock \showarticletitle{Deep Specifications and Certified Abstraction
  Layers}. In \bibinfo{booktitle}{\emph{Proc. 42nd ACM Symposium on Principles
  of Programming Languages (POPL'15)}},
  \bibfield{editor}{\bibinfo{person}{Sriram~K. Rajamani} {and}
  \bibinfo{person}{David Walker}} (Eds.). \bibinfo{publisher}{ACM},
  \bibinfo{address}{New York, NY, USA}, \bibinfo{pages}{595--608}.
\newblock
\urldef\tempurl%
\url{https://doi.org/10.1145/2775051.2676975}
\showDOI{\tempurl}


\bibitem[Gu et~al\mbox{.}(2018)]%
        {ccal18}
\bibfield{author}{\bibinfo{person}{Ronghui Gu}, \bibinfo{person}{Zhong Shao},
  \bibinfo{person}{Jieung Kim}, \bibinfo{person}{Xiongnan~(Newman) Wu},
  \bibinfo{person}{J{\'{e}}r{\'{e}}mie Koenig}, \bibinfo{person}{Vilhelm
  Sjober}, \bibinfo{person}{Hao Chen}, \bibinfo{person}{David Costanzo}, {and}
  \bibinfo{person}{Tahnia Ramananandro}.} \bibinfo{year}{2018}\natexlab{}.
\newblock \showarticletitle{Certified Concurrent Abstraction Layers}. In
  \bibinfo{booktitle}{\emph{Proc. 2018 ACM Conference on Programming Language
  Design and Implementation (PLDI'18)}},
  \bibfield{editor}{\bibinfo{person}{Jeffrey~S. Foster} {and}
  \bibinfo{person}{Dan Grossman}} (Eds.). \bibinfo{publisher}{ACM},
  \bibinfo{address}{New York, NY, USA}, \bibinfo{pages}{646--661}.
\newblock
\urldef\tempurl%
\url{https://doi.org/10.1145/3192366.3192381}
\showDOI{\tempurl}


\bibitem[He et~al\mbox{.}(2021)]%
        {paul2021}
\bibfield{author}{\bibinfo{person}{Paul He}, \bibinfo{person}{Eddy Westbrook},
  \bibinfo{person}{Brent Carmer}, \bibinfo{person}{Chris Phifer},
  \bibinfo{person}{Valentin Robert}, \bibinfo{person}{Karl Smeltzer},
  \bibinfo{person}{Andrei \c{S}tef\u{a}nescu}, \bibinfo{person}{Aaron Tomb},
  \bibinfo{person}{Adam Wick}, \bibinfo{person}{Matthew Yacavone}, {and}
  \bibinfo{person}{Steve Zdancewic}.} \bibinfo{year}{2021}\natexlab{}.
\newblock \showarticletitle{A Type System for Extracting Functional
  Specifications from Memory-Safe Imperative Programs}.
\newblock \bibinfo{journal}{\emph{Proc. ACM Program. Lang.}}
  \bibinfo{volume}{5}, \bibinfo{number}{OOPSLA}, Article
  \bibinfo{articleno}{135} (\bibinfo{date}{October} \bibinfo{year}{2021}),
  \bibinfo{numpages}{29}~pages.
\newblock
\urldef\tempurl%
\url{https://doi.org/10.1145/3485512}
\showDOI{\tempurl}


\bibitem[Hur et~al\mbox{.}(2012a)]%
        {hur12}
\bibfield{author}{\bibinfo{person}{Chung{-}Kil Hur}, \bibinfo{person}{Derek
  Dreyer}, \bibinfo{person}{Georg Neis}, {and} \bibinfo{person}{Viktor
  Vafeiadis}.} \bibinfo{year}{2012}\natexlab{a}.
\newblock \showarticletitle{The Marriage of Bisimulations and Kripke Logical
  Relations}. In \bibinfo{booktitle}{\emph{Proc. 39th ACM Symposium on
  Principles of Programming Languages (POPL'12)}},
  \bibfield{editor}{\bibinfo{person}{John Field} {and} \bibinfo{person}{Michael
  Hicks}} (Eds.). \bibinfo{publisher}{{ACM}}, \bibinfo{address}{New York, NY,
  USA}, \bibinfo{pages}{59--72}.
\newblock
\urldef\tempurl%
\url{https://doi.org/10.1145/2103656.2103666}
\showDOI{\tempurl}


\bibitem[Hur et~al\mbox{.}(2012b)]%
        {hur2012tr}
\bibfield{author}{\bibinfo{person}{Chung{-}Kil Hur}, \bibinfo{person}{Georg
  Neis}, \bibinfo{person}{Derek Dreyer}, {and} \bibinfo{person}{Viktor
  Vafeiadis}.} \bibinfo{year}{2012}\natexlab{b}.
\newblock \bibinfo{booktitle}{\emph{The Transitive Composability of Relation
  Transition Systems}}.
\newblock \bibinfo{type}{Technical Report, MPI-SWS-2012-002}.
  \bibinfo{institution}{MPI-SWS}.
\newblock
\urldef\tempurl%
\url{https://www.mpi-sws.org/tr/2012-002.pdf}
\showURL{%
\tempurl}


\bibitem[Jiang et~al\mbox{.}(2019)]%
        {cascompcert}
\bibfield{author}{\bibinfo{person}{Hanru Jiang}, \bibinfo{person}{Hongjin
  Liang}, \bibinfo{person}{Siyang Xiao}, \bibinfo{person}{Junpeng Zha}, {and}
  \bibinfo{person}{Xinyu Feng}.} \bibinfo{year}{2019}\natexlab{}.
\newblock \showarticletitle{Towards Certified Separate Compilation for
  Concurrent Programs}. In \bibinfo{booktitle}{\emph{Proc. 2019 ACM Conference
  on Programming Language Design and Implementation (PLDI'19)}},
  \bibfield{editor}{\bibinfo{person}{Kathryn~S. McKinley} {and}
  \bibinfo{person}{Kathleen Fisher}} (Eds.). \bibinfo{publisher}{{ACM}},
  \bibinfo{address}{New York, NY, USA}, \bibinfo{pages}{111--125}.
\newblock
\urldef\tempurl%
\url{https://doi.org/10.1145/3314221.3314595}
\showDOI{\tempurl}


\bibitem[Kang et~al\mbox{.}(2017)]%
        {promising1}
\bibfield{author}{\bibinfo{person}{Jeehoon Kang}, \bibinfo{person}{Chung{-}Kil
  Hur}, \bibinfo{person}{Ori Lahav}, \bibinfo{person}{Viktor Vafeiadis}, {and}
  \bibinfo{person}{Derek Dreyer}.} \bibinfo{year}{2017}\natexlab{}.
\newblock \showarticletitle{A Promising Semantics for Relaxed-memory
  Concurrency}. In \bibinfo{booktitle}{\emph{Proc. 44th ACM Symposium on
  Principles of Programming Languages (POPL'17)}},
  \bibfield{editor}{\bibinfo{person}{Giuseppe Castagna} {and}
  \bibinfo{person}{Andrew~D. Gordon}} (Eds.). \bibinfo{publisher}{{ACM}},
  \bibinfo{address}{New York, NY, USA}, \bibinfo{pages}{175--189}.
\newblock
\urldef\tempurl%
\url{https://doi.org/10.1145/3009837.3009850}
\showDOI{\tempurl}


\bibitem[Kang et~al\mbox{.}(2015)]%
        {kang2015formal}
\bibfield{author}{\bibinfo{person}{Jeehoon Kang}, \bibinfo{person}{Chung{-}Kil
  Hur}, \bibinfo{person}{William Mansky}, \bibinfo{person}{Dmitri Garbuzov},
  \bibinfo{person}{Steve Zdancewic}, {and} \bibinfo{person}{Viktor Vafeiadis}.}
  \bibinfo{year}{2015}\natexlab{}.
\newblock \showarticletitle{A Formal {C} Memory Model Supporting
  Integer-Pointer Casts}. In \bibinfo{booktitle}{\emph{Proc. 2015 ACM
  Conference on Programming Language Design and Implementation (PLDI'15)}},
  \bibfield{editor}{\bibinfo{person}{David Grove} {and}
  \bibinfo{person}{Stephen~M. Blackburn}} (Eds.). \bibinfo{publisher}{{ACM}},
  \bibinfo{address}{New York, NY, USA}, \bibinfo{pages}{326--335}.
\newblock
\urldef\tempurl%
\url{https://doi.org/10.1145/2737924.2738005}
\showDOI{\tempurl}


\bibitem[Koenig and Shao(2021)]%
        {compcerto}
\bibfield{author}{\bibinfo{person}{J\'{e}r\'{e}mie Koenig} {and}
  \bibinfo{person}{Zhong Shao}.} \bibinfo{year}{2021}\natexlab{}.
\newblock \showarticletitle{CompCertO: Compiling Certified Open C Components}.
  In \bibinfo{booktitle}{\emph{Proc. 2021 ACM Conference on Programming
  Language Design and Implementation (PLDI'21)}}. \bibinfo{publisher}{{ACM}},
  \bibinfo{address}{New York, NY, USA}, \bibinfo{pages}{1095–1109}.
\newblock
\showISBNx{9781450383912}
\urldef\tempurl%
\url{https://doi.org/10.1145/3453483.3454097}
\showDOI{\tempurl}


\bibitem[Krebbers(2016)]%
        {krebbers2016formal}
\bibfield{author}{\bibinfo{person}{Robbert Krebbers}.}
  \bibinfo{year}{2016}\natexlab{}.
\newblock \showarticletitle{A Formal C Memory Model for Separation Logic}.
\newblock \bibinfo{journal}{\emph{J. Autom. Reason.}}  \bibinfo{volume}{57}
  (\bibinfo{year}{2016}), \bibinfo{pages}{319--387}.
\newblock
\urldef\tempurl%
\url{https://doi.org/10.1007/s10817-016-9369-1}
\showDOI{\tempurl}


\bibitem[Lee et~al\mbox{.}(2020)]%
        {promising2}
\bibfield{author}{\bibinfo{person}{Sung-Hwan Lee}, \bibinfo{person}{Minki Cho},
  \bibinfo{person}{Anton Podkopaev}, \bibinfo{person}{Soham Chakraborty},
  \bibinfo{person}{Chung-Kil Hur}, \bibinfo{person}{Ori Lahav}, {and}
  \bibinfo{person}{Viktor Vafeiadis}.} \bibinfo{year}{2020}\natexlab{}.
\newblock \showarticletitle{Promising 2.0: Global Optimizations in Relaxed
  Memory Concurrency}. In \bibinfo{booktitle}{\emph{Proc. 2020 ACM Conference
  on Programming Language Design and Implementation (PLDI'20)}}.
  \bibinfo{publisher}{{ACM}}, \bibinfo{address}{New York, NY, USA},
  \bibinfo{pages}{362–376}.
\newblock
\showISBNx{9781450376136}
\urldef\tempurl%
\url{https://doi.org/10.1145/3385412.3386010}
\showDOI{\tempurl}


\bibitem[Leroy(2023)]%
        {compcert}
\bibfield{author}{\bibinfo{person}{Xavier Leroy}.}
  \bibinfo{year}{2005--2023}\natexlab{}.
\newblock \bibinfo{title}{{The CompCert Verified Compiler}}.
\newblock \bibinfo{howpublished}{\url{https://compcert.org/}}.
\newblock


\bibitem[Leroy et~al\mbox{.}(2012)]%
        {compcert-mem-v2}
\bibfield{author}{\bibinfo{person}{Xavier Leroy}, \bibinfo{person}{Andrew~W.
  Appel}, \bibinfo{person}{Sandrine Blazy}, {and} \bibinfo{person}{Gordon
  Stewart}.} \bibinfo{year}{2012}\natexlab{}.
\newblock \bibinfo{booktitle}{\emph{{The CompCert Memory Model, Version 2}}}.
\newblock \bibinfo{type}{Research Report} RR-7987.
  \bibinfo{institution}{{INRIA}}. \bibinfo{pages}{26} pages.
\newblock
\urldef\tempurl%
\url{https://hal.inria.fr/hal-00703441}
\showURL{%
\tempurl}


\bibitem[Neis et~al\mbox{.}(2015)]%
        {neis15icfp}
\bibfield{author}{\bibinfo{person}{Georg Neis}, \bibinfo{person}{Chung{-}Kil
  Hur}, \bibinfo{person}{Jan{-}Oliver Kaiser}, \bibinfo{person}{Craig
  McLaughlin}, \bibinfo{person}{Derek Dreyer}, {and} \bibinfo{person}{Viktor
  Vafeiadis}.} \bibinfo{year}{2015}\natexlab{}.
\newblock \showarticletitle{Pilsner: a Compositionally Verified Compiler for a
  Higher-Order Imperative Language}. In \bibinfo{booktitle}{\emph{Proc. 2015
  ACM SIGPLAN International Conference on Functional Programming (ICFP'15)}},
  \bibfield{editor}{\bibinfo{person}{Kathleen Fisher} {and}
  \bibinfo{person}{John~H. Reppy}} (Eds.). \bibinfo{publisher}{{ACM}},
  \bibinfo{address}{New York, NY, USA}, \bibinfo{pages}{166--178}.
\newblock
\urldef\tempurl%
\url{https://doi.org/10.1145/2784731.2784764}
\showDOI{\tempurl}


\bibitem[Patterson and Ahmed(2019)]%
        {patterson-icfp-2019}
\bibfield{author}{\bibinfo{person}{Daniel Patterson} {and}
  \bibinfo{person}{Amal Ahmed}.} \bibinfo{year}{2019}\natexlab{}.
\newblock \showarticletitle{The Next 700 Compiler Correctness Theorems
  (Functional Pearl)}.
\newblock \bibinfo{journal}{\emph{Proc. ACM Program. Lang.}}
  \bibinfo{volume}{3}, \bibinfo{number}{ICFP}, Article \bibinfo{articleno}{85}
  (\bibinfo{date}{August} \bibinfo{year}{2019}), \bibinfo{numpages}{29}~pages.
\newblock
\urldef\tempurl%
\url{https://doi.org/10.1145/3341689}
\showDOI{\tempurl}


\bibitem[Patterson et~al\mbox{.}(2017)]%
        {funtal}
\bibfield{author}{\bibinfo{person}{Daniel Patterson}, \bibinfo{person}{Jamie
  Perconti}, \bibinfo{person}{Christos Dimoulas}, {and} \bibinfo{person}{Amal
  Ahmed}.} \bibinfo{year}{2017}\natexlab{}.
\newblock \showarticletitle{FunTAL: Reasonably Mixing a Functional Language
  with Assembly}.
\newblock \bibinfo{journal}{\emph{SIGPLAN Not.}} \bibinfo{volume}{52},
  \bibinfo{number}{6} (\bibinfo{year}{2017}), \bibinfo{pages}{495–509}.
\newblock
\showISSN{0362-1340}
\urldef\tempurl%
\url{https://doi.org/10.1145/3140587.3062347}
\showDOI{\tempurl}


\bibitem[Perconti and Ahmed(2014)]%
        {perconti14}
\bibfield{author}{\bibinfo{person}{James~T. Perconti} {and}
  \bibinfo{person}{Amal Ahmed}.} \bibinfo{year}{2014}\natexlab{}.
\newblock \showarticletitle{Verifying an Open Compiler Using Multi-language
  Semantics}. In \bibinfo{booktitle}{\emph{Proc. 23rd European Symposium on
  Programming (ESOP'14)}} \emph{(\bibinfo{series}{LNCS},
  Vol.~\bibinfo{volume}{8410})}, \bibfield{editor}{\bibinfo{person}{Zhong
  Shao}} (Ed.). \bibinfo{publisher}{Springer}, \bibinfo{address}{Cham},
  \bibinfo{pages}{128--148}.
\newblock
\urldef\tempurl%
\url{https://doi.org/10.1007/978-3-642-54833-8\_8}
\showDOI{\tempurl}


\bibitem[Sammler et~al\mbox{.}(2023)]%
        {sammler2023dimsum}
\bibfield{author}{\bibinfo{person}{Michael Sammler}, \bibinfo{person}{Simon
  Spies}, \bibinfo{person}{Youngju Song}, \bibinfo{person}{Emanuele D'Osualdo},
  \bibinfo{person}{Robbert Krebbers}, \bibinfo{person}{Deepak Garg}, {and}
  \bibinfo{person}{Derek Dreyer}.} \bibinfo{year}{2023}\natexlab{}.
\newblock \showarticletitle{DimSum: A Decentralized Approach to Multi-Language
  Semantics and Verification}.
\newblock \bibinfo{journal}{\emph{Proc. ACM Program. Lang.}}
  \bibinfo{volume}{7}, \bibinfo{number}{POPL}, Article \bibinfo{articleno}{27}
  (\bibinfo{date}{January} \bibinfo{year}{2023}), \bibinfo{numpages}{31}~pages.
\newblock
\urldef\tempurl%
\url{https://doi.org/10.1145/3571220}
\showDOI{\tempurl}


\bibitem[Scherer et~al\mbox{.}(2018)]%
        {scherer18}
\bibfield{author}{\bibinfo{person}{Gabriel Scherer}, \bibinfo{person}{Max New},
  \bibinfo{person}{Nick Rioux}, {and} \bibinfo{person}{Amal Ahmed}.}
  \bibinfo{year}{2018}\natexlab{}.
\newblock \showarticletitle{Fabous Interoperability for ML and a Linear
  Language}. In \bibinfo{booktitle}{\emph{Foundations of Software Science and
  Computation Structures}}, \bibfield{editor}{\bibinfo{person}{Christel Baier}
  {and} \bibinfo{person}{Ugo Dal~Lago}} (Eds.). \bibinfo{publisher}{Springer},
  \bibinfo{address}{Cham}, \bibinfo{pages}{146--162}.
\newblock
\showISBNx{978-3-319-89366-2}
\urldef\tempurl%
\url{https://doi.org/10.1007/978-3-319-89366-2_8}
\showDOI{\tempurl}


\bibitem[Song et~al\mbox{.}(2020)]%
        {compcertm}
\bibfield{author}{\bibinfo{person}{Youngju Song}, \bibinfo{person}{Minki Cho},
  \bibinfo{person}{Dongjoo Kim}, \bibinfo{person}{Yonghyun Kim},
  \bibinfo{person}{Jeehoon Kang}, {and} \bibinfo{person}{Chung-Kil Hur}.}
  \bibinfo{year}{2020}\natexlab{}.
\newblock \showarticletitle{CompCertM: CompCert with C-Assembly Linking and
  Lightweight Modular Verification}.
\newblock \bibinfo{journal}{\emph{Proc. ACM Program. Lang.}}
  \bibinfo{volume}{4}, \bibinfo{number}{POPL}, Article \bibinfo{articleno}{23}
  (\bibinfo{date}{January} \bibinfo{year}{2020}), \bibinfo{numpages}{31}~pages.
\newblock
\urldef\tempurl%
\url{https://doi.org/10.1145/3371091}
\showDOI{\tempurl}


\bibitem[Song et~al\mbox{.}(2023)]%
        {song2023conditional}
\bibfield{author}{\bibinfo{person}{Youngju Song}, \bibinfo{person}{Minki Cho},
  \bibinfo{person}{Dongjae Lee}, \bibinfo{person}{Chung-Kil Hur},
  \bibinfo{person}{Michael Sammler}, {and} \bibinfo{person}{Derek Dreyer}.}
  \bibinfo{year}{2023}\natexlab{}.
\newblock \showarticletitle{Conditional Contextual Refinement}.
\newblock \bibinfo{journal}{\emph{Proc. ACM Program. Lang.}}
  \bibinfo{volume}{7}, \bibinfo{number}{POPL}, Article \bibinfo{articleno}{39}
  (\bibinfo{date}{January} \bibinfo{year}{2023}), \bibinfo{numpages}{31}~pages.
\newblock
\urldef\tempurl%
\url{https://doi.org/10.1145/3571232}
\showDOI{\tempurl}


\bibitem[Stewart et~al\mbox{.}(2015)]%
        {stewart15}
\bibfield{author}{\bibinfo{person}{Gordon Stewart}, \bibinfo{person}{Lennart
  Beringer}, \bibinfo{person}{Santiago Cuellar}, {and}
  \bibinfo{person}{Andrew~W. Appel}.} \bibinfo{year}{2015}\natexlab{}.
\newblock \showarticletitle{{Compositional CompCert}}. In
  \bibinfo{booktitle}{\emph{Proc. 42nd ACM Symposium on Principles of
  Programming Languages (POPL'15)}}. \bibinfo{publisher}{ACM},
  \bibinfo{address}{New York, NY, USA}, \bibinfo{pages}{275--287}.
\newblock
\urldef\tempurl%
\url{https://doi.org/10.1145/2676726.2676985}
\showDOI{\tempurl}


\bibitem[Wang et~al\mbox{.}(2019)]%
        {wang2019}
\bibfield{author}{\bibinfo{person}{Yuting Wang}, \bibinfo{person}{Pierre
  Wilke}, {and} \bibinfo{person}{Zhong Shao}.} \bibinfo{year}{2019}\natexlab{}.
\newblock \showarticletitle{An Abstract Stack Based Approach to Verified
  Compositional Compilation to Machine Code}.
\newblock \bibinfo{journal}{\emph{Proc. ACM Program. Lang.}}
  \bibinfo{volume}{3}, \bibinfo{number}{POPL}, Article \bibinfo{articleno}{62}
  (\bibinfo{date}{January} \bibinfo{year}{2019}), \bibinfo{numpages}{30}~pages.
\newblock
\showISSN{2475-1421}
\urldef\tempurl%
\url{https://doi.org/10.1145/3290375}
\showDOI{\tempurl}


\bibitem[Wang et~al\mbox{.}(2022)]%
        {wang-popl-2022}
\bibfield{author}{\bibinfo{person}{Yuting Wang}, \bibinfo{person}{Ling Zhang},
  \bibinfo{person}{Zhong Shao}, {and} \bibinfo{person}{J\'{e}r\'{e}mie
  Koenig}.} \bibinfo{year}{2022}\natexlab{}.
\newblock \showarticletitle{Verified Compilation of C Programs with a Nominal
  Memory Model}.
\newblock \bibinfo{journal}{\emph{Proc. ACM Program. Lang.}}
  \bibinfo{volume}{6}, \bibinfo{number}{POPL}, Article \bibinfo{articleno}{25}
  (\bibinfo{date}{January} \bibinfo{year}{2022}), \bibinfo{numpages}{31}~pages.
\newblock
\urldef\tempurl%
\url{https://doi.org/10.1145/3498686}
\showDOI{\tempurl}


\bibitem[Xia et~al\mbox{.}(2019)]%
        {xia2019interaction}
\bibfield{author}{\bibinfo{person}{Li-yao Xia}, \bibinfo{person}{Yannick
  Zakowski}, \bibinfo{person}{Paul He}, \bibinfo{person}{Chung-Kil Hur},
  \bibinfo{person}{Gregory Malecha}, \bibinfo{person}{Benjamin~C. Pierce},
  {and} \bibinfo{person}{Steve Zdancewic}.} \bibinfo{year}{2019}\natexlab{}.
\newblock \showarticletitle{Interaction Trees: Representing Recursive and
  Impure Programs in Coq}.
\newblock \bibinfo{journal}{\emph{Proc. ACM Program. Lang.}}
  \bibinfo{volume}{4}, \bibinfo{number}{POPL}, Article \bibinfo{articleno}{51}
  (\bibinfo{date}{January} \bibinfo{year}{2019}), \bibinfo{numpages}{32}~pages.
\newblock
\urldef\tempurl%
\url{https://doi.org/10.1145/3371119}
\showDOI{\tempurl}


\bibitem[Zha et~al\mbox{.}(2022)]%
        {zha2022}
\bibfield{author}{\bibinfo{person}{Junpeng Zha}, \bibinfo{person}{Hongjin
  Liang}, {and} \bibinfo{person}{Xinyu Feng}.} \bibinfo{year}{2022}\natexlab{}.
\newblock \showarticletitle{Verifying Optimizations of Concurrent Programs in
  the Promising Semantics}. In \bibinfo{booktitle}{\emph{Proc. 2021 ACM
  Conference on Programming Language Design and Implementation (PLDI'22)}}.
  \bibinfo{publisher}{{ACM}}, \bibinfo{address}{New York, NY, USA},
  \bibinfo{pages}{903–917}.
\newblock
\showISBNx{9781450392655}
\urldef\tempurl%
\url{https://doi.org/10.1145/3519939.3523734}
\showDOI{\tempurl}


\bibitem[Zhang et~al\mbox{.}(2023a)]%
        {direct-refinement-artifact}
\bibfield{author}{\bibinfo{person}{Ling Zhang}, \bibinfo{person}{Yuting Wang},
  \bibinfo{person}{Jinhua Wu}, \bibinfo{person}{J{\'{e}}r{\'{e}}mie Koenig},
  {and} \bibinfo{person}{Zhong Shao}.} \bibinfo{year}{2023}\natexlab{a}.
\newblock \bibinfo{title}{Fully Composable and Adequate Verified Compilation
  with Direct Refinements between Open Modules (Artifact)}.
\newblock
\newblock
\urldef\tempurl%
\url{https://doi.org/10.5281/zenodo.10036618}
\showURL{%
\tempurl}


\bibitem[Zhang et~al\mbox{.}(2023b)]%
        {direct-refinement-tr}
\bibfield{author}{\bibinfo{person}{Ling Zhang}, \bibinfo{person}{Yuting Wang},
  \bibinfo{person}{Jinhua Wu}, \bibinfo{person}{J{\'{e}}r{\'{e}}mie Koenig},
  {and} \bibinfo{person}{Zhong Shao}.} \bibinfo{year}{2023}\natexlab{b}.
\newblock \bibinfo{title}{Fully Composable and Adequate Verified Compilation
  with Direct Refinements between Open Modules (Technical Report)}.
\newblock
\newblock
\urldef\tempurl%
\url{https://doi.org/10.48550/arXiv.2302.12990}
\showURL{%
\tempurl}


\end{thebibliography}

\appendix
\section{Transitivity of \kinjp}
\label{sec:injp-trans-proof}

\subsection{More complete definitions of memory injection and $\kinjp$ accessibility}

We have used a simplified version of definitions of $\kwd{perm}$,
$\hookrightarrow_m$ and $\leadsto_{\kinjp}$ in Sec. \ref{sec:injp}.
To present a more detailed proof of the KMR with memory protection
($\kinjp$), we present full definitions of $\kwd{perm}$ and
$\leadsto_{\kinjp}$. We also present a more complete definition
of $\hookrightarrow_m$ which is still not 100\% complete because
we ignore two properties for simplicity. They are about alignment and
range of size $\delta$ in mapping $j(b) = \some{(b',\delta)}$.
They are not essential for this proof as preconditions and can be
proved similarly as other properties of $\hookrightarrow_m$.
Readers interested in these details can find them in our artifact.

By the definition of CompCert memory model, a memory cell has both
maximum and current permissions such that $\permcur{m}{p} \subseteq
\permmax{m}{p}$. During the execution of a program, the current permission
of a memory cell may be lowered or raised by an external call.
However, the maximum permission can only decrease in both internal
and external calls. This invariant was defined in CompCert as:
\begin{tabbing}
  \quad\=\quad\=\kill
  \> $\mpd{m_1}{m_2} \iff$\\
  \>\> $\forall \app b \app o \app p, \app b \in \validblock{m_1} \imply (b,o) \in \permmax{m_2}{p} \imply (b,o) \in \permmax{m_1}{p}$
\end{tabbing}

\begin{definition} Definition of memory injection $\hookrightarrow_m$.
  \small
  \begin{tabbing}
    \quad\=\quad\=\;\;\=\kill
    \> $\minj{j}{m_1}{m_2} := \{|$\\
    \>\> \cmnt{(* Preservation of permission under injection *)}\\
    \>\> $(1) \app \forall\app b_1\app b_2\app o_1\app o_2 \app k\app p,\app
          j(b_1) = \some{(b_2,o_2 - o_1)} \imply
          (b_1, o_1) \in \permk{k}{m_1}{p} \imply
          (b_2, o_2) \in \permk{k}{m_2}{p}$\\
    \>\> \cmnt{(* Preservation of memory values for currently readable cells under injection *)}\\
    \>\> $(2) \app \forall\app b_1\app b_2\app o_1\app o_2,\app
          j(b_1) = \some{(b_2,o_2 - o_1)} \imply
          (b_1,o_1) \in \permcur{m_1}{\kreadable}$\\ 
    \>\>\> $\imply \vinj{j}{m_1[b_1,o_1]}{m_2[b_2,o_2]}$\\
    \>\> \cmnt{(* Invalid source blocks must be unmapped *)}\\
    \>\> $(3) \app \forall\app b_1,\app b_1 \notin \validblock{m_1} \imply j(b_1) = \none$\\
    \>\> \cmnt{(* The range of $j$ must only contain valid blocks *)}\\
    \>\> $(4) \app \forall\app b_1\app b_2\app \delta,\app j(b_1) = \some{(b_2,\delta)} \imply b_2 \in \validblock{m_2}$\\
    \>\> \cmnt{(* Two disjoint source cells with non-empty permission} \\
    \>\>\> \cmnt{ do not overlap with each other after injection *)}\\
    \>\>  $(5) \app \forall\app b_1\app b_2\app o_1\app o_2\app b_1'\app b_2'\app o_1'\app o_2',\app
           b_1 \neq b_1' \imply
           j(b_1) = \some{(b_2, o_2 - o_1)} \imply
           j(b_1') = \some{(b_2', o_2' - o_1')} \imply $\\
    \>\>\> $(b_1, o_1) \in \permmax{m_1}{\knonempty} \imply
           (b_1', o_1') \in \permmax{m_1}{\knonempty}
           \imply b_2 \neq b_2' \lor o_2 \neq o_2'$\\
    \>\> \cmnt{(* Given a target cell, its corresponding source cell either }\\
    \>\>\> \cmnt{have the same permission or does not have any permission *)}\\
    \>\> $(6) \app \forall\app b_1\app o_1\app b_2\app o_2\app k\app p,\app
      j(b_1) = \some{(b_2, o_2 - o_1)} \imply
      (b_2, o_2) \in \permk{k}{m_2}{p}$\\
    \>\>\> $\imply (b_1,o_1) \in \permk{k}{m_1}{p} \lor (b_1,o_1) \not\in \permmax{m_1}{\knonempty}$
  \end{tabbing}
  \label{def:meminj}
\end{definition}
\begin{definition}\label{def:mem-acc}
  Definition of memory accessibility $\kmacc$.\\
  \begin{tabbing}
      \quad\=$\mpd{m}{m'}$\;\=$\iff\;$\=\kill
      \>$\roacc{m}{m'}$\>$ \iff \; \forall (b,o) \in m, (b,o) \notin \permmax{m}{\kwritable} \imply m'[b,o] = v$ \\
      \>\>\> $\imply (b,o) \in \permcur{m'}{\kreadable} \imply (m[b,o] = v  \; \land$ \\
      \>\>\> $(b,o) \in \permcur{m}{\kreadable})$\\
      \>$\macc{m}{m'}$\>$ \iff \;\kwd{validblock}(m) \subseteq \kwd{validblock}(m') \; \land$\\
      \>\>\>$ \; \mpd{m}{m'} \land \; \roacc{m}{m'} $\\
    \end{tabbing}
\end{definition}

\noindent
For the complete definition of $\leadsto_{\kinjp}$,
we further define the separation property for injection as:   
\begin{tabbing}
  \quad\=\quad\=\kill
  \> $\injsep{j}{j'}{m_1}{m_2} \iff$\\
  \>\> $\forall \app b_1 \app b_2 \app \delta, \app j(b_1) = \none \imply j'(b_1) = \some{(b_2,\delta)} \imply b_1 \notin \validblock{m_1} \land b_2 \notin \validblock{m_2}$
\end{tabbing}
This invariant states that when we start from $\minj{j}{m_1}{m_2}$, after executing on source and target semantics, the future injection $j'$ only increases from $j$ by relating newly allocated blocks. Note that we write
$b \in m$ for $b \in \kwd{validblock}(m)$.

\begin{definition}\label{def:injpacc}
  Accessibility relation of $\kinjp$
    \begin{tabbing}
    \quad\=$\injpacc{(j, m_1, m_2)}{(j', m_1', m_2')} \; \iff \;$\=\kill
    \>$\injpacc{(j, m_1, m_2)}{(j', m_1', m_2')} \; \iff \;j \subseteq j' \land \unmapped{j} \subseteq \unchangedon{m_1}{m_1'}$\\
    \>\>$\land\; \outofreach{j}{m_1} \subseteq \unchangedon{m_2}{m_2'}$ \\
    \>\>$\land\; \macc{m_1}{m_1'} \land \macc{m_2}{m_2'}$\\
    \>\>$\land\; \injsep{j}{j'}{m_1}{m_2}.$
  \end{tabbing}
\end{definition}

\subsection{Auxiliary Properties}
In this section we present several lemmas about properties of memory injection and $\kinjp$ accessibility. These lemmas are used in the proof of $\kinjp$ refinement.

Firstly, the memory injections are composable.
\begin{lemma}\label{lem:inj-trans}
  Given $\minj{j_{12}}{m_1}{m_2}$ and $\minj{j_{23}}{m_2}{m_3}$, we have
  \[
    \minj{j_{23} \cdot j_{12}}{m_1}{m_3}
  \]
\end{lemma}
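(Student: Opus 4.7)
The plan is to prove each of the six clauses in Definition \ref{def:meminj} for the composed injection $j := j_{23} \cdot j_{12}$, unfolding the composition as follows: $j(b_1) = \some{(b_3, \delta_1 + \delta_2)}$ iff there exists an intermediate block $b_2$ with $j_{12}(b_1) = \some{(b_2, \delta_1)}$ and $j_{23}(b_2) = \some{(b_3, \delta_2)}$. The proof is essentially a long case analysis driven by this witness $b_2$, and for each clause I would first introduce the intermediate data via unfolding $j$, then chain the corresponding clauses from $\minj{j_{12}}{m_1}{m_2}$ and $\minj{j_{23}}{m_2}{m_3}$.

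Clauses (1), (2), (3), (4) and (6) are routine chainings. For (1), I would apply the permission-preservation clause of $j_{12}$ to move a permission from $(b_1, o_1)$ in $m_1$ to $(b_2, o_2)$ in $m_2$, then the same clause for $j_{23}$ to reach $(b_3, o_3)$ in $m_3$; the arithmetic on offsets works out since the two displacements add. For (2), the same chaining gives $\vinj{j_{12}}{m_1[b_1,o_1]}{m_2[b_2,o_2]}$ and then $\vinj{j_{23}}{m_2[b_2,o_2]}{m_3[b_3,o_3]}$, and transitivity of value injection under composition finishes the case (the nontrivial subcase is when $m_1[b_1,o_1]$ is a pointer, where the displacements on the pointer offset again add correctly). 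For (2) one also needs to know that $(b_2, o_2) \in \permcur{m_2}{\kreadable}$ in order to invoke the $j_{23}$ clause; this comes from clause (1) applied to $j_{12}$. Clauses (3) and (4) follow by unfolding the definition of composition and using the corresponding clauses directly. For (6), given a permission at $(b_3, o_3)$ in $m_3$, apply clause (6) of $j_{23}$: either $(b_2, o_2)$ has the same permission in $m_2$ (then apply clause (6) of $j_{12}$ to $(b_1, o_1)$, concluding with one of the two disjuncts), or $(b_2, o_2) \not\in \permmax{m_2}{\knonempty}$, in which case clause (1) of $j_{12}$ contrapositively forces $(b_1, o_1) \not\in \permmax{m_1}{\knonempty}$.

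The main obstacle is clause (5), the no-overlap property. Given distinct $b_1 \neq b_1'$ both carrying maximum non-empty permissions in $m_1$ and both mapping via $j$ to the same target block $b_3$, I would decompose the mappings through intermediate blocks $b_2, b_2'$ and split on whether $b_2 = b_2'$. In the case $b_2 \neq b_2'$, clause (1) of $j_{12}$ lifts both non-empty permissions to $m_2$, after which clause (5) of $j_{23}$ applied at $b_2, b_2'$ directly delivers either $b_3 \neq b_3$ (contradiction) or the desired offset inequality. In the case $b_2 = b_2'$, clause (5) of $j_{12}$ applied to $b_1 \neq b_1'$ (using their permissions in $m_1$ and the fact that their images under $j_{12}$ share the same block) yields that the image offsets in $m_2$ differ, i.e., $o_2 \neq o_2'$; then because $b_2 = b_2'$, the single value of $j_{23}(b_2)$ gives identical shifts, so the target offsets inherit this inequality: $o_3 \neq o_3'$. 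The arithmetic bookkeeping on offsets (writing everything in the $o_2 - o_1$ style used by the definition) is where the proof is most bug-prone, so I would be careful to phrase things in terms of absolute target offsets rather than the displacements themselves.

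Finally, I expect no difficulty with the alignment and size-range side conditions omitted from the simplified definition; they compose additively via the intermediate block and can be discharged by straightforward arithmetic. The overall shape of the proof mirrors the corresponding transitivity result already present in the CompCert memory model library, so much of the automation infrastructure (for permission chaining and value-injection composition) can be reused.
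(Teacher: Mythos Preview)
Your proposal is correct and matches the standard CompCert proof of injection transitivity. The paper itself does not give a proof: it simply states ``This property is proved and used in CompCert, we do not repeat the proof here.'' Your clause-by-clause argument, including the case split on $b_2 = b_2'$ for the no-overlap property, is exactly how the CompCert memory model library establishes this result.
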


This property is proved and used in CompCert, we do not repeat the proof here.

  \begin{lemma}\label{lem:readable-midvalue}
    Given 
    $\minj{j_{23}\cdot j_{12}}{m_1}{m_3}$, $(b_1,o_1) \in \permcur{m_1}{\kreadable}$ and $j_{23}\cdot j_{12}(b_1) = \some{(b_3,o_3-o_1)}$, then
  \[
    \exists v_2, \vinj{j_{12}}{m_1[b_1,o_1]}{v_2} \land \vinj{j_{23}}{v_2}{m_3[b_3,o_3]}.
  \]
\end{lemma}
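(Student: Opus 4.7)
The plan is to prove this by first applying property (2) of the memory injection definition (Definition~\ref{def:meminj}) to the composed injection $j_{23}\cdot j_{12}$ in order to obtain
\[
  \vinj{j_{23}\cdot j_{12}}{m_1[b_1,o_1]}{m_3[b_3,o_3]},
\]
and then performing a case analysis on the shape of the source value $v_1 := m_1[b_1,o_1]$ to construct an appropriate intermediate witness $v_2$. This mirrors the standard strategy used in CompCert for decomposing value injections across composed block mappings.

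First, I would unfold the definition of $j_{23}\cdot j_{12}$ to recover the intermediate block. From $(j_{23}\cdot j_{12})(b_1) = \some{(b_3,o_3 - o_1)}$, there exist $b_2$, $\delta_{12}$, and $\delta_{23}$ with $j_{12}(b_1) = \some{(b_2,\delta_{12})}$, $j_{23}(b_2) = \some{(b_3,\delta_{23})}$, and $\delta_{12} + \delta_{23} = o_3 - o_1$. Setting $o_2 := o_1 + \delta_{12}$ is convenient later, since $j_{23}(b_2) = \some{(b_3,o_3 - o_2)}$. Independently, applying property (2) of Definition~\ref{def:meminj} with the composed injection, the readable permission hypothesis, and the composed mapping yields $\vinj{j_{23}\cdot j_{12}}{v_1}{m_3[b_3,o_3]}$.

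Now I split on $v_1$ according to the three nontrivial cases of $\vinj{}{\_}{\_}$. If $v_1 = \Vundef$, take $v_2 := \Vundef$: both $\vinj{j_{12}}{\Vundef}{\Vundef}$ and $\vinj{j_{23}}{\Vundef}{m_3[b_3,o_3]}$ hold by the $\Vundef$-source clause. If $v_1$ is a scalar (int, long, float, single), the composed value injection forces $m_3[b_3,o_3] = v_1$, and we may take $v_2 := v_1$; both sub-injections then hold reflexively. If $v_1 = \vptr{b}{o}$, the composed value injection gives some $b'$ and $\delta$ with $(j_{23}\cdot j_{12})(b) = \some{(b',\delta)}$ and $m_3[b_3,o_3] = \vptr{b'}{o+\delta}$; decomposing this via the definition of composition yields $b''$, $\delta'_{12}$, $\delta'_{23}$ with $j_{12}(b) = \some{(b'',\delta'_{12})}$, $j_{23}(b'') = \some{(b',\delta'_{23})}$, and $\delta'_{12}+\delta'_{23} = \delta$. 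Picking $v_2 := \vptr{b''}{o + \delta'_{12}}$ closes both goals by direct computation.

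I do not expect real obstacles here; the lemma is essentially bookkeeping about how value injections and partial-function composition interact. The only mildly delicate point is the pointer case, where one must be careful to align the offsets so that $o + \delta'_{12} + \delta'_{23}$ equals $o + \delta$, matching the pointer in $m_3[b_3,o_3]$ given by the composed injection. This offset arithmetic is routine and follows immediately from the additivity encoded in the definition of $j_{23}\cdot j_{12}$.
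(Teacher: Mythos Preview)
Your proposal is correct and follows essentially the same approach as the paper: apply property~(2) of Definition~\ref{def:meminj} to obtain $\vinj{j_{23}\cdot j_{12}}{m_1[b_1,o_1]}{m_3[b_3,o_3]}$, then case-split on the shape of $m_1[b_1,o_1]$ (undefined, scalar, pointer) and construct $v_2$ accordingly, decomposing the composed injection in the pointer case. The only cosmetic difference is that you also unfold the mapping of the original block $b_1$ up front, which the paper does not need and you do not actually use either.
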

  Note that $j_{23}\cdot j_{12}(b_1) = \some{(b_3,o_3-o_1)}$ iff $\exists\app b_2\app o_2, j_{12}(b_1) = \some{(b_2,o_2 - o_1)} \land j_{23}(b_2) = \some{(b_3,o_3-o_2)}$.

\begin{proof}
  According to property (2) in ~\defref{def:meminj}, we know that $\vinj{j_{23} \cdot j_{12}}{m_1[b_1,o_1]}{m_3[b_3,o_3]}$. We divide the value $m_1[b_1,o_1]$ into:
  \begin{itemize}
    \item
      If $m_1[b_1,o_1] = \kVundef$, we take $v_2 = \kVundef$. Then $\vinj{j_{12}}{\kVundef}{\kVundef} \land \vinj{j_{23}}{\kVundef}{m_3[b_3,o_3]}$ trivially holds.
    \item
      If $m_1[b_1,o_1]$ is a concrete value, we take $v_2 = m_1[b_1,o_1]$. In such case we have $m_1[b_1,o_1] = v_2 = m_3[b_3,o_3]$.
    \item
      If $m_1[b_1,o_1] = \vptr{b_1'}{o_1'}$, we can derive that $\vinj{j_{23}\cdot j_{12}}{\vptr{b_1'}{o_1'}}{m_3[b_3,o_3]}$ implies $\exists b_3'\ o_3', s.t. m_3[b_3,o_3] = \vptr{b_3'}{o_3'}$ and $ j_{23} \cdot j_{12}(b_1') = \some{(b_3',o_3' - o_1')}$. Therefore 
  \[
     \exists b_2'\ o_2', j_{12}(b_1') = \some{(b_2',o_2' - o_1')} \land j_{23}(b_2') = \some{(b_3',o_3' - o_2')}
   \]
   We take $v_2 = \vptr{b_2'}{o_2'}$ and $\vinj{j_{12}}{m_1[b_1,o_1]}{v_2} \land \vinj{j_{23}}{v_2}{m_3[b_3,o_3]}$ can be derived from the formula above.
   \end{itemize}
 \end{proof}

 \begin{lemma}\label{lem:outofreach-reverse}
  Given $\minj{j_{12}}{m_1}{m_2}$, $\minj{j_{23}}{m_2}{m_3}$ and $j_{23}(b_2) = \some{(b_3,o_3 - o_2)}$. If $(b_2,o_2) \in \outofreach{j_{12}}{m_1}$ and $(b_2,o_2) \in \permmax{m_2}{\knonempty}$, then
  \[
    (b_3,o_3) \in \outofreach{j_{23}\cdot j_{12}}{m_1}
  \]
  \begin{proof}
    According to the definition of $\koutofreach$, If $j_{12}(b_1) = \some{(b_2',o_2' - o_1)}$ and $j_{23}(b_2') = \some{(b_3,o_3 - o_1)}$, we need to prove that $(b_1,o_1) \notin \permmax{m_1}{\knonempty}$. If $b_2 = b_2'$, from $(b_2,o_2) \in \outofreach{j_{12}}{m_1}$ we can directly prove $(b_1,o_1) \notin \permmax{m_1}{\knonempty}$.

    If $b_2 \neq b_2'$, we assume that $(b_1,o_1) \in \permmax{m_1}{\knonempty}$, by property (1) of $\minj{j_{12}}{m_1}{m_2}$ we get $(b_2',o_2') \in \permmax{m_2}{\knonempty}$. Now $(b_2,o_2)$ and $(b_2',o_2')$ are two different positions in $m_2$ which are mapped to the same position $(m_3,o_3)$ in $m_3$. This scenario is prohibited by the non-overlapping property (5) of $\minj{j_{23}}{m_2}{m_3}$. So $(b_1,o_1) \notin \permmax{m_1}{\knonempty}$.
    \end{proof}
\end{lemma}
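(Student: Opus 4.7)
My plan is to unfold the definition of $\koutofreach$ and prove the goal pointwise: assume some $b_1$ and $\delta$ with $(j_{23}\cdot j_{12})(b_1) = \some{(b_3, \delta)}$ and aim to show $(b_1, o_3 - \delta) \not\in \permmax{m_1}{\knonempty}$. From the definition of composition, this assumption unpacks into intermediate data $b_2'$, $\delta_{12}$, $\delta_{23}$ with $j_{12}(b_1) = \some{(b_2', \delta_{12})}$, $j_{23}(b_2') = \some{(b_3, \delta_{23})}$ and $\delta = \delta_{12} + \delta_{23}$. I then case split on whether $b_2' = b_2$.

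In the first case $b_2' = b_2$, the functional nature of $j_{23}$ together with the hypothesis $j_{23}(b_2) = \some{(b_3, o_3 - o_2)}$ forces $\delta_{23} = o_3 - o_2$, hence $\delta_{12} = \delta - (o_3 - o_2)$. Instantiating the hypothesis $(b_2, o_2) \in \outofreach{j_{12}}{m_1}$ at $b_1$ and $\delta_{12}$ directly yields $(b_1, o_2 - \delta_{12}) = (b_1, o_3 - \delta) \not\in \permmax{m_1}{\knonempty}$, as required.

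In the second case $b_2' \neq b_2$, the key idea is that two distinct source blocks in $m_2$ cannot both carry non-empty permission and inject to the same target cell in $m_3$, by the non-overlap property (5) of $\minj{j_{23}}{m_2}{m_3}$. Concretely, suppose for contradiction that $(b_1, o_3 - \delta) \in \permmax{m_1}{\knonempty}$; then by permission preservation (property (1)) of $j_{12}$, the cell $(b_2', (o_3 - \delta) + \delta_{12}) = (b_2', o_3 - \delta_{23})$ lies in $\permmax{m_2}{\knonempty}$. But this cell and $(b_2, o_2)$ (which is in $\permmax{m_2}{\knonempty}$ by hypothesis) both inject under $j_{23}$ to $(b_3, o_3)$, contradicting the non-overlap property. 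Hence $(b_1, o_3 - \delta) \not\in \permmax{m_1}{\knonempty}$.

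The proof is mostly a careful unfolding exercise, so I do not expect any real obstacle beyond bookkeeping. The main place to be careful is the offset arithmetic, especially under the convention $j(b) = \some{(b', \delta)}$ where $(b, o) \mapsto (b', o + \delta)$ versus the form $\some{(b', o' - o)}$ used in the hypotheses: matching up $\delta_{23}$ with $o_3 - o_2$ in the first case, and identifying the witness cell $(b_2', o_3 - \delta_{23})$ in the non-overlap argument in the second case, both rely on this convention and are easy to get wrong on a first pass.
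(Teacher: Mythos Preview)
Your proposal is correct and follows essentially the same approach as the paper: unfold $\koutofreach$, case split on whether the intermediate block $b_2'$ equals $b_2$, use the out-of-reach hypothesis directly in the first case, and in the second case push the assumed permission through $j_{12}$ via property (1) and derive a contradiction from the non-overlap property (5) of $\minj{j_{23}}{m_2}{m_3}$. Your explicit tracking of the offset arithmetic with $\delta_{12}$, $\delta_{23}$ is in fact slightly more careful than the paper's presentation.
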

\subsection{Proof of Lemma \ref{lem:injp-refine-injp-comp}}
Based on definitions and lemmas before, we prove Lemma \ref{lem:injp-refine-injp-comp} in this section:
\begin{tabbing}
  \quad\=\quad\=\quad\=$\exists m_2'\app j_{12}'\app j_{23}',$\=\kill
  \>$\forall j_{12}\app j_{23}\app m_1\app m_2\app m_3,\app \minj{j_{12}}{m_1}{m_2}
  \imply \minj{j_{23}}{m_2}{m_3} \imply \exists j_{13},\app \minj{j_{13}}{m_1}{m_3} \app \land$\\
  \>\>$\forall m_1'\app m_3'\app j_{13}',\app \injpacc{(j_{13}, m_1, m_3)}{(j_{13}', m_1', m_3')} \imply \minj{j_{13}'}{m_1'}{m_3'}  \imply$\\
  \>\>\>$\exists m_2'\app j_{12}'\app j_{23}', \injpacc{(j_{12},m_1,m_2)}{(j_{12}',m_1',m_2')} \land \minj{j_{12}'}{m_1'}{m_2'}$\\
  \>\>\>\>$\land \injpacc{(j_{23},m_2,m_3)}{(j_{23}',m_2',m_3')} \land \minj{j_{23}'}{m_2'}{m_3'}.$
\end{tabbing}
  Given $\minj{j_{12}}{m_1}{m_2}$ and $\minj{j_{23}}{m_2}{m_3}$. We take $j_{13} = j_{23} \cdot j_{12}$, from Lemma \ref{lem:inj-trans} we can prove $\minj{j_{13}}{m_1}{m_3}$. After the external call, given $\injpacc{(j_{13}, m_1, m_3)}{(j_{13}', m_1', m_3')}$ and $\minj{j_{13}'}{m_1'}{m_3'}$.

  We present the construction and properties of $j_{12}', j_{23}'$ and $m_2'$ in Sec. \ref{subsubsec:construction}. Then the proof reduce to prove $\minj{j_{12}'}{m_1'}{m_2'}$, $\minj{j_{23}'}{m_2'}{m_3'}$, $\injpacc{(j_{12},m_1,m_2)}{(j_{12}',m_1',m_2')}$ and $\injpacc{(j_{23},m_2,m_3) \allowbreak }{(j_{23}',m_2',m_3')}$, they are proved in Sec. \ref{subsubsec:detail-proof}

  \subsubsection{Construction and properties of $j_{12}'$, $j_{23}'$ and $m_2'$}\label{subsubsec:construction}

\begin{definition}\label{def:construction}
We construct the memory state $m_2'$ by the following three steps, $j_{12}'$ and $j_{23}'$ are constructed in step (1).

\begin{enumerate}
\item We first extend $m_2$ by allocating new blocks, at the same time we extend $j_{12},j_{23}$ to get $j_{12}'$ and $j_{23}'$ such that $j_{13}' = j_{23}' \cdot j_{12}'$.
  Specifically, for each new block $b_1$ in $m_1'$ relative to $m_1$ which is mapped by $j_{13}'$ as $j_{13}'(b_1) = \some{(b_3,\delta)}$, we allocate a new memory block $b_2$ from $m_2$ and add new mappings $(b_1,(b_2,0))$ and $(b_2,(b_3,\delta))$ to $j_{12}$ and $j_{23}$, respectively.
\item
  We then copy the contents of new blocks in $m_1'$ into corresponding new blocks in $m_2'$ as follows.
  For each \emph{mapped} new block $b_1$ in $m_1'$ where $j_{12}'(b_1) = \some{(b_2,0)}$, we enumerate all positions $(b_1,o_1) \in \permmax{m_1'}{\knonempty}$ and copy the permission of $(b_1,o_1)$ in $m_1'$ to $(b_2,o_1)$ in $m_2'$.
  If $(b_1,o_1) \in \permcur{m_1'}{\kreadable}$, we further set $\mcontents{m_2'}{b_2}{o_1}$ to $v_2$ where $\minj{j_{12'}}{\mcontents{m_1'}{b_1}{o_1}}{v_2}$. The existence of $v_2$ here is provided by Lemma \ref{lem:readable-midvalue} with preconditions $\minj{j_{13}'}{m_1'}{m_3'}$,  $(b_1,o_1) \in \permcur{m_1'}{\kreadable}$ and $j_{13}'(b_1) = \some{(b_3,\delta)}$ (because $b_1$ is a new block chosen in step (1)).
\item Finally, we update the old blocks of $m_2$. If a position $(b_2,o_2) \in \pubtgtmem{j_{12}}{m_1} \cap \pubsrcmem{j_{23}}$, the permission and value of this position in $m_2'$ should comes from the corresponding position $(b_1,o_1)$ in $m_1'$ as depicted in ~\figref{fig:inj-after}. Note that the values are changed only if the position is not read-only in $m_2$. Other positions just remain unchanged from $m_2$ to $m_2'$.
  To complete the construction, we have to enumerate the set $\pubtgtmem{j_{12}}{m_1}\cap \pubsrcmem{j_{23}}$. We state that
  \[
    \pubtgtmem{j_{12}}{m_1} \subseteq \permmax{m_2}{\knonempty}
  \]
  where $\permmax{m_2}{\knonempty}$ is enumerable. Note that $(b_2,o_2) \in \pubtgtmem{j_{12}}{m_1} \allowbreak \iff (b_2,o_2) \notin \outofreach{j_{12}}{m_1}$ by definition. If $(b_2,o_2) \in \pubtgtmem{j_{12}}{m_1}$, then there exists $(b_1,o_1) \in \permmax{m_1}{\knonempty} $ such that $j_{12}(b_1) = \some{(b_2,o_2 - o_1)} $. The property (1) of $\minj{j_{12}}{m_1}{m_2}$ ensures that $(b_2,o_2) \in \permmax{m_2}{\knonempty}$.
  
  The concrete algorithm can be described as follows. For $(b_2,o_2) \in \permmax{m_2}{\kwd{None-}\allowbreak \kwd{empty}}$, we can enumerate $\permmax{m_1}{\knonempty}$ to find whether there exists a corresponding position  $(b_1,o_1) \in \permmax{m_1}{\knonempty} $ such that $j_{12}(b_1) = \some{(b_2,o_2 - o_1)}$. Note that the property (3) of $\minj{j_{12}}{m_1}{m_2}$ ensures that we cannot find more than one of such position. If there exists such $(b_1,o_1)$ and $j_{23}(b_2) \neq \some{(b_3,o_3)}$, We copy the permission of position $(b_1,o_1)$ in $m_1'$ to $(b_2,o_2)$. If $(b_1,o_1) \in \permcur{m_1'}{\kreadable}$ and $(b_2,o_2) \in \permmax{m_2}{\kwritable} $, we further set $\mcontents{m_2'}{b_2}{o_2}$ to $v_2$ where $\vinj{j_{12'}}{m_1'[b_1,o_1]}{v_2}$.
  

\end{enumerate}
\end{definition}

We present several lemmas about $j_{12}', j_{23}'$ and $m_2'$ according to Definition \ref{def:construction} as follows.
 \begin{lemma}\label{lem:constr-inj}
  
  \[
    (1) j_{12} \subseteq j_{12}' \; (2) j_{23} \subseteq j_{23}' \;
    (3) \injsep{j_{12}}{j_{12}'}{m_1}{m_2} \; (4) \injsep{j_{23}}{j_{23}'}{m_2}{m_3}
  \]
\end{lemma}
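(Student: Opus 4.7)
The plan is to prove all four properties by direct appeal to the construction in Definition~\ref{def:construction}, specifically to step (1) which is the only step that touches the injection functions. Steps (2) and (3) only modify memory contents and permissions in $m_2'$; they leave $j_{12}'$ and $j_{23}'$ fixed. So the entire argument is about what step (1) does.

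For properties (1) and (2), I would observe that step (1) extends $j_{12}$ (resp.\ $j_{23}$) only by adding brand-new mappings keyed on freshly chosen blocks: a new $b_1 \notin \validblock{m_1}$ is mapped to a freshly allocated $b_2 \notin \validblock{m_2}$, and that same $b_2$ is then mapped via $j_{23}'$ to the $(b_3,\delta)$ dictated by $j_{13}'(b_1)$. Since $j_{12}(b_1) = \none$ and $j_{23}(b_2) = \none$ by property~(3) of Definition~\ref{def:meminj} applied to the initial injections (both $b_1$ and $b_2$ were invalid in the original memories), these additions never conflict with existing entries, so the old mappings are preserved verbatim. Hence $j_{12} \subseteq j_{12}'$ and $j_{23} \subseteq j_{23}'$.

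For properties (3) and (4), I need to show that any newly added mapping relates an invalid source block to an invalid target block. For (3), each new mapping $(b_1, (b_2, 0))$ in $j_{12}'$ satisfies this by construction: $b_1$ is chosen as a new block in $m_1'$ (so $b_1 \notin \validblock{m_1}$), and $b_2$ is obtained by a fresh allocation starting from $m_2$ (so $b_2 \notin \validblock{m_2}$). For (4), each new mapping $(b_2, (b_3, \delta))$ in $j_{23}'$ has $b_2 \notin \validblock{m_2}$ for the same reason, and we need $b_3 \notin \validblock{m_3}$. This last piece is the one that does not follow purely from local bookkeeping: it needs the assumption $\injpacc{(j_{13}, m_1, m_3)}{(j_{13}', m_1', m_3')}$, which carries $\injsep{j_{13}}{j_{13}'}{m_1}{m_3}$. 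Since $b_1$ is a new block in $m_1'$ we have $j_{13}(b_1) = \none$, and by construction $j_{13}'(b_1) = \some{(b_3, \delta)}$, so the separation hypothesis yields exactly $b_3 \notin \validblock{m_3}$.

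I expect the proof to be essentially routine; the only mildly subtle step is (4), where one must remember that the $b_3$ component of the $j_{23}'$ extension is \emph{inherited} from $j_{13}'$ rather than being freshly chosen, and so its invalidity in $m_3$ must come from the separation clause built into $\injpacc{}{}$ rather than from the allocation logic for $m_2'$. Once that dependency on $\injsep{j_{13}}{j_{13}'}{m_1}{m_3}$ is located, the rest is just a case split on whether a given block appears in the domain of the original $j_{12}$ or $j_{23}$.
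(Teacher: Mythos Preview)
Your proposal is correct and follows the same approach as the paper, which simply states ``Directly from the construction step (1).'' You have in fact supplied more detail than the paper does, and your identification of the subtle point in (4)---that $b_3 \notin \validblock{m_3}$ must come from the $\injsep{j_{13}}{j_{13}'}{m_1}{m_3}$ clause of the ambient $\accsymb_{\kinjp}$ hypothesis rather than from the allocation bookkeeping---is exactly right and is a detail the paper leaves implicit.
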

\begin{proof}
  Directly from the construction step (1)
\end{proof}

 \begin{lemma}\label{lem:m2-unchangedon}
  \[
    (1) \outofreach{j_{12}}{m_1} \subseteq \unchangedon{m_2}{m_2'}
    \; (2) \unmapped{j_{23}} \subseteq \unchangedon{m_2}{m_2'}
  \]
\end{lemma}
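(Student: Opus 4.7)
The plan is to prove each inclusion by unfolding $\unchangedon{m_2}{m_2'}$ and then tracing through the three-step construction of $m_2'$ from Definition~\ref{def:construction}, showing that positions satisfying either hypothesis are never touched by any step.

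First I would fix an arbitrary position $(b_2,o_2)$ with $b_2$ already a valid block of $m_2$ (since $\unchangedon{m_2}{m_2'}$ only needs to constrain old blocks). Step~(1) of the construction only allocates fresh blocks in $m_2'$ relative to $m_2$, and step~(2) only writes permissions and values into those freshly allocated blocks; since $b_2$ is old, both steps leave the cell $(b_2,o_2)$ bit-for-bit identical to what it was in $m_2$. The only place where an old cell of $m_2$ can be modified is step~(3), which by construction alters $(b_2,o_2)$ only when $(b_2,o_2) \in \pubtgtmem{j_{12}}{m_1} \cap \pubsrcmem{j_{23}}$. So it suffices to show that each of the two hypotheses excludes membership in this intersection.

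For part~(1), I would appeal directly to the equivalence $(b_2,o_2) \in \pubtgtmem{j_{12}}{m_1} \iff (b_2,o_2) \not\in \outofreach{j_{12}}{m_1}$ observed just after Definition~\ref{def:pub-mem}; this immediately yields $(b_2,o_2) \notin \pubtgtmem{j_{12}}{m_1}$, hence it is not in the intersection, hence step~(3) does not touch it. For part~(2), unfolding $\unmapped{j_{23}}$ gives $j_{23}(b_2) = \none$, and the definition of $\pubsrcmem{j_{23}}$ then gives $(b_2,o_2) \notin \pubsrcmem{j_{23}}$, again excluding the cell from the intersection. In both cases the permission and value at $(b_2,o_2)$ agree in $m_2$ and $m_2'$, i.e.\ $(b_2,o_2) \in \unchangedon{m_2}{m_2'}$.

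The main bookkeeping obstacle I anticipate is keeping straight the distinction between old and new blocks in part~(2): a newly allocated block $b_2$ in step~(1) also satisfies $j_{23}(b_2) = \none$ vacuously (it was not in the domain of the old $j_{23}$), but such cells do not need to be shown unchanged because $\unchangedon{m_2}{m_2'}$ is only stated over blocks valid in $m_2$. Once the statement is read with this validity restriction (matching the conventions used elsewhere in the development), the case split above goes through without further subtleties.
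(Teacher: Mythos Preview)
Your argument is correct and matches the paper's own proof: both observe that only step~(3) can modify cells in old blocks of $m_2$, and that step~(3) by construction touches precisely the positions in $\pubtgtmem{j_{12}}{m_1} \cap \pubsrcmem{j_{23}}$, from which both inclusions follow by the duality between public memory and the $\koutofreach$/$\kunmapped$ predicates. Your final remark about restricting attention to blocks valid in $m_2$ is the right reading of $\unchangedon{m_2}{m_2'}$ and is handled implicitly in the paper as well.
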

\begin{proof}
  For each changed position $(b_2,o_2)$ from $m_2$ to $m_2'$ in step (3), we enforce that $\exists b_1, j_{12}(b_1) = \some{(b_2,o_2 - o_1)} \land (b_1,o_1) \in \permmax{m_1}{\knonempty}$ and $(b_2,o_2) \notin \unmapped{j_{23}}$. Thus, if $(b_2,o_2) \in \outofreach{j_{12}}{m_1}$ or $(b_2,o_2) \in \unmapped{j_{23}}$, then $(b_2,o_2) \in \kwd{unchanged-}\allowbreak \kwd{on} (m_2,m_2')$.

\end{proof}

 \begin{lemma}\label{lem:m2-mpd}
  \[
    \mpd{m_2}{m_2'}
  \]
\end{lemma}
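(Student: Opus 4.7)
The plan is to prove $\mpd{m_2}{m_2'}$ by a direct case analysis on the construction of $m_2'$ given in Definition~\ref{def:construction}. Fix an arbitrary valid block $b_2 \in \validblock{m_2}$, offset $o_2$, and permission $p$, and assume $(b_2,o_2) \in \permmax{m_2'}{p}$; the goal is to conclude $(b_2,o_2) \in \permmax{m_2}{p}$. Note that step (1) of the construction only allocates \emph{fresh} blocks (so $b_2$ is left alone there), and step (2) only writes into those fresh blocks. Therefore the only way the permissions at $(b_2,o_2)$ can change from $m_2$ to $m_2'$ is through step (3).

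The case split is then on whether step (3) modified the position $(b_2,o_2)$. In the first case (not modified), the permission at $(b_2,o_2)$ in $m_2'$ equals the permission at $(b_2,o_2)$ in $m_2$, and the conclusion is immediate. In the second case (modified), the construction guarantees the existence of some $(b_1,o_1) \in \permmax{m_1}{\knonempty}$ with $j_{12}(b_1) = \some{(b_2, o_2 - o_1)}$, and the permission at $(b_2,o_2)$ in $m_2'$ is simply copied from the permission at $(b_1,o_1)$ in $m_1'$. Hence $(b_1,o_1) \in \permmax{m_1'}{p}$.

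From here the argument chains three facts we already have at our disposal. First, property (3) of $\minj{j_{12}}{m_1}{m_2}$ (invalid source blocks are unmapped) together with $j_{12}(b_1) \neq \none$ gives $b_1 \in \validblock{m_1}$. Second, the hypothesis $\injpacc{(j_{13}, m_1, m_3)}{(j_{13}', m_1', m_3')}$ unfolds (via Definition~\ref{def:injpacc} and Definition~\ref{def:mem-acc}) to include $\mpd{m_1}{m_1'}$, which applied to $b_1$ and $(b_1,o_1) \in \permmax{m_1'}{p}$ yields $(b_1,o_1) \in \permmax{m_1}{p}$. Third, property (1) of $\minj{j_{12}}{m_1}{m_2}$ (preservation of permissions under injection) then transports this back through $j_{12}$ to give $(b_2,o_2) \in \permmax{m_2}{p}$, as required.

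I do not expect a real obstacle here, since this lemma is essentially a bookkeeping consequence of how step (3) is designed to only rewrite a target position by pulling its permission from the corresponding source position. The only subtlety worth spelling out carefully in the Coq proof is the enumeration used in step (3): one must confirm that non-overlap (property (5) of $\minj{j_{12}}{m_1}{m_2}$) makes the source antecedent $(b_1,o_1)$ unique, so that the permission rewrite is well-defined and the case analysis above is exhaustive. Everything else reduces to applying already-established properties of memory injections and the memory-accessibility component of $\accsymb_{\kinjp}$ inherited from the composed world.
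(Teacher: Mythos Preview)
Your proposal is correct and follows essentially the same route as the paper's proof: a case split on whether step (3) touched the position $(b_2,o_2)$, then in the modified case pulling the permission back to $(b_1,o_1)$ in $m_1'$, applying $\mpd{m_1}{m_1'}$ from the $\accsymb_{\kinjp}$ hypothesis, and pushing forward via property (1) of $\minj{j_{12}}{m_1}{m_2}$. You actually spell out a bit more than the paper does (the validity of $b_1$ needed to invoke $\mpd{m_1}{m_1'}$, and the uniqueness of the antecedent via non-overlap), which is fine.
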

\begin{proof}
  For unchanged position $(b_2,o_2)$ in $m_2$, we trivially have $(b_2,o_2) \in \permmax{m_2'}{p} \iff (b_2,o_2)\in \permmax{m_2}{p}$. If $(b_2,o_2)$ is changed in step (3), then the permission of $(b_2,o_2)$ in $m_2'$ is copied from some corresponding position $(b_1,o_1)$ in $m_1'$($j_{12}(b_1) = \some{(b_2,o_2 - o_1)}$). Given $(b_2,o_2) \in \permmax{m_2'}{p}$, we get $(b_1,o_1) \in \permmax{m_1'}{p}$. Form $\mpd{m_1}{m_1'}$ we can further derive that $(b_1,o_1) \in \permmax{m_1}{p}$. Finally, by property (1) of $\minj{j_{12}}{m_1}{m_2}$ we can conclude that $(b_2,o_2) \in \permmax{m_2}{p}$.
\end{proof}

\begin{lemma}\label{lem:m2-rounc}
  \[ \roacc{m_2}{m_2'} \]
\end{lemma}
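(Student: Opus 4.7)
The goal is to show, by unfolding Definition~\ref{def:mem-acc}, that for every valid $b_2$ in $m_2$ and offset $o_2$ with $(b_2, o_2) \notin \permmax{m_2}{\kwritable}$, if $m_2'[b_2, o_2] = v$ and $(b_2, o_2) \in \permcur{m_2'}{\kreadable}$, then $m_2[b_2, o_2] = v$ and $(b_2, o_2) \in \permcur{m_2}{\kreadable}$. My plan is to proceed by case analysis on whether position $(b_2, o_2)$ was touched by step (3) of the construction in Definition~\ref{def:construction}.

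In the \emph{untouched} case, both the value and the current permission at $(b_2, o_2)$ in $m_2'$ agree by construction with those in $m_2$, so the conclusion is immediate.

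In the \emph{touched} case, step (3) guarantees the existence of some $(b_1, o_1) \in \permmax{m_1}{\knonempty}$ with $j_{12}(b_1) = \some{(b_2, o_2 - o_1)}$ and $b_2$ in the domain of $j_{23}$. First, since $(b_2, o_2) \notin \permmax{m_2}{\kwritable}$, the contrapositive of property (1) of $\minj{j_{12}}{m_1}{m_2}$ gives $(b_1, o_1) \notin \permmax{m_1}{\kwritable}$; and crucially the side-condition for value overwriting in step (3) fails, so the content at $(b_2, o_2)$ is not modified, yielding $m_2[b_2, o_2] = m_2'[b_2, o_2] = v$. Second, step (3) copies the current permission from $(b_1, o_1)$ in $m_1'$ to $(b_2, o_2)$ in $m_2'$, so from $(b_2, o_2) \in \permcur{m_2'}{\kreadable}$ we deduce $(b_1, o_1) \in \permcur{m_1'}{\kreadable}$. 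Now invoking $\roacc{m_1}{m_1'}$ (which is part of $\macc{m_1}{m_1'}$ supplied by the $\injpacc$ hypothesis on $(j_{13}, m_1, m_3) \leadsto (j_{13}', m_1', m_3')$) together with $(b_1, o_1) \notin \permmax{m_1}{\kwritable}$ yields $(b_1, o_1) \in \permcur{m_1}{\kreadable}$. Finally, property (1) of $\minj{j_{12}}{m_1}{m_2}$ transports this back to $(b_2, o_2) \in \permcur{m_2}{\kreadable}$.

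The main delicate point is the touched case: we must track carefully that step (3) copies permissions unconditionally but only rewrites values under writability, so the read-only hypothesis on $(b_2, o_2)$ in $m_2$ simultaneously (a) excludes the value-overwriting branch of the construction, and (b) lifts via injection property (1) to read-only status at $(b_1, o_1)$ in $m_1$, which is exactly what $\roacc{m_1}{m_1'}$ needs as a precondition to give back both value- and permission-preservation on the source side.
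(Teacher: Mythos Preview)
Your proof is correct and follows the same core idea as the paper: the side-condition $(b_2,o_2) \in \permmax{m_2}{\kwritable}$ in step~(3) of Definition~\ref{def:construction} ensures that values at read-only positions are never overwritten. The paper's proof is a one-line sketch stating precisely this.

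Your treatment is in fact more thorough than the paper's written proof. The paper's sketch addresses only the value-preservation conjunct of $\roacc{m_2}{m_2'}$, whereas you also work out the permission-preservation conjunct $(b_2,o_2) \in \permcur{m_2}{\kreadable}$. Since step~(3) copies permissions unconditionally (only the value overwrite is gated by writability), this second conjunct genuinely requires the detour through $\roacc{m_1}{m_1'}$ that you supply: lift non-writability to $(b_1,o_1)$ via the contrapositive of injection property~(1), read off current readability in $m_1$ from $\roacc{m_1}{m_1'}$, and push it back down to $m_2$ via property~(1) again. The paper presumably relies on the Coq formalization for this detail; your explicit argument is a useful complement.
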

\begin{proof}
  For each position $(b_2,o_2)$ which has changed value from $m_2$ to $m_2'$ in step (3). We enforce that it is not read-only in $m_2$.
\end{proof}

\begin{lemma}\label{lem:m2-macc}
  \[ \macc{m_2}{m_2'}\]
\end{lemma}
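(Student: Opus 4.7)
The plan is to unfold the definition of $\macc{m_2}{m_2'}$ from Definition~\ref{def:mem-acc} into its three conjuncts and dispatch each of them, two of which have already been established.

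Recall that $\macc{m_2}{m_2'}$ requires (i) $\kwd{validblock}(m_2) \subseteq \kwd{validblock}(m_2')$, (ii) $\mpd{m_2}{m_2'}$, and (iii) $\roacc{m_2}{m_2'}$. Conjunct (ii) is exactly Lemma~\ref{lem:m2-mpd} and conjunct (iii) is exactly Lemma~\ref{lem:m2-rounc}, so both can be discharged by direct appeal to the already-proved lemmas. The only piece left is conjunct (i), which I will argue from the construction of $m_2'$ in Definition~\ref{def:construction}.

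For conjunct (i), I would walk through the three construction steps and observe that none of them removes a block from $m_2$. Step (1) only \emph{allocates} fresh blocks $b_2$ in $m_2$ that correspond to newly allocated blocks $b_1$ in $m_1'$, which strictly enlarges the set of valid blocks. Step (2) only populates these fresh blocks with permissions and contents, and step (3) only modifies permissions and values at positions $(b_2,o_2)$ of \emph{already existing} blocks of $m_2$ that lie in $\pubtgtmem{j_{12}}{m_1} \cap \pubsrcmem{j_{23}}$; no deallocation occurs. Consequently every $b \in \kwd{validblock}(m_2)$ remains in $\kwd{validblock}(m_2')$.

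I do not expect any real obstacle here: the lemma is a direct bookkeeping conclusion from the construction, and the two nontrivial pieces (the max-permission decrease and the read-only invariance) have already been isolated as prior lemmas. The only step that requires a touch of care is verifying, at the level of the Coq development, that the block-allocation primitive used in step (1) is indeed monotone on $\kwd{validblock}$; this is a standard property of CompCert's memory model and can be cited rather than reproved.
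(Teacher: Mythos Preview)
Your proposal is correct and follows essentially the same approach as the paper: unfold $\macc{m_2}{m_2'}$ into its three conjuncts, discharge $\mpd{m_2}{m_2'}$ and $\roacc{m_2}{m_2'}$ by Lemmas~\ref{lem:m2-mpd} and~\ref{lem:m2-rounc}, and obtain $\kwd{validblock}(m_2) \subseteq \kwd{validblock}(m_2')$ from step~(1) of the construction, which only allocates fresh blocks. The paper's proof is terser (it simply cites step~(1) and the two lemmas), but the argument is identical.
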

\begin{proof}
  From step(1) we have $\validblock{m_2} \subseteq \validblock{m_2'}$.
  Together with Lemma \ref{lem:m2-mpd} and ~\lemref{lem:m2-rounc} we can
  derive this lemma.
\end{proof}

\subsubsection{Proof of remaining formulas}
\label{subsubsec:detail-proof}
 Recall that we are still proving Lemma \ref{lem:injp-refine-injp-comp}, we have constructed $j_{12}', j_{23}'$ and $m_2'$. Based on the construction and properties of them presented above, we present complete proofs of last four formulas separately in this section.

 \begin{lemma}\label{lem:constr-inj1}
   $\minj{j_{12}'}{m_1'}{m_2'}$
  \end{lemma}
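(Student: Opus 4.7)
The plan is to verify the six clauses of Definition~\ref{def:meminj} in turn, driving every verification by the same case analysis on the source block $b_1$: either $b_1$ is a newly allocated block introduced in step (1) of Definition~\ref{def:construction}, or $b_1$ is an old block with $j_{12}(b_1) = \some{(b_2, \delta)}$. In the first case every property reduces to the corresponding property of the fresh allocation performed in step (2), where both the permissions and the readable contents of $m_2'$ at $b_2$ are copied directly from $m_1'$ at $b_1$ through $j_{12}'$; the value injection is obtained from Lemma~\ref{lem:readable-midvalue} applied to $\minj{j_{13}'}{m_1'}{m_3'}$, which is exactly how the contents were produced.

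The interesting case is the old block case, which splits further on whether the target address $(b_2, o_2)$ lies in $\pubsrcmem{j_{23}}$ (i.e.\ $j_{23}(b_2) \neq \none$, in which case step (3) rewrote the cell) or in $\unmapped{j_{23}}$ (in which case the cell is untouched by the construction). For the updated cells I would read off permissions and values directly from $m_1'$ via $j_{12}'$ by the very definition of step (3), so properties (1) and (2) follow. For the untouched cells I would argue that $j_{13}(b_1) = \none$, hence by $\injpacc{(j_{13},m_1,m_3)}{(j_{13}',m_1',m_3')}$ the address $(b_1,o_1)$ belongs to $\unchangedon{m_1}{m_1'}$, so both permissions and contents at $(b_1,o_1)$ agree with those in $m_1$; combined with $\minj{j_{12}}{m_1}{m_2}$, with Lemma~\ref{lem:m2-unchangedon} on the target side, and with $j_{12} \subseteq j_{12}'$, properties (1) and (2) again fall out. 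A small but important observation here is that, by $\mpd{m_1}{m_1'}$ from $\macc{m_1}{m_1'}$, readable permission in $m_1'$ on an old block implies non-empty permission in $m_1$, so we never have to manufacture a witness out of nothing when $(b_1,o_1)$ is readable in $m_1'$.

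Properties (3) and (4) are essentially bookkeeping. For (3), any $b_1$ invalid in $m_1'$ is also invalid in $m_1$ (valid blocks only grow) and is not among the newly allocated blocks (those are fresh and hence valid in $m_1'$), so $j_{12}'(b_1) = j_{12}(b_1) = \none$. For (4), the range of $j_{12}'$ consists of old $m_2$-blocks, which are valid in $m_2 \subseteq m_2'$, plus the freshly allocated blocks of step (1), which are valid in $m_2'$ by construction. Property (6) is obtained from the permission rules of the construction together with property (6) of $\minj{j_{12}}{m_1}{m_2}$ and $\mpd{m_1}{m_1'}$, noting that permission in $m_2'$ on an updated cell was copied from $m_1'$, while permission on an unchanged cell coincides with $m_2$.

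The main obstacle is property (5), the no-overlap condition on $j_{12}'$. Collisions among two new blocks are ruled out because step (1) allocates fresh distinct blocks, and collisions between an old and a new block are ruled out because new blocks in $m_2'$ are fresh relative to $m_2$ while old targets live inside $\validblock{m_2}$. The delicate sub-case is two old source blocks $b_1 \neq b_1'$ whose images in $m_2'$ coincide: here I would transport the configuration to $m_3'$ through $j_{23}'$ and use property (5) of $\minj{j_{13}'}{m_1'}{m_3'}$, after verifying that both $(b_1,o_1)$ and $(b_1',o_1')$ lie in $\permmax{m_1'}{\knonempty}$ via $\mpd{m_1}{m_1'}$ and $\minj{j_{12}}{m_1}{m_2}$; the composed mappings $j_{13}'(b_1)$ and $j_{13}'(b_1')$ then agree on their target in $m_3'$, contradicting the non-overlap of $j_{13}'$. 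This single argument ties the proof together and justifies using $j_{23}' \cdot j_{12}' = j_{13}'$ rather than an ad hoc reconstruction.
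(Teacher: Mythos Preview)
Your overall decomposition and the case split (new block vs.\ old block, then $j_{23}(b_2)=\none$ vs.\ $j_{23}(b_2)\neq\none$) match the paper's proof, and your treatment of properties (1), (3), (4), (6) is essentially the same.

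There is, however, a genuine gap in your argument for property~(5), the no-overlap clause, in the ``two old blocks'' sub-case. You propose to transport the collision to $m_3'$ through $j_{23}'$ and invoke property~(5) of $\minj{j_{13}'}{m_1'}{m_3'}$. But $j_{23}'$ agrees with $j_{23}$ on all old blocks of $m_2$ (step~(1) only extends $j_{23}$ on freshly allocated blocks), so if $j_{23}(b_2)=\none$ then $j_{23}'(b_2)=\none$ and hence $j_{13}'(b_1)=(j_{23}'\cdot j_{12}')(b_1)=\none$. Property~(5) of $j_{13}'$ is then vacuous and gives you nothing. The paper avoids this detour entirely: since both $b_1,b_1'$ are mapped by the \emph{old} injection $j_{12}$, one uses $\mpd{m_1}{m_1'}$ (together with $b_1,b_1'\in\validblock{m_1}$, which follows from their being in the domain of $j_{12}$) to push the hypotheses $(b_1,o_1),(b_1',o_1')\in\permmax{m_1'}{\knonempty}$ back to $m_1$, and then applies property~(5) of $\minj{j_{12}}{m_1}{m_2}$ directly. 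This works uniformly, regardless of $j_{23}(b_2)$, and is both simpler and complete.

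A smaller omission: in property~(2) for the ``old block, $j_{23}(b_2)\neq\none$'' case, step~(3) only overwrites the value at $(b_2,o_2)$ when $(b_2,o_2)\in\permmax{m_2}{\kwritable}$; for read-only cells the old value from $m_2$ is kept. You therefore need an extra argument that if $(b_2,o_2)$ is read-only in $m_2$ then $(b_1,o_1)$ is read-only in $m_1$ (by property~(1) of $\minj{j_{12}}{m_1}{m_2}$), so that $\roacc{m_1}{m_1'}$ keeps $m_1'[b_1,o_1]=m_1[b_1,o_1]$ and the old value injection via $j_{12}\subseteq j_{12}'$ still applies. The paper spells this out; your sketch glosses over it.
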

\begin{proof}
We check the properties in ~\defref{def:meminj} as follows:
\begin{enumerate}
\item Given $ j_{12}'(b_1) = \some{(b_2,o_2 - o_1)} \land (b_1, o_1) \in \permk{k}{m_1'}{p}$. We prove $(b_2, o_2) \in \permk{k}{m_2'}{p}$ by cases of $j_{12}(b_1)$. Note that $j_{12}(b_1)$ is either $\none$ or the same as $j'_{12}(b_1)$ because of $j_{12} \subseteq j_{12}'$.
  \begin{itemize}
  \item If $j_{12} (b_1)= \none$, the mapping $j_{12}'(b_1) = \some{(b_2,o_2-o_1)}$ is added in step (1). As a result, we know $\exists b_3\app \delta, j_{13}'(b_1) = \some{b_3,\delta}$. Since $ \permk{k}{m_1'}{p} \subseteq \permmax{m_1'}{\knonempty}$, we know $(b_1,o_1) \in \permmax{m_1'}{\knonempty}$ and the permission of $(b_1,o_1)$ in $m_1'$ is copied to $(m_2,o_2)$ in $m_2'$ in step (2). Therefore $(b_2,o_2) \in \permk{k}{m_2'}{p}$.
  \item If $j_{12}(b_1) = \some{(b_2,o_2)}$, we further divide whether $(b_2,o_2)$ is a public position by $j_{23}(b_2)$
    \begin{itemize}
    \item If $j_{23}(b_2) = \none$, i.e. $(b_2,o_2) \in \unmapped{j_{23}}$, According to Lemma \ref{lem:m2-unchangedon}, we know $(b_2,o_2) \in \unchangedon{m_2}{m_2'}$. At the same time, we also get $(b_1,o_1) \in \unmapped{j_{13}}$ because of $j_{13} = j_{23} \cdot j_{12}$.  Together with $\injpacc{(j_{13},m_1,m_3)}{(j_{13}',m_1',m_3')}$, we can conclude that $(b_1,o_1) \in \kwd{unchanged-}\allowbreak \kwd{on} (m_1,m_1')$.
      
     Therefore, we get $(b_1,o_1) \in \permk{k}{m_1}{p}$. Using property (1) of $\minj{j_{12}}{m_1}{m_2}$ we get $(b_2,o_2) \in \permk{k}{m_2}{p}$. Since $(b_2,o_2)$ is also unchanged between $m_2$ and $m_2'$,  $(b_2, o_2) \in \permk{k}{m_2'}{p}$.
    \item If $j_{23}(b_2) = \some{(b_3,o_3 - o_2)}$, the permission of $(b_2,o_2)$ in $m_2'$ is set as the same as $(b_1,o_1)$ in $m_1'$ in step (3). So $(b_2,o_2) \in \permk{k}{m_2'}{p}$ holds trivially.
    \end{itemize}
  \end{itemize}
  
\item Given $j_{12}'(b_1) = \some{(b_2,o_2 - o_1)} \land (b_1, o_1) \in \permcur{m_1'}{\kreadable}$, following the method in (1) we can prove $\vinj{j_{12}'}{m_1'[b_1,o_1]}{m_2'[b_2,o_2]}$. Note that if $(b_2,o_2)$ is read-only in $m_2$, from property (6) of $\minj{j_{12}}{m_1}{m_2}$ we can derive that $(b_1,o_1)$ is also read-only in $m_1$. Thus the values related by $j$ are both unchanged in $m_1'$ and $m_2'$ thus can be related by $j'$ where $j \subseteq j'$.
\item Given $b_1 \notin \validblock{m_1'}$, we know $b_1 \notin \validblock{m_1}$, therefore $j_{12}(b_1) = \none$. Since $b_1$ cannot be added to $j_{12}'$ in step (1), we can conclude that $j_{12}'(b_1) = \none$.
\item 
  Given $j_{12}'(b_1) = \some{(b_2,\delta)}$, It is easy to show $b_2$ is either old block in $m_2$($j_{12}(b_1) = \some{(b_2,\delta)}$) or newly allocated block($j_{12}(b_1)= \none$), therefore $b_2 \in \validblock{m_2'}$.
\item 
  Given $j_{12}'(b_1) = \some{(b_2,o_2 - o_1)} \land (b_1,o_1) \in \permmax{m_1'}{\knonempty}$ and $j_{12}'(b_1') = \some{(b_2',o_2' - o_1')} \land (b_1',o_1') \in \permmax{m_1'}{\knonempty}$ where $b_1 \neq b_1'$. We need to prove these two positions do not overlap ($(b_2,o_2) \neq (b_2',o_2')$) by cases of whether $b_1$ and $b_1'$ are mapped by old injection $j_{12}$. Note that $j_{12} \subseteq j_{12}'$, so $j_{12}(b)$ is either $\none$ or the same as $j_{12}'(b)$.
    \begin{itemize}
    \item $j_{12} (b_1) = j_{12} (b_1') = \none$. The $j_{12}'$ mappings of them are added in step (1). It is obvious that newly added mappings in step (1) never map different blocks in $m_1'$ into the same block in $m_2'$. Therefore $b_2 \neq b_2'$.
    \item $j_{12}(b_1) = \some{(b_2,o_2 - o_1)}, j_{23}(b_1') = \none$. we can derive that $b_2 \in \validblock{m_2}$ by property (4) of $\minj{j_{12}}{m_1}{m_2}$. While $b_2'$ is newly allocated from $m_2$ in step (1). Therefore $b_2 \neq b_2'$.
    \item $j_{12}(b_1) = \none, j_{12}(b_1') = \some{(b_2',o_2' - o_1')}$. Similarly we have $b_2 \neq b_2'$.
    \item $j_{12}(b_1) = \some{(b_2,o_2 - o_1)}, j_{12}(b_1') =
    \some{(b_2',o_2' - o_1')}$. We can prove $(b_2,o_2) \neq
    (b_2,o_2')$ using the property (5) in $\minj{j_{12}}{m_1}{m_2}$ by
    showing $(b_1,o_1) \in \permmax{m_1}{\knonempty}$ and $(b_1',o_1')
    \in \permmax{m_1}{\knonempty}$. This follows from
    $\mpd{m_1}{m_1'}$ in
    $\injpacc{(j_{13},m_1,m_3)}{(j_{13}',m_1',m_3')}$.
   \end{itemize}
 \item Given $j_{12}'(b_1) = \some{(b_2, o_2 - o_1)} \land (b_2, o_2) \in \permk{k}{m_2'}{p}$. Similarly we prove $ (b_1,o_1) \in \permk{k}{m_1'}{p}$ or $(b_1,o_1) \not\in \permmax{m_1'}{\knonempty} $ by cases of $j_{12}(b_1)$:
   \begin{itemize}
   \item If $j_{12}(b_1) = \none$, then $b_1$ and $b_2$ are new blocks by $\injsep{j_{12}}{j_{12}'}{m_1}{m_2}$. According to the construction steps , every nonempty permission of $(b_2,o_2)$ in $m_2'$ is copied from $(b_1,o_1)$ in $m_1'$. Therefore $(b_1,o_1) \in \permk{k}{m_1'}{p}$.
   \item If $j_{12}(b_1) = \some{(b_2,o_2 - o_1)}$, then $b_1$ and $b_2$ are old blocks. We further divide $j_{23}(b_2)$ into two cases:
     \begin{itemize}
     \item $j_{23}(b_2) = \none$. In this case
       we have $(b_1,o_1) \in \unchangedon{m_1}{m_1'}$ and $(b_2,o_2) \in \unchangedon{m_2}{m_2'}$ (same as (1)). We can derive $(b_2,o_2) \in \permk{k}{m_2}{p}$, then $ (b_1,o_1) \in \permk{k}{m_1}{p} \lor (b_1,o_1) \not\in \permmax{m_1}{\knonempty}$ by property (6) of $\minj{j_{12}}{m_1}{m_2}$. Finally $ (b_1,o_1) \in \permk{k}{m_1'}{p} \lor (b_1,o_1) \not\in \permmax{m_1'}{\knonempty}$ by $(b_1,o_1) \in \unchangedon{m_1}{m_1'}$.
     \item $j_{23}(b_2) = \some{(b_3,o_3)}$. We assume that $(b_1,o_1)
     \in \permmax{m_1'}{\knonempty}$(other-wise the conclusion holds
     trivially), by $\mpd{m_1}{m_1'}$ we can derive that  $(b_1,o_1)
     \in \permmax{m_1}{\knonempty}$. Therefore $(b_2,o_2) \in
     \kwd{pub-tgt-}\allowbreak \kwd{mem}(j_{12}, m_1) \cap \pubsrcmem{j_{23}}$ is copied from
     $m_1'$ in step (3). As a result, we get $(b_1,o_1) \in
     \permk{k}{m_1'}{p}$ from $(b_2,o_2) \in \permk{k}{m_2'}{p}$.
     \end{itemize}
   \end{itemize}
\end{enumerate}
\end{proof}


\begin{lemma}\label{lem:constr-inj2}
  $\minj{j_{23}'}{m_2'}{m_3'}$
\end{lemma}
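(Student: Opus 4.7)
The plan is to discharge the six clauses of \defref{def:meminj} for $j_{23}'$ between $m_2'$ and $m_3'$, in direct parallel to the proof of \lemref{lem:constr-inj1}. The primary case split throughout will be on whether the source block $b_2$ is \emph{old} (so $j_{23}'(b_2) = j_{23}(b_2)$) or \emph{new} (allocated in construction step~(1), where $j_{23}'$ inherits its target from $j_{13}'$ applied to the corresponding fresh $b_1$ with $j_{12}'(b_1) = \some{(b_2,0)}$). For the old-block cases, a secondary split is made on whether $(b_2,o_2) \in \pubtgtmem{j_{12}}{m_1} \cap \pubsrcmem{j_{23}}$: if so, step~(3) copied its permissions and value from the unique $(b_1,o_1)$ satisfying $j_{12}(b_1) = \some{(b_2,o_2-o_1)}$ in $m_1'$; otherwise the position is preserved from $m_2$ by \lemref{lem:m2-unchangedon}.

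Clauses~(1) and~(2) (permission and value preservation) then reduce to three easy subcases. In the new-block case and the ``old-and-public'' subcase, the permission or value at $(b_3,o_3)$ is witnessed in $m_3'$ via $\minj{j_{13}'}{m_1'}{m_3'}$ applied at the ambient $j_{13}'(b_1)$. In the ``old-and-private'' subcase, where $(b_2,o_2) \in \outofreach{j_{12}}{m_1}$, \lemref{lem:outofreach-reverse} lifts this into $(b_3,o_3) \in \outofreach{j_{13}}{m_1}$, so the target cell is unchanged between $m_3$ and $m_3'$ via $\injpacc{(j_{13},m_1,m_3)}{(j_{13}',m_1',m_3')}$, and the injection property from $\minj{j_{23}}{m_2}{m_3}$ carries over directly. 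Clauses~(3) and~(4) are immediate: new target blocks $b_3$ inherit validity in $m_3'$ from $\minj{j_{13}'}$, while old targets satisfy $\validblock{m_3} \subseteq \validblock{m_3'}$ via $\macc{m_3}{m_3'}$. For clause~(5) (non-overlap), a four-way split (old/new $\times$ old/new source blocks) suffices: the old-old case lifts non-empty maximum permissions back from $m_2'$ to $m_2$ using \lemref{lem:m2-mpd} and then invokes property~(5) of $\minj{j_{23}}{m_2}{m_3}$; the mixed cases use freshness of one of the endpoints in $m_2'$ to force distinctness through $j_{13}'$; and the new-new case invokes property~(5) of $\minj{j_{13}'}$ directly.

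Clause~(6) is the most delicate and I plan to handle it by the same three-way split as clauses~(1)--(2). For new $b_2$, step~(2) establishes an agreement between permissions at $(b_2,o_2)$ in $m_2'$ and at $(b_1,o_2)$ in $m_1'$, so property~(6) of $\minj{j_{13}'}{m_1'}{m_3'}$ directly yields the required disjunction. For old-and-public positions the same copying argument applies. The only nontrivial subcase is old-and-private: if $(b_2,o_2) \notin \permmax{m_2}{\knonempty}$, then \lemref{lem:m2-mpd} already gives $(b_2,o_2) \notin \permmax{m_2'}{\knonempty}$; otherwise \lemref{lem:outofreach-reverse} places $(b_3,o_3)$ in $\outofreach{j_{13}}{m_1}$, hence the target cell is unchanged, allowing me to pull the $m_3'$ permission back to $m_3$, then through property~(6) of $\minj{j_{23}}{m_2}{m_3}$ to $m_2$, and finally to $m_2'$ via the unchanged-on preservation of \lemref{lem:m2-unchangedon}.

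The bookkeeping-heavy part will be clause~(6): three injections ($j_{12}$, $j_{23}$, and $j_{13}'$) and three memory states interact simultaneously, and each auxiliary invariant (monotonicity of permissions, unchanged-on for private cells, copying equalities for public cells) must be invoked at precisely the right moment. The key insight that makes the argument go through is the symmetry between the two private directions: private middle-memory cells are protected both because their $j_{12}$-preimages are out-of-reach in $m_1$ and because their $j_{23}$-images become out-of-reach in $m_3$ under the composed injection, so \lemref{lem:outofreach-reverse} lets me route all reasoning about the interpolating memory through the \kinjp{} accessibility on the outer pair, exactly mirroring how \lemref{lem:constr-inj1} handled the source--middle side.
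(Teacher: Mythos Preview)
Your proposal is correct and mirrors the paper's proof closely: the same new/old split on $b_2$, the same secondary split via $\outofreach{j_{12}}{m_1}$ for old blocks, and the same supporting lemmas (\lemref{lem:m2-unchangedon}, \lemref{lem:m2-mpd}, \lemref{lem:outofreach-reverse}) invoked at the same points across all six clauses. One small deviation worth noting: for the new--new subcase of clause~(5) you appeal to property~(5) of $\minj{j_{13}'}{m_1'}{m_3'}$, whereas the paper's text simply asserts that the newly added mappings in $j_{23}'$ send distinct blocks to distinct blocks---your route is the more robust one, since two fresh $b_2,b_2'$ can in principle target the same $b_3$ whenever their $j_{13}'$-preimages $b_1,b_1'$ do.
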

\begin{proof}
  $\ $
\begin{enumerate}
\item Given $j_{23}'(b_2) = \some{(b_3,o_3 - o_2)} \land (b_2,o_2) \in \permk{k}{m_2'}{p}$. We prove $(b_3,o_3) \in \permk{k}{m_3'}{p}$ by cases of whether $b_2 \in \validblock{m_2}$.
  \begin{itemize}
  \item If $b_2 \notin \validblock{m_2}$ is a new block relative to $m_2$, then $(b_2,o_2) \in \permk{k}{m_2'}{p}$ is copied from $m_1'$ in step (2). Therefore we get $(b_1,o_1) \in \permk{k}{m_1'}{p}$ and $j_{23}' \cdot j_{12}'(b_1) = \some{(b_3,o_3 - o_1)}$ according to step (1). From property (1) of $\minj{j_{13}'}{m_1'}{m_3'}$ we get $(b_3,o_3) \in \permk{k}{m_3'}{p}$;
  \item If $b_2 \in \validblock{m_2}$, then $j_{23}(b_2) = \some{(b_3,o_3)}$ from $\injsep{j_{23}}{j'_{23}}{m_2}{m_3}$. We further divide whether $(b_2,o_2) \in \outofreach{j_{12}}{m_1}$ using the same algorithm in step (2).
    \begin{itemize}
    \item If $(b_2,o_2) \in \outofreach{j_{12}}{m_1}$. According to ~\lemref{lem:m2-unchangedon}, we can derive $(b_2,o_2) \allowbreak \in \unchangedon{m_2}{m_2'}$ and $(b_2,o_2) \in \permk{k}{m_2}{p}$. From $\minj{j_{23}}{m_2}{m_3}$ we can derive $(b_3,o_3) \in \permk{k}{m_3}{p}$.
      By Lemma \ref{lem:outofreach-reverse}, $(b_3,o_3) \in \outofreach{\allowbreak j_{13}}{m_1}$. Therefore $(b_3,o_3)\in \unchangedon{m_3}{m_3'}$ and $(b_3,o_3) \in \permk{k}{\allowbreak m_3'}{p}$.
    \item If $(b_2,o_2) \notin \outofreach{j_{12}}{m_1}$, the permission of public position $(b_2,o_2)$ in $m_2'$ is copied from $m_1'$ in step (3). Thus $(b_1,o_1) \in \permk{k}{m_1'}{p}$ and $j_{13}'(b_1) = \some{(b_3,o_3 - o_1)}$.
      From property (1) of $\minj{j_{13}'}{m_1'}{m_3'}$ we get $(b_3,o_3) \in \permk{k}{m_3'}{p}$.
    \end{itemize}
  \end{itemize} 
\item The proof is similar to (1). Lemma \ref{lem:readable-midvalue} ensures that the constructed value $v_2$ in $m_2'$ can be related to the value in $m_3'$ as $\vinj{j_{23}'}{v_2}{m_3'[b_3,o_3]}$. Note that if $(b_2,o_2)$ is read-only in $m_2$, the property (1) of $\minj{j_{23}}{m_2}{m_3}$ provides that mapped position $(b_3,o_3)$ is also read-only in $m_3$.
\item Given $b_2 \notin \validblock{m_2'}$, we have $b_2 \notin \validblock{m_2}$ and $j_{23}(b_2) = \none$. Also $b_2$ is not added into the domain of $j_{23}'$ in step (1), so $j_{23}'(b_2) = \none$.
\item Given $j'_{23}(b_2) = \some{(b_3,o_3)}$. Similarly $b_3$ is either an old block in $m_3$($j_{23}(b_2) = \some{(b_3,o_3)}$) or a new block in $m_3'$($j_{23}(b_2) = \none$). Therefore $b_3 \in \validblock{m_3'}$.
\item
  Given $j_{23}'(b_2) = \some{(b_3,o_3 - o_2)} \land (b_2,o_2) \in \permmax{m_2'}{\knonempty}$ and $j_{23}'(b_2') = \some{(b_3',o_3' - o_2')} \land (b_2',o_2') \in \permmax{m_2'}{\knonempty}$ where $b_2 \neq b_2'$. We need to prove that $(b_3,o_3) \neq (b_3',o_3')$ by cases of whether $b_2$ and $b_2'$ are mapped by old injection $j_{23}$. Note that $j_{23} \subseteq j_{23}'$, so $j_{23}(b)$ is either $\none$ or the same as $j_{23}'(b)$.
    \begin{itemize}
    \item $j_{23} (b_2) = j_{23} (b_2') = \none$. The $j_{23}'$ mappings of them are added in step (1). It is obvious that newly added mappings in $j_{23}'$ never map different blocks in $m_2'$ into the same block in $m_3'$. Therefore $b_3 \neq b_3'$.
    \item $j_{23}(b_2) = \some{(b_3,o_3 - o_2)}, j_{23}(b_2') = \none$. we can derive that $b_3\in \validblock{m_3}$ By property (4) of $\minj{j_{23}}{m_2}{m_3}$. While $b_3' \notin \validblock{m_3}$ can be derived from $\injsep{j_{23}}{j_{23}'}{m_2}{m_3}$. Therefore $b_3 \neq b_3'$.
    \item $j_{23}(b_2) = \none, j_{23}(b_2') = \some{(b_3',o_3' - o_3')}$. Similarly we have $b_3 \neq b_3'$.
    \item $j_{23}(b_2) = \some{(b_3,o_3 - o_2)}, j_{23}(b_2') = \some{(b_3',o_3' - o_2')}$. We can prove $(b_3,o_3) \neq (b_3,o_3')$ using the property (5) in $\minj{j_{23}}{m_2}{m_3}$ by showing $(b_2,o_2) \in \permmax{m_2}{\knonempty}$ and $(b_2',o_2') \in \permmax{m_2}{\knonempty}$. This follows from $\mpd{m_2}{m_2'}$ (Lemma \ref{lem:m2-mpd}).
   \end{itemize}
 \item
   Given $j_{23}'(b_2) = \some{(b_3, o_3 - o_2)} \land (b_3, o_3) \in \permk{k}{m_3'}{p}$. Similarly we prove $ (b_2,o_2) \in \permk{k}{m_2'}{p}$ or $(b_2,o_2) \not\in \permmax{m_2'}{\knonempty} $ by cases of $j_{23}(b_2)$:
   \begin{itemize}
   \item If $j_{23}(b_2) = \none$, then $b_2$ and $b_3$ are new blocks by $\injsep{j_{23}}{j_{23}'}{m_2}{m_3}$. According to step (1), we know that $ \exists b_1 \app o_1, j_{13}'(b_1) = \some{(b_3,o_3- o_1)}$. At the same time, we also know that the permission of $(b_2,o_2)$ in new block of $m_2'$ is copied from $(b_1,o_1)$ in $m_1'$. Now from property (6) of $\minj{j_{13}'}{m_1'}{m_3'}$ we can derive that $(b_1,o_1) \in \permk{k}{m_1'}{p} \lor (b_1,o_1) \notin \permmax{m_1'}{\knonempty}$, therefore $(b_2,o_2) \in \permk{k}{m_2'}{p} \lor (b_2,o_2) \notin \permmax{m_2'}{\knonempty}$.
   \item If $j_{23}(b_2) = \some{(b_3,o_3 - o_2)}$, then $b_2$ and $b_3$ are old blocks. We further divide $b_2$ into two cases:
     \begin{itemize}
     \item If $(b_2,o_2) \in \outofreach{j_{12}}{m_1}$, we have
     $(b_2,o_2) \in \unchangedon{m_2}{m_2'}$ (Lemma
     \ref{lem:m2-unchangedon}). If $(b_2,o_2) \in
     \permmax{m_2'}{\knonempty}$(otherwise the conclusion holds
     trivially), then $(b_2,o_2) \in \permmax{m_2}{\knonempty}$ holds
     ($\kwd{max-perm-dec}\allowbreak (m_2,m_2')$).
     According to ~\lemref{lem:outofreach-reverse}, we get $(b_3,o_3) \in \unchangedon{m_3}{m_3'}$ and $(b_3,o_3)\in \permk{k}{m_3}{p}$. Then we can derive that $ (b_2,o_2) \in \permk{k}{m_2}{p} \lor (b_2,o_2) \not\in \permmax{m_2}{\knonempty}$ by property (6) of $\minj{j_{23}}{m_2}{m_3}$. Finally we can prove that \[(b_2,o_2) \in \permk{k}{m_2'}{p} \lor (b_2,o_2) \not\in \permmax{m_2'}{\knonempty}.\]

     \item If $(b_2,o_2) \notin \outofreach{j_{12}}{m_1}$, we know that $\exists b_1, j_{13}'(b_1)=\some{b_3,o_3 - o_1}$. From $\minj{j_{13}'}{m_1'}{m_3'}$ we can derive that $(b_1,o_1) \in \permk{k}{m_1'}{p} \lor (b_1,o_1) \notin \permmax{m_1'}{p}$. Meanwhile, the permission of $(b_2,o_2) \in \pubtgtmem{j_{12}}{m_1} \allowbreak \cap \pubsrcmem{j_{23}}$ is copied from $m_1'$ in step (3). Therefore
       \[(b_2,o_2) \in \permk{k}{m_2'}{p} \lor (b_2,o_2) \notin \permmax{m_2'}{\knonempty}\]
     \end{itemize}
   \end{itemize}
 \end{enumerate}
 \end{proof}
\begin{lemma}$\injpacc{(j_{12},m_1,m_2)}{(j_{12}',m_1',m_2')}$\label{lem:injp-acc1}
\end{lemma}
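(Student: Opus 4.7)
The plan is to check each of the six clauses in Definition~\ref{def:injpacc} for the pair $(j_{12}, m_1, m_2) \leadsto (j_{12}', m_1', m_2')$, by assembling the auxiliary lemmas that have already been proved about the construction in Section~\ref{subsubsec:construction} together with the hypothesis $\injpacc{(j_{13},m_1,m_3)}{(j_{13}',m_1',m_3')}$.

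First I would dispatch the clauses that follow immediately from the construction: $j_{12} \subseteq j_{12}'$ and the separation property $\injsep{j_{12}}{j_{12}'}{m_1}{m_2}$ are items (1) and (3) of Lemma~\ref{lem:constr-inj}; target-side memory protection $\outofreach{j_{12}}{m_1} \subseteq \unchangedon{m_2}{m_2'}$ is item (1) of Lemma~\ref{lem:m2-unchangedon}; and target-side monotonicity $\macc{m_2}{m_2'}$ is Lemma~\ref{lem:m2-macc}. Source-side monotonicity $\macc{m_1}{m_1'}$ is already contained in the hypothesis $\injpacc{(j_{13},m_1,m_3)}{(j_{13}',m_1',m_3')}$, so it is inherited verbatim.

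The only genuinely new step is the source-side protection clause $\unmapped{j_{12}} \subseteq \unchangedon{m_1}{m_1'}$. For this I would argue by compositionality of injections: if $(b_1,o_1)\in\unmapped{j_{12}}$, i.e.\ $j_{12}(b_1)=\none$, then because $j_{13}=j_{23}\cdot j_{12}$ we also have $j_{13}(b_1)=\none$, so $(b_1,o_1)\in\unmapped{j_{13}}$. The hypothesis on the composed world then gives $(b_1,o_1)\in\unchangedon{m_1}{m_1'}$ as required.

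I expect no real obstacle here: the difficult content—building $m_2'$ and verifying that the new target memory is both injected correctly and preserves unchanged-on for out-of-reach regions—has already been isolated in the lemmas about the construction. The one subtlety worth flagging is the direction of the implication on unmapped regions: ``unmapped by $j_{12}$'' is \emph{stronger} than ``unmapped by $j_{13}$'' (since $j_{13}$ factors through $j_{12}$), which is exactly what allows the source-side protection of the composed world to be transferred to the $j_{12}$ component without needing any additional information about $j_{23}$ or $m_2$.
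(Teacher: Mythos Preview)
Your proposal is correct and follows essentially the same approach as the paper: you invoke Lemmas~\ref{lem:constr-inj}, \ref{lem:m2-unchangedon}, and \ref{lem:m2-macc} for the target-side and bookkeeping clauses, inherit $\macc{m_1}{m_1'}$ from the composed hypothesis, and reduce the remaining source-side protection clause to the inclusion $\unmapped{j_{12}} \subseteq \unmapped{j_{13}}$, which holds because $j_{13} = j_{23} \cdot j_{12}$. This is exactly the paper's argument.
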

\begin{proof}
According to Definition \ref{def:injpacc}, most of the properties of $\injpacc{(j_{12},m_1,m_2)}{(j_{12}',m_1',m_2')}$ have been proved in Lemma \ref{lem:constr-inj}, Lemma \ref{lem:m2-unchangedon} and Lemma \ref{lem:m2-macc}. From $\injpacc{(j_{13},m_1,m_3)}{(j_{13}',m_1',m_3')}$ we can get $\macc{m_1}{m_1'}$ and $\unmapped{j_{13}} \subseteq \unchangedon{m_1}{m_1'}$. To get the last leaving property $\unmapped{j_{12}} \subseteq \unchangedon{m_1}{m_1'}$ we only need to show
\[
  \unmapped{j_{12}} \subseteq \unmapped{j_{13}}
\]
where $j_{13} = j_{23} \cdot j_{12}$. This relations holds simply because of $\forall b, j_{12}(b) = \none \imply j_{23} \cdot j_{12}(b) = \none$. In other word, more regions in $m_1$ is protected in $\injpacc{(j_{13},m_1,m_3)}{(j_{13}',m_1',m_3')}$ than in $\injpacc{(j_{12},m_1,m_2)}{(j_{12}',m_1',m_2')}$.
\end{proof}

\begin{lemma}$\injpacc{(j_{23},m_2,m_3)}{(j_{23}',m_2',m_3')}$\label{lem:injp-acc2}
  \end{lemma}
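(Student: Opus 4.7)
The proof mirrors that of Lemma \ref{lem:injp-acc1} on the dual (target) side. By unfolding Definition \ref{def:injpacc}, we must establish six properties: $j_{23} \subseteq j_{23}'$; $\unmapped{j_{23}} \subseteq \unchangedon{m_2}{m_2'}$; $\outofreach{j_{23}}{m_2} \subseteq \unchangedon{m_3}{m_3'}$; $\macc{m_2}{m_2'}$; $\macc{m_3}{m_3'}$; and the separation property $\injsep{j_{23}}{j_{23}'}{m_2}{m_3}$. Of these, five follow immediately from results already established: $j_{23} \subseteq j_{23}'$ and $\injsep{j_{23}}{j_{23}'}{m_2}{m_3}$ come from Lemma \ref{lem:constr-inj} (parts (2) and (4)); $\unmapped{j_{23}} \subseteq \unchangedon{m_2}{m_2'}$ is Lemma \ref{lem:m2-unchangedon}(2); $\macc{m_2}{m_2'}$ is Lemma \ref{lem:m2-macc}; and $\macc{m_3}{m_3'}$ is directly projected from the hypothesis $\injpacc{(j_{13},m_1,m_3)}{(j_{13}',m_1',m_3')}$.

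The only substantive step is to prove $\outofreach{j_{23}}{m_2} \subseteq \unchangedon{m_3}{m_3'}$. Since the hypothesis $\injpacc{(j_{13},m_1,m_3)}{(j_{13}',m_1',m_3')}$ already delivers $\outofreach{j_{13}}{m_1} \subseteq \unchangedon{m_3}{m_3'}$, it suffices to reduce to the set inclusion
\[
\outofreach{j_{23}}{m_2} \;\subseteq\; \outofreach{j_{13}}{m_1}.
\]
This is the target-side analogue of the inclusion $\unmapped{j_{12}} \subseteq \unmapped{j_{13}}$ used in the proof of Lemma \ref{lem:injp-acc1}, and it captures the same underlying intuition: composing $j_{12}$ with $j_{23}$ can only enlarge the out-of-reach region of the target memory, because fewer positions in $m_3$ are witnessed by a nonempty source cell in $m_1$ than are witnessed by a nonempty intermediate cell in $m_2$.

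To prove the inclusion, suppose $(b_3,o_3) \in \outofreach{j_{23}}{m_2}$, and let $b_1,o_1$ be such that $j_{13}(b_1) = \some{(b_3,o_3-o_1)}$. Since $j_{13} = j_{23} \cdot j_{12}$, there exist $b_2,o_2$ with $j_{12}(b_1) = \some{(b_2,o_2-o_1)}$ and $j_{23}(b_2) = \some{(b_3,o_3-o_2)}$. The assumption on $(b_3,o_3)$ then forces $(b_2,o_2) \notin \permmax{m_2}{\knonempty}$. Applying the contrapositive of permission preservation (property (1) of $\minj{j_{12}}{m_1}{m_2}$) to the mapping $j_{12}(b_1) = \some{(b_2, o_2-o_1)}$ yields $(b_1,o_1) \notin \permmax{m_1}{\knonempty}$, which is exactly what is needed to conclude $(b_3,o_3) \in \outofreach{j_{13}}{m_1}$.

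I do not expect any serious obstacles here: the construction of $m_2'$ and the bookkeeping work in Lemma \ref{lem:constr-inj} and Lemma \ref{lem:m2-unchangedon} have already absorbed all the delicate reasoning about protected regions. The only mildly subtle point is recognizing which direction of monotonicity to exploit, namely that composition of injections \emph{increases} the out-of-reach set on the target side, in complete symmetry with how it increases the unmapped set on the source side. Once that observation is in hand, the remaining argument is a one-line chase through the permission-preservation clause of memory injection.
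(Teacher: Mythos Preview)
Your proposal is correct and follows essentially the same approach as the paper: both reduce the remaining obligation to the inclusion $\outofreach{j_{23}}{m_2} \subseteq \outofreach{j_{23}\cdot j_{12}}{m_1}$ and discharge it by factoring through $j_{12}$ and applying the contrapositive of permission preservation. Your write-up is in fact more explicit than the paper's, which simply says ``Similarly, we only need to show\ldots'' and leaves the bookkeeping about the other five clauses implicit.
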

\begin{proof}
Similarly, we only need to show
\[
  \outofreach{j_{23}}{m_2} \subseteq \outofreach{j_{23}\cdot j_{12}}{m_1}
\]
Given $(b_3,o_3) \in \outofreach{j_{23}}{m_2}$, i.e. \[\forall b_2 \app o_2, j_{23}(b_2) = \some{(b_3,o_3)} \imply (b_2,o_2) \notin \permmax{m_2}{\knonempty}\] We need to prove $(b_3,o_3) \in \outofreach{j_{23}\cdot j_{12}}{m_1}$. as follows.
If $j_{23} \cdot j_{12} (b_1) = \some{(b_3,o_3)}$,i.e. $\exists b_2, j_{12}(b_1) = \some{(b_2,o_2)} \land j_{23}(b_2) = \some{(b_3,o_3)}$, we can derive that $(b_2,o_2) \notin \permmax{m_2}{\knonempty}$. By property (1) of $\minj{j_{12}}{m_1}{m_2}$, we can get $(b_1,o_1) \notin \permmax{m_1}{\knonempty}$. Therefore $ (b_3,o_3) \in \outofreach{j_{23}\cdot j_{12}}{m_1}$.
\end{proof}

Since we have proved all 4 required properties (~\lemref{lem:constr-inj1} to \lemref{lem:injp-acc2}) of the constructed memory state $m_2'$, ~\lemref{lem:injp-refine-injp-comp} is proved.

\subsection{Proof of Lemma \ref{lem:injp-comp-refine-injp}}
We prove Lemma \ref{lem:injp-comp-refine-injp} in this section:
\begin{tabbing}
  \quad\=\quad\=\quad\=$\exists m_2'\app j_{12}'\app j_{23}',$\=\kill
  \>$\forall j_{13}\app m_1\app m_3,\app \minj{j_{13}}{m_1}{m_3} 
     \imply \exists j_{12}\app j_{23}\app m_2,\app \minj{j_{12}}{m_1}{m_2} \land
     \minj{j_{23}}{m_2}{m_3} \land$\\
  \>\>$\forall m_1'\app m_2'\app m_3'\app j_{12}'\app j_{23}',\app
       \injpacc{(j_{12}, m_1, m_2)}{(j_{12}', m_1', m_2')} \imply
       \injpacc{(j_{23}, m_2, m_3)}{(j_{23}', m_2', m_3')} \imply$\\
  \>\>\>$\minj{j_{12}'}{m_1'}{m_2'} \imply 
       \minj{j_{23}'}{m_2'}{m_3'} \imply
       \exists j_{13'},\app \injpacc{(j_{13}, m_1, m_3)}{(j_{13}', m_1', m_3')}
         \land \minj{j_{13}'}{m_1'}{m_3'}.$
\end{tabbing}
\begin{proof}
  Given $\minj{j_{13}}{m_1}{m_3}$, take $j_{12} = \{(b,(b,0))| j_{13}(b) \neq \none\}$, $j_{23} = j_{13}$ and $m_2 = m_1$. As a result, $\minj{j_{23}}{m_2}{m_3}$ holds trivially. We show $\minj{j_{12}}{m_1}{m_1}$ as follows:
  \begin{enumerate}
  \item Given $j_{12}(b_1) = \some{(b_2,o_2 - o_1)} \land (b_1,o_1) \in \permk{k}{m_1}{p}$, according to the definition of $j_{12}$ we know that $b_2 = b_1$ and $o_2 = o_1$. Therefore $(b_2,o_2) \in \permk{k}{m_1}{p}$.
  \item Given $j_{12}(b_1) = \some{(b_2,o_2 - o_1)} \land (b_1,o_1) \in \permcur{m_1}{p}$, similar to (1) we know $b_2 = b_1$ and $o_2 = o_1$. Therefore $m_1[b_1,o_1] = m_1[b_2,o_2]$. If $m_1[b_1,o_1]$ is not in the form of $\vptr{b_1'}{o_1'}$, $\vinj{j_{12}}{m_1[b_1,o_1]}{m_1[b_2,o_2]}$ holds trivially.

    If $m_1[b_1,o_1] = \vptr{b_1'}{o_1'}$, from $j_{12}(b_1) = \some{(b_1,0)}$ we get $j_{13}(b_1) \neq \none$. According to property (2) of $\minj{j_{13}}{m_1}{m_3}$, $\exists v_3, \minj{j_{13}}{\vptr{b_1'}{o_1'}}{v_3}$. Which means that $j_{12}(b_1') = \some{(b_1',0)}$, therefore $\vinj{j_{12}}{\vptr{b_1'}{o_1'}}{\vptr{b_1'}{o_1'}}$.
  \item Given $b_1 \notin \validblock{m_1}$, we can derive that $j_{13}(b_1) = \none$ by $\minj{j_{13}}{m_1}{m_3}$. Therefore $j_{12}(b_1) = \none$ holds by definition.
  \item Given $j_{12}(b_1) = \some{(b_2,\delta)}$, we know that $j_{13}(b_1) \neq \none$. Therefore $b_1 \in \validblock{m_1}$ by $\minj{j_{13}}{m_1}{m_3}.$ Since $b_1 = b_2$, $b_2 \in \validblock{m_1}$.
  \item Given $b_1 \neq b_1'$, $j_{12}(b_1) = \some{b_2,o_2 - o_1}$ and $j_{12}(b_1') = \some{b_2', o_2' - o_1'}$. It is straightforward that $b_2 = b_1$, $b_2' = b_1'$ therefore $b_2 \neq b_2'$.
  \item Given $j_{12}(b_1) = \some{b_2,o_2 - o_1}$ and $(b_2,o_2) \in \permk{k}{m_1}{p}$. Similarly we have $b_2 = b_1$, $o_2 = o_1$ and $(b_1,o_1) \in \permk{k}{m_1}{p}$.
  \end{enumerate}
  After external calls, given preconditions $\injpacc{(j_{12}, m_1, m_2)}{(j_{12}', m_1', m_2')}$, $\injpacc{(j_{23}, m_2, m_3)}{(j_{23}', m_2', m_3')}$, $\minj{j_{12}'}{m_1'}{m_2'}$ and $\minj{j_{23}'}{m_2'}{m_3'}$. We can get $\minj{j_{13}'}{m_1'}{m_3'}$ directly by Lemma \ref{lem:inj-trans}. For $\injpacc{(j_{13}, m_1, m_3)}{(j_{13}', m_1', m_3')}$,
    \begin{enumerate}
    \item We can easily show $j_{13} = j_{23} \cdot j_{12}$ by the definition of $j_{12}$. Since $j_{12} \subseteq j_{12'}$, $j_{23} \subseteq j_{23}'$, we can conclude that $j_{23} \cdot j_{12} \subseteq j_{23}' \cdot j_{12}'$, i.e. $j_{13} \subseteq j_{13}'$.
    \item
      \[
        \unmapped{j_{13}} \subseteq \unchangedon{m_1}{m_1'}
      \]
      By definition of $j_{12}$, we have $\unmapped{j_{12}} = \unmapped{j_{13}}$. Therefore the result comes directly from $\injpacc{(j_{12}, m_1, m_2)}{(j_{12}', m_1', m_2')}$.
    \item
      \[
        \outofreach{j_{13}}{m_1} \subseteq \unchangedon{m_3}{m_3'}
      \]
      Since $j_{23} = j_{13}$ and $m_2 = m_1$, the result comes directly from $\injpacc{(j_{23}, m_2, m_3)}{(j_{23}', m_2', m_3')}$.
    \item $\macc{m_1}{m_1'}$ comes from $\injpacc{(j_{12}, m_1, m_2)}{(j_{12}', m_1', m_2')}$.
    \item $\macc{m_3}{m_3'}$ comes from $\injpacc{(j_{23}, m_2, m_3)}{(j_{23}', m_2', m_3')}$.
    \item \[\injsep{j_{13}}{j_{13}'}{m_1}{m_3}\]
      If $j_{13}(b_1) = \none$ and $j_{13}'(b_1) = \some{(b_3,o_3 - o_1)}$, we get \[j_{12}(b_1) = \none \text{ and } \exists b_2, j_{12}'(b_1) = \some{(b_2,o_2 - o_1)} \land j_{23}'(b_2) = \some{(b_3,o_3 - o_2)}\] by $\injsep{j_{12}}{j_{12}'}{m_1}{m_2}$ we get $b_1 \notin \validblock{m_1}$ and $b_2 \notin \validblock{m_2}$. By property (3) of $\minj{j_{23}}{m_2}{m_3}$ we can derive that $j_{23}(b_2) = \none$. Finally we get $b_3 \notin \validblock{m_3}$ by $\injsep{j_{23}}{j_{23}'}{m_2}{m_3}$.
    \end{enumerate}
 \end{proof}

\section{Verification of the Encryption Server and Client Example}
\label{sec:server-sim}

\subsection{Refinement of the Hand-written Server}

The following is the proof for~\thmref{the:l2sim}.

\begin{proof}
  At the top level, $\scc$ is expanded to $\kro \cdot \kwt \cdot \kcainjp \cdot
  \kasm_\kinjp$. As the invariant $\kro$ and $\kwt$ in $\cli$ level
  are commutative, i.e., $\scequiv{\kro \cdot \kwt}{\kwt \cdot \kro}$
  as stated in~\lemref{lem:inv-comm}, we can change their order in
  $\scc$.
  By the vertical compositionality, we first prove
  $\osims{\kwt}{L_{\texttt{S}}}{L_{\texttt{S}}}$ and
  $\osims{\kasm_\kinjp}{\sem{\code{server\_opt.s}}}{\sem{\code{server\_opt.s}}}$,
  which are both self simulation and straightforward (the latter one
  is provided by the adequacy theorem).
  Since the relation between source and target programs involves an
  optimization of constant propagation of the variable $\kwd{key}$, we need to
  use \kro together with \kcainjp to establish the simulation
  $\osims{\kro\cdot\kcainjp}{\Sspec}{\sem{\code{server\_opt.s}}}$.
  Note that for the unoptimized version we can prove $\osims{\kcainjp}
  {\Sspec}{\sem{\kser}}$ and prove $\kro$ using self simulation
  like $\kwt$.
  
  The key of this proof is to establish a relation $R \in \krtype{W_{\kro \cdot
      \kcainjp}}{S_S}{\kregset \times \kmem}$ satisfying the simulation diagram
  \figref{fig:server-sim}.
  %
  %
  Given $w \in W_\kro \times W_\kcainjp = ((se,m_0),((j,m,\tm), \sig, \regset))$,
  if $\sig \neq \intptrvoidsig \lor m_0\neq m$ then $R(w) = \emptyset$. Assume $\sig
  = \intptrvoidsig \land m_0 = m$ (these conditions are provided by $I$ of $L_{\texttt{S}}$
  and related incoming queries), then $R(w)$ is defined as follows:

{\small
\begin{tabbing}
  \quad\= (a.1)\quad\=\quad\=\kill
  \>(a) \>$(\kwd{Calle}\app i\app \vptr{b}{o} \app m, (\regset, \tm)) \in R(w) \iff \textcolor{orange}{\kwd{(* initial state *)}}$\\
  \>(a.1)\>\>$\regset(\rdireg)=i \land \vinj{j}{\vptr{b}{0}}{\regset(\rsireg)}
  \land \regset(\pcreg) =\vptr{b_e}{0} \land \minj{j}{m}{\tm} \land \color{red}{\rovalid{se}{m}}$\\
  \>(b) \>$(\kwd{Callp}\app \rb \app \vptr{b}{o} \app m_1, (\regset_1, \tm_1)) \in R(w) \iff {\color{orange}\kwd{(* before external call *)}}$\\
  \>(b.1)\>\>$\regset_1(\spreg) = \vptr{b_s}{0} \land \regset_1(\rdireg)=\vptr{b_s}{8} \land j'\app \rb=\some{b_s,8}$\\
  \>(b.2)\>\>$\land \regset_1({{\rareg}}) =\vptr{b_g}{5} \land \minj{j'}{m_1}{\tm_1} \land \vinj{j}{\vptr{b}{o}}{\regset_1(\pcreg)}$\\
  \>(b.3)\>\>$\land \forall r, r \in \kwd{callee-save-regs} \to \regset_1(r) = \regset(r)$ \\
  \>(b.4)\>\>$\land \injpacc{(j,m,\tm)}{(j',m_1,\tm_1)} \land \color{red}{\rovalid{se}{m_1}}$\\
  \>(b.5)\>\>$\land \tm_1[b_s,0] = \regset(\spreg) \land  \tm_1[b_s,16] = \regset(\rareg)$\\
  \>(b.6)\>\>$\land \color{blue}{\pset{(b_s,o)}{0 \leq o < 8 \vee 16 \leq o < 24} \subseteq \outofreach{j'}{m_1}} $  \\
  \>(b.7)\>\>$\land \pset{(b_s,o)}{0 \leq o < 8 \vee 16 \leq o < 24} \subseteq \permcur{\tm_1}{\kfreeable}$ \\
  \>(c) \>$(\kwd{Retp}\app \rb\app m_2, (\regset_2, \tm_2)) \in R(w) \iff {\color{orange}\kwd{(* after external call *)}}$  \\
  \>(c.1)\>\>$\regset_2(\spreg) = \vptr{b_s}{0} \land j''\app \rb=\some{b_s,8}$\\
  \>(c.2)\>\>$\regset_2({\pcreg}) =(b_g,5) \land \minj{j''}{m_2}{\tm_2}$ \\
  \>(c.3)\>\>$\land \forall r, r \in \kwd{callee-save-regs} \to \regset_2(r) = \regset(r)$ \\
  \>(c.4)\>\>$\land \injpacc{(j,m,\tm)}{(j'',m_2,\tm_2)} $\\
  \>(c.5)\>\>$\land \tm_2[b_s,0] = \regset(\spreg) \land  \tm_2[b_s,16] = \regset(\rareg)$\\
  \>(c.6)\>\>$\land \color{blue}{\pset{(b_s,o)}{0 \leq o < 8 \vee 16 \leq o < 24} \subseteq \outofreach{j''}{m_2}} $  \\
  \>(c.7)\>\>$\land \pset{(b_s,o)}{0 \leq o < 8 \vee 16 \leq o < 24} \subseteq \permcur{\tm_2}{\kfreeable}$ \\
  \>(d) \>$(\kwd{Rete}\ m_3, (\regset_3, \tm_3)) \in R(w) \iff {\color{orange}\kwd{(* final state *)}}$  \\
  \>(d.1)\>\>$ \regset_3(\spreg) = \regset(\spreg) \land \regset_3(\pcreg) = \regset(\rareg) \land \minj{j''}{m_3}{\tm_3}$ \\
  \>(d.2)\>\>$\land \forall r, r \in \kwd{callee-save-regs} \to \regset_3(r) = \regset(r)$ \\
  \>(d.3)\>\>$\land \injpacc{(j,m,\tm)}{(j'',m_3,\tm_3)}$
\end{tabbing}}
By definition the relation between internal states of $\Sspec$ and assembly states
evolve in four stages:
\begin{enumerate}[(a)]
\item Right after the initial call to \kencrypt, (a.1) indicates that
  the argument $i$ and function pointer are stored in $\rdireg$
  and $\rsireg$, the program counter is at $(b_e, 0)$ (pointing to the
  first assembly instruction of $\code{server\_opt.s}$
  in~\figref{fig:server_opt}). The $\rovalid{se}{m}$ comes from
  $\scname{R}_\kro^q$ and ensures that value of $\kwd{key}$ in $m_1$ is
  $42$.

\item Right before the external call, (b.1) indicates that
  the argument is stored in $\rdireg$, which is a pointer
  $\vptr{b_s}{8}$. Here $b_s$ is the stack block of the target
  assembly and $sp$ is injected to $\vptr{b_s}{8}$ as depicted in
  ~\figref{fig:mem-protect-exm}. (b.2)
  indicates that the return address is set to the 5th assembly
  instruction in~\figref{fig:server_opt} (right after \kwd{Pcall RSI})
  and the function pointer of $\kwd{p}$ $\vptr{b_p}{0}$ is related to
  $\pcreg$ by $j$. (b.3) indicates that callee-save registers are not
  modified since the initial call. (b.4) maintains the \kinjp accessibility
  and \kwd{ro-valid} for external call. (b.5), (b.6) and (b.7) indicate
  that the stored values (return address and previous stack block) on
  the stack are frame unchanged, protected and freeable.

\item Right after the external call, we keep the
  necessary conditions from (b), except that the program counter
  $\pcreg$ now points to the value in $\rareg$ before the external
  call. Note that the injection function is updated to $j''$ by the
  external call.

\item Right before returning from \kwd{encrypt}, (d.1) indicates that
  the stack pointer is restored and the return address is set. (d.2)
  indicates all callee-save registers are restored and (d.3) indicates
  that guarantee condition \kinjp is met.
  
\end{enumerate}

To prove $R$ is indeed an invariant to establish the simulation,
we follow the diagram in ~\figref{fig:server-sim}.
The most important points of the above proof is that \kwd{ro} and
\kinjp play essential roles in establishing the invariant (relevant
conditions are displayed in red and blue in the invariant,
respectively).
Initially, the target semantics enters the function from $\rs(\pcreg)$
which is related to the function pointer in $q_\cli^I$ as mentioned in
$\scname{R}_\kinjp^q$. The condition (a) follows from
$\scname{R}_\kinjp^q$ and $\scname{R}_\kro^q$ and hence holds at the
initial states.
Right before the execution calls, (b) holds by execution of the
instructions from \kwd{Pallocframe} to \kwd{Pcall}.
Note that $\rovalid{se}{m}$ obtained from \kro in (b.5) is essential
for proving that the value of $key$ read from $m$ is $42$, thus
matches the constant in $\code{server\_opt.s}$.
Then, we need to show (c) holds after the source and target execution
perform the external call and returns.
This is the most interesting part where the memory protection provided
by \kinjp is essential. It is achieved by combining properties
(b.1--7) with the rely-condition provided by \kcainjp of the external
call. For example, because we know the protected regions of the stack
frame $b_s$ is out-of-reach before the call, by the protection
enforced by \kinjp in $\scname{R}_\kcainjp^r$, all values in
$\tm_1[b_s, o]\ s.t.\ 0 \leq o < 8 \vee 16 \leq o < 24$ are unchanged,
therefore if $\tm_1[b_s,0] = \regset(\spreg)$ and $\tm_1[b_s,16] =
\regset(\rareg)$ (condition (b.5)) holds before the call, they also
hold after it (condition (c.5) holds).
Besides using \kinjp, we can derive (c.3) from (b.3) by the protection
over callee-save registers enforced in $\scname{R}_\kcainjp^r$.
$\rs_2(\pcreg) =(b_g,5)$ in (c.2) is derived from
$\rs_1(\rareg) =(b_g,5)$ in (b.2) via the relation between
$\pcreg$ and $\rareg$ stated in $\scname{R}_\kcainjp^r$.
After the external call, condition (d) can be derived from (c) by
following internal execution. Since $L_{\texttt{S}}$ frees $sp$,
$\sem{\code{server\_opt.s}}$ can free the corresponding region
$(b_s,8)$ to $(b_s,16)$. The remaining part are also freeable by
condition (c.7).
Finally, the target semantics returns after executing $\kwd{Pret}$ and
condition (d) provides the updated \kinjp world $(j'',m_3,\tm_3)$ with
the accessibility from the initial world $w$ and other properties
needed by $\scname{R}_\kcainjp^r$. The \kinjp accessibility also implies
$\macc{m}{m_3}$ for $\scname{R}_\kro^r$.
Therefore, we are able to establish the guarantee condition and prove
that $\osims{\scc}{\Sspec}{\sem{\code{server\_opt.s}}}$.

\end{proof}

\subsection{End-to-end Correctness Theorem}

\begin{figure}
  \begin{tikzpicture}
    \node (qr) at (0,0){$q_r$};
    \node (callr) [draw, ellipse,below right = 0.5 of qr,
    minimum width=1.2cm, minimum height=0.7cm] {\small \kwd{Callr}};
    \node (calle) [draw, ellipse,right = 0.8 of callr,
    minimum width=1.2cm, minimum height=0.7cm] {\small \kwd{Calle}};
    \node (callp) [draw, ellipse,right = 1.3 of calle,
    minimum width=1.2cm, minimum height=0.7cm] {\small \kwd{Callp}};
    \node (ret) [draw, ellipse,right = 1.0 of callp,
    minimum width=1.2cm, minimum height=0.7cm] {\small \kwd{Return}};
    \node (qe) [above left = 0.5 of calle] {$q_e$};
    \node (qp) [above left = 0.5 of callp] {$q_p$};
    \node (re) [above right = 0.5 of ret] {$r_e$};
    \node (rrp) [below right = 0.5 of ret] {$r_{\mathit{rp}}$};

    \draw [->](qr) -- node[above]{$I$} (callr);
    \draw [->](qe) -- node[above]{$I$} (calle);
    \draw [->](qp) -- node[above]{$I$} (callp);
    \draw [->](ret) -- node[above]{$F$} (re);
    \draw [->](ret) -- node[below]{$F$} (rrp);

    \draw [->](callr) -- node[above]{} (calle);
    \draw [->](calle) -- node[above]{\small \kwd{alloc}} node[below]{\small $\kwd{encrypt}$} (callp);
    \draw [->, bend left](callp) edge node[above]{\small \kwd{free}} node[below]{\small $\kwd{store}$} (ret);
    \draw [->, bend right](callp) edge node[below]{\small $\kwd{store}$} (ret);

  \end{tikzpicture}
  \vspace{-0.3cm}
  \caption{The Top-level Specification $\Topspec$}
  \label{fig:topspec}
  \vspace{-0.3cm}
\end{figure}

The top-level specification $\Topspec$ is defined as follows:

\begin{definition}
  LTS of $\Topspec$:
  {
  \begin{tabbing}
    \quad\=$\to_T$\;\=$:=$ \=\kill
    \>$S_T$ \>$:=$ \>$\rawset{\kwd{Callr}\app i \app m} \cup 
    \rawset{\kwd{Calle}\app \flag\app i\app v\app m} \cup
    \rawset{\kwd{Callp}\app \flag\app \retv\app \rb\app m} \cup 
    \rawset{\kwd{Return}\app \retv\app m}$;\\
    \>$I_T$ \>$:=$ \>$\rawset{(\cquery{\vptr{b_r}{0}}{\intintsig}{[\Vint{i}]}{m},\kwd{Callr}\app i \app m)} \cup$\\
    \>\>\>$\rawset{(\cquery{\vptr{b_e}{0}}{\intptrvoidsig}{[\Vint{i},v_c]}{m},\kwd{Calle}\app \false\ i\app v_c \app m)} \cup$\\
    \>\>\>$\rawset{(\cquery{\vptr{b_p}{0}}{\ptrvoidsig}{[\vptr{\rb}{0}]}{m},\kwd{Callp}\app \false\app \None\app \rb \app m)}$;\\
    \>$\to_T$ \>$:=$ 
    \>$\rawset{(\kwd{Callr}\app i \app m, \kwd{Calle}\app \true\app i\app \vptr{b_p}{0} \app m)} \cup$ \\
    \>\>\>$\{(\kwd{Calle}\app \flag\app i \app \vptr{b_p}{0} \app m, \kwd{Callp}\app \true\app \retv \app \rb\app m')\app |$\\
    \>\>\>$m' = m[\rb \leftarrow (i \app \kwd{xor} \app m[b_k])], \retv=\flag?\Some(i):\None\} \cup$ \\
    \>\>\>$\pset{(\kwd{Callp}\app \true \app \retv \app \rb \app m, \kwd{Return} \app \retv\app m')}{m' = m[result \leftarrow m[\rb]], m''=free\app m\app \rb} \cup $ \\
    \>\>\>$\pset{(\kwd{Callp}\app \false \app \retv \app \rb \app m, \kwd{Return} \app \retv\app m')}{m' = m[result \leftarrow m[\rb]]}$;\\
    \>$F_T$ \>$:=$ \>$\rawset{(\kwd{Return}\app \Some(i) \app m,\creply{\Vint{i}}{m})}\cup$\\
    \>\>\>$\rawset{(\kwd{Return}\app \None \app m,\creply{\Vundef}{m})}$;\\
  \end{tabbing}}
  \end{definition}
  
There are four internal states in $\Topspec$ as depicted in its
transition diagram~\figref{fig:topspec}, among which three states
correspond to the three functions (\kwd{request}, \kwd{process} and
\kwd{encrypt}).  We use $\flag$ in \kwd{Calle}(\kwd{Callp}) to
indicate whether it is called internally by
\kwd{request}(\kwd{encrypt}) or called by the environment.  The
$\retv$ is the return value which is either an integer when
$\Topspec$ is invoked by \kwd{request} or empty by other functions.
$I_T$ contains three possible initial states corresponding to calling
the three entry functions.
$\to_T$ describes the big steps starting from each call states to
another call state (e.g., \kwd{Callr} to \kwd{Calle}) or the return
state. If the stack block $\rb$ is allocated by $\kwd{encrypt}$, it
should be freed in the \kwd{Return} state.
There are two final states in $F_T$: $r_e$ returning from
\kwd{request} with an integer as the return value and $r_{\mathit{rp}}$
returning from \kwd{request} or \kwd{process} with no return value.
Note that $\Topspec$ can only be called but cannot perform external calls.

The remaining proof follows~\secref{sec:application}. We have shown
how end-to-end refinement is derived using the optimized server. For
unoptimized server, the proof is almost the same. The only difference
is that the symbol table accompanying $\Sspec$ does not mark \kwd{key}
as read-only, and the simulation invariant for~\thmref{the:l2sim} does
not contain \kwd{ro-valid} conditions as they play no role without
optimizations.

\subsection{Verification of Mutually Recursive Client and Server}
\label{ssec:app-mutual-client-server}

\begin{figure}
  \begin{subfigure}{0.35\textwidth}
  \begin{lstlisting}[language = C]
  /* client.c */
  #define N 10
  int input[N]={...};
  int result[N];
  int i;

  void encrypt(int i,
       void(*p)(int*));
\end{lstlisting}
\end{subfigure}
\begin{subfigure}{.6\textwidth}
\begin{lstlisting}[language = C]
  void request(int*r){
    if(i == 0)
      encrypt(input[i++],request);
    else if(0 < i && i < N){
      result[i-1]=*r;
      encrypt(input[i++],request); }
    else result[i-1]=*r;
    return; }
\end{lstlisting}
\end{subfigure}
\caption{Client with Multiple Encryption Request}
\label{fig:clientMR}
\end{figure}

We introduce an variant of the running example with mutual recursions
in~\figref{fig:clientMR}. The server remains the same while
\code{client.c} is modified so that its function \code{request} is
also the callback function and a sequence of encrypted values are
stored in a global array. 

To perform the same end-to-end verification for this example, we only
need to define a new top-level specification $\Topspec'$ to capture the semantics of the
multi-step encryption and prove $\osims{\kro \compsymb \kwt \compsymb
  \kc_\kinjp}{\Topspec'}{\sem{\code{client}} \semlink
  \Sspec}$. Other proofs are either unchanged (e.g., the refinement of
server) or can be derived from the verified compiler and the
horizontal compositionality.
In the following paragraphs, we briefly talk about this new top-level
specification and the updated proofs.

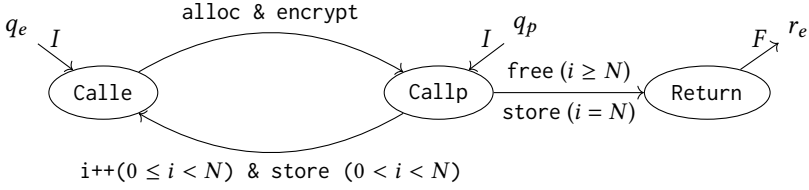
\begin{figure}
  \begin{tikzpicture}
    \node (qe) at (0,0){$q_e$};
    \node (calle) [draw, ellipse,below right = 0.5 of qe,
    minimum width=1.2cm, minimum height=0.7cm] {\small \kwd{Calle}};
    \node (callp) [draw, ellipse,right = 3.0 of calle,
    minimum width=1.2cm, minimum height=0.7cm] {\small \kwd{Callp}};
    \node (ret) [draw, ellipse,right = 2.0 of callp,
    minimum width=1.2cm, minimum height=0.7cm] {\small \kwd{Return}};
    \node (qp) [above right = 0.5 of callp] {$q_p$};
    \node (re) [above right = 0.5 of ret] {$r_e$};

    \draw [->](qe) -- node[above]{$I$} (calle);
    \draw [->](qp) -- node[above]{$I$} (callp);
    \draw [->](ret) -- node[above]{$F$} (re);

    \draw [->, bend left] (calle) edge node[above]{\small \kwd{alloc \& encrypt}} (callp);
    \draw [->, bend left] (callp) edge node[below]{\small \kwd{i++($0 \leq i < N$) \& store ($0<i<N$)}} (calle);
    \draw [->](callp) -- node[above]{\small \kwd{free}($i \geq N $)}
                         node[below]{\small \kwd{store}($i=N$)}  (ret);

  \end{tikzpicture}
  \vspace{-0.3cm}
  \caption{The Top-level Specification $\Topspec'$ of Mutually Recursive Client and Server}
  \label{fig:topspec-mr}
  \vspace{-0.3cm}
\end{figure}

\begin{definition} The LTS of $\Topspec'$:
  \begin{tabbing}
    \=$\to_T$\;\=$:=$ \=\kill
    \>$S_T$ \>$:=$ \>$\rawset{\kwd{Callr}\app sp\app sps \app m} \cup 
    \rawset{\kwd{Calle}\app i\app sps \app v\app m} \cup
    \rawset{\kwd{Return}\app m}$;\\
    \>$I_T$ \>$:=$ \>$\rawset{(\cquery{\vptr{b_r}{0}}{\ptrvoidsig}{[\vptr{sp}{0}]}{m},\kwd{Callr}\app sp\app nil \app m)} \cup$\\
    \>\>\>$\rawset{(\cquery{\vptr{b_e}{0}}{\intptrvoidsig}{[\Vint{i},v_c]}{m},\kwd{Calle}\app i\app nil\app v_c \app m)} \cup$\\
    \>$\to_T$ \>$:=$ 
    \>$\pset{(\kwd{Callr}\app sp\app sps \app m, \kwd{Calle}\app i\app sps\app \vptr{b_p}{0} \app m')}{m[b_i]==0, ...} \cup$ \\
    \>\>\>$\pset{(\kwd{Callr}\app sp\app sps \app m, \kwd{Calle}\app i\app sps\app \vptr{b_p}{0} \app m')}{0<m[b_i]<N,...} \cup$\\
    \>\>\>$\pset{(\kwd{Callr}\app sp\app sps \app m, \kwd{Return}\app m')}{m[b_i] \geq N, ...} \cup$ \\
    \>\>\>$\pset{(\kwd{Calle}\app i\app sps \app \vptr{b_p}{0}\app m, \kwd{Callr}\app sp\app (sp::sps) \app m')}{m'[sp \leftarrow i\app xor\app m[b_k]]}$;\\
    \>$F_T$ \>$:=$ \>$\rawset{(\kwd{Return} \app m,\creply{\Vundef}{m})}$;\\
  \end{tabbing}
\end{definition}

As we remove the function \code{process}, the new $\Topspec$ has only
two call states and one return state as depicted in ~\figref{fig:topspec-mr}. \kwd{sps} in \kwd{Callr} and
\kwd{Calle} is a list of blocks, each of which stores an encrypted
result. We record these blocks in the program states because we need
to de-allocate them before returning.
As described in $\to_T$, there are three internal transitions for
\kwd{Callr}, corresponding to three conditional branches in the source
code.
The transitions from \kwd{Callr} to \kwd{Calle} perform different
memory operations depending on the value of \kwd{i} according to
the code presented in~\figref{fig:clientMR}.
The transition from \kwd{Callr} to \kwd{Return} will de-allocate all the
stack blocks in \kwd{sps}.
The transition from \kwd{Calle} to \kwd{Callr} allocate a new block
\kwd{sp} to store the encrypted result and add it to \kwd{sps}.

Given this new top-level specification, we need to prove
\thmref{the:l2sim} where the $\Topspec$ is replaced by
$\Topspec'$. The key of this proof is that the simulation invariant
must relate the call stack in the target LTS (i.e., the semantics
linking of \code{client.c} and $\Sspec$) and \kwd{sps}, because each
element in \kwd{sps} is allocated by a call to$\kwd{encrypt}$ and
stores the result of encryption. The complete proofs can be found in
our Coq development.

\section{A Mutual Recursive Example for Summation}\label{sec:mut-sum}

In this section, we present the application of our method to an
example borrowed from CompCertM --- two programs that mutually invoke
each other to finish a summation task. 

\begin{figure}
\begin{subfigure}{0.45\textwidth}
\begin{lstlisting}[language = C]%how to present c code?

  /* C implementation of M_C */
  static int memoized[1000] = {0};
  int f(int i) {
    int sum;
    if (i == 0) return 0;
    sum = memoized[i];
    if (sum == 0) 
    { sum = g(i-1) + i;
      memoized[i] = sum; }
    return sum;
  }
  /* C code corresponding to M_A */
  static int s[2] = {0,0};
  int g(int i){
    int sum;
    if (i == 0) return 0;
    if (i == s[0]) 
      { sum = s[1]; } 
    else 
      { sum = f(i-1) + i;
        s[0] = i;
        s[1] = sum; }
    return sum;
  }
\end{lstlisting}
\end{subfigure}
\begin{subfigure}{0.45\textwidth}
\begin{lstlisting}[language = C]
/* Assembly implementation of M_A */  
g:  Pallocframe 24 16 0 
    Pmov RBX 8(RSP) // save RBX
    /* begin */
    Pmov RDI RBX 
    Ptestl RBX RBX  // i==0
    Pjne l0  
    Pxorl_r RAX     // rv=0
    Pjmp l1
l0: Pmov s[0] RAX 
    Pcmpl RAX RBX  // i==s[0]
    Pje l2
    Pleal -1(RBX) RDI 
    Pcall f        // f(i-1)
    Pleal (RAX,RBX) RAX//sum=f(i-1)+i
    Pmov RBX s[0]  // s[0] = i
    Pmov RAX s[1]  // s[1] = sum
    Pjmp l1 
l2: Pmov s[1] RAX  // rv=s[1]
    /* return */
l1: Pmov 8(RSP) RBX 
    Pfreeframe 24 16 0 
    Pret
\end{lstlisting}  
\end{subfigure}
  \caption{Heterogeneous Sum with Mutual Recursion}
  \label{fig:ccm-example}
\end{figure}

It consists of a \kwd{Clight} module $M_C$ and a hand-written assembly
module $M_A$.
The code of $M_A$ and $M_C$ is shown in~\figref{fig:ccm-example}. Note
that we have also shown a version of $M_A$ at the C level for
reference and given its function the name \kwd{g}; this program do not
actually exist in our example. 
We note that \kwd{f} and \kwd{g} collaborate to implement the
summation from $0$ to $i$ given an integer $i$. We shall use
$\exmsig$ to denote their signature. \kwd{f} perform
caching of results for any $i$ in a global array while \kwd{g} only
caches for the most recent $i$. When they need to compute a fresh
result, they mutually recursively call each other with a smaller
argument.
The assembly program uses pseudo X86 assembly instructions defined in
CompCert where every instruction begins with a letter \kwd{P}. The
only real pseudo instructions are \kwd{Pallocframe} and
\kwd{Pfreeframe}. \kwd{Pallocframe 24 16 0} allocates a stack block
$b_s$ of 24 bytes (3 integers on 64-bit x86), saves \kwd{RSP} and
\kwd{RA} to $(b_s,0)$ and $(b_s, 16)$ and set \kwd{RSP} to
$\vptr{b_s}{0}$. \kwd{Pfreeframe 24 16 0} recovers \kwd{RSP} and
\kwd{RA} from $(b_s, 0)$ and $(b_s, 16)$ and frees the stack block
$b_s$.
By the calling convention and the signature of \kwd{g}, \kwd{RDI} is
used to pass the only argument $i$. \kwd{RBX} is a callee-saved
register that stores $i$ during internal execution. It is saved to
$(b_s, 8)$ at the beginning of \kwd{g} and restored at the end.
Therefore, the sole purpose of $b_s$ is to save and restore \kwd{RSP},
\kwd{RA} and \kwd{RBX}.

\begin{figure}
  \begin{tikzpicture}
    \node [] (mc) at (-5,0) {$\sem{M_C}$};
    \node [above = 0.7 of mc] (lc) {$L_C$};
    \node [below = 0.7 of mc] (comp-mc) {$\sem{\kwd{CompCert}(M_C)}$};

    \path (mc) -- node[sloped, rotate=180] {$\osims{\scc}{}{}$} (comp-mc);
    \path (lc) -- node[sloped, rotate=180] {$\osims{\kwd{K}}{}{}$} (mc);

    \node [right = 2.0 cm of lc] (la) {$L_A$};
    \path let \p1 = (la) in let \p2 = (comp-mc) in
        node (ma) at (\x1, \y2) {$\sem{M_A}$};

    \path (la) -- node[sloped, rotate=180] 
          {$\osims{\scc}{}{}$} (ma);

    \path (lc) -- node[pos=0.5, sloped] {$\semlink$} (la);
    \path (comp-mc) -- node[sloped] {$\semlink$} (ma);

    \node [fit = (lc) (la), inner sep = -0.0pt] (srcsem) {};
    \node [fit = (comp-mc) (ma), inner sep = -0.0pt] (tgtsem) {};
    \node [draw, rectangle, dashed, below = 3.4 cm of srcsem] (tgtsyn) 
          {$\sem{\kwd{CompCert}(M_C) + M_A}$};

    \path (srcsem) -- node[pos=0.85, sloped,rotate=180] 
          {$\osims{\kid}{}{}$} (tgtsyn);

    \node [draw, rectangle, dashed, above = 0.7cm of srcsem] (topspec)
          {$L_{CA}$};

    \path (srcsem) -- node[sloped] 
          {$\osims{\kwd{K}}{}{}$} (topspec);

    \draw [-stealth] (tgtsyn) -- +(2.2,0) |- 
          node[pos = .26, sloped, below] 
            {$\osims{\scc}{}{}$}
            (topspec);

    \draw [-stealth] (comp-mc) -- +(-1.8,0) |- 
          node[pos = .26, sloped, above] 
            {$\osims{\scc}{}{}$}
            (lc);
  \end{tikzpicture}
  \caption{Verification of the Mutual Sum ($\kwd{K}:=\kro \cdot \kwt \cdot \kc_\kinjp$)}
  \label{fig:ccm-outline}
\end{figure}

The outline of the verification is presented
in~\figref{fig:ccm-outline}. It is similar
to~\figref{fig:running-exm-refinement} except for the additional $L_C$
which will be discussed soon. Firstly, as what we do in the
client-server example, we write down the specification for the
assembly module $M_A$ which is called $L_A$ defined
in~\defref{def:ccm-la} and prove the simulation between $L_A$ and
$\sem{M_A}$ which is declared in~\thmref{thm:ccm-la-refine}.
Secondly, we define a top-level specification $L_{CA}$ to abstract the
semantics of the composition of $M_C$ and $M_A$, which is shown
and~\defref{def:ccm-top-spec}. Intuitively, $L_{CA}$ says that the
output is the summation from zero to the input. In this example, we
additionally define a C-level specification for $M_C$ called $L_C$
(defined in~\defref{def:ccm-c-spec}) and prove $\osims{\kro \cdot \kwt
\cdot \kc_\kinjp}{L_C}{\sem{M_C}}$
(in~\thmref{thm:ccm-c-refine}). We can compose this proof with the
compiler correctness by utilizing~\thmref{lem:toprefine} to prove
$\osims{\scc}{L_C}{\sem{\kwd{CompCert}(M_C)}}$. With $L_C$,
it is simpler to prove the source refinement declared
in~\thmref{thm:ccm-top-refine}. Finally, we combine these proofs to
obtain the single refinement between top-level specification and the
target linked program as declared in~\thmref{thm:ccm-final}.

%

\begin{definition}\label{def:ccm-la} The open LTS of $L_A$ is defined
as follows:
{
\begin{tabbing}
  \quad\=$\to_A$\;\=$:=$ \=\kill
  \>$S_A$ \>$:=$ \>$\rawset{\kwd{Callg}\app i\app m} \cup 
          \rawset{\kwd{Callf}\app v_f\app i\app m} \cup 
          \rawset{\kwd{Returnf}\app i\app r\app m} \cup 
          \rawset{\kwd{Returng}\app r\app m}$;\\
  \>$I_A$ \>$:=$ \>$\rawset{(\cquery{\vptr{b_g}{0}}{\exmsig}{[\Vint{i}]}{m}),(\kwd{Callg}\app i\app m)}$;\\
  \>$\to_A$ \>$:=$ 
    \>$\pset{(\kwd{Callg}\app i\app m, \kwd{Returng}\app 0\app m)}{i = 0} \cup$ \\
    \>\>\>$\pset{(\kwd{Callg}\app i\app m,\kwd{Returng}\app r\app m)}{i \neq 0 \land i = s[0] \land r = s[1]} \cup$ \\
    \>\>\>$\pset{(\kwd{Callg}\app i\app m, \kwd{Callf}\app v_f\app i\app m)}{i \neq 0 \land i \neq s[0] \land v_f = \kwd{find-func-pointer}(\kwd{f})} \cup$ \\
    \>\>\>$\pset{(\kwd{Returnf}\app i\app res\app m, \kwd{Returng}\app (i+res)\app m')}{m' = m[s[0] \leftarrow i, s[1] \leftarrow (i+res)]}$;\\
  \>$X_A$ \>$:=$ \>$\rawset{(\kwd{Callf}\app v_f\app i\app m,\cquery{v_f}{\exmsig}{[\Vint{i-1}]}{m})}$;\\
  \>$Y_A$ \>$:=$ \>$\rawset{(\kwd{Callf}\app v_f\app i\app m,\creply{r}{m'}),\kwd{Returnf}\app i\app r\app m')}$;\\
  \>$F_A$ \>$:=$ \>$\rawset{(\kwd{Returng}\app i\app m,\creply{i}{m})}$.
\end{tabbing}}

\end{definition}

By this definition, there are four kinds of internal states:
\kwd{Callg} is at right after the initial call to \kwd{g}; \kwd{Callf}
is right before the external call to \kwd{f}; \kwd{Returnf} is right
after returning from \kwd{f}; and \kwd{Returng} is right before
returning from \kwd{g}. The definitions of transition relations
directly match the C-level version of \kwd{g} in~\figref{fig:ccm-example},
albeit in a big-step style. Note that when transiting internally from
$\kwd{Callg}\app i\app m$ to $\kwd{Callf}\app v_f\app i\app m$,
$\kwd{find-func-pointer}$ is used to query the global symbol table for
the function pointer to \kwd{f}. Also note that in $L_A$ the memory
state $m$ is not changed from \kwd{Callg} to \kwd{Callf}, while in the
assembly code $M_A$ a new stack frame is allocated by
\kwd{Pallocframe}. This indicates the stack frame is out-of-reach at
the external call to \kwd{f} and should be protected during its
execution. This point is also manifested in the proof below.

\begin{theorem}\label{thm:ccm-la-refine}
  $\osims{\scc}{L_A}{\sem{M_A}}$
\end{theorem}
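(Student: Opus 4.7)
The plan follows exactly the structure of the proof of Theorem \ref{the:l2sim}. First I would expand $\scc$ to $\kro \compsymb \kwt \compsymb \kcainjp \compsymb \kasm_\kinjp$ and, using Lemma \ref{lem:inv-comm} to commute $\kro$ and $\kwt$, decompose the goal via vertical composition (Theorem \ref{thm:v-comp}). The three outer simulations $\osims{\kwt}{L_A}{L_A}$, $\osims{\kro}{L_A}{L_A}$, and $\osims{\kasm_\kinjp}{\sem{M_A}}{\sem{M_A}}$ are discharged by self-simulation: the first by showing $L_A$ respects $\exmsig$, the second trivially since $L_A$ touches no read-only data (\kwd{s} and \kwd{memoized} are not constant), and the third by Theorem \ref{thm:self-sim}. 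This reduces the problem to proving $\osims{\kcainjp}{L_A}{\sem{M_A}}$.

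For the core obligation I would exhibit an invariant $R \in \krtype{W_\kcainjp}{S_A}{\kregset \times \kmem}$, defined by cases on the four internal state classes of $L_A$. For $(\kwd{Callg}\app i\app m,(\regset,\tm))$ the clauses are: $\regset(\pcreg) = \vptr{b_g}{0}$, $\regset(\rdireg) = \Vint{i}$, and $\minj{j}{m}{\tm}$ where $j$ is the world's injection. For $(\kwd{Callf}\app v_f\app i\app m,(\regset_1,\tm_1))$ the assembly has executed through \kwd{Pcall f}, so there is a fresh assembly-only stack block $b_s$ with $\regset_1(\spreg)=\vptr{b_s}{0}$, $\regset_1(\rdireg)=\Vint{i-1}$, $\vinj{j'}{v_f}{\regset_1(\pcreg)}$, $\regset_1(\rareg)$ pointing past \kwd{Pcall f}, the saved values $\tm_1[b_s,0]=\regset(\spreg)$, $\tm_1[b_s,16]=\regset(\rareg)$, $\tm_1[b_s,8]=\regset(\rbxreg)$, the key memory-protection clauses $\pset{(b_s,o)}{0 \leq o < 24} \subseteq \outofreach{j'}{m}$ and $\subseteq \permcur{\tm_1}{\kfreeable}$, callee-save registers other than \kwd{RBX} equal to $\regset$, $\injpacc{(j,m,\tm)}{(j',m,\tm_1)}$, and $\minj{j'}{m}{\tm_1}$. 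The clauses for \kwd{Returnf} mirror those of \kwd{Callf} with an updated world after the call, and those for \kwd{Returng} enforce that \kwd{RSP}, \kwd{PC} and callee-saves are restored while the final \kinjp world records the guarantee.

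The main obstacle is the external call case. Before the call, I must verify the four requirements of $\scname{R}_\kcainjp^q$: the function pointer injects to \kwd{PC}, the single argument $\Vint{i-1}$ is placed in \kwd{RDI} per the C calling convention, and the outgoing-argument region of the stack is out-of-reach and freeable (trivially here since $\exmsig$ has no stack arguments). After the call returns I must reconstruct the invariant at \kwd{Returnf}; this is precisely where \kinjp does the real work. The preservation of $\tm[b_s,0]$, $\tm[b_s,8]$ and $\tm[b_s,16]$ follows from the protection clause of $\scname{R}_\kcainjp^r$, which says every cell in $\outofreach{j'}{m}$ is unchanged; combined with the fact that $b_s$ has no source counterpart (the $L_A$ transition from \kwd{Callg} to \kwd{Callf} leaves $m$ untouched), this gives exactly the memory protection needed to correctly restore \kwd{RSP}, \kwd{RA} and \kwd{RBX} via \kwd{Pfreeframe} and the subsequent \kwd{Pmov}. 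Preservation of callee-save registers and relatedness of the return value come directly from $\scname{R}_\kcainjp^r$, and transitivity of $\leadsto_\kinjp$ composes the pre-call and post-call accessibility witnesses into a single one for \kwd{Returng}.

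The remaining cases are routine. The branch $i=0$ reduces to showing that the target executes \kwd{Ptestl}, \kwd{Pjne}, \kwd{Pxorl\_r}, \kwd{Pjmp} and the epilogue while the invariant tracks \kwd{RBX} and the stack. The branch $i=s[0]$ requires that $j$ maps the global block for \kwd{s} identically (ensured by initial setup of $\kcainjp$), so the target reads of \kwd{s[0]} and \kwd{s[1]} correspond to the source reads. The transition from \kwd{Returnf} to \kwd{Returng} writes to \kwd{s[0]} and \kwd{s[1]} in $L_A$; the corresponding target stores are mirrored via the injection and preserve $\minj{j'}{m'}{\tm'}$ by CompCert's store-injection lemma. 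Finally, at the final reply $\scname{R}_\kcainjp^r$ is satisfied because the \kinjp accessibility obligation is exactly what the invariant has maintained throughout.
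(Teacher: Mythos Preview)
Your proposal is correct and follows essentially the same approach as the paper: decompose $\scc$, handle $\kwt$, $\kro$, $\kasm_\kinjp$ by self-simulation, and prove the core $\osims{\kcainjp}{L_A}{\sem{M_A}}$ via a four-case invariant keyed on $\kwd{Callg}/\kwd{Callf}/\kwd{Returnf}/\kwd{Returng}$, with the crux being that the fresh target stack block $b_s$ is out-of-reach so \kinjp protects the saved $\spreg$, $\rareg$, and $\rbxreg$ across the external call.

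Two small remarks. First, the paper phrases the out-of-reach condition at the block level as $\lnot(\exists b\ o,\ j(b)=\some{b_s,o})$ rather than your offset-range formulation borrowed from the server example; both work here since $b_s$ is entirely fresh and has no source preimage. Second, your $\kwd{Callf}$ clause records $\regset_1(\rdireg)=\Vint{i-1}$ but omits $\regset_1(\rbxreg)=i$; the paper keeps the latter explicitly, and you will need it (together with callee-save preservation from $\scname{R}_\kcainjp^r$) to justify the \kwd{Pleal (RAX,RBX) RAX} step in the $\kwd{Returnf}\to\kwd{Returng}$ transition. This is a minor bookkeeping gap, not a conceptual one.
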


\begin{proof}
The key is to identify a relation $R \in
\krtype{W_\kcainjp}{S_A}{\kregset \times \kmem}$ satisfying all the
properties in~\defref{def:open-sim}. Given $w \in W_\kcainjp =
((j,m_1,m_2), \sig, \regset)$, if $\sig \neq \exmsig$ then $R(w) =
\emptyset$. Assume $\sig = \exmsig$, then $R(w)$ is defined as
follows:
{
\begin{tabbing}
  \quad\= (a.1)\quad\=\quad\=\kill
  \>(a) \>$(\kwd{Callg}\app i\app m_1, (\regset, m_2)) \in R(w) \iff \textcolor{orange}{\kwd{(* initial state *)}}$\\
  \>(a.1)\>\>$\regset(\rdireg)=i \land \regset(\pcreg) =\vptr{b_g}{0} \land \minj{j}{m_1}{m_2}$\\
  \>(b) \>$(\kwd{Callf}\app v_f\app i\app m_1', (\regset', m_2')) \in R(w) \iff {\color{orange}\kwd{(* before external call *)}}$\\
  \>(b.1)\>\>$\regset'(rbx)=i \land \regset'({{\color{red}\rareg}}) =\vptr{b_g}{13} \land \minj{j'}{m_1'}{m_2'} \land \vinj{j'}{v_f}{\regset'(\pcreg)}$\\
  \>(b.2)\>\>$\land \forall r, (r \in \kwd{callee-saved-regs} \land r \neq \rbxreg) \to \regset'(r) = \regset(r)$ \\
  \>(b.3)\>\>$\land \regset'(\spreg) = \vptr{b_s}{0} \land {\color{blue}\lnot(\exists b\ o, j \ b = \some{b_s,o})}$  \\
  \>(b.4)\>\>$\land \injpacc{(j,m_1,m_2)}{(j',m_1',m_2')}$\\
  \>(b.5)\>\>$\land m_2'[b_s,0] = \regset(\spreg) \land m_2'[b_s,8] = \regset(\rbxreg) \land m_2'[b_s,16] = \regset(\rareg)$ \\
  \>(c) \>$(\kwd{Returnf}\app i\app res\app  m_1', (\regset', m_2')) \in R(w) \iff {\color{orange}\kwd{(* after external call *)}}$  \\
  \>(c.1)\>\>$\regset'(\rbxreg)=i \land \regset'({\color{red}\pcreg}) =(b_g,13) \land \regset'(rax) = res \land \minj{j'}{m_1'}{m_2'}$ \\
  \>(c.2)\>\>$\land \forall r, (r \in \kwd{callee-saved-regs} \land r \neq \rbxreg) \to \regset'(r) = \regset(r)$ \\
  \>(c.3)\>\>$\land \regset'(\spreg) = \vptr{b_s}{0} \land {\color{blue}\lnot(\exists b\ o, j' \ b = \some{b_s,o})}$ \\
  \>(c.4)\>\>$\land \injpacc{(j,m_1,m_2)}{(j',m_1',m_2')}$\\
  \>(c.5)\>\>$\land m_2'[b_s,0] = \regset(\spreg) \land m_2'[b_s,8] = \regset(\rbxreg) \land m_2'[b_s,16] = \regset(\rareg)$\\
  \>(d) \>$(\kwd{Returng}\ res\ m_1', (\regset', m_2')) \in R(w) \iff {\color{orange}\kwd{(* final state *)}}$  \\
  \>(d.1)\>\>$\regset'(\raxreg)=res \land \regset'(\spreg) = \regset(\spreg) \land \regset'(\pcreg) = \regset(\rareg) \land \minj{j'}{m_1'}{m_2'}$ \\
  \>(d.2)\>\>$\land \forall r, r \in \kwd{callee-saved-regs} \to \regset'(r) = \regset(r)$ \\
  \>(d.3)\>\>$\land \injpacc{(j,m_1,m_2)}{(j',m_1',m_2')}$
\end{tabbing}}
By definition the relation between internal states of
$L_A$ and assembly states evolve in four stages:
\begin{enumerate}[(a)]
\item Right after the initial call to \kwd{g}, (a.1) indicates that
  the argument $i$ is stored in $\rdireg$ and the program counter is at
  $(b_g, 0)$ (pointing to the first assembly instruction
  in~\figref{fig:ccm-example});

\item Right before the external call to \kwd{f}, (b.1) indicates $i$
  is stored in $\rbxreg$, the return address is set to the 13th
  assembly instruction in~\figref{fig:ccm-example} (right after \kwd{Pcall
    f}) and $v_f$ matches with the program counter. (b.2) indicates
  callee saved registers---except for $\rbxreg$--are not modified
  since the initial call. (b.3) indicates the entire stack frame $b_s$
  is out-of-reach. (b.4) maintains properties in \kinjp. (b.5)
  indicates values on the stack frame is not modified since the
  initial call.

\item Right after the external call to \kwd{f}, we have almost the
  same conditions as above, except that the program counter points to
  the return address set at the call to \kwd{f}.

\item Right before returning from \kwd{g}, (d.1) indicates the return
  value is in $\raxreg$, the stack pointer is restored and the return
  address is set. (d.2) indicates all callee-saved registers are
  restored and (d.3) indicates the guarantee condition \kinjp is met.
\end{enumerate}

To prove $R$ is indeed an invariant, we first show that condition (a)
holds at the initial state. We then show by internal execution we can
prove (b) holds right before the call to \kwd{f}. Now, the source and
target execution proceed by calling and returning from \kwd{f}, after
which we need to shown (c) holds. This is the most interesting part:
it is achieved by combining properties (b.1--5) with the
rely-condition provided by \kcainjp for calling \kwd{f}. For example,
because we know the stack frame $b_s$ is out-of-reach at the call to
\kwd{f}, by the accessibility enforced by \kinjp in
$\scname{R}_\kcainjp^r$ in~\defref{def:cainjp}, all values in
$m_2'[b_s, o]$ are unchanged, therefore if $m_2'[b_s,0] =
\regset(\spreg) \land m_2'[b_s,8] = \regset(\rbxreg) \land
m_2'[b_s,16] = \regset(\rareg)$ (condition (b.5)) holds before calling
\kwd{f}, they also hold after (hence condition (c.5) holds).
Similarly, we can derive (c.2) from (b.2) by the protection over
callee-saved registers enforced in $\scname{R}_\kcainjp^r$. Moreover,
$({\color{red}\pcreg}) =(b_g,13)$ in (c.1) is derived from
$({\color{red}\rareg}) =(b_g,13)$ in (b.1) via the relation between
$\pcreg$ and $\rareg$ stated in $\scname{R}_\kcainjp^r$.
After the external call, we show condition (d) can be derived from (c)
by following internal execution. We note that condition (d) provides
exactly the guarantee-condition needed by $\scname{R}_\kcainjp^r$ for
the incoming call to \kwd{g}.
Therefore, we successfully show $\osims{\scc}{L_A}{\sem{M_A}}$ indeed
holds. 


\end{proof}

\begin{definition}\label{def:ccm-c-spec}
  The C-level specification $L_C$ is defined as follows: 
{
\begin{tabbing}
  \quad\=$\to_C$\;\=$:=$ \=\kill
  \>$S_C$ \>$:=$ \>$\rawset{\kwd{Callf}\app i\app m} \cup 
          \rawset{\kwd{Callg}\app v\app i\app m} \cup 
          \rawset{\kwd{Returng}\app i\app sum\app m} \cup
          \rawset{\kwd{Returnf}\app sum\app m}$;\\
  \>$I_C$ \>$:=$ \>$\rawset{(\cquery{\vptr{b_f}{0}}{\exmsig}{[\Vint{i}]}{m}),(\kwd{Callf}\app i\app m)}$;\\
  \>$\to_C$ \>$:=$ 
    \>$\pset{(\kwd{Callf}\app i\app m, \kwd{Returnf}\app 0 \app m)}{i==0} \cup$ \\
    \>\>\>$\pset{(\kwd{Callf}\app i\app m,\kwd{Returnf}\app sum \app m)}{i\neq 0, m[b_m,4*i]\neq 0, sum=m[b_m,4*i]} \cup$ \\
    \>\>\>$\pset{(\kwd{Callf}\app i\app m,\kwd{Callg}\app \vptr{b_g}{0}\app i \app m)}{i\neq 0, m[b_m,4*i] = 0}$ \\
    \>\>\>$\pset{(\kwd{Returng}\app i\app sum\app m, \kwd{Returnf}\app sum'\app m')}{sum'=sum+i, m'=m[m[b_m,4*i]\leftarrow sum']}$; \\
  \>$X_C$ \> $:=$ \> $\rawset{(\kwd{Callg}\app v_g\app i \app m, \cquery{v_g}{\exmsig}{[\Vint{i-1}]}{m})}$;\\
  \>$Y_C$ \>$:=$ \>$\rawset{(\kwd{Callg}\app v_g\app i\app m,\creply{sum}{m'},\kwd{Returng}\app i\app sum'\app m')}$;\\
  \>$F_C$ \>$:=$ \>$\rawset{(\kwd{Returnf}\app sum\app m,\creply{sum}{m})}$;
\end{tabbing}}
  
\end{definition}

The four internal states capture the execution of function \kwd{f} in
$M_C$. From the initial state \kwd{Callf}, depending on the value of
$i$ and the cached value in $sum[b_m,4*i]$ where $b_m$ is the memory
block of \kwd{memoized}, \kwd{Callf} may return 0 if \kwd{i} is equal
to 0, may return the cached value if it is not zero, and may enter
\kwd{Callg} state to invoke function \kwd{g}. At \kwd{Callg} state, it
emits the query to the environment and when it receives the reply which
contains the summation of $i$, it would enter \kwd{Returng} state.
Finally, \kwd{Returng} state would calculate the sum of $i$, cache it
and enter the final state \kwd{Returnf}.

\begin{theorem}\label{thm:ccm-c-refine}
  $\osims{\kro \cdot \kwt \cdot \kc_\kinjp}{L_C}{\sem{M_C}}$
\end{theorem}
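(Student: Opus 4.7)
The plan is to follow the same high-level pattern used for \thmref{thm:ccm-la-refine}, decomposing the top-level convention $\kro \cdot \kwt \cdot \kc_\kinjp$ via vertical composition (\thmref{thm:v-comp}) and then exhibiting a Kripke invariant that witnesses the simulation. First I would apply~\lemref{lem:inv-comm} to commute $\kro$ and $\kwt$ if convenient, discharge the $\kwt$ component via self-simulation of $L_C$ over well-typed signatures (it suffices to observe that $L_C$ always emits and consumes values of integer type matching $\exmsig$), and discharge $\kro$ by showing that the internal executions of $L_C$ preserve $\rovalid{se}{m}$ (which holds because $L_C$ only writes into the non-constant global $\code{memoized}$ and the external reply world satisfies $\macc{m}{m'}$). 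This reduces the problem to establishing $\osims{\kc_\kinjp}{L_C}{\sem{M_C}}$.

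For the core simulation I would identify an invariant $R \in \krtype{W_{\kc_\kinjp}}{S_C}{S_{\kwd{Clight}}}$ indexed by the four internal states of $L_C$. At \kwd{Callf} the invariant records that the Clight state sits at the entry of $f$ with parameter $i$ bound and the memory $m_1$ related to the target $m_2$ by the initial injection $j$; at \kwd{Callg} it records that execution has advanced past the test \code{sum == 0} (so the local \code{sum} equals $m_2[b_m,4\cdot i]$ which is zero) and is at the call site, with a possibly updated injection $j'$ satisfying $\injpacc{(j,m_1,m_2)}{(j',m_1',m_2')}$ and with the caller's stack frame out-of-reach of $j'$; at \kwd{Returng} the local \code{sum} in Clight has been bound to the callee's return value and the injection has evolved to $j''$ via a second \kinjp step; and at \kwd{Returnf} the final store into $\code{memoized}[i]$ has been performed. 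The non-trivial cases of \defref{def:open-sim} are (3) matching the spec's big-step moves with sequences of small Clight steps (pure computation on scalars, so mostly mechanical), (4) constructing the external query/reply by appealing to $\scname{R}_{\kc_\kinjp}^q$ and re-establishing $R$ from $\scname{R}_{\kc_\kinjp}^r$, and (5) matching replies.

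The main obstacle I anticipate is case (4), at the external call to $g$: I need the re-entry invariant at \kwd{Returng} to still guarantee that the block $b_m$ for \code{memoized} is alive and that the previous readable entries have not been clobbered, so that the subsequent store at $\code{memoized}[i]$ makes sense and the propagation of $\rovalid{se}{m'}$ succeeds. This is exactly where $\kinjp$'s protection is essential: since $L_C$ does not leak the address of its own stack slot (\code{sum}) nor of \code{memoized} outside of what the query already exposes through the injection $j'$, the relevant cells are out-of-reach and thus preserved by $\accsymb_\kinjp$. I would therefore keep as a side condition in the \kwd{Callg} case that the Clight stack frame of $f$ is unmapped by $j'$, mirroring condition (b.3) in the proof of~\thmref{thm:ccm-la-refine}, and that no pointer to $b_m$ appears in the outgoing arguments. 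Note that the spec $L_C$ itself performs the caching update only \emph{after} the external call returns, which aligns perfectly with the Clight control flow; I do not need a case where \code{memoized}[i] is mutated during the call.

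Finally, the remaining bookkeeping is routine: the accessibility condition $\injpacc{(j,m_1,m_2)}{(j'',m_1^f,m_2^f)}$ at \kwd{Returnf} is obtained by transitivity of $\accsymb_\kinjp$ from the two segments before and after the callback, combined with the monotone effects of the final \code{store} (which only changes cells reachable from $b_m$ via $j''$, hence does not violate $\kwd{unchanged-on}$ for unmapped or out-of-reach regions of the incoming world). Because the source and target language interfaces coincide ($\cli \arrli \cli$) and both modules share the same memory structure, this proof is considerably simpler than \thmref{thm:ccm-la-refine}---no calling-convention bridging, no callee-save register reasoning, and no stack-slot aliasing to worry about---so I expect the bulk of the effort to be in discharging the Clight small-step obligations and threading the $\rovalid$ predicate, rather than in novel reasoning about \kinjp.
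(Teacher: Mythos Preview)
Your decomposition is exactly what the paper does: peel off $\kro$ and $\kwt$ by self-simulation of $L_C$, then establish $\osims{\kc_\kinjp}{L_C}{\sem{M_C}}$ directly; the paper's own proof is a two-line sketch saying precisely this and calling the core simulation ``straightforward'' because both sides live at the $\cli$ interface.

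One small correction in your case-(4) analysis: the block $b_m$ for \code{memoized} is a mapped global, so its cells are \emph{not} out-of-reach of $j'$ and are \emph{not} protected by $\accsymb_\kinjp$---indeed the callee $g$ may re-enter $f$, which writes into \code{memoized}. This is harmless for the proof because after the call you only \emph{store} into $\code{memoized}[i]$; you never read it, so you need only that $b_m$ stays valid and writable, both of which follow from $\macc{}{}$ and preservation of permissions under injection. The genuine use of $\kinjp$ protection here is for $f$'s Clight-allocated stack block (unmapped by $j'$), which you already flag; the side condition about $b_m$ can simply be dropped.
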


By decomposing $\kro \cdot \kwt \cdot \kc_\kinjp$, the key is to prove
that $\osims{\kc_\kinjp}{L_C}{\sem{M_C}}$. Because we do not
have to concern about the interleaving execution of $L_C$ and its
environment, the proof is straightforward.

\begin{definition}\label{def:ccm-top-spec}
  The top-level specification $L_{CA}$ is defined below:
  {
\begin{tabbing}
  \quad\=$\to_T$\;\=$:=$ \=\kill
  \>$S_T$ \>$:=$ \>$\rawset{\kwd{Callf}\app i\app m} \cup 
          \rawset{\kwd{Callg}\app i\app m} \cup 
          \rawset{\kwd{Return}\app i\app m}$;\\
  \>$I_T$ \>$:=$
  \>$\rawset{(\cquery{\vptr{b_f}{0}}{\exmsig}{[\Vint{i}]}{m}),(\kwd{Callf}\app i\app m)} \cup$\\
  \>\>\> $\rawset{(\cquery{\vptr{b_g}{0}}{\exmsig}{[\Vint{i}]}{m}),(\kwd{Callg}\app i\app m)}$;\\
  \>$\to_T$ \>$:=$ 
    \>$\pset{(\kwd{Callf}\app i\app m, \kwd{Return}\app r\app m')}{r=sum(i,m),m'=\mathit{cache(i,m)}} \cup$ \\
    \>\>\>$\pset{(\kwd{Callg}\app i\app m, \kwd{Return}\app r\app m')}{r=sum(i,m),m'=\mathit{cache}(i,m)}$; \\
  \>$F_T$ \>$:=$ \>$\rawset{(\kwd{Return}\app r\app m,\creply{r}{m}))}$;
\end{tabbing}}

The specification contains two call states representing invocation of
function \kwd{f} and \kwd{g}, and one return state. The internal
transitions are big step of the execution, which omit the details of
mutual recursion. The return value contained in the return state is
determined by $sum(i,m)$, meaning that the return value depends on the
contents of the memory, i.e., the contents of the initial contents of
\kwd{memorized} in $M_C$ and \kwd{s} in $M_A$. The memory of the
return state is updated by $\mathit{cache}$, which cached the values
generated during the execution to \kwd{memorized} and \kwd{s}.

\end{definition}

\begin{theorem}\label{thm:ccm-top-refine}
$\osims{\kro \cdot \kwt \cdot
\kc_\kinjp}{L_{CA}}{L_C \semlink L_A}$
\end{theorem}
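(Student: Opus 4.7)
The plan is to first decompose the composed convention $\kro \compsymb \kwt \compsymb \kc_\kinjp$ via vertical composition (\thmref{thm:v-comp}). Since $L_{CA}$ only writes to \kwd{memoized} and \kwd{s} (which carry writable permission and are not marked read-only in the initial memory), and since all of its queries and replies respect $\exmsig$, the invariants $\kro$ and $\kwt$ reduce to trivial self-simulations $\osims{\kro}{L_{CA}}{L_{CA}}$ and $\osims{\kwt}{L_{CA}}{L_{CA}}$. The real work therefore collapses to proving the middle refinement $\osims{\kc_\kinjp}{L_{CA}}{L_C \semlink L_A}$.

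For this core refinement I would design a Kripke invariant $R \in \krtype{W_\kinjp}{S_T}{S_{L_C \semlink L_A}}$. A crucial observation is that $L_{CA}$ has no outgoing transitions: each call state steps directly to \kwd{Return}. Since moreover every external call of $L_C$ is resolved by $L_A$ through semantic linking and vice versa, $L_C \semlink L_A$ likewise has no residual outgoing calls. Hence the open-simulation obligation reduces to matching initial queries, preserving $R$ under internal steps, and matching final replies; no rely-guarantee reasoning over intermediate external calls is required. Accordingly, $R$ only needs to relate matched initial configurations (a call state of $L_{CA}$ and an empty-stack entry state on the target side with \kinjp-related memories) and matched \kwd{Return}/final-reply configurations, plus a family of intermediate states reachable during the one-to-many internal simulation.

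The technical core is showing that a single internal step of $L_{CA}$ from $\kwd{Callf}\app i\app m_1$ (or $\kwd{Callg}\app i\app m_1$) to $\kwd{Return}\app (sum(i,m_1))\app \mathit{cache}(i,m_1)$ is simulated by an arbitrary-length internal execution of $L_C \semlink L_A$. I would establish this by strong induction on $i$. In the base case ($i=0$) and the cache-hit cases the linked execution takes only a handful of internal steps, and the result follows directly from the matching branches of $\to_T$, $\to_C$, and $\to_A$. In the recursive case the linked execution crosses the module boundary via semantic linking: an outgoing query at $\kwd{Callg}\app \vptr{b_g}{0}\app i$ in $L_C$ is matched by linking to an incoming query at $\kwd{Callg}\app (i-1)$ in $L_A$ (or symmetrically), producing a sub-configuration to which the induction hypothesis applies; the returned reply is threaded back through $Y_C$ (resp. $Y_A$), the local sum is computed, the cache is updated, and the linked state reaches \kwd{Returnf}/\kwd{Returng} with memory and value agreeing with $\mathit{cache}(i,m_1)$ and $sum(i,m_1)$.

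The main obstacle will be aligning the abstract $sum$ and $\mathit{cache}$ functions in $L_{CA}$ with the actual interleaved execution traces, because both the \kwd{memoized} array and the single-slot cache \kwd{s} can short-circuit recursive calls, so the shape of the recursion tree depends on the incoming memory. A clean approach is to define $sum$ and $\mathit{cache}$ by a single recursive functional whose case split exactly mirrors the branches in $\to_C$ and $\to_A$, making the inductive step a syntactic unfolding. Re-establishing \kinjp on the final reply is then straightforward: no new blocks escape the combined footprint, the only writes occur to previously mapped, writable cells of \kwd{memoized} and \kwd{s}, and the unmapped and out-of-reach regions of the initial memories are never touched, so the outgoing Kripke world extends the incoming one by exactly the changes permitted by $\injpacc{\,}{\,}$.
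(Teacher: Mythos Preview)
Your proposal is correct and matches the paper's approach. The paper's own proof sketch is extremely terse---it says only that the key is to relate the memory operations of $\mathit{cache}$ to those of $L_C \semlink L_A$ and that this is done by induction on the input value $i$, deferring the details to the Coq development. Your decomposition of $\kro \compsymb \kwt \compsymb \kc_\kinjp$ into self-simulations plus a core $\kc_\kinjp$ refinement, and your strong induction on $i$ to unwind the mutual recursion inside $L_C \semlink L_A$, are exactly the skeleton the paper has in mind; you have simply filled in more of the details (the absence of residual outgoing calls, the shape of the invariant, the alignment of $sum$/$\mathit{cache}$ with the branch structure of $\to_C$ and $\to_A$) than the paper spells out in prose.
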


The key of this proof is to relate the memory operations of
$\mathit{cache}$ and the operations of $L_C \semlink
L_A$. We achieve this by the induction of the input value $i$.
Detailed proofs can be found in our supplementary code.

\begin{theorem}\label{thm:ccm-final}
  $\osims{\scc}{L_{CA}}{\sem{\kwd{CompCert}(M_C) + M_A}}$
\end{theorem}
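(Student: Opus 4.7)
The plan is to assemble this end-to-end direct refinement by mimicking the proof pattern for the client-server example (Theorem \ref{thm:cs-final}), instantiating the pieces that have already been proved for the mutual-sum example and stitching them together with vertical composition, horizontal composition, adequacy, and the convention-absorption lemma \ref{lem:toprefine}.

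First, I would handle the $M_C$ side. Apply Theorem \ref{thm:compcerto-correct} to $M_C$ to obtain $\osims{\scc}{\sem{M_C}}{\sem{\kwd{CompCert}(M_C)}}$. Combine this with the C-level specification refinement of Theorem \ref{thm:ccm-c-refine}, namely $\osims{\kro \compsymb \kwt \compsymb \kc_\kinjp}{L_C}{\sem{M_C}}$, by vertical composition (Theorem \ref{thm:v-comp}). This yields a simulation with the composed convention $(\kro \compsymb \kwt \compsymb \kc_\kinjp) \compsymb \scc$, which by Lemma \ref{lem:toprefine} is equivalent to $\scc$; together with Theorem \ref{thm:sim-refine} this gives $\osims{\scc}{L_C}{\sem{\kwd{CompCert}(M_C)}}$.

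Second, I would deal with $M_A$: this is directly Theorem \ref{thm:ccm-la-refine}, i.e.\ $\osims{\scc}{L_A}{\sem{M_A}}$. With both sides phrased as $\scc$-simulations, apply horizontal compositionality of open simulations (from Section \ref{sssec:framework}) to get $\osims{\scc}{L_C \semlink L_A}{\sem{\kwd{CompCert}(M_C)} \semlink \sem{M_A}}$, then apply adequacy for assembly modules (also from Section \ref{sssec:framework}, using $\osims{\kwd{id}}{\sem{M_1} \semlink \sem{M_2}}{\sem{M_1 + M_2}}$) and vertical composition to conclude $\osims{\scc}{L_C \semlink L_A}{\sem{\kwd{CompCert}(M_C) + M_A}}$.

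Third, I vertically compose the top-level source refinement (Theorem \ref{thm:ccm-top-refine}), $\osims{\kro \compsymb \kwt \compsymb \kc_\kinjp}{L_{CA}}{L_C \semlink L_A}$, with the previous step. This produces $\osims{(\kro \compsymb \kwt \compsymb \kc_\kinjp)\compsymb \scc}{L_{CA}}{\sem{\kwd{CompCert}(M_C) + M_A}}$; absorbing the prefix into $\scc$ via Lemma \ref{lem:toprefine} and Theorem \ref{thm:sim-refine} delivers exactly $\osims{\scc}{L_{CA}}{\sem{\kwd{CompCert}(M_C) + M_A}}$.

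The heavy lifting has already been done in Theorems \ref{thm:ccm-la-refine}, \ref{thm:ccm-c-refine}, \ref{thm:ccm-top-refine} and \ref{thm:compcerto-correct}, so no new simulation invariants are needed. The only subtlety, and the main thing to get right, is the bookkeeping around the simulation conventions: each application of horizontal or vertical composition produces a composed convention that must be refined back to the single convention $\scc$ using Lemmas \ref{lem:roinjp-trans}, \ref{lem:inv-comm}, \ref{lem:wt}, and in particular the packaged equivalence $\scequiv{\scc}{\kro \compsymb \kwt \compsymb \kc_\kinjp \compsymb \scc}$ of Lemma \ref{lem:toprefine}. This is the step where the transitivity of \kinjp does its real work, so the obstacle is purely organizational rather than constructive; no new interpolation argument in the style of Lemma \ref{lem:injp-refine-injp-comp} is required.
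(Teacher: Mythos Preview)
Your proposal is correct and follows essentially the same route as the paper: both assemble the result from Theorems \ref{thm:ccm-la-refine}, \ref{thm:ccm-c-refine}, \ref{thm:ccm-top-refine}, and \ref{thm:compcerto-correct} using vertical composition, horizontal composition, adequacy for assembly, and the absorption equivalence of Lemma \ref{lem:toprefine}. The paper presents the argument top-down (first expanding $\scc$ and decomposing) while you build it bottom-up, and you are slightly more explicit about the adequacy step from $\semlink$ to $+$, but the ingredients and logical structure are the same.
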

\begin{proof}
  Firstly, applying vertical compositionality, we decompose the proof
  to $\osims{\kro \cdot \kwt \cdot \kc_\kinjp}{L_{CA}}{L_C \semlink
  L_A}$ and $\osims{\scc}{L_C \semlink L_A}{\sem{\kwd{CompCert}(M_C)
  + M_A}}$ by expanding $\scc$ to $\kro \cdot \kwt \cdot
  \kc_\kinjp \cdot \scc$ with \lemref{lem:toprefine}. The former one
  is proved in~\thmref{thm:ccm-top-refine}. For the latter one, we
  first apply the horizontal compositionality to verify it modularly.
  The verification of $\osims{\scc}{L_A}{\sem{M_A}}$ is
  proved in~\thmref{thm:ccm-la-refine}. The verification of
  $\osims{\scc}{L_C}{\kwd{CompCert}(M_C)}$ is proved by
  applying~\lemref{lem:toprefine}, \thmref{thm:ccm-c-refine} and the
  compiler correctness theorem.
\end{proof}

\section{Comparison using the Running Example}
\label{sec:comparison}

In this section, we carefully compare the difference between our direct refinement
with other refinements in existing approaches to VCC using the running example.
The comparison is mainly based on the (estimated) effort required to prove the
running example. We also compare their results in the form of final
theorems of semantics preservation.

Since CompComp does not support open semantics using $\asmli$ interface and
the adequacy theorem for assembly code, we only introduce details about approaches
using sum of refinements (CompCertM) and product of refinements (CompCertO) for
further comparison here.

\subsection{CompCertM}
CompCertM uses Refinement Under Self-related Contexts (RUSC) to achieve vertical
and horizontal composition of refinements. We roughly describe the framework of
RUSC for presenting their verification and comparing it with ours (\apdxref{sec:mut-sum}).

Their open simulations are parameterized by different \emph{memory relations}
which mirror our use of KMRs. They do not need to lift memory relations to different
simulation conventions because the semantics of all languages from $\cli$ to $\asmli$
can perform C-style calls and returns.
The RUSC refinement is parameterized over a fixed set of memory
relations $\mathcal{R} = \{R_1,R_2,...,R_n\}$. $p \rref{R} p'$ is defined as
for any context program $c$, it $c$ is self-related by all memory relations
$R \in \mathcal{R}$, then $\kwd{Beh}( c \oplus p) \supseteq \kwd{Beh}(c \oplus p')$.
Note that the behavior $\kwd{Beh()}$ is only defined for closed program.
Thanks to this definition, RUSC refinements under a fixed $\mathcal{R}$ can be
easily composed vertically and horizontally if the \emph{end modules} are self-related
by all memory relations in $\mathcal{R}$. However, the disadvantage of RUSC-based
approach is exactly the fixed $\mathcal{R}$ and the requirement that
end programs are self-related by $\mathcal{R}$. Next, we demonstrate the impact of this
restriction on VCC through the mutual summation example.

In CompCertM, they use the example of mutual summation to illustrate the source-level
verification using RUSC. The structure of the proof is the same as depicted in
~\figref{fig:ccm-outline}.
However, they actually use a different set of memory relations ($\mathcal{R}_e$)
for source level verification with the set for compiler correctness ($\mathcal{R}_c$).
At top level, they prove that
\[
  L_{CA} \ruscref_{\mathcal{R}_e} \kwd{a.spec} \oplus \kwd{b.spec} \ruscref_{\mathcal{R}_e}
  \sem{M_C} \oplus \sem{M_A}
\]

Note that the verification here is slightly different with ours. The memory relations in
$\mathcal{R}_e$ here include specific invariant which ensures that the variables
$\kwd{memorized1}$ and $\kwd{memorized2}$ have desired values in the incoming memories.
However, the values of these \kwd{static} variables cannot be determined at the invocation
of a function.
%
In other words, they describe and verify the behavior of this program in an ideal (instead
of arbitrary) memory environment, i.e. the program always reads the correct summation from input
$i$. 
On the other hand, we prove the preservation of open semantics
for all possible memories. The return value of the program
is determined by both the input $i$ and the incoming memory
as denoted by $\mathit{sum}(i,m)$ in~\defref{def:ccm-top-spec}.
Given these differences, CompCertM use 5764 LoC to define the top-level memory
relations, specification and complete the proof. We use 3124 LoC to complete
our proof and link it with the result of VCC to achieve end-to-end direct refinement.

For further comparison, we now discuss how to compose the source verification with
compiler correctness and adequacy of assembly linking within the framework of CompCertM.
If $M_C$ is compiled to $\kwd{CompCert}(M_C)$, then the correctness of CompCertM states that
$
  \sem{M_C} \ruscref_{\mathcal{R}_c} \sem{\kwd{CompCert}(M_C)}
$
The adequacy property for linking assembly modules is stated as
$
  \kwd{Beh}(\sem{\kwd{CompCert}(M_C)} \oplus \sem{M_A}) \supseteq \kwd{Beh}(\sem{\kwd{CompCert}(M_C)} + \sem{M_A})
  $
There are two approaches to the end-to-end semantics preservation. Note that these two
approaches are speculative because CompCertM did not present such composition.

Firstly, we can compose the three parts vertically in the form of \emph{behavior refinement}.
However, this approach only makes sense when the composed modules have closed semantics.
\[
  \kwd{Beh}(L_{CA}) \supseteq \kwd{Beh}(\sem{\kwd{CompCert}(M_C)} + \sem{M_A})
\]
Since the behavior refinement is transitively composable, we need to show $\kwd{Beh}
(\sem{M_C \oplus M_A}) \supseteq \kwd{Beh}(\sem{\kwd{CompCert}(M_C)} \oplus \sem{M_A})$.
According to the definition of $\ruscref_{\mathcal{R}_c}$, it suffices to prove that 
$\sem{M_A}$ is self-related by $\mathcal{R}_c$ which consists of six different
memory relations,
In CompCertM, all \kwd{Clight} and assembly modules can be self-related by all the relations
in $\mathcal{R}_c$. Thus the result above can be easily obtained. However, we find that it is confusing
to claim that any hand-written assembly program, even those do not obey the calling convention
of CompCert, can satisfy $\mathcal{R}_c$ and safely be linked with programs compiled by CompCertM.
The ability of CompCertM to prove such a property may come from its open semantics of
assembly programs which can perform C-style calls and returns.
The interaction between open modules through external calls in assembly-style is also limited
by their ``repaired semantics''.
In this regard, our direct refinement can better describe the desired properties of
the assembly modules to be safely linked with compiled modules.

Secondly, for open semantics preservation in the form of RUSC refinement, one need
to union the memory relations as $\mathcal{R}_e \cup \mathcal{R}_c$. Since the adequacy
theorem is provided only in the form of behavior refinement, the conclusion is
\[
  L_{CA} \ruscref_{\mathcal{R}_e \cup \mathcal{R}_c}  \sem{\kwd{CompCert}(M_C)} \oplus \sem{M_A}
\]
To achieve this refinement, one need to show that the \emph{end modules} are self-related
by each $R \in \mathcal{R}_e \cup \mathcal{R}_c$. Excluding for the parts that have already
been proved, we still need to show that \emph{1)} $L_{CA}$ is self-related by
$\mathcal{R}_c$ and \emph{2)} $\sem{\kwd{CompCert}(M_C)}$ and $\sem{M_A}$ are
self-related by $\mathcal{R}_e$. These conditions are not demonstrated in CompCertM and we do not
know whether they hold or not.

In other words, one need to show that \emph{1)} the specification of source program satisfies
all memory relations used in the compilation and \emph{2)} the assembly modules satisfy
the memory relations used for the source level verification.
These limitations can increase the difficulty and reduce the extensionality of the proofs.
Our direct refinement approach overcomes these obstacles.

\subsection{CompCertO}
In CompCertO, the simulation convention of the overall simulation for the compiler is
stated as
$\scc_{\kwd{CCO}} = \mathcal{R}^* \cdot \kwt \cdot \kcl \cdot \klm \cdot \kma \cdot \kasm_{\kwd{vainj}}$
where $\mathcal{R} = \kc_\kinjp + \kc_\kinj + \kc_\kext + \kc_\kwd{vainj} + \kc_\kwd{vaext}$ is a set of
simulation conventions parameterized over used KMRs which is similar to CompCertM. $\mathcal{R}^*$ basically
means that $\mathcal{R}$ can be used for zero or arbitrary times. Which means that the source verification
can also be absorbed into $\scc_{\kwd{CCO}}$ as what we do in ~\secref{sec:application}.

As discussed in ~\secref{ssec:problems}, the main difficulty in proving the running example using
CompCertO is that the simulation convention is dependent on the details of compilation.
We take $\kser$ as an example for hand-written assembly and try to link it with
$L_\texttt{S}$ using $\scc_{\kwd{CCO}}$.

Firstly, for $\mathcal{R}^*$ we need to prove that $L_\texttt{S}$ is self-related using $\mathcal{R}$, thus
the self-simulation can be duplicated for arbitrary times. 
\[\osims{\kc_\kinjp + \kc_\kinj + \kc_\kext + \kc_\kwd{vainj} + \kc_\kwd{vaext}}{L_\texttt{S}}{L_\texttt{S}}\]
This simulation means that $L_s$ can take queries related by each of the KMRs, and choose
one of them for its external calls. $\kwd{vainj}$ and $\kwd{vaext}$ are used to capture the
consistency between static analyzer and the dynamic memories are we mentioned in ~\secref{ssec:single-pass}.
This simulation is conceptually correct but quite complex to prove.

Moreover, we need to define two extra specifications $L_{\ltlli}$ and $L_{\machli}$ for
intermediate language interfaces and prove
$\osims{\kcl}{L_{\texttt{S}}}{L_\ltlli}$,
$\osims{\klm}{L_\ltlli}{L_\machli}$ and
$\osims{\kma}{L_\machli}{\sem{\code{server.s}}}$.
This approach not only requires a significant amount of effort but also presents a technical
challenge which is how to achieve the protection of memory and registers for the target
program. In ~\figref{fig:mem-protect-exm}, we use \kinjp together with the structural simulation
convention $\kca \equiv \kcl \cdot \klm \cdot \kma$.
For example, the saved values of registers are protected as \koutofreach in
the memory (\kinjp) such that these registers can be correctly restored
to satisfy the calling convention. In $\scc_{\kwd{CCO}}$, $\klm$ requires that
callee-save registers are protected, $\kma$ requires that $\spreg$ and $\rareg$
are protected. While they do not provide any memory protection for saved values of
these registers on the stack.
This makes it extremely challenging to establish a simulation between $L_\texttt{S}$
and $\sem{\kser}$ through intermediate specifications. In fact, we came up with
the idea of direct refinement during our attempts to prove this.

\end{document}